\xpatchcmd{\@ssect@ltx}{\@xsect}{\protected@edef\@currentlabelname{#8}\@xsect}{}{}
\xpatchcmd{\@sect@ltx}{\@xsect}{\protected@edef\@currentlabelname{#8}\@xsect}{}{}
\crefname{observation}{Observation}{Observation}
\newtheorem{theorem}{Theorem}
\newtheorem{corollary}[theorem]{Corollary}
\newtheorem{lemma}[theorem]{Lemma}
\newtheorem{definition}[theorem]{Definition}
\newtheorem{proposition}[theorem]{Proposition}
\newtheorem{example}[theorem]{Example}
\newtheorem{remark}[theorem]{Remark}
\tikzset{shifted path/.style args={from #1 to #2 by #3}{insert path={
let \p1=($(#1.east)-(#1.center)$),
\p2=($(#2.east)-(#2.center)$),\p3=($(#1.center)-(#2.center)$),
\n1={veclen(\x1,\y1)},\n2={veclen(\x2,\y2)},\n3={atan2(\y3,\x3)} in
(#1.{\n3+180+asin(#3/\n1)}) to (#2.{\n3-asin(#3/\n2)})
}}}
\newcommand{\mc}{\mathcal}
\newcommand{\mr}{\mathrm}
\definecolor{darkblue}{rgb}{0.0, 0.0, 0.55}
\definecolor{amethyst}{rgb}{0.6, 0.4, 0.8}
\definecolor{ao}{rgb}{0.0, 0.0, 1.0}
\definecolor{amber}{rgb}{1.0, 0.75, 0.0}
\definecolor{amaranth}{rgb}{0.9, 0.17, 0.31}
\newcommand{\Del}{\Delta}
\newcommand{\Shift}{\Gamma}
\newcommand{\enc}{\mr{enc}}
\newcommand{\dec}{\mr{dec}}
\newcommand{\phys}{\mr{P}}
\newcommand{\physSet}{\mc{Ph}}
\newcommand{\degeneracy}{d}
\newcommand{\simul}{\mr{sim}}
\newcommand{\simSet}{\mc{Sim}}
\newcommand{\Cone}{\mr{Cone}}
\newcommand{\Scale}{\mr{Scale}}
\def\l@subsubsection#1#2{}
\begin{document}

\title{The Structure of Emulations in Classical Spin Models: Modularity and Universality}

\author{Tobias Reinhart}
\email{tobias.reinhart@uibk.ac.at}

\author{Benjamin Engel}

\author{Gemma De les Coves}
\affiliation{Institute for Theoretical Physics, University of Innsbruck, Technikerstr.\ 21a,\ A-6020 Innsbruck, Austria}

\begin{abstract}
The theory of spin models intersects with condensed matter physics, complex systems, graph theory, combinatorial optimization, computational complexity and neural networks. Many ensuing applications rely on the fact that complicated spin models can be transformed to simpler ones. What is the structure of such transformations? Here, we provide a framework to study and construct emulations between spin models. A spin model is a set of spin systems, and emulations are efficiently computable families of simulations with arbitrary energy cut-off, where a source spin system simulates a target system if, below the cut-off, the target Hamiltonian is encoded in the source Hamiltonian. We prove that emulations preserve important properties, as they induce reductions between computational problems such as computing ground states, approximating partition functions and approximate sampling from Boltzmann distributions. Emulations are modular (they can be added, scaled and composed), and allow for universality, i.e.\ under emulations, certain spin models have maximal reach. We prove that a spin model is universal if and only if it is scalable, closed and functional complete. Because the characterization is constructive, it provides a step-by-step guide to construct emulations. We prove that the 2d Ising model with fields is universal, for which we also provide two new crossing gadgets. Finally, we show that simulations can be computed by linear programs. While some ideas of this work are contained in \cite{De16b}, we provide new definitions and theorems. This framework provides a toolbox for applications involving emulations of spin models.  
\end{abstract}

\keywords{}
\maketitle
\tableofcontents

\section{Introduction}

\begin{figure*}[th]
    \centering
    \includegraphics[width=0.95\textwidth]{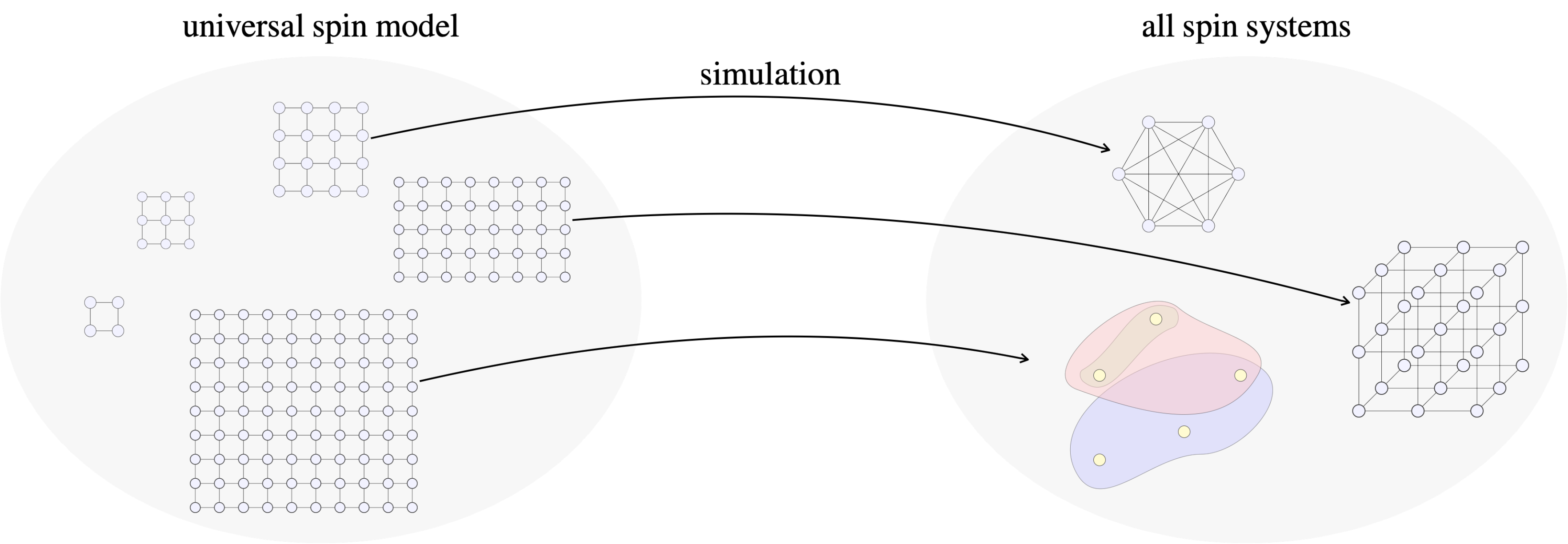}
    \caption{A spin model (i.e.\ a set of spin systems) is universal if it emulates the set of all spin systems. Emulations are efficiently computable simulations with arbitrary cut-off, where simulation is defined between spin systems.  
    }
    \label{fig:intro-universal}
\end{figure*}

While spin models were introduced as toy models for magnetism and thus investigated within condensed matter physics, their study rapidly gained significance for other disciplines. 
Via lattice gauge theories, spin models intersect with high energy physics \cite{Ko79}. 
More recently, spin models have percolated complex systems (see e.g.\ \cite{So19}), 
neural networks via Hopfield networks and Boltzmann machines, 
and to a lesser extent, formal language theory (see \cite{St21,Re21c} and references therein, as well as \cite{De19f}). 
The study of spin models also intersects with graph theory and computational complexity theory, originating from Barahona's proof of the NP hardness of the 2d Ising with fields \cite{Ba82b}. The latter implies that NP-complete problems can be transformed to the ground state energy problem of the Ising model \cite{Lu14b}, 
and in particular that the ground state energy problem of any spin model can be transformed to that of the Ising model. This gives rise to combinatorial optimization approaches via quantum annealing (see e.g.\ \cite{Le15,Ng23}). This intersection thus results in an important application, at the core of which is a notion of transformation of spin models.

Depending on the context, it is meaningful to define transformations of spin models one way or another. 
In condensed matter physics, such transformations may preserve the relevant properties of phase transitions, such as high to low temperature dualities (such as the Kramers--Wannier duality, see e.g.\ \cite{We71}). 
In contrast, with a computational complexity mindset, transformations ought to be reductions that map between solutions of the ground state energy problems, allowing to compute ground states of the target model from ground states of the source. These transformations, however, need not behave well on high energy levels, so it may not be possible to recover the partition function of the target from that of the source. In other words, such notion of transformation would be too general (or: weak) for the latter purpose. Instead, with a graph theory mindset, we could consider transformations of spin models that relabel the spins, inspired by graph isomorphisms. A less strong notion of transformation could be motivated by graph minor operations (such as vertex deletion, and edge deletion and contraction). As it turns out (and we shall flesh out), such transformations would preserve several graph properties and, as a consequence, limit the reach of spin models unnecessarily, where the reach of a spin model is the set of all spin models to which it can be transformed. For example, in combinatorial optimization approaches, spin models on planar graphs could only solve optimization problems with planar connectivity. Such a notion of transformation would thus be too narrow (or: strong). Yet another mindset could be provided by neural networks, where transformations of spin models should approximately preserve Boltzmann distributions. 

What is common to all these considerations is that transformations be \emph{property-preserving}, i.e.\ such that in a given context, if a source spin model $\mc{S}$ can be transformed to a target spin model $\mc{T}$, then in every situation $\mc{T}$ can be replaced by $\mc{S}$ approximately with arbitrarily small error. Clearly, the minimal set of properties that must be preserved depends on the context. We are after a definition of transformation that applies to all contexts. 

A feature that appears in some of the above notions of transformations is \emph{universality}. A spin model is called universal if it can be transformed to any other spin model. Universal spin models thus have maximal reach. Some notions of transformations allow for universality; for example, reductions allow for certain spin models to be universal in the sense that their ground state energy problems are NP-complete. This universality is the crucial fact in combinatorial optimization via quantum annealing. Another important example of universality is enabled by the transformations of Ref.\ \cite{De16b} (see also \cite{Ko18}). 

An implicit requirement in many of the above contexts is a certain sense of \emph{modularity}, meaning that transformations have a well-behaved enough structure to enable the (efficient) construction of new transformations, as well as their study and characterization. Modularity allows to construct (complicated) transformations by composing, adding and scaling (simple) transformations. Its theoretical and practical importance cannot be overstated. As we shall see, modularity allows to characterize universality. 

In short, we want a notion of transformation between spin models that 
\begin{enumerate}[label=(\roman*)]
\item is property-preserving for all contexts, 
\item allows for universality, and 
\item is modular. 
\end{enumerate}

\begin{figure*}[th]\centering
        \includegraphics[width=0.9\textwidth]{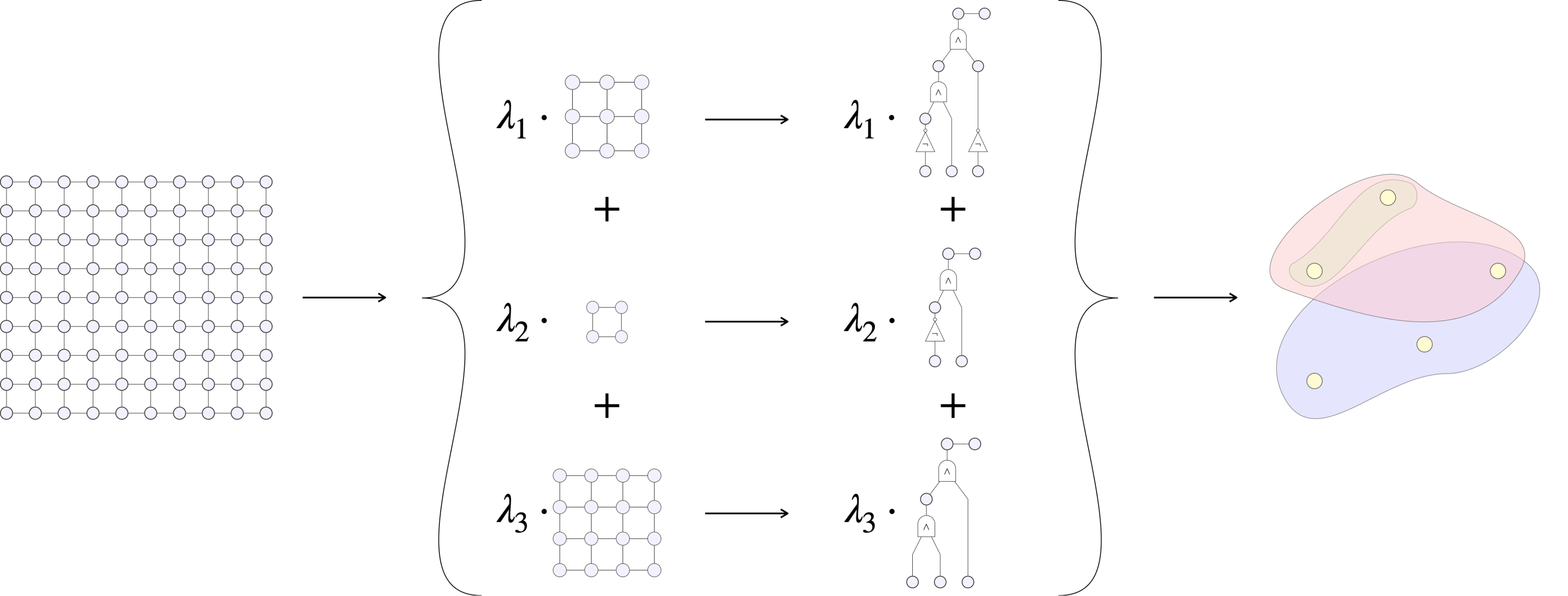}
    \caption{The simulation of a target spin system with a universal spin model can be constructed following the characterization of universality in terms of scalability, closure and functional completeness (\cref{thm:main}). 
    The target (right hand side) is decomposed into a linear combination of flag spin systems, 
    each of which is simulated by a source spin system from the universal spin model (by functional completeness), whose linear combination can in turn be simulated by the universal spin model (by closure and scalability). Because of modularity, this results in the desired simulation.}
    \label{fig:intro-modular}
\end{figure*}

In this paper, we provide a framework for \emph{emulations} between spin models. 
We show that emulations are a meaningful notion of transformation between spin models, as they satisfy the above conditions. Emulations are efficiently computable families of simulations with arbitrary cut-off, where a spin model is a set of spin systems, and a simulation between spin systems is defined as follows. If a source spin system $S$ simulates a target spin system $T$, then below an energy cut-off $\Del$ the Hamiltonians of $S$ and $T$ agree up to a global energy shift $\Shift$ and a local encoding of configurations of $T$ into configurations of $S$. 

We show that emulations are property-preserving, because, among others, they induce reductions between computational problems such as computing ground states, approximating partition functions and approximate sampling from Boltzmann distributions. They also allow for universality; in fact, we characterize universality (\cref{fig:intro-universal}) in terms of three properties of spin models: closure, scalability and functional completeness (\cref{fig:intro-modular}). The characterization is constructive and thus provides the means to explicitly emulate arbitrary targets. Finally, emulations are modular, as they can be composed, scaled and added. In summary, emulations preserve the relevant properties, are general enough to allow for universality and strong enough for modularity. 

We apply our machinery to the 2d Ising model with fields, where we do not only prove that it is universal, but provide a step-by-step guide to emulate arbitrary target models.
We also provide a new crossing gadget for the Ising model. Finally, we show that simulations can be computed by linear programs, and leverage this to provide yet another (even more economic) crossing gadget.

In summary, the framework presented in this paper allows to study and construct emulations, and provides a practical toolbox to this end. Any context (implicitly) involving emulations, such as quantum annealing approaches or neural network architectures, could benefit from this toolbox. More generally, translating the results of this paper to a more abstract language (e.g.\ the categorical framework of Ref.\ \cite{Go23}) may extend the reach of the present results and identify situations featuring universality with a similar characterization. This work could also shed light on the relation of this form of universality to universality classes, or more generally forms of emergence in complex systems (see e.g.\ \cite{So00}). We shall come back to these points in \nameref{sec:outlook}.

This paper is structured as follows. We first study spin systems and simulations (\cref{sec:spin sys}), and then turn to spin models and emulations (\cref{sec:spin models}). We derive consequences of emulations, including universality (\cref{sec:consequences}), and illustrate the framework by proving universality of the 2d Ising model with fields (\cref{sec:2d Ising}). Finally we show that simulations can be computed by linear programs (\cref{sec:lin prog}). We conclude and provide an outlook in \cref{sec:outlook}. The modularity proofs of simulations are presented in \cref{sec:modular}. 

In comparison to Ref.\ \cite{De16b}, this work focuses on the mathematical structure of simulations and emulations. It goes beyond the findings of \cite{De16b} generally in the thoroughness of definitions and theorems, and specifically in the following facts: 
\begin{enumerate}[label=(\roman*)]
\item 
We provide a new definition of simulation (\cref{sssec:def}) and carefully study its properties. 
We prove that simulations preserve spectra and ground states of spin systems (\cref{sssec:spectra}) and approximately preserve the partition function and Boltzmann distribution (\cref{sssec:thermo}).  
We also prove that isomorphisms, weak symmetries and spin type modifications induce simulations (\cref{sssec:symmetries}), and so do graph minors (\cref{sssec:minors}). Finally, we prove that simulations are modular, i.e.\ can be composed, scaled and added (\cref{{ssec:spin sys sim prop}}). 

\item We define emulations (\cref{sssec:def emu}) and prove that they can be composed and lifted to so-called hulls of spin models, implying that they are modular (\cref{sssec:comp lift emu}).  

\item We prove that a spin model is universal if and only if it is closed, scalable and functional complete (\cref{thm:main}). This characterization is new in two respects: 
the new definition of functional completeness (\cref{def:f.c.}), 
and its use of modularity. 
We also define local closure and show that it implies closure (\cref{thm:locally closed}). 

\item We prove that emulations induce polytime reductions between ground state energy problems (\cref{ssec:gse}), 
partition function approximations (\cref{ssec:partfun}) and approximate sampling (\cref{ssec:sample}), where the latter two are in fact $\text{\sc PTAS}$ reductions. 
We also prove that these three problems are maximally hard for universal spin models. 

\item We apply the new characterization of universality to provide a new proof of universality of the 2d Ising with fields that is new in three respects: 
we prove functional completeness (\cref{ssec:f.c.Ising}), provide a new crossing gadget (\cref{ssec:gadget sim}) and prove local closure (\cref{ssec:l.c.Ising}). 
In addition, this new careful step-by-step proof can serve in practice to construct simulations.

\item The results of \cref{sec:lin prog} are new, namely that simulations can be computed by linear programs, resulting in a more economic crossing gadget (\cref{ssec:alternative crossing}). 

\end{enumerate}

\section{Spin Systems and Simulations}\label{sec:spin sys}

Here we define spin systems (\cref{ssec:spin sys}), simulations  (\cref{ssec:spin sys sim}), and prove that the latter are modular (\cref{ssec:spin sys sim prop}).

\subsection{Spin Systems}\label{ssec:spin sys}

We start by defining spin systems.
\begin{definition}[spin system]\label{def:spin sys}
A \emph{spin system} $S$ is a tuple $(q_S,V_S,E_S,J_S)$ where $q_S\in \mathbb{N}_{\geq 2}$, $V_S$ is a finite set, $E_S\subseteq \mc{P}(V_S)$  and 
\begin{equation}
J_S\colon E_S \rightarrow \{f:[q_S]^W \rightarrow \mathbb{R}_{\geq 0} \mid W \subseteq V_S \}
\end{equation}
such that for all $e\in E_S$ 
\begin{equation}
J_S(e)\colon [q_S]^e \rightarrow \mathbb{R}_{\geq 0},
\end{equation}
where $[q_S]\coloneqq \{1, \ldots, q_S\}$, $[q_S]^W$ denotes the set of functions of type $W \to [q_S]$ and $\mc{P}(V_S)$ the power set of $V_S$.
\end{definition}

Given a spin system $S$, we call $q_S$ its \emph{spin type}, $V_S$ its \emph{spins}, $G_S\coloneqq(V_S,E_S)$ its \emph{interaction hypergraph} and $J_S$ its \emph{local interactions} (thought of as $\{J_S(e) \mid e \in E_S \}$).
We require that there are no isolated spins, that is, for any $v\in V_S$ there exists at least one $e\in E_S$ with $v\in e$. Note that, if this not the case, it can be made so by adding any such edge $e$ with $J_S(e)$ the zero function, thereby yielding a spin system with equivalent local interactions and without isolated spins.

In essence, a spin system is a hyperedge-labelled hypergraph, where hyperedges determine which spins interact, and the hyperedge labels obtained from  $J_S$ are local functions that determine how they interact. 
Once the state of each spin is fixed by means of a \emph{spin configuration} 
\begin{equation}
    \vec{s}\in \mc{C}_S\coloneqq [q_S]^{V_S} , 
\end{equation}
we obtain a positive real number $J_S(e)(\vec{s}\vert_e)$ for each hyperedge $e\in E_S$. This  number can be interpreted as the local energy (or cost) of the configuration $\vec{s}$ according to the interaction of spins in $e$.
Adding these local energies yields a global energy (or cost) function, namely the Hamiltonian of the spin system.
\begin{definition}[Hamiltonian]\label{def:hamiltonian}
A spin system $S$ defines a \emph{Hamiltonian}
\begin{equation}
\begin{split}
    H_S : &\mc{C}_S \rightarrow \mathbb{R}_{\geq 0} \\
    &\vec{s} \mapsto H_S(\vec{s}) \coloneqq \sum_{e \in E_S} J_S(e)(\vec{s}\vert_e)
\end{split}
\end{equation}
\end{definition} 
Note that \cref{def:spin sys} is restricted to $q_S\geq2$, since for $q_S=1$ there exists a single spin configuration, and hence $H_S$ is a constant.  
Further note that the local interactions are required to be non-negative functions. This definition is equivalent to that of arbitrary local interactions, as a spin system with arbitrary local interactions can be transformed to have non-negative local interactions by shifting all local interactions. This procedure is polytime computable and hence defines a polytime reduction for all relevant computational problems that one might consider. As we shall see, non-negative local interactions are more convenient for certain operations that we will define on spin systems.

Subsequently, we might specify configurations by listing their values with respect to (w.r.t.) a fixed ordering of $V_S$, i.e.\ we write $V_S= \{i.j,k, \ldots\}$ and
\begin{equation}
    \vec{s} = (\vec{s}(i), \vec{s}(j), \vec{s}(k), \ldots) = (s_i,s_j,s_k, \ldots).
\end{equation}

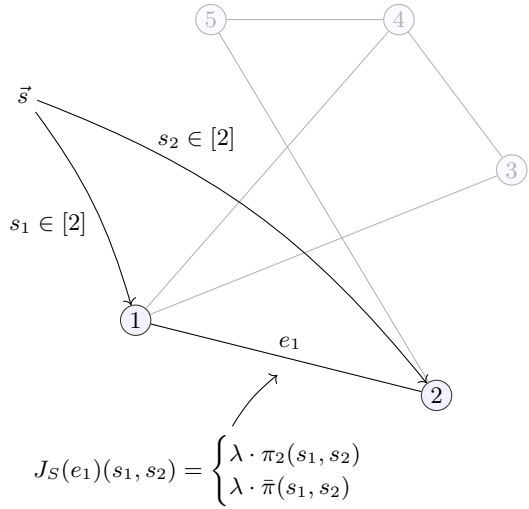
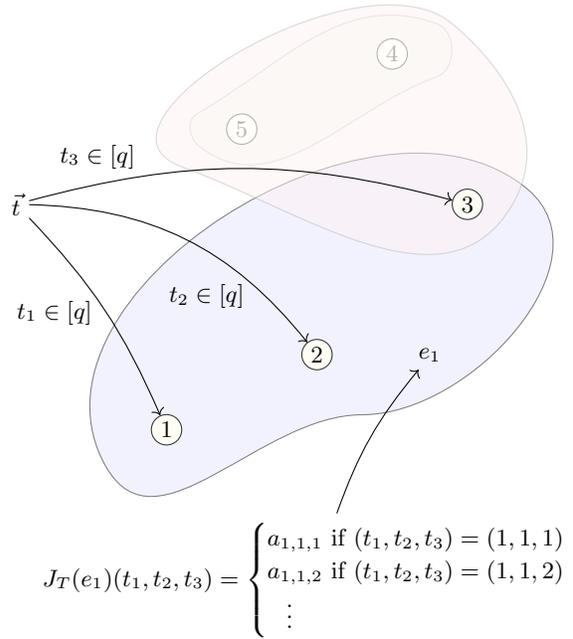
\begin{figure*}[th]
    \centering
    \begin{subfigure}[t]{1.0\columnwidth}
    \centering
    \caption{
}\label{fig:spin sys ex 1}
\vspace{2cm}
        \begin{tikzpicture}
[cgnodeS/.style = {draw=black!80, fill=blue!5, circle, minimum size= {width("$s_{i}$")+4pt}, inner sep = 1pt},
lightNode/.style = {draw=black!30, fill=blue!2, circle, minimum size= {width("$s_{i}$")+4pt}, inner sep = 1pt, text = black!30}
]

\pgfdeclarelayer{bg}    
\pgfsetlayers{bg,main} 

\node[cgnodeS] (1) at (-1,1) {$1$};
\node[cgnodeS] (2) at (3,0) {$2$};
\node[lightNode] (3) at (4,3) {$3$};
\node[lightNode] (4) at (2.5,5) {$4$};
\node[lightNode] (5) at (0,5) {$5$};

\draw[] (1) to node[midway, above] { \ $e_1$} node[midway, below] (8) {} (2);
\draw[black!30] (1) to (3);
\draw[black!30] (2) to (5);
\draw[black!30] (3) to (4);
\draw[black!30] (1) to (4);
\draw[black!30] (4) to (5);

\node[] (6) at (-2.5,4) {$\vec{s}$};
\draw[->] (6) to [bend left = 10]  node[midway, below, xshift = -0.6cm] { ${s_1\in [2]}$} (1);
\draw[->] (6) to [bend left = 15] node[pos = 0.35, above, yshift = 0.2cm] { ${s_2 \in [2]}$} (2);
\node[] (7) at (0,-1) { $J_S(e_1)(s_1, s_2) = \begin{cases}
    \lambda \cdot \pi_2(s_1,s_2) \\
    \lambda \cdot \bar{\pi}(s_1,s_2)
\end{cases} $};
\draw[->] (7) to [bend left = 10] (8);
\end{tikzpicture}
    \end{subfigure}
    \begin{subfigure}[t]{1.0\columnwidth}
    \centering 
    \caption{}\label{fig:spin sys ex 2}
    \begin{tikzpicture}
[cgnodeS/.style = {draw=black!80, fill=yellow!5, circle, minimum size= {width("$s_{i}$")+4pt}, inner sep = 1pt},
lightNode/.style = {draw=black!30, fill=yellow!2, circle, minimum size= {width("$s_{i}$")+4pt}, inner sep = 1pt, text = black!30}]

\pgfdeclarelayer{bg}    
\pgfsetlayers{bg,main} 

\node[cgnodeS] (1) at (0,0) {$1$};
\node[cgnodeS] (2) at (2,1) {$2$};
\node[cgnodeS] (3) at (4,3) {$3$};
\node[lightNode] (4) at (3,5) {$4$};
\node[lightNode] (5) at (1,4) {$5$};

\begin{pgfonlayer}{bg}
\pgfsetstrokeopacity{0.5}
\pgfsetfillopacity{0.5}
    \filldraw[fill=blue!10] ($(1)+(-0.8,-0.5)$) 
            to[out=300,in=180] ($(2) + (0.6,-0.8)$) 
            to[out=0,in=320] ($(3) + (0.8,0.2)$)
            to[out=140,in=120] ($(1) + (-0.8,-0.5)$);
    \filldraw[fill=green!5, draw = black!30] ($(5)+(-0.6,-0.2)$) 
            to[out=320,in=200] ($(4) + (0.6,-0.3)$)
            to[out=20, in=340] ($(4) + (0.6, +0.4)$)
            to[out=160,in=30] ($(5) + (-0.6,+0.2)$)
            to[out=210, in=130] ($(5) + (-0.6, -0.2)$);
     \filldraw[fill=red!5, draw = black!30] ($(5)+(-0.8,-0.5)$) 
            to[out=340,in=240] ($(3) + (0.6,0)$) 
            to[out=60,in=320] ($(4) + (0.8,0.4)$)
            to[out=140,in=160] ($(5) + (-0.8,-0.5)$);
\end{pgfonlayer}

\node[] (6) at (-2,3) {$\vec{t}$};
\draw[->] (6) to [bend left = 10]  node[pos = 0.4, below, xshift = -0.5cm] { ${t_1 \in [q]}$} (1);
\draw[->] (6) to [bend left = 25]  node[pos = 0.7, above, yshift = -0.6cm, xshift = -0.4cm] { ${t_2 \in [q]}$} (2);
\draw[->] (6) to [bend left = 15]  node[pos = 0.15, above, yshift = 0.1cm] { ${t_3 \in [q]}$} (3);
\node[] (7) at (3.5,1) {$e_1$};
\node[] (8) at (2,-2) { $J_T(e_1)(t_1,t_2,t_3)= \begin{cases}
    a_{1,1,1} \ \text{if } (t_1,t_2,t_3) = (1,1,1) \\
    a_{1,1,2} \ \text{if } (t_1,t_2,t_3) = (1,1,2) \\
    \hphantom{c \ } \vdots 
\end{cases}$};
\draw[->] (8) to [bend left = 10] (7);
\end{tikzpicture}
    \end{subfigure}
    \caption{An Ising system  with pair interactions and spin type $2$ (\ref{fig:spin sys ex 1}) and a generic spin system with $q$-level spins and many-body interactions (\ref{fig:spin sys ex 2}). In contrast to the single parameter $\lambda$ specifying an Ising interaction, a generic $k$-body interaction of $q$-level spins requires $q^k$ parameters.}
    \label{fig:spin sys ex}
\end{figure*}

\begin{example}[Ising system]\label{ex:Ising system}
    A well-studied class of spin systems can be obtained from Ising interactions. In the simplest case, these are pair interactions defined for $2$-level spins that only depend on the parity of the two spins.
    Up to scaling (and relabeling the spins) there are two such interactions
    \begin{equation}
    \begin{split}
        \pi_2(s_i,s_j) &= \begin{cases}
                1 \ \text{if} \ s_i=s_j \\
                0 \ \text{else}
        \end{cases}\\
        \bar{\pi}_2(s_i,s_j) &= \begin{cases}
                1 \ \text{if} \ s_i\neq s_j \\
                0 \ \text{else}.
        \end{cases}
    \end{split}
    \end{equation} 
    An Ising spin system (see \cref{fig:spin sys ex 1}) can be defined by taking an arbitrary graph $(V,E)$ and, for each edge $e\coloneqq\{i,j\}$ from $E$, letting $J(e)$ be either $\lambda \cdot \pi_2$ or $\lambda \cdot \bar{\pi}_2$ for some non-negative real number $\lambda$.
    Additionally, one might include single-spin hyperedges $\{i\}$ in the edge set of the interaction graph.
    Defining 
    \begin{equation}
    \begin{split}
        \pi_1(s_i) &= \begin{cases}
                1 \ \text{if} \ s_i=2 \\
                0 \ \text{else}
        \end{cases}\\
        \bar{\pi}_1(s_i) &= \begin{cases}
                1 \ \text{if} \ s_i=1 \\
                0 \ \text{else}
        \end{cases}
    \end{split}
    \end{equation}
    and taking the local interaction to be either  $\lambda \cdot \pi_1$ or  $\lambda \cdot \bar{\pi}_1$ on these single-spin hyperedges yields a so-called Ising spin system with fields.
\end{example}

Several constructions on spin systems presented in this paper ought to be efficient, that is, polytime computable, where polytime refers to a runtime that is polynomial in the description of the spin system. 
We define the \emph{size} of a spin system $S$ as the length of this description, i.e.\ the number of real valued parameters that are required to specify it, 
\begin{equation}\label{def:size}
\vert S \vert \coloneqq \sum_{e \in E_S}q_S^{\vert e\vert}.
\end{equation} 
Note that, strictly speaking, specifying $S$ amounts to specifying $q_S,V_S,E_S$ and $J_S$ and hence  
\begin{equation}\label{eq:size of input S}
     q_S + \vert V_S \vert + \sum_{e \in E_S} \vert e \vert + \sum_{e \in E_S} (\vert e \vert+q_S^{\vert e \vert})
\end{equation}
real parameters.
Here we used that, for each $e\in E_S$, the space of real valued functions of type $[q_S]^e \to \mathbb{R}$ is a $q_S^{\vert e \vert}$ dimensional real vector space, so specifying $J_S(e)$ requires specifying $e$ together with $q_S^{\vert e \vert}$ real parameters. 
However, since \eqref{eq:size of input S} is polynomial in $\vert S \vert$, 
the two notions of polytime are equivalent. 
Further note that if the arity of local interactions of a set of spin systems with constant spin type admits an upper bound, i.e.\ if the set consists of $q_S$-level spin systems with at most $k$-body interactions, then 
\begin{equation}
\vert S \vert = \mr{poly}(\vert V_S \vert). 
\end{equation} 
That is, polytime w.r.t.\ $\vert S \vert$ is equivalent to polytime w.r.t.\ the number of spins. 
While this applies to many common examples, it need not always be the case. 

One of the key properties of spin systems is their local structure. Each spin system not only gives rise to a global cost function but also encodes how this global function is composed of local terms. 
More conceptually, each spin system can be seen as a sum of elementary spin systems.

It is important to keep this perspective in mind, as constructions might be easy when applied to the local terms but hard when applied globally. To give an example, minimizing all local terms of $H_S$ is polytime computable, while minimizing $H_S$ globally is in general NP-hard.
In this example, minimizing $H_S$ locally does not lead to a global minimum if $S$ suffers from frustration, i.e.\ if the sub-configurations that locally minimize $H_S$ do not agree on the overlaps of local terms, and thus cannot be added to obtain a global configuration.

In contrast, some constructions presented in this work, most importantly that of simulations between spin systems, are compatible with the sum of spin systems. 
This will enables us to obtain simulations of complex spin systems by dividing them into simple building blocks, simulating each of them individually and combining the result for the global simulation.

\begin{definition}[sum of spin systems]\label{def:sum}
    Let $S_1, S_2$ be spin systems with $q \coloneqq q_1 = q_2$. Their \emph{sum} is the spin system 
    \begin{equation}
         S_1 + S_2 \coloneqq (q, V_1 \cup V_2, E_1 \cup E_2, J_{1+2})
    \end{equation}
    where $J_{1+2}$ 
    is defined by 
    \begin{equation}
    J_{1+2}(e)(\vec{s}) \coloneqq \sum_{i \colon e \in E_i}J_i(e)(\Vec{s}|_{e}).
    \end{equation}
\end{definition}
In words, the interaction hypergraph of $S_1+S_2$ is the hypergraph union of $G_1$ and $G_2$. For edges $e \in E_1 \setminus E_2$ or $e \in E_2 \setminus E_1$, its local interactions are $J_1(e)$ or $J_2(e)$, respectively, while for those edges in the overlap of $E_1$ and $E_2$, $e\in E_1 \cap E_2$, its local interactions are given by the sum $J_1(e)+J_2(e)$.
By construction, the sum of spin systems is both associative and commutative.
Moreover, it is straightforward to conclude that 
\begin{equation}
    H_{S_1+S_2}(\vec{s}) = H_{S_1}(\vec{s}\vert_{V_{S_1}}) + H_{S_2}(\vec{s}\vert_{V_{S_2}}).
\end{equation}

Spin systems can also be scaled with non-negative real numbers. 
\begin{definition}[scaling of spin systems]\label{def:scaling}
Let $S$ be a spin system and $\lambda \in \mathbb{R}_{\geq0}$. Then $\lambda \cdot S$ is the \emph{scaled} spin system
\begin{equation}
  \lambda\cdot S \coloneqq(q_S,V_S,E_S,\lambda \cdot J_S)  
\end{equation} 
where $\lambda \cdot J_S$ is understood pointwise, i.e. 
\begin{equation}
    \lambda\cdot J_S(e)(\vec{s}) \coloneqq \lambda\cdot (J_S(e)(\vec{s})).
\end{equation}
\end{definition} 
The scaling of spin systems distributes over sums, i.e.\ $\lambda \cdot (S_1+S_2) = \lambda \cdot S_1 + \lambda \cdot S_2 $.
Moreover, by construction
\begin{equation}\label{eq:scaling:ham}
    H_{\lambda\cdot S}(\vec{s}) = \lambda \cdot H_S(\vec{s}).
\end{equation}

In thermodynamic terms, scaling a system by a positive constant can be understood as changing its temperature (cf.\ \cref{def:part fun}).
In contrast, scaling by a negative number would correspond to swapping high-energy with low-energy configurations, and hence fundamentally change a spin system.

While in general spin systems have multiple hyperedges and are thus sums of multiple local terms, given any function $f \colon [q]^e \rightarrow \mathbb{R}_{\geq 0}$ we can define a \emph{canonical} spin system $S_f$, with a single hyperedge $e$ and corresponding local interaction $J_f(e)=f$. 

\begin{definition}[canonical spin system]\label{def:can sys}
Let 
\begin{equation}
    f:[q]^e \rightarrow \mathbb{R}_{\geq 0}
\end{equation} 
be any function. 
Then the \emph{canonical} spin system corresponding to $f$, denoted $S_f$ (or $T_f$, $S'_f$, etc), is defined by 
\begin{equation}
    \begin{split}
        V_f &\coloneqq e\\
        E_f &\coloneqq \{e\}\\
        J_f(e) &\coloneqq f.
    \end{split}
\end{equation}
\end{definition}
Note that by construction $H_{S_f}=f$.
Moreover, every spin system $S$ is trivially the sum of the canonical systems corresponding to its local interactions
\begin{equation}\label{lem:can sys}
    S=\sum_{e\in E_S} S_{J_S(e)}.
\end{equation}

Note that the sum of spin systems uses the labels of spins (in $V_1$ and $V_2$) as a  bookkeeping tool. More precisely, the labels determine the overlap of the sum. 
In order to achieve a certain overlap in the sum  it might hence be necessary to change the labels of some of the spins.
Similarly to how relabeling vertices in a graph leads to the notion of graph isomorphisms, relabeling spins in a spin system leads to the notion of spin system isomorphisms.
\begin{definition}[isomorphism of spin systems]\label{def:iso spin sys}
    Let $S,T$ be spin systems. An \emph{isomorphism} 
    $\phi \colon S \xrightarrow{\sim} T$
    from $S$ to $T$ is a bijection $\phi \colon V_S \to V_T$ such that for all $e \in \mc{P}(V_S)$ 
    \begin{equation}
        e \in E_S \Leftrightarrow \phi(e) \in E_T ,
    \end{equation}
   and for all $e \in E_S$ and $\vec{s}\in [q_S]^e$
   \begin{equation}\label{eq:iso spin}
       J_S(e)(\vec{s}) = J_T(\phi(e))(\vec{s}\circ \phi^{-1}) .
   \end{equation}
    If there exists an isomorphism $\phi \colon S \xrightarrow{\sim} T$ we write $S \cong T$.
\end{definition}
Note that \cref{eq:iso spin} implies that $q_S = q_T$.
Isomorphic spin systems are equivalent in all aspects, in particular their Hamiltonians agree up to relabelings, i.e.\
\begin{equation}
        H_S(\vec{s}) = H_T(\vec{s}\circ \phi^{-1})
\end{equation}
for all configurations $\vec{s}$ of $S$.

Given a spin system $S$ and a bijection $\phi\colon V_S \to V'$, we denote by $S\langle \phi \rangle$ the spin system isomorphic to $S$ with spins relabeled according to $\phi$, i.e.\
\begin{equation}
\begin{split}
V_{S\langle \phi \rangle}&=V'\\
E_{S\langle \phi \rangle}&=\phi(E_S)\\
    J_{S\langle \phi \rangle}(\phi(e))(\vec{s}\circ \phi^{-1})&= J_S(e)(\vec{s}).
\end{split}
\end{equation}
If $V_S=\{1, \ldots, l \}$, bijections $\phi\colon V_S \to V'$ are orderings of $V'$. We then denote $\phi$ by  $\phi = (v_1,\ldots, v_l)$ with $v_i = \phi(i)$, and the relabeled spin system by $S\langle v_1 \ldots, v_l\rangle$.

\subsection{Simulations}\label{ssec:spin sys sim}

In the previous section we saw that isomorphic spin systems have, up to relabelings, equal Hamiltonians and hence can be seen as equivalent in all aspects. 
There are however situations in which two spin systems $S$ and $T$ can be considered at least approximately equivalent in all relevant aspects, even though they are not isomorphic. For instance, $S$ and $T$ may have equal Hamiltonians given a more general identification of their spins, where one spin of $T$ is identified with multiple spins of $S$. Or the equality of Hamiltonians may only hold in a certain (low) energy regime.
Isomorphisms are thus too strong a notion to describe these weaker kinds of equivalences, which are precisely captured by \emph{spin system simulations}.

We shall first define spin system simulations (\cref{sssec:def}),
show that they preserve spectra and ground states (\cref{sssec:spectra}) as well as thermodynamic properties (\cref{sssec:thermo}), 
and finally illustrate that various common transformations on spin systems (\cref{sssec:symmetries}) as well as graph minors are special cases of simulations (\cref{sssec:minors}).

\subsubsection{Definition} \label{sssec:def}

Whenever $S$ simulates $T$, every occurrence of $T$ can be replaced by $S$ at least in good approximation, even though $S$ and $T$ need not be isomorphic. 
More precisely, a spin system simulation specifies an energy cut-off $\Del$, an energy shift $\Shift$ and functions $\phys, \dec$
that are used to identify configurations $\vec{s}$ of $S$ with configurations $\dec \circ \vec{s} \circ \phys$ of $T$, such that below the cut-off the Hamiltonians of $S$ and $T$ satisfy 
\begin{equation}
    (H_S(\vec{s}) -\Shift ) \big\vert _{<\Del} = H_T(\dec \circ \vec{s} \circ \phys) \big\vert _{< \Del}.
\end{equation}
That is, $\phys$ and $\dec$ describe how low energy configurations $\vec{s}$ of $S$ are mapped to low energy configurations $\dec \circ \vec{s} \circ \phys$ of $T$ with equal energy up to $\Shift$.

\begin{definition}[simulation]\label{def:sys simulation}
    Let $S,T$ be a spin systems. A \emph{simulation} $\mc{f}$ of type $S \to T$ consists of the following data:
    \begin{enumerate} 
        \item a positive real number $\Del$ called \emph{cut-off}
        \item a real number $\Shift$ called \emph{energy shift}
        \item a natural number $\degeneracy$ called \emph{degeneracy}
        \item a function $\phys \colon V_T \to V_S^k$ called \emph{physical spin assignment}
        \item a function $\dec \colon [q_S]^k \to [q_T]$ called \emph{decoding} 
        \item a list of functions $\enc = (\enc_i \colon [q_T] \to [q_S]^k)_{i=1,\ldots,m}$ called \emph{encoding}. 
    \end{enumerate}
    Additionally, $\mc{f}$ must satisfy the following conditions: 
    \begin{enumerate}[align=left]
        \item \label{def:sim physical spins}  \emph{Disjoint physical spins}: 
        for all $v,w \in V_T$, $i,j \in \{1, \ldots, k\}$
        \begin{equation}
        \phys^{(i)}(v) = \phys^{(j)}(w) \Rightarrow i=j \ \text{and} \ v=w    
        \end{equation}
        where $\phys^{(i)}$ denotes the $i$-th component of $\phys$

        \item\label{def:sim enc-dec} \emph{Decode-encode compatibility}:  
        for all $i=1, \ldots, m$
        \begin{equation}
            \mr{dec} \circ \mr{enc}_i = \mr{id}
        \end{equation}
        \item\label{def:sim disjoint enc}  \emph{Disjoint encodings}: 
        for all $i, j\in \{1, \ldots, m\}$, $t\in [q_T]$
        \begin{equation}
            \enc_i(t)=\enc_j(t) \Rightarrow i=j
        \end{equation}
        \item \label{def:sim deg} \emph{Constant local degeneracy}:  
        for $i=1, \ldots, m$ and $\vec{t}\in \mc{C}_T$ we define 
        \begin{multline}\label{def:sim simul}
            \simul_i(\vec{t})\coloneqq 
            \{\vec{s} \in \mc{C}_S \mid 
            \vec{s}\circ \phys = \enc_i \circ \vec{t}, \ H_S(\vec{s})-\Shift < \Del\}
        \end{multline}
        and require that 
        \begin{equation}
            H_T(\vec{t})< \Del \Rightarrow \vert \simul_i(\vec{t})\vert =\degeneracy
        \end{equation} 
        \item \label{def:sim energy} \emph{Matching energies}: 
        for all $\vec{t} \in \mc{C}_T$, $\vec{s}\in \simul(\vec{t})\coloneqq \bigcup_i \simul_i(\vec{t})$
        \begin{equation}\label{eq:sim def}
            H_S(\vec{s})-\Shift = H_T(\vec{t}),
        \end{equation}
        for all $\vec{s} \notin \simSet\coloneqq  \bigcup_{\vec{t}\in \mc{C}_T} \simul(\vec{t}) $
        \begin{equation}
            H_S(\vec{s})-\Shift \geq \Del.
        \end{equation}
    \end{enumerate}
\end{definition}

If $\mc{f}$ is a simulation of type $S \to T$ we also write $\mc{f} \colon S \to T$ or $S \overset{\mc{f}}{\to} T$.
We might also write $S \to T$ as a shorthand for ``there exists a simulation $\mc{f}\colon S \to T$"  and simply say ``$S$ simulates $T$" whenever this is the case. 
Whenever we distinguish between multiple simulations we index their data, i.e.\ we write $\Del_{\mc{f}}, \Shift_{\mc{g}}$, etc. 
If part of the data specifying a simulation is trivial, i.e.\ $\Shift=0$, $\degeneracy=1$, $\vert \enc \vert =1$ and $\enc=\dec= \mr{id}$, or $V_T\subseteq V_S$ and $\phys$ is the corresponding subset injection, we may not give it explicitly and only specify the non-trivial data. 

Given a simulation $\mc{f} \colon S \to T$, 
we call $S$ the \emph{source}  and $T$ the \emph{target} of $\mc{f}$. 
We call $k$, from $V_S^k$ and $[q_S]^k$, the \emph{order} of the encoding, and denote it by $\mr{ord}(\enc)=k$.
Note that $k$ is implicit in the data that specifies $\mc{f}$.
Similarly, the number of individual functions in $\enc$, viz.\ $m$, is implicit. We call it the \emph{size} of the encoding, and denote it by $\vert \enc\vert$.

In condition \ref{def:sim energy} we defined the function $\simul$
mapping configurations $\vec{t}$ of $T$ to sets of configurations $\simul(\vec{t})$ of $S$ with equal energy up to $\Shift$. 
We call $\simul_i$ the $i$-th \emph{simulation assignment} and say that $\vec{s}$ simulates $\vec{t}$ whenever $\vec{s}\in \simul(\vec{t})$.

Given $v\in V_T$ we call $\phys(v)$ \emph{physical spins of $v$}, and 
\begin{equation}
    \physSet \coloneqq \bigcup_{v \in V_T}\phys(v)
\end{equation} 
\emph{physical spins}. The remaining spins of $S$ are called \emph{auxiliary spins}.
If $\vec{s}$ simulates $\vec{t}$, then by construction of $\simul$, the state of $\vec{t}(v)$ can be recovered from $\vec{s}({\phys(v)})$ according to 
\begin{equation}
    \vec{t}(v) = \dec \circ \vec{s} \circ \phys(v) .
\end{equation}
This means that $\vec{t}$ can be recovered from $\vec{s}$ locally. Conversely, if $\vec{t}'$ differs from $\vec{t}$ in the state of a single spin $v$ and $\vec{s'}\in \simul(\vec{t}')$, then $\vec{s}$ and $\vec{s'}$ agree on all physical spins except for those of $v$. 
Note that $\vec{s}$ and $\vec{s'}$ can always differ on auxiliary spins.

Finally let us give an intuition as to why simulations use both decoding and encoding.
The decoding ensures that the individual encodings are compatible. 
That is, if simulations only had encodings, it would not be possible to require condition \ref{def:sim enc-dec}, and hence it might happen that for $s\neq t$, $\enc_i(s)=\enc_j(t)$. In this case, it would not be possible to decode source configurations with energy below $\Del$. 
Secondly, the encoding ensures that whenever a target configuration $\vec{t}$ agrees on spins $v,w$ then any source configuration $\vec{s}$ that simulates $\vec{t}$ agrees on the corresponding physical spins $\phys(v),\phys(w)$.
If this was not the case, target configurations $\vec{t}$ could not be encoded locally.
Specifically, given access to $\vec{t}$ (with $H_T(\vec{t})<\Del$) on a subset $V \subseteq V_T$ it would not be possible to infer $\simul(\vec{t})\vert_{\phys(V_T)}$. 
This would lead to simulation being incompatible with sums (and hence decompositions) of spin systems (cf.\ \cref{thm:sim sum}).

Let us illustrate simulations between spin systems. 

\begin{example}[simulation]\label{ex:simulation}
    Consider the spin system $T$ (see right hand side of \cref{fig:ex sim 1}) defined by 
    \begin{equation}
        \begin{split}
            q_T & = 3\\
            V_T & = \{1,2,3\}\\
            E_T & = \{\{1,2,3\}, \{1,2\}\}, 
        \end{split}
    \end{equation}
    and local interactions  
    \begin{equation}
        \begin{split}
            J_T(e_1)(t_1,t_2,t_3) &= \begin{cases}
                0 \ \text{if } \  (t_1,t_2,t_3) = (1,2,3)\\
                1 \ \text{if } \  t_1=t_2=t_3 \\
                4 \ \text{else},
            \end{cases}\\
            J_T(e_2)(t_1,t_2) &= \begin{cases}
                0 \ \text{if } t_1 = t_2\\
                2 \ \text{else},
            \end{cases}
        \end{split}
    \end{equation}
    where $e_1 = \{1,2,3\}$ and $e_2 = \{1,2\}$.
    
    We construct a simulation with source $S$ being an Ising spin system, target $T$ and cut-off $\Del=3$. There are exactly 4 target configurations with energy below $3$, $\vec{t}_1=(1,1,1)$, $\vec{t}_2=(2,2,2)$, $\vec{t}_3=(3,3,3)$ and $\vec{t}_4=(1,2,3)$,
    \begin{equation}
            H_T(\vec{t}_1) = H_T(\vec{t}_2) = H_T(\vec{t}_3) = 1, \qquad  
            H_T(\vec{t}_4) = 2. 
    \end{equation}
    All other configurations either have energy $E=4$ or $E=6$ (see right hand side of \cref{fig:ex sim 2}).
    The source system $S$ is an Ising spin system with fields, with $q_S=2$, spins $V_S= \{1, \ldots, 10\}$ and edges illustrated in the left hand side of \cref{fig:ex sim 1}.
    The local interactions are listed in \cref{tab: local int ex}.

\begin{figure*}[th]
    \centering
    \begin{subfigure}{1.0\textwidth}
    \caption{}\label{fig:ex sim 1} 
        \scalebox{1.0}{
\begin{tikzpicture}
[cgnodeS/.style = {draw=black!80, fill=blue!5, circle, minimum size= {width("$s_{i}$")+4pt}, inner sep = 1pt},
cgnodeT/.style = {draw=black!80, fill=yellow!5, circle, minimum size= {width("$s_{i}$")+4pt}, inner sep = 1pt},
cgnodeP/.style = {draw=black!80, ellipse, rotate = -45,  minimum height= 1cm, minimum width = 2.2cm, dashed},
cgnodeP2/.style = {draw=black!80, circle,  minimum size= 0.8cm, dashed}
]

\pgfdeclarelayer{bg}    
\pgfsetlayers{bg,main} 

\node[cgnodeS] (1) at (-1,1) {$1$};
\node[cgnodeS] (2) at (0,0) {$2$};
\node[cgnodeS] (3) at (4,0) {$3$};
\node[cgnodeS] (4) at (5,-1) {$4$};
\node[cgnodeS] (5) at (1,4) {$5$};
\node[cgnodeS] (6) at (2,3) {$6$};
\node[cgnodeS] (7) at (-2,5) {$7$};
\node[cgnodeS] (8) at (0,6.5) {$8$};
\node[cgnodeS] (9) at (5,6.5) {$9$};
\node[cgnodeS] (10) at (7,5) {$10$};

\foreach \x/\y in {1/3, 1/4, 1/5, 1/10, 2/4, 2/5, 2/6, 2/10, 3/4, 3/5, 3/10, 4/6, 4/8, 4/9, 5/6, 5/8, 5/9, 6/10, 8/10, 9/10}
    \draw[black!30] (\x) to (\y);

\node[cgnodeT] (11) at (10,2) {$a$};
\node[cgnodeT] (12) at (12,0) {$b$};
\node[cgnodeT] (13) at (11,4) {$c$};

\begin{pgfonlayer}{bg}
\pgfsetstrokeopacity{0.5}
\pgfsetfillopacity{0.5}
    \filldraw[fill=red!10] ($(11)+(-0.8,-0.5)$) 
            to[out=300,in=200] ($(12) + (0.2,-0.8)$) 
            to[out=20,in=300] ($(13) + (0.8,0.5)$)
            to[out=120,in=120] ($(11) + (-0.8,-0.5)$);
    \filldraw[fill=green!10] ($(11)+(-0.6,-0.4)$) 
            to[out=320,in=200] ($(12) + (0.5,-0.5)$)
            to[out=20, in=340] ($(12) + (0.4, +0.6)$)
            to[out=160,in=20] ($(11) + (-0.6,+0.6)$)
            to[out=200, in=140] ($(11) + (-0.6, -0.4)$);
\end{pgfonlayer}

\node[cgnodeP] (14) at (-0.5,0.5) {};
\node[cgnodeP] (15) at (4.5,-0.5) {};
\node[cgnodeP] (16) at (1.5,3.5) {};

\node[cgnodeP2] (17) at (10,2) {};
\node[cgnodeP2] (18) at (12,0) {};
\node[cgnodeP2] (19) at (11,4) {};

\draw[->] (17) to [bend right = 20] (14);
\draw[->] (18) to [bend right = 10] (15);
\draw[->] (19) to [bend right = 15] (16);

\node[] (20) at (-2.25,0) {$\mathrm{P}(a) = (1,2)$};
\node[] (21) at (3.25,-1.5) {$\mathrm{P}(b) = (3,4)$};
\node[] (22) at (-0.75,3.5) {$\mathrm{P}(c) = (5,6)$};

\end{tikzpicture}       
        } 
    \end{subfigure}
    \begin{subfigure}{1.0\textwidth} 
    \caption{}\label{fig:ex sim 2}
    \scalebox{1.0}{
    
\begin{tikzpicture}
[cgnodeS/.style = {draw=black!80, fill=blue!5, circle, minimum size= {width("$s_{10}$")+6pt}, inner sep = 2pt},
cgnodeS2/.style = {draw=black!80, fill=yellow!5, circle, minimum size= {width("$s_{10}$")+6pt}, inner sep = 2pt}]

\draw[->] (0,-6) to node[pos = 1.05, right] { $H_S- \Shift$} (0,1);
\draw[] (0,-5) to (2,-5);
\draw[] (2,-5) to (2.5,-5.2);
\draw[] (2,-5) to (2.5,-5);
\draw[] (2,-5) to (2.5,-4.8);
\draw[] (2.5,-5.2) to (5,-5.2);
\draw[] (2.5,-5) to node[pos = 1.4] {$\{\vec{s}_1, \vec{s}_2, \vec{s}_3\}$} (5,-5);
\draw[] (2.5,-4.8) to (5,-4.8);

\draw[] (0,-4) to node[pos = 1.2] { $\{ \vec{s}_4\}$} (5,-4);
\draw[] (0,-2) to (5,-2);
\draw[] (0,-2.8) to  (5,-2.8);
\draw[] (0,-2.7) to  (5,-2.7);
\draw[] (0,-1.5) to  (5,-1.5);
\draw[] (0,-1.3) to  (5,-1.3);
\draw[] (0,-0.9) to (5,-0.9);
\draw[] (0,-0.2) to (5,-0.2);
\draw[] (0,0.1) to (5,0.1);

\draw[->] (8,-6) to node[pos = 1.05, right] { $H_T$} (8,1);
\draw[] (8,-5) to (10,-5);
\draw[] (10,-5) to (10.5,-5.2);
\draw[] (10,-5) to (10.5,-5);
\draw[] (10,-5) to (10.5,-4.8);
\draw[] (10.5,-5.2) to (13,-5.2);
\draw[] (10.5,-5) to node[pos = 1.4] {$\{\vec{t}_1, \vec{t}_2, \vec{t}_3\}$} (13,-5);
\draw[] (10.5,-4.8) to (13,-4.8);
\draw[] (8,-4) to node[pos = 1.2] { $\vec{t}_4$} (13,-4);
\draw[] (8,-2) to  (13,-2);
\draw[] (8,0) to  (13,0);

\draw[red] (0,-3) to (15,-3);
\node[] at (16,-3) {$\Delta = 3$};
\draw[] (0,-6) to (15,-6);
\node[] at (16,-6) {$E=0$};
\node[] at (16,-5) {$E=1$};
\node[] at (16,-4) {$E=2$};
\node[] at (16,-2) {$E=4$};
\node[] at (16,0) {$E=6$};

\end{tikzpicture}
    }
    \end{subfigure}
    \caption{
    \cref{fig:ex sim 1} shows the spin systems $S$ (left hand side) and $T$ (right hand side) from \cref{ex:simulation} as well as the physical spin assignment $\phys$ of the simulation $S \to T$. Each spin of $T$ is encoded into a pair of spins of $S$.
    \cref{fig:ex sim 2} shows the spectra of $H_S-\Shift$ (left hand side) and $H_T$ (right hand side). The simulation $S \to T$ has shift $\Shift=3$ and cut-off $\Delta=3$. Below the cut-off, the spectra of $H_S-\Shift$ and $H_T$ are identical, up to identifying each low-energy target configurations $\vec{t}_i$ with the respectively unique configuration $\vec{s}_i \in \simul(\vec{t_i})$.
    Above the cut-off, the two spectra are only drawn schematically.}
\end{figure*}

\begin{table}[th]
\centering
\begin{tabular}{ |p{1.8cm}|p{1.8cm}|p{1.8cm}|p{1.8cm}|  }
 \hline
 \multicolumn{2}{|c}{Fields} & \multicolumn{2}{|c|}{Pair interactions} \\
 \hline
 Spin & Field & Edge & Interaction\\
 \hline
$1$   & $\frac{3}{4}\cdot \pi_1$   & $\{1,3\}$   & $1 \cdot \bar{\pi}_2$   \\
$2$   &  $\frac{1}{4}\cdot \pi_1$  &$\{1,4\}$    & $\frac{3}{4}\cdot \pi_2 $  \\
$3$   &  $\frac{1}{4}\cdot \pi_1$   &$\{1,5\}$    & $\frac{1}{4} \cdot \bar{\pi}_2$  \\
$4$   &  $\frac{1}{4}\cdot \pi_1$  &$\{1,10\}$    & $ \frac{3}{4}\cdot \pi_2$  \\
$5$   &  $\frac{1}{4}\cdot \pi_1$  &$\{2,4\}$    &  $\frac{3}{4} \cdot \bar{\pi}_2 $ \\
$6$   &  $\frac{3}{4}\cdot \pi_1$  &$\{2,5\}$    &  $ \frac{1}{4} \cdot \pi_2$ \\
$7$   &   $2\cdot \bar{\pi}_1$ &$\{2,6\}$    &  $ 1 \cdot \bar{\pi}_2$ \\
$8$   &  $1\cdot \bar{\pi}_1$  &$\{2,10\}$    &  $\frac{1}{4}\cdot \pi_2 $  \\
$9$   &  $\frac{5}{4} \cdot \bar{\pi}_1$  &$\{3,4\}$ & $ \frac{1}{4}\cdot \pi_2$     \\
$10$  &   $\frac{1}{4}\cdot \pi_1$  &$\{3,5\}$ & $ \frac{3}{4}\cdot \bar{\pi}_2$   \\
      &    &$\{3,10\}$         & $ \frac{1}{4} \cdot \pi_2$   \\
      &    &$\{4,6\}$          & $ \frac{1}{4} \cdot \bar{\pi}_2$    \\
      &    &$\{4,8\}$             &  $ 1 \cdot \bar{\pi}_2$  \\
      &    &$\{4,9\}$             &  $ \frac{3}{4}\cdot \pi_2$  \\
      &    &$\{5,6\}$             &   $\frac{3}{4}\cdot \pi_2 $ \\
      &    &$\{5,8\}$             &    $1 \cdot \bar{\pi}_2 $ \\
      &    &$\{5,9\}$             &   $\frac{3}{4} \cdot \pi_2$ \\
      &    &$\{6,10\}$             &   $ \frac{3}{4} \cdot \pi_2$ \\
      &    &$\{8,10\}$             &   $ 1 \cdot \pi_2$ \\
      &    &$\{9,10\}$             &   $ \frac{3}{4} \cdot \pi_2$ \\
\hline
\end{tabular}
\caption{Local interactions of the source spin system $S$ from \cref{ex:simulation}. They are all Ising pair interactions $\pi_2, \bar{\pi}_2$ or Ising fields $\pi_1, \bar{\pi}_1$, both defined in \cref{ex:Ising system}.}\label{tab: local int ex}
\end{table}

Since $S$ and $T$ differ w.r.t.\ their spin type, the first step consists of specifying how $3$-level spins form $T$ are encoded into $2$-level spins from $S$. Consider 
\begin{equation}
    \begin{split}
        \enc &\colon [3] \to [2]^2 \\
        \dec &\colon [2]^2 \to [3],
    \end{split}
\end{equation}
defined by $\enc(1) = (1,1)$, $\enc(2) = (1,2)$, $\enc(3)=(2,1)$ and $\dec(1,1)=1$, $\dec(1,2)=2$, $\dec(2,1)=3$, $\dec(2,2)=1$. Clearly, $\dec\circ \enc= \mr{id}$, so $\dec, \enc$ satisfy \cref{def:sys simulation} \ref{def:sim enc-dec}.
Since $\enc$ is the single encoding, it trivially satisfies \cref{def:sys simulation} \ref{def:sim disjoint enc}.

Next we define the physical spin assignment,
\begin{equation}
    \phys \colon V_T \to V_S^2.
\end{equation}
We take $\phys(1)=(1,2)$, $\phys(2)=(3,4)$ and $\phys(3)=(5,6)$. This satisfies \cref{def:sys simulation} \ref{def:sim physical spins}. 
$\phys$ is illustrated in \cref{fig:ex sim 1}.

By direct computation, it can be verified that this defines a simulation of type $S \to T$ with cut-off $\Del =3$, shift $\Shift = 3$ and degeneracy $\degeneracy=1$.
More precisely, the couplings of $S$ are chosen such that there are exactly $4$ source configurations with energy below $6$. 
These are exactly the four configurations $\vec{s}_1, \ldots, \vec{s}_4$, satisfying 
\begin{equation}
    \simul(\vec{t}_i) = \{\vec{s}_i\}.
\end{equation}
These configurations are listed in \cref{tab:ex simul}.
This proves that \cref{def:sys simulation} \ref{def:sim deg} is satisfied with $\degeneracy=1$.

\begin{table}[th]
    \centering
    \begin{tabular}{|c|c|}
    \hline
       Target configuration  $\vec{t}_i$ & Source configuration $\vec{s}_i \in \simul(\vec{t}_i)$  \\
       \hline 
        (1,1,1) & (1,1,1,1,1,1,2,1,2,2) \\
        (2,2,2) & (1,2,1,2,1,2,2,2,2,1) \\
        (3,3,3) & (2,1,2,1,2,1,2,2,2,1)\\
        (1,2,3) & (1,1,1,2,2,1,2,2,1,2)\\
        \hline
    \end{tabular}
    \caption{Simulation assignment of \cref{ex:simulation}: The first column contains the four target configurations $\vec{t}_i$ with energy below $\Del = 3$, while the second column contains the respectively unique source configuration $\vec{s}_i \in \simul(\vec{t}_i)$. 
    }
    \label{tab:ex simul}
\end{table}

Finally, the four source configurations $\vec{s}_i$ satisfy 
\begin{equation}
    H_S(\vec{s}_i) - 3 = H_T(\vec{t}_i).
\end{equation}
Thus, below the cut-off, the spectra of $H_T$ and $H_S-\Shift$ agree (see \cref{fig:ex sim 2}).
Hence, also \cref{def:sys simulation} \ref{def:sim energy} is satisfied, showing that this provides a simulation of type $S \to T$.
\end{example}

\subsubsection{Simulations Preserve Spectra and Ground States} \label{sssec:spectra}

Spin system simulations preserve important properties of spin systems: 
If $S \to T$, their spectra agree below the cut-off (\cref{lem:sim spectrum}) and the ground states of $T$ can be obtained from those of $S$ (\cref{lem:ground state}). These results follow from the crucial observation that simulation increases the degeneracy of each energy level by the same factor (\cref{lem:sim deg}). 
We first need some preparatory results. 

\begin{lemma}\label{lem:sim high energy} Let $S\to T$ be a simulation.
    If $H_T(\vec{t})\geq \Del$ then $\simul(\vec{t})= \emptyset$.
\end{lemma}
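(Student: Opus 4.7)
The plan is to argue by contrapositive: I will show that any $\vec{s} \in \simul(\vec{t})$ forces $H_T(\vec{t}) < \Del$, which immediately gives the claim. The proof is a direct unpacking of \cref{def:sys simulation}, so no real obstacle is expected; the only care needed is to correctly chain the two constraints that the definition places on elements of $\simul(\vec{t})$.

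Concretely, suppose $\vec{s} \in \simul(\vec{t})$. By the definition $\simul(\vec{t}) = \bigcup_i \simul_i(\vec{t})$, there exists some $i$ with $\vec{s} \in \simul_i(\vec{t})$. The very definition of $\simul_i(\vec{t})$ in condition \ref{def:sim deg}, equation \eqref{def:sim simul}, requires
\begin{equation}
    H_S(\vec{s}) - \Shift < \Del.
\end{equation}
On the other hand, condition \ref{def:sim energy} (the matching energies condition, equation \eqref{eq:sim def}) gives
\begin{equation}
    H_S(\vec{s}) - \Shift = H_T(\vec{t}).
\end{equation}

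Combining these two displays yields $H_T(\vec{t}) < \Del$. Taking the contrapositive: if $H_T(\vec{t}) \geq \Del$, then no such $\vec{s}$ can exist, i.e.\ $\simul(\vec{t}) = \emptyset$, which is exactly the claim. The main thing to emphasize in writing is that the strict inequality $H_S(\vec{s}) - \Shift < \Del$ is built into the definition of each $\simul_i(\vec{t})$ (it is not an extra assumption), so that the equality from \ref{def:sim energy} can transport the cut-off condition from source to target without loss.
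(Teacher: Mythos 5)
Your proof is correct and uses exactly the same two facts as the paper's: the strict cut-off inequality built into the definition of $\simul_i(\vec{t})$ in condition \ref{def:sim deg}, and the matching-energies equality of condition \ref{def:sim energy}. Phrasing it as a contrapositive rather than a direct contradiction is a cosmetic difference, not a different argument.
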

\begin{proof}
     Assume that $\vec{s}\in \simul(\vec{t})$. Then by definition of $\simul(\vec{t})$ 
\begin{equation}
    H_S(\vec{s})-\Shift < \Del 
\end{equation}
but by condition \ref{def:sim energy} 
\begin{equation}
    H_S(\vec{s})-\Shift = H_T(\vec{t}) \geq \Del .
\end{equation}
\end{proof}

This implies that $\simul(\vec{t})\neq \emptyset$ if and only if $H_T(\vec{t})< \Del$. 
Similarly, $\vec{s}\in \simSet$ if and only if $H_S(\vec{s}) - \Shift < \Del$ (by condition \ref{def:sim energy}).
In the following we call such configurations \emph{low-energy configurations} (both for target and source), where the term is understood relative to $\Del$ and $\Shift$. 
Low-energy configurations are those that either simulate a target configuration, or are simulated by a source configuration.

\begin{lemma}[lowering the cut-off]\label{thm:sim new delta}
    Let $\mc{f} \colon S \to T$,
    then setting $\Del_\mc{g} \leq \Del_\mc{f}$ and leaving all remaining parameters unchanged defines a simulation $\mc{g}\colon S \to T$.
\end{lemma}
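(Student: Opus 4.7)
The plan is to observe that among the data $(\Del,\Shift,\degeneracy,\phys,\dec,\enc)$, only the cut-off changes when passing from $\mc{f}$ to $\mc{g}$. Conditions \ref{def:sim physical spins}, \ref{def:sim enc-dec}, \ref{def:sim disjoint enc} of \cref{def:sys simulation} only reference $\phys$, $\dec$, and $\enc$, so they transfer verbatim from $\mc{f}$ to $\mc{g}$. The real work is therefore entirely about the two cut-off-dependent conditions \ref{def:sim deg} and \ref{def:sim energy}.

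For condition \ref{def:sim deg}, I would first establish the key equality
\begin{equation}
\simul_i^{\mc{g}}(\vec{t}) = \simul_i^{\mc{f}}(\vec{t}) \quad \text{whenever } H_T(\vec{t}) < \Del_\mc{g}.
\end{equation}
The inclusion $\simul_i^{\mc{g}}(\vec{t}) \subseteq \simul_i^{\mc{f}}(\vec{t})$ is immediate because $\Del_\mc{g} \leq \Del_\mc{f}$ only shrinks the energy window. For the converse, given $\vec{s}\in \simul_i^{\mc{f}}(\vec{t})$, apply the matching energies condition \ref{def:sim energy} of $\mc{f}$ to get $H_S(\vec{s}) - \Shift = H_T(\vec{t}) < \Del_\mc{g}$, which places $\vec{s}$ into $\simul_i^{\mc{g}}(\vec{t})$. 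Since $H_T(\vec{t}) < \Del_\mc{g} \leq \Del_\mc{f}$ also triggers the hypothesis of condition \ref{def:sim deg} for $\mc{f}$, the common cardinality is $\degeneracy$, as required.

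For condition \ref{def:sim energy}, the first half (matching energies on $\simul^{\mc{g}}(\vec{t})$) is again inherited immediately, since $\simul^{\mc{g}}(\vec{t}) \subseteq \simul^{\mc{f}}(\vec{t})$ by the argument above. The subtler half is to show that $\vec{s} \notin \simSet^{\mc{g}}$ implies $H_S(\vec{s}) - \Shift \geq \Del_\mc{g}$. I would split into two cases. If $\vec{s} \notin \simSet^{\mc{f}}$, then the corresponding clause of condition \ref{def:sim energy} for $\mc{f}$ gives $H_S(\vec{s}) - \Shift \geq \Del_\mc{f} \geq \Del_\mc{g}$. If instead $\vec{s} \in \simSet^{\mc{f}} \setminus \simSet^{\mc{g}}$, pick $\vec{t}$ and $i$ with $\vec{s}\in\simul_i^{\mc{f}}(\vec{t})$; by matching energies for $\mc{f}$, $H_S(\vec{s})-\Shift = H_T(\vec{t})$; since $\vec{s}\notin\simul_i^{\mc{g}}(\vec{t})$ while $\vec{s}\circ\phys = \enc_i\circ\vec{t}$ still holds, the only way membership fails is via the energy constraint $H_S(\vec{s})-\Shift \geq \Del_\mc{g}$.

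There is no real obstacle here; the lemma is essentially a consistency check that the simulation conditions behave monotonically in the cut-off. The only place that requires a moment's thought is handling configurations that lie in $\simSet^{\mc{f}}$ but are pushed out of $\simSet^{\mc{g}}$ by the tightened cut-off, which is resolved cleanly by the case split above.
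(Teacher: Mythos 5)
Your proof is correct and proceeds along the same lines as the paper, by splitting $\vec{s}\notin\simSet^{\mc{g}}$ into the cases $\vec{s}\notin\simSet^{\mc{f}}$ and $\vec{s}\in\simSet^{\mc{f}}\setminus\simSet^{\mc{g}}$ and using the matching-energies condition of $\mc{f}$. You are actually slightly more careful than the paper, which dismisses condition \ref{def:sim deg} as holding ``trivially''; your explicit verification that $\simul_i^{\mc{g}}(\vec{t})=\simul_i^{\mc{f}}(\vec{t})$ when $H_T(\vec{t})<\Del_{\mc{g}}$ (via matching energies of $\mc{f}$) is the argument that actually justifies that claim.
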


\begin{proof}
    To prove the claim we need to prove condition \ref{def:sim energy} of \cref{def:sys simulation}. The other conditions hold trivially. 
    If $\vec{s}\in \simul_\mc{g}(\vec{t})$ then by definition of $\simul_\mc{g}$, $H_T(\vec{t})<\Del_\mc{g}\leq \Del_\mc{f}$. Since $\mc{f}$ defines a simulation with cut-off $\Del_\mc{f}$ and $\Shift \coloneqq \Shift_\mc{f} = \Shift_\mc{g}$, 
    \begin{equation}
        H_S(\vec{s})-\Shift  =H_T(\vec{t}).
    \end{equation}
    If $\vec{s}\notin \simSet_\mc{g}$ then either $\vec{s}\notin \simSet_\mc{f}$ and therefore 
    \begin{equation}
        H_S(\vec{s})-\Shift > \Delta_\mc{f} \geq \Del_\mc{g} , 
    \end{equation}
    or $\vec{s} \in \simul_\mc{f}(\vec{t})$ for some $\vec{t}$ with $\Del_\mc{f} > H_T(\vec{t}) \geq \Del_\mc{g}$ and therefore 
    \begin{equation}
         H_S(\vec{s})-\Shift = H_T(\vec{t}) \geq \Del_\mc{g}.
    \end{equation}
\end{proof}

\begin{proposition}[constant degeneracy]\label{lem:sim deg}
    Let $S \to T$ and let $\vec{t}\in \mc{C}_T$ be such that $\xi\coloneqq H_T(\vec{t})<\Del$. Then 
    \begin{equation}
        \vert H_S^{-1}(\{\xi+\Shift\}) \vert = \vert \enc \vert \cdot \degeneracy \cdot \vert H_T^{-1}(\{\xi\}) \vert .
    \end{equation}
\end{proposition}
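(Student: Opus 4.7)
The plan is to partition $H_S^{-1}(\{\xi + \Shift\})$ into blocks indexed by target configurations $\vec{t}' \in H_T^{-1}(\{\xi\})$, where each block is $\simul(\vec{t}')$, and to further partition each block into the $|\enc|$ disjoint pieces $\simul_i(\vec{t}')$, each of size $\degeneracy$. Counting then yields the claimed identity.

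First I would establish the outer partition. For the inclusion $\simul(\vec{t}') \subseteq H_S^{-1}(\{\xi+\Shift\})$, note that for any $\vec{s} \in \simul(\vec{t}')$, condition \ref{def:sim energy} of \cref{def:sys simulation} gives $H_S(\vec{s}) - \Shift = H_T(\vec{t}') = \xi$. For the reverse inclusion, suppose $H_S(\vec{s}) = \xi + \Shift$; since $\xi < \Del$, the contrapositive of the second half of condition \ref{def:sim energy} forces $\vec{s} \in \simSet$, so $\vec{s} \in \simul(\vec{t}')$ for some $\vec{t}'$, and then the first half of condition \ref{def:sim energy} yields $H_T(\vec{t}') = \xi$. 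Disjointness of the blocks follows from decode-encode compatibility (condition \ref{def:sim enc-dec}): if $\vec{s}$ lies in both $\simul_i(\vec{t}')$ and $\simul_j(\vec{t}'')$, then $\enc_i \circ \vec{t}' = \vec{s} \circ \phys = \enc_j \circ \vec{t}''$, and applying $\dec$ componentwise gives $\vec{t}' = \dec \circ \enc_i \circ \vec{t}' = \dec \circ \enc_j \circ \vec{t}'' = \vec{t}''$.

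Next I would compute $|\simul(\vec{t}')|$ for a fixed $\vec{t}' \in H_T^{-1}(\{\xi\})$. Since $H_T(\vec{t}') = \xi < \Del$, condition \ref{def:sim deg} guarantees $|\simul_i(\vec{t}')| = \degeneracy$ for each $i = 1, \ldots, |\enc|$. Disjointness of the $\simul_i(\vec{t}')$ for $i \neq j$ uses the disjoint encodings condition \ref{def:sim disjoint enc}: if $\vec{s}$ lay in both, then for every $v \in V_T$ we would have $\enc_i(\vec{t}'(v)) = \enc_j(\vec{t}'(v))$, forcing $i = j$ (assuming $V_T \neq \emptyset$, which holds whenever $H_T^{-1}(\{\xi\})$ is nontrivial). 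Hence $|\simul(\vec{t}')| = \sum_{i=1}^{|\enc|} |\simul_i(\vec{t}')| = |\enc|\cdot \degeneracy$.

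Combining the two partitions,
\begin{equation}
|H_S^{-1}(\{\xi+\Shift\})| = \sum_{\vec{t}' \in H_T^{-1}(\{\xi\})} |\simul(\vec{t}')| = |\enc| \cdot \degeneracy \cdot |H_T^{-1}(\{\xi\})|,
\end{equation}
as claimed. There is no real obstacle here; the only subtlety is being careful about the two different disjointness arguments (one using $\dec\circ\enc_i=\mr{id}$ to separate blocks by target configuration, the other using condition \ref{def:sim disjoint enc} to separate within a block by encoding index), both of which are immediate from the axioms.
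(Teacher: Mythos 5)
Your proof is correct and follows essentially the same route as the paper's: you decompose $H_S^{-1}(\{\xi+\Shift\})$ into the blocks $\simul(\vec{t}')$ via condition \ref{def:sim energy}, count each block as $\vert\enc\vert\cdot\degeneracy$ using conditions \ref{def:sim deg} and \ref{def:sim disjoint enc}, and separate blocks across distinct targets via condition \ref{def:sim enc-dec}. The only difference is that you spell out the two inclusions in the outer partition more explicitly than the paper does.
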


\begin{proof}
    By \cref{def:sys simulation} condition \ref{def:sim energy} 
    only configurations $\vec{s}\in \simul(\vec{t}')$ for some $\vec{t}' \in H_T^{-1}(\xi)$ satisfy $\vec{s} \in H_S^{-1}(\xi+\Shift)$. 
    Hence, 
    \begin{equation}
        H_S^{-1}(\{\xi+\Shift\}) =  \bigcup_{\vec{t}'\in H_T^{-1}(\xi)} \bigcup_{i=1}^{\vert \enc \vert} \simul_i(\vec{t}')  .
    \end{equation}
    By condition \ref{def:sim deg}, for each such $\vec{t}'$ and each $i=1, \ldots, \vert \enc \vert$ there are $\degeneracy$ source configurations in $\simul_i(\vec{t}')$.
    By condition \ref{def:sim disjoint enc}, if $\vec{s}_i \in \simul_i(\vec{t}')$ and $\vec{s}_j \in \simul_j(\vec{t}')$ then $\vec{s}_i \neq \vec{s}_j$, so  $\vert \simul(\vec{t}') \vert = \vert \enc \vert \cdot \degeneracy$.
    In total this proves
    \begin{equation}
       \Big \vert \bigcup_{i=1}^{\vert \enc \vert} \simul_i(\vec{t}') \Big \vert = \degeneracy \cdot \vert \enc \vert.
    \end{equation}
    By condition \ref{def:sim enc-dec}, $\vec{s}\in \simul(\vec{t}')$ implies 
    \begin{equation}
        \dec\circ \vec{s} \circ \phys = \vec{t}',
    \end{equation}
    so source configurations that simulate different target configurations are different, i.e.\ the union over $\vec{t}' \in H_T^{-1}(\{ \xi \})$ is disjoint, 
    which finishes the proof.
\end{proof}

\begin{proposition}[preservation of spectrum]\label{lem:sim spectrum}
    Let $S \to T$ with cut-off $\Del$ and shift $\Shift$. Then 
    \begin{equation}
        \mr{Im}(H_T)_{<\Del} = \mr{Im}(H_S-\Shift)_{<\Del},
    \end{equation}
    where 
    \begin{equation}
        \mr{Im}(H)_{<\Del}\coloneqq \{\xi \in \mr{Im}(H) \mid \xi < \Del\}.
    \end{equation}
\end{proposition}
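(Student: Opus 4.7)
The plan is to prove the two set-theoretic inclusions separately, each following almost immediately from a previously established result.

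For the forward inclusion $\mr{Im}(H_T)_{<\Del}\subseteq \mr{Im}(H_S-\Shift)_{<\Del}$, I would take any $\xi \in \mr{Im}(H_T)_{<\Del}$, so $\vert H_T^{-1}(\{\xi\}) \vert \geq 1$. Applying \cref{lem:sim deg} to this $\xi < \Del$ gives
\begin{equation}
\vert H_S^{-1}(\{\xi+\Shift\}) \vert = \vert \enc \vert \cdot \degeneracy \cdot \vert H_T^{-1}(\{\xi\}) \vert \geq 1,
\end{equation}
provided $\vert \enc \vert \geq 1$ and $\degeneracy \geq 1$, which both follow from the data specifying a simulation (the encoding list is non-empty and the constant local degeneracy is a positive natural number, as otherwise no low-energy target configuration could be simulated). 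Hence $\xi + \Shift \in \mr{Im}(H_S)$, i.e.\ $\xi \in \mr{Im}(H_S-\Shift)_{<\Del}$.

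For the reverse inclusion $\mr{Im}(H_S-\Shift)_{<\Del}\subseteq \mr{Im}(H_T)_{<\Del}$, I would take $\xi \in \mr{Im}(H_S-\Shift)_{<\Del}$ and choose $\vec{s}\in \mc{C}_S$ with $H_S(\vec{s})-\Shift =\xi < \Del$. Then $\vec{s}$ cannot satisfy the second part of \cref{def:sys simulation} condition \ref{def:sim energy}, since that would force $H_S(\vec{s})-\Shift \geq \Del$. Therefore $\vec{s}\in \simSet$, so there exists $\vec{t}\in \mc{C}_T$ with $\vec{s}\in \simul(\vec{t})$, and the first part of condition \ref{def:sim energy} yields $H_T(\vec{t}) = H_S(\vec{s})-\Shift = \xi$. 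In particular $\xi \in \mr{Im}(H_T)$ and $\xi < \Del$, as required.

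The proof is essentially a direct packaging of the two previous results, so there is no serious obstacle; the only subtlety I would flag is the implicit positivity of $\degeneracy$ and $\vert \enc \vert$, which I would justify briefly before invoking \cref{lem:sim deg}. Combining the two inclusions yields the claimed equality.
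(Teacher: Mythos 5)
Your proof is correct and takes essentially the same route as the paper, which simply states that the claim follows from \cref{lem:sim deg}: you make explicit the forward inclusion (via \cref{lem:sim deg}, noting $\degeneracy, \vert\enc\vert \geq 1$) and the reverse inclusion (via condition \ref{def:sim energy} of \cref{def:sys simulation} directly). Your caveat about the positivity of $\degeneracy$ and $\vert\enc\vert$ is a legitimate one that the paper leaves implicit—it is likewise presupposed in the remark following \cref{lem:sim high energy} that $\simul(\vec{t}) \neq \emptyset$ whenever $H_T(\vec{t}) < \Del$.
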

\begin{proof}
Follows immediately from \cref{lem:sim deg}. \end{proof}

\begin{proposition}[preservation of ground state]\label{lem:ground state}
Let $S\to T$ with cut-off 
\begin{equation}
    \Del > \mr{min}(H_T)
\end{equation}
and let $\vec{s}$ be a \emph{ground state} of $S$, that is, 
\begin{equation}
    \vec{s} \in \mr{GS}(S)\coloneqq H_S^{-1}(\{\mr{min}(H_S)\}).
\end{equation}
Then $\vec{t}=\dec\circ \vec{s} \circ \phys$ is a ground state of $T$.
\end{proposition}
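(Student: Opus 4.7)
The plan is to use preservation of spectrum (\cref{lem:sim spectrum}) together with decode-encode compatibility to argue that (i) the ground state $\vec{s}$ must lie in the simulation set $\simSet$, and (ii) the target configuration it decodes to achieves the target minimum.

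First I would observe that the hypothesis $\Del > \mr{min}(H_T)$ guarantees $\mr{min}(H_T) \in \mr{Im}(H_T)_{<\Del}$, so by \cref{lem:sim spectrum} there exists a source configuration $\vec{s}^\ast$ with $H_S(\vec{s}^\ast) - \Shift = \mr{min}(H_T)$. This gives the key bound
\begin{equation}
\mr{min}(H_S) \leq H_S(\vec{s}^\ast) = \mr{min}(H_T) + \Shift < \Del + \Shift.
\end{equation}
Since $\vec{s}$ is a ground state, $H_S(\vec{s}) - \Shift = \mr{min}(H_S) - \Shift < \Del$, so by the contrapositive of the second part of condition \ref{def:sim energy} we conclude $\vec{s} \in \simSet$.

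Next I would use this membership: $\vec{s} \in \simul_i(\vec{t}')$ for some target $\vec{t}'$ and some index $i$. By the definition of $\simul_i$ we have $\vec{s} \circ \phys = \enc_i \circ \vec{t}'$, and by decode-encode compatibility (condition \ref{def:sim enc-dec}),
\begin{equation}
\dec \circ \vec{s} \circ \phys = \dec \circ \enc_i \circ \vec{t}' = \vec{t}',
\end{equation}
so $\vec{t}' = \vec{t}$. Then matching energies (condition \ref{def:sim energy}) yields $H_T(\vec{t}) = H_S(\vec{s}) - \Shift$.

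Finally I would close the argument by comparing with $\vec{s}^\ast$: since $\vec{s}$ minimises $H_S$, $H_S(\vec{s}) \leq H_S(\vec{s}^\ast) = \mr{min}(H_T) + \Shift$, hence $H_T(\vec{t}) \leq \mr{min}(H_T)$, which forces equality. The main conceptual hurdle is step (i)—ensuring the ground state of $S$ actually sits below the cut-off, because otherwise the decoded image $\vec{t}$ could lie outside the regime in which the simulation controls energies; this is exactly where the hypothesis $\Del > \mr{min}(H_T)$ combined with \cref{lem:sim spectrum} is essential. The remaining steps are immediate applications of the bookkeeping in \cref{def:sys simulation}.
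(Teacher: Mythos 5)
Your proof is correct and follows essentially the same route as the paper: use \cref{lem:sim spectrum} to place $\mr{min}(H_T)$ in the shifted source spectrum, conclude $\vec{s}\in\simSet$, apply decode-encode compatibility to identify $\vec t$, and invoke matching energies. The only cosmetic difference is that you close with the sandwich $H_T(\vec t)\leq\mr{min}(H_T)$, whereas the paper first asserts $\mr{min}(H_S)-\Shift=\mr{min}(H_T)$ and then substitutes; your version makes the "why is $\vec s\in\simSet$" step slightly more explicit, which is a fine choice.
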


\begin{proof}
By \cref{lem:sim spectrum}, $\mr{min}(H_S)-\Shift = \mr{min}(H_T) < \Del$, so $\vec{s}\in \simSet$.
Since $\vec{t}=\dec\circ \vec{s} \circ \phys$, it must be the case that $\vec{s}\in \simul(\vec{t})$, and thus by condition \ref{def:sim energy} of \cref{def:sys simulation}
\begin{equation}
    H_T(\vec{t}) = H_S(\vec{s})-\Shift = \mr{min}(H_S)-\Shift = \mr{min}(H_T).
\end{equation}\end{proof}

\subsubsection{Simulations Preserve Thermodynamic Quantities}\label{sssec:thermo}

Spin system simulations also preserve thermodynamic quantities:  
If $S \to T$, the partition function (\cref{lem:sim part fun sys}) 
and Boltzmann distribution (\cref{thm:boltzmann sys}) 
of $T$ can be approximated by that of $S$.

\begin{definition}[partition function]\label{def:part fun}
Let $S$ be a spin system. The \emph{partition function} of $S$ is the function
    \begin{equation}
        \begin{split}
            Z_S &\colon \mathbb{R}_{>0} \to \mathbb{R}_{>0} \\
            Z_S(\beta) &\coloneqq \sum_{\vec{s} \in \mc{C}_S} e^{-\beta H_S(\vec{s})}
        \end{split}
    \end{equation}
\end{definition}
Various thermodynamic quantities, such as the free energy or the entropy, can be derived from the partition function. It also appears as the normalization in the Boltzmann distribution $p_{S,\beta}$, the probability distribution of configurations of $S$ in thermal equilibrium at (inverse) temperature $\beta$.

\begin{definition}[Boltzmann distribution]\label{def:Boltzmann}
    Let $S$ be a spin system and $\beta \in \mathbb{R}_{>0}$. The \emph{Boltzmann distribution} of $S$ at $\beta$, $p_{S,\beta}$ is the probability distribution on $\mc{C}_S$, defined by
    \begin{equation}
            p_{S,\beta}(\vec{s})  = \frac{1}{Z_S(\beta)} e^{-\beta H_S(\vec{s})}
    \end{equation}
\end{definition}

If we think of a spin system $S$ as a thermodynamic system, then when coupled to a heat bath at temperature $\beta$, the probability of $S$ being is configuration $\vec{s}$ is precisely the Boltzmann distribution $p_{S,\beta}(\vec{s})$.

We will now see that whenever $S \to T$, both the partition function and Boltzmann distribution of $T$ can be approximated by that of $S$. This means that simulations preserve most (if not all) thermodynamic properties of spin systems.

\begin{proposition}[preservation of partition function]\label{lem:sim part fun sys}
If $S \to T$ with cut-off $\Del > \mr{max}(H_T)$, then, for all $\beta>0$,
\begin{equation}\label{eq:part fun error}
    \left\vert \frac{1}{e^{-\Shift \beta} m d}\cdot Z_S(\beta) - Z_T(\beta) \right\vert
    \leq \frac{1}{md} \cdot q_S^{\vert V_S \vert} \cdot e^{-\beta \Del}
\end{equation}
where $m=\vert \enc \vert $.
\end{proposition}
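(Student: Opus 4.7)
The strategy is to split the sum defining $Z_S(\beta)$ according to whether a configuration $\vec{s}$ is a low-energy configuration (i.e.\ $\vec{s} \in \simSet$, equivalently $H_S(\vec{s}) - \Shift < \Del$) or a high-energy configuration ($\vec{s} \notin \simSet$, equivalently $H_S(\vec{s}) - \Shift \geq \Del$ by \cref{def:sys simulation}~\ref{def:sim energy}). The low-energy part will reproduce $Z_T(\beta)$ exactly up to the prefactor $e^{-\Shift\beta} m d$, while the high-energy part gives the error term.

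\textbf{Main step (low-energy part).} Since $\Del > \mr{max}(H_T)$, every target configuration satisfies $H_T(\vec{t}) < \Del$, so the image of $H_T$ lies entirely in the regime where \cref{lem:sim deg} applies. Regrouping by energy level $\xi \in \mr{Im}(H_T)$ and invoking \cref{lem:sim deg} to count the number of source configurations at energy $\xi + \Shift$, I would write
\begin{equation}
    \sum_{\vec{s}\in \simSet} e^{-\beta H_S(\vec{s})} = \sum_{\xi \in \mr{Im}(H_T)} m\, d\, \bigl\vert H_T^{-1}(\{\xi\})\bigr\vert \, e^{-\beta(\xi + \Shift)} = e^{-\beta \Shift}\, m\, d\, Z_T(\beta),
\end{equation}
where the last equality just collects the sum over $\xi \in \mr{Im}(H_T)$ back into a sum over $\vec{t} \in \mc{C}_T$.

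\textbf{High-energy part and conclusion.} For $\vec{s} \notin \simSet$, condition \ref{def:sim energy} gives $H_S(\vec{s}) \geq \Shift + \Del$, so each such term is bounded by $e^{-\beta(\Shift + \Del)}$, and there are at most $q_S^{\vert V_S\vert}$ configurations in total. Hence the remainder $R \coloneqq \sum_{\vec{s}\notin \simSet} e^{-\beta H_S(\vec{s})}$ satisfies $0 \leq R \leq q_S^{\vert V_S\vert} e^{-\beta(\Shift + \Del)}$. Combining,
\begin{equation}
    Z_S(\beta) = e^{-\beta \Shift}\, m\, d\, Z_T(\beta) + R,
\end{equation}
and dividing by $e^{-\Shift \beta} m d$ and taking absolute values yields the bound stated in \eqref{eq:part fun error}.

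\textbf{Obstacles.} There is essentially no obstacle here: the key structural content is already packaged into \cref{lem:sim deg}, which hands over the exact degeneracy relation needed to collapse the low-energy sum into $Z_T(\beta)$. The only mild subtlety is verifying that the hypothesis $\Del > \mr{max}(H_T)$ is used precisely to ensure that \emph{all} target configurations fall into the regime of \cref{lem:sim deg} (so that no target contribution is lost), and that the trivial bound $\vert \mc{C}_S\vert = q_S^{\vert V_S \vert}$ on the number of high-energy configurations suffices to match the stated error.
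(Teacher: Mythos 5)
Your proposal is correct and follows essentially the same route as the paper: split $Z_S(\beta)$ into the $\simSet$ and non-$\simSet$ parts, use the constant-degeneracy result (\cref{lem:sim deg}) to collapse the low-energy sum into $e^{-\beta\Shift}\,m\,d\,Z_T(\beta)$, and bound the high-energy remainder by $q_S^{\vert V_S\vert}e^{-\beta(\Shift+\Del)}$. The only cosmetic difference is that you regroup the low-energy sum by energy level $\xi\in\mr{Im}(H_T)$ while the paper regroups by target configuration $\vec t$ and inserts $\vert\simul(\vec t)\vert=md$ directly; these are equivalent bookkeeping choices.
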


Note that increasing $\Del$ while keeping the rest of the simulation unchanged, the right hand side of \cref{eq:part fun error} becomes arbitrarily small.

\begin{proof}
    First assume $S \to T$ with cut-off $\Del > \mr{max}(H_T)$ and shift $\Shift$, and let $\beta>0$ be arbitrary. 
    We rewrite the partition function $Z_S(\beta)$ by splitting the sum into those source configurations in $\simSet$, which by 
    \cref{def:sys simulation} \ref{def:sim energy} have energy $H_T(\vec{t})+\Shift$, for some target configuration $\vec{t}$ and those
    not in $\simSet$:
    \begin{equation}
    \begin{split}
        &Z_S(\beta) 
        = \sum_{\vec{s}\in \simSet}e^{-\beta H_S(\vec{s})} + \sum_{\vec{s}\notin \simSet}e^{-\beta H_S(\vec{s})} =\\
        &=  e^{-\beta \Shift} \cdot \sum_{\vec{t}\in \mc{C}_T} e^{-\beta H_T(\vec{t})} \sum_{\vec{s}\in \simul(\vec{t})}1
        + \sum_{\vec{s}\notin \simSet}e^{-\beta H_S(\vec{s})}.
        \end{split}
    \end{equation}
    Next, we use that $\Del > \mr{max}(H_T)$, so by \cref{lem:sim deg}, for all target configurations $\vec{t}$,  $\vert \simul(\vec{t}) \vert = m d$.
    This leads to the following expression: 
    \begin{equation}\label{eq:part fun approx}
        Z_S(\beta) = e^{-\beta \Shift}    m d \cdot Z_T(\beta) + \sum_{\vec{s}\notin \simSet}e^{-\beta H_S(\vec{s})},
    \end{equation}
    where we inserted the definition of the partition function of $T$.
   
    Second, since there are exactly $q_S^{\vert V_S \vert}$ source configurations, there must be less than $q_S^{\vert V_S \vert}$ source configurations which are not contained in $\simSet$. 
    Since for all those it holds that $H_S(\vec{s}) - \Shift \geq \Del$, we obtain the following upper bound for the second summand of the right hand side of \cref{eq:part fun approx}:
    \begin{equation}\label{eq:conf outside sim}
        \sum_{\vec{s}\notin \simSet}e^{-\beta H_S(\vec{s})} < q_S^{\vert V_S \vert} \cdot e^{- \beta (\Del + \Shift)}.
    \end{equation}
    Inserting this into \eqref{eq:part fun approx} results in  \eqref{eq:part fun error}.
\end{proof}

The Boltzmann distribution can also be approximated. To derive this result, we first define the \emph{simulation distribution}.

\begin{definition}[simulation distribution]\label{def:sim prob}
    Let $\mc{f}\colon S \to T$. The \emph{simulation distribution} of $\mc{f}$ at $\beta$, $p_{\mc{f}, \beta}$, is the probability distribution on $\mc{C}_T$, defined as
    \begin{equation}\label{eq:sim distrib}
            p_{\mc{f},\beta}(\vec{t}) = \sum_{\substack{\vec{s} \in \mc{C}_S,\\ \dec \circ \vec{s} \circ \phys = \vec{t}}} p_{S,\beta}(\vec{s}).
    \end{equation}
\end{definition}

If $S$ is at thermal equilibrium, then at temperature $\beta$ it is in configuration $\vec{s}$ with probability $p_{S,\beta}$. Now, for each such source configuration we can decode the target configuration that it corresponds to by restricting $\vec{s}$ to the physical spins and applying $\dec$, i.e.\ computing $ \dec \circ \vec{s} \circ \phys = \vec{t}$. 
The probability of obtaining such $\vec{t}$ is given by summing the probabilities of all $\vec{s}$ that decode to $\vec{t}$. This is how the simulation distribution $p_{\mc{f},\beta}$ is constructed.

In more abstract terms, up to decoding, $p_{\mc{f},\beta}$ is constructed from $p_{S,\beta}$ by marginalizing over auxiliary spins. If $\dec$ and $\phys$ are trivial, then $p_{\mc{f},\beta}$ is the marginal distribution of the physical spins.
Note that $p_{\mc{f},\beta}$ is normalized, so it defines a probability distribution on $\mc{C}_T$:
\begin{equation}
\begin{split}
    \sum_{\vec{t} \in \mc{C}_T} p_{\mc{f}, \beta}(\vec{t}) 
    &= \sum_{\vec{t} \in \mc{C}_T} \sum_{\substack{\vec{s} \in \mc{C}_S,\\ \dec \circ \vec{s} \circ \phys = \vec{t}}} p_{S,\beta}(\vec{s})    \\
    &= \sum_{\vec{s} \in \mc{C}_S} p_{S,\beta}(\vec{s}) = 1.
    \end{split}
\end{equation}

\begin{proposition}[preservation of Boltzmann distribution]\label{thm:boltzmann sys}
    Let $S,T$ be spin systems, 
    $ \beta >0$ and 
    $\mc{f} \colon S \to T$ with cut-off $\Del > \mr{max}(H_T)$.  
    Then  
    \begin{equation}\label{eq:total variance}
        \Vert p_{\mc{f},\beta}- p_{T,\beta}\Vert
        < \frac{1}{md}\cdot q_S^{\vert V_S \vert} \cdot e^{-\beta(\Del-\mr{min}(H_T))},
    \end{equation}
    where $m = \vert \enc \vert$.
\end{proposition}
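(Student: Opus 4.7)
The plan is to decompose $p_{\mc{f},\beta}(\vec t)$ into a ``good'' contribution coming from source configurations that properly simulate $\vec t$ and an ``error'' contribution coming from source configurations outside $\simSet$ that still happen to decode to $\vec t$, and then to bound each of them using \cref{lem:sim deg} and \cref{lem:sim part fun sys}.

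Concretely, for each $\vec t\in\mc C_T$ I would first observe that the decode-encode compatibility (condition \ref{def:sim enc-dec}) together with the definition of $\simul_i$ guarantees that every $\vec s\in\simul(\vec t)$ satisfies $\dec\circ\vec s\circ\phys=\vec t$, while any $\vec s\in\simSet\setminus\simul(\vec t)$ decodes to some other target configuration. Hence the set of source configurations decoding to $\vec t$ partitions as $\simul(\vec t)\;\sqcup\;\{\vec s\notin\simSet:\dec\circ\vec s\circ\phys=\vec t\}$. Writing $p_{\mc{f},\beta}(\vec t)=A(\vec t)+B(\vec t)$ accordingly, \cref{lem:sim deg} (applicable because $\Del>\max(H_T)$, so every $\vec t$ is low-energy) gives $|\simul(\vec t)|=md$ and matching energies yield
\begin{equation}
A(\vec t)=\frac{md\,e^{-\beta\Shift}}{Z_S(\beta)}\,e^{-\beta H_T(\vec t)}.
\end{equation}

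Next I would invoke \cref{lem:sim part fun sys}, or rather its key intermediate identity $Z_S(\beta)=md\,e^{-\beta\Shift}Z_T(\beta)+\Sigma$ where $\Sigma\coloneqq\sum_{\vec s\notin\simSet}e^{-\beta H_S(\vec s)}<q_S^{|V_S|}e^{-\beta(\Del+\Shift)}$. Substituting this into $A(\vec t)-p_{T,\beta}(\vec t)$ and reducing gives the clean expression
\begin{equation}
A(\vec t)-p_{T,\beta}(\vec t)=-\frac{\Sigma\,e^{-\beta H_T(\vec t)}}{Z_S(\beta)\,Z_T(\beta)}\le 0,
\end{equation}
so the ``$A$-part'' is uniformly a slight underestimate, while $B(\vec t)\ge 0$ is a non-negative correction. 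Summing, $\sum_{\vec t}|A(\vec t)-p_{T,\beta}(\vec t)|=\Sigma/Z_S(\beta)$ and $\sum_{\vec t}B(\vec t)=\Sigma/Z_S(\beta)$; since the signed total $\sum_{\vec t}(p_{\mc{f},\beta}-p_{T,\beta})$ vanishes, the total variation distance equals either of these, namely $\Sigma/Z_S(\beta)$.

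Finally I would lower bound the denominator via $Z_S(\beta)\ge md\,e^{-\beta\Shift}Z_T(\beta)\ge md\,e^{-\beta(\Shift+\min(H_T))}$, using that $Z_T(\beta)$ contains at least the ground state term $e^{-\beta\min(H_T)}$. Combining with the upper bound on $\Sigma$ cancels the $e^{\pm\beta\Shift}$ factors and yields exactly
\begin{equation}
\|p_{\mc{f},\beta}-p_{T,\beta}\|<\frac{1}{md}\,q_S^{|V_S|}\,e^{-\beta(\Del-\min(H_T))}.
\end{equation}
I do not anticipate a real obstacle: the computation is essentially algebraic once the partition between $\simul(\vec t)$ and $\simSet^c$ is set up correctly. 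The only subtle bookkeeping point is that one must use \emph{condition \ref{def:sim enc-dec}} to justify that every $\vec s\in\simul(\vec t)$ decodes to $\vec t$ (and no other), so that the split of $p_{\mc{f},\beta}(\vec t)$ is actually valid; getting that step explicit is the cleanest way to make the rest of the proof essentially routine.
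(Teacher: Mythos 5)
Your proof follows essentially the same route as the paper's: split the sum defining $p_{\mc{f},\beta}(\vec{t})$ into the $md$ configurations in $\simul(\vec{t})$ and the high-energy stragglers that decode to $\vec{t}$, invoke the intermediate identity $Z_S(\beta) = md\,e^{-\beta\Shift}Z_T(\beta) + \Sigma$ from the proof of \cref{lem:sim part fun sys}, arrive at the bound $\Sigma/Z_S(\beta)$, and then estimate numerator and denominator exactly as the paper does. The decomposition into a negative ``$A$-part'' and a nonnegative ``$B$-part'' is a slightly crisper way to organize the same cancellation the paper handles via the triangle inequality.

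One step is stated too strongly. You write that ``the total variation distance \emph{equals} $\Sigma/Z_S(\beta)$,'' appealing to the fact that the signed total vanishes. That reasoning would be valid if the $A$-contribution were supported on a disjoint set of $\vec{t}$ from the $B$-contribution, but here $A(\vec t) - p_{T,\beta}(\vec t) = -\Sigma e^{-\beta H_T(\vec t)}/(Z_S Z_T)$ is nonzero for \emph{every} $\vec t$ whenever $\Sigma > 0$, so for any $\vec t$ with $B(\vec t) > 0$ the two parts partially cancel inside the absolute value, making the total variation strictly less than $\Sigma/Z_S(\beta)$. The correct statement is
\begin{equation}
\Vert p_{\mc{f},\beta} - p_{T,\beta}\Vert = \sum_{\vec t}\max\bigl(p_{\mc{f},\beta}(\vec t) - p_{T,\beta}(\vec t),\, 0\bigr) \le \sum_{\vec t} B(\vec t) = \frac{\Sigma}{Z_S(\beta)},
\end{equation}
since $p_{\mc{f},\beta}(\vec t) - p_{T,\beta}(\vec t) = \bigl(A(\vec t)-p_{T,\beta}(\vec t)\bigr) + B(\vec t) \le B(\vec t)$. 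Because you only need an upper bound, replacing the claimed equality with this inequality costs nothing, and the remainder of your argument, including the lower bound $Z_S \ge md\,e^{-\beta\Shift}Z_T \ge md\,e^{-\beta(\Shift+\min(H_T))}$ and the strict upper bound on $\Sigma$, yields exactly the stated result.
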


\begin{proof}
    The total variation distance $\Vert \: \Vert$ is defined as
    \begin{equation}\label{eq:def tot var}
         \Vert p_{\mc{f},\beta}- p_{T,\beta}\Vert \coloneqq 
         \frac{1}{2}\sum_{\vec{t} \in \mc{C}_T} \left\vert p_{\mc{f}, \beta}(\vec{t})- p_{T,\beta}(\vec{t})\right\vert .
    \end{equation}
    Similar to the proof of \cref{lem:sim part fun sys}, we split the sum in the definition of $p_{\mc{f},\beta}$ into low energy and high energy configurations.
    Additionally, we use that low energy configurations $\vec{s}$  satisfy $\dec \circ \vec{s} \circ \phys$ are precisely configurations from $\simul(\vec{t})$ and hence also satisfy  $H_S(\vec{s})-\Shift = H_T(\vec{t})$. Second, by \cref{lem:sim deg}, for each target configuration $\vec{t}$ there are exactly $m\cdot \degeneracy $ source configurations in $\simul(\vec{t})$. In total, 
    \begin{equation}
    \begin{split}
            p_{\mc{f},\beta}(\vec{t}) &= \frac{1}{Z_S(\beta)} e^{-\beta \Shift} m d \cdot e^{-\beta H_T(\vec{t})} +  \\ 
             & + \frac{1}{Z_S(\beta)} \sum_{\substack{\vec{s}\notin \simul(\vec{t}), 
             \dec \circ \vec{s} \circ P = \vec{t}}} e^{-\beta H_S(\vec{s})}.
    \end{split}
    \end{equation}
    We define
    \begin{equation}
        \zeta_1(\beta, \vec{t}) = \sum_{\substack{\vec{s}\notin \simul(\vec{t}),\\ \dec \circ \vec{s} \circ P = \vec{t}}} e^{-\beta H_S(\vec{s})}.
    \end{equation}
     From \cref{lem:sim part fun sys}
     we obtain 
    \begin{equation}
        Z_S(\beta) = e^{-\beta \Shift}md \cdot Z_T(\beta) + \sum_{\vec{s}\notin \simSet}e^{-\beta H_S(\vec{s})}.
    \end{equation}
    Defining
    \begin{equation}
        \zeta_2 (\beta) = \sum_{\vec{s}\notin \simSet}e^{-\beta H_S(\vec{s})}, 
    \end{equation}
    we find
    \begin{equation}
        p_{\mc{f}, \beta}(\vec{t})= \frac{e^{-\beta \Shift} m d e^{-\beta H_T(\vec{t})} + \zeta_1(\beta, \vec{t})}{e^{-\beta \Shift} m d Z_T(\beta) + \zeta_2(\beta)}.
    \end{equation}
    Using the triangle inequality, we obtain 
    \begin{equation}\label{eq:part fun new}
    \begin{split}
        &\left\vert p_{\mc{f}, \beta}(\vec{t})- p_{T,\beta}(\vec{t}) \right\vert 
       \leq
        \frac{\zeta_1(\beta, \vec{t})}{e^{-\beta \Shift} m d Z_T(\beta) + \zeta_2(\beta)} +    \\
        & +\frac{\zeta_2(\beta)e^{-\beta H_T(\vec{t})}}{Z_T(\beta) \bigl(e^{-\beta \Shift} m d Z_T(\beta) + \zeta_2(\beta) \bigr)}.  
    \end{split}
    \end{equation}
    
    Observe that, in the second summand on the right hand side of \eqref{eq:part fun new}, only the factor $e^{-\beta H_T(\vec{t})}$ depends on $\vec{t}$. 
    Summing over target configurations (by definition of the total variation distance in \eqref{eq:def tot var}) this factor yields a contribution of $Z_T(\beta)$, which precisely cancels $Z_T(\beta)$ in the denominator.

    Next observe that in the first summand of \eqref{eq:part fun new} only $\zeta_1(\beta, \vec{t})$ depends on $\vec{t}$.  
    Inserting the definition of $\zeta_1(\beta, \vec{t})$, we can rewrite the $\vec{t}$ dependent part of the first summand as 
    \begin{equation}
        \sum_{\vec{t}\in \mc{C}_T}\zeta_1(\beta, \vec{t}) = 
        \sum_{\vec{s}\notin \simSet} e^{-\beta H_S(\vec{s})} = \zeta_2(\beta).
    \end{equation}
    Overall, we find  
    \begin{equation}\label{eq:var final}
        \Vert p_{\mc{f},\beta}- p_{T,\beta}\Vert \leq \frac{\zeta_2(\beta)}{ e^{-\beta \Shift} m d Z_T(\beta) + \zeta_2(\beta) } .
    \end{equation}
    
    Finally, since $\zeta_2(\beta)$ is non-negative, subtracting it from the denominator only increases the right hand side of \eqref{eq:var final}. 
    Additionally using that, first, $\zeta_2(\beta)$ is strictly upper bounded by
    $e^{-\beta (\Del+\Shift)} q_S^{\vert V_S \vert} $, and second, $Z_T(\beta)$ is strictly lower bounded by $e^{-\beta  \mr{min}(H_T)}$ 
    finally yields \eqref{eq:total variance}. 
\end{proof}

\subsubsection{Common Transformations as Simulations}\label{sssec:symmetries}

Several `natural' transformations for spin systems are special cases of simulations, including graph isomorphisms (\cref{lem:iso sim}), weak symmetries (\cref{lem:ground state sym sim}) and binary simulations (\cref{lem:bin sim}). 
To see this, note that if $S=T$, for any cut-off $\Del$, we obtain a trivial simulation $S \to T$. 
This can be relaxed to $T$ merely being isomorphic to $S$, showing that simulations can be interpreted as generalized isomorphisms, as mentioned at the beginning of \cref{ssec:spin sys sim}.

\begin{proposition}[isomorphisms induce simulations]\label{lem:iso sim}
    Let $\phi \colon S \to T$ be an isomorphism. Then for arbitrary $\Del >0$, $S \to T$ with $\phys = \phi^{-1}$. 
\end{proposition}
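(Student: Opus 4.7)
The plan is to construct the simulation by taking essentially all auxiliary data to be trivial and letting $\phi$ itself play the only nontrivial role. Concretely, I would set $\Shift \coloneqq 0$, $\degeneracy \coloneqq 1$, order $k \coloneqq 1$, $\phys \coloneqq \phi^{-1} \colon V_T \to V_S$, and take $\dec$ and the single encoding $\enc_1$ to be the identity on $[q_S] = [q_T]$ (the equality of spin types being forced by \cref{eq:iso spin}). The cut-off $\Del$ is arbitrary, as required.

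Conditions \ref{def:sim physical spins}--\ref{def:sim disjoint enc} of \cref{def:sys simulation} are immediate: $\phi^{-1}$ is a bijection, so physical spin sets for distinct $v \in V_T$ are disjoint; $\dec \circ \enc_1 = \mr{id}$ by construction; and there is only one encoding. The entire substance lies in conditions \ref{def:sim deg} and \ref{def:sim energy}, and both reduce to the single identity
\begin{equation}
    H_S(\vec{t} \circ \phi) = H_T(\vec{t}) \qquad \text{for all } \vec{t} \in \mc{C}_T .
\end{equation}
I would establish this by direct computation: expand $H_S(\vec{t} \circ \phi) = \sum_{e \in E_S} J_S(e)\bigl((\vec{t} \circ \phi)\vert_e\bigr)$, apply \cref{eq:iso spin} to rewrite each summand as $J_T(\phi(e))\bigl(\vec{t}\vert_{\phi(e)}\bigr)$, and reindex using the bijection $E_S \to E_T$ induced by $\phi$.

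Given this identity, verifying the remaining conditions is straightforward. For any $\vec{t}$ with $H_T(\vec{t}) < \Del$, the constraint $\vec{s} \circ \phi^{-1} = \vec{t}$ has the unique solution $\vec{s} = \vec{t} \circ \phi$, which satisfies $H_S(\vec{s}) = H_T(\vec{t}) < \Del$ and hence belongs to $\simul_1(\vec{t})$; this gives $\vert \simul_1(\vec{t}) \vert = 1 = \degeneracy$. Matching energies on $\simSet$ is then immediate, and for $\vec{s} \notin \simSet$ I would argue by contraposition: if $H_S(\vec{s}) < \Del$, then setting $\vec{t} \coloneqq \vec{s} \circ \phi^{-1}$ yields $H_T(\vec{t}) = H_S(\vec{s}) < \Del$ and $\vec{s} \in \simul_1(\vec{t}) \subseteq \simSet$, a contradiction.

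I do not anticipate a genuine obstacle here: the statement is almost entirely bookkeeping, with the only substantive step being the standard preservation of the Hamiltonian under a spin-system isomorphism. The mildest point of care is the type mismatch between $\phi^{-1} \colon V_T \to V_S$ and the required signature $\phys \colon V_T \to V_S^k$; this is handled silently by taking $k = 1$ and identifying $V_S^1$ with $V_S$.
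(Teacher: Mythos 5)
Your proof is correct and takes essentially the same route as the paper's: both reduce everything to the Hamiltonian-preservation identity $H_S(\vec{t}\circ\phi) = H_T(\vec{t})$, observe that $\simul(\vec{t})=\{\vec{t}\circ\phi\}$ so $\degeneracy=1$, and handle $\vec{s}\notin\simSet$ by noting $\vec{s}=(\vec{s}\circ\phi^{-1})\circ\phi$ so any low-energy $\vec{s}$ would be in $\simSet$. The only cosmetic difference is that you spell out the reindexing computation for the Hamiltonian identity, while the paper cites it as a previously stated consequence of \cref{def:iso spin sys}.
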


\begin{proof}
    Since only $\Del$ and $\phys$ are non-trivial, 
    we only have to prove  conditions \ref{def:sim deg} and \ref{def:sim energy} from \cref{def:sys simulation}; the remaining conditions hold trivially. 
    
    Condition \ref{def:sim deg} is satisfied, since
    first, using that the decoding is trivial while the physical spin assignment is given by $\phi^{-1}$,
    $ \dec \circ \vec{s} \circ \phys = \vec{t}$ 
    is equivalent to $\vec{s} = \vec{t}\circ \phi$,
    and second, since $\phi$ is an isomorphism of spin systems it preserves the Hamiltonian, i.e.\ $ H_S(\vec{t}\circ \phi) = H_T(\vec{t})$ and thus, 
    for any low energy target configuration, 
    \begin{equation}
     \simul(\vec{t}) = \{\vec{t} \circ \phi \}   
    \end{equation}
    and hence $\deg = 1$.

    This also proves the 
    case $\vec{s} \in \simul(\vec{t})$ of condition \ref{def:sim energy}.
    The case $\vec{s} \notin \simSet$ is satisfied, since
    any source configuration $\vec{s}$ trivially satisfies
    $\vec{s} = (\vec{s} \circ \phi^{-1})\circ \phi$, 
    so $\vec{s}\notin \simSet$ implies that $\vec{s}$ is a high energy configuration. \end{proof}

Another class of simulations can be obtained from \emph{weak symmetries}. Given a bijection $\phi\colon V_S \to V_S$, we say that a spin system $S$ has symmetry  $\phi$ if for all configurations $\vec{s}$, $H_S(\vec{s})= H_S(\vec{s}\circ \phi)$.
While isomorphisms define symmetries, symmetries need not preserve the hyperedges and local interactions of $S$, as they are only required to preserve the Hamiltonian.

If a bijection $\phi$ on $V_S$ only preserves the Hamiltonian for configurations with energy below a cut-off $\Del$, we say that $S$ has \emph{weak symmetry} $(\phi,\Del)$.
If $S$ has weak symmetry $(\phi, \Del)$ and $H_S(\vec{s}\circ \phi)< \Del$ then also $H_S(\vec{s})<\Del$. This is the case because $V_S$ is finite, so $\phi$ has finite order, i.e.\ there exists a natural number $n$ with $\phi^n= \mr{id}$ and applying the weak symmetry $n-1$-times to $\vec{s}\circ \phi$ yields
\begin{equation}\label{eq:weak sym ord}
    H_S(\vec{s}\circ \phi) = H_S(\vec{s}\circ \phi^2) = \ldots = H_S(\vec{s}\circ \phi^{n-1}) = H_S(\vec{s}).
\end{equation}
This implies that $\phi$ not only maps low-energy configurations to low-energy configurations, but also high-energy configurations to high-energy configurations. In other words, both the high energy and the low energy subspace of the configuration space $\mc{C}_S$ are invariant under the action of $\phi$.

\begin{proposition}[weak symmetries induce simulations]\label{lem:ground state sym sim}
If $S$ has weak symmetry $(\phi,\Del)$ then $\phi$ defines a simulation $S \to S$
with cut-off $\Del$ and $\phys = \phi^{-1}$.
\end{proposition}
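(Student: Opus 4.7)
The plan is to verify the five conditions of \cref{def:sys simulation} one by one for the tuple with cut-off $\Del$, trivial shift $\Shift=0$, trivial degeneracy $d=1$, trivial encoding $\enc=\dec=\mr{id}$ (a single encoding), and physical spin assignment $\phys = \phi^{-1}$. The main ingredient is the observation around \cref{eq:weak sym ord}, which shows that a weak symmetry $(\phi,\Del)$ maps the low-energy subspace of $\mc{C}_S$ to itself and the high-energy subspace to itself.

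First I would dispose of the bookkeeping conditions: Condition \ref{def:sim physical spins} is immediate because $\phi^{-1}$ is a bijection on $V_S$, so its components (there is only one, $k=1$) are trivially injective. Conditions \ref{def:sim enc-dec} and \ref{def:sim disjoint enc} hold vacuously since there is a single encoding equal to the identity.

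For condition \ref{def:sim deg}, I would observe that with $\phys=\phi^{-1}$ and $\enc=\mr{id}$, the equation $\vec{s}\circ \phi^{-1}=\vec{t}$ has the unique solution $\vec{s}=\vec{t}\circ \phi$. Hence, for each $\vec{t}$,
\begin{equation}
\simul(\vec{t}) = \{\vec{t}\circ \phi\} \cap \{\vec{s} \mid H_S(\vec{s}) < \Del\}.
\end{equation}
If $H_S(\vec{t}) = H_T(\vec{t}) < \Del$, the weak symmetry gives $H_S(\vec{t}\circ \phi) = H_S(\vec{t}) < \Del$, so $\simul(\vec{t}) = \{\vec{t}\circ \phi\}$ and $|\simul(\vec{t})| = 1 = d$, as required.

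Finally, for condition \ref{def:sim energy}, the case $\vec{s}\in \simul(\vec{t})$ gives $\vec{s}=\vec{t}\circ \phi$, and the symmetry relation $H_S(\vec{t}\circ \phi) = H_S(\vec{t})$ yields $H_S(\vec{s})-\Shift = H_T(\vec{t})$. For $\vec{s}\notin \simSet$, I set $\vec{t}\coloneqq \vec{s}\circ \phi^{-1}$, so that $\vec{s}=\vec{t}\circ\phi$; since $\vec{s}$ is the only candidate for $\simul(\vec{t})$ and by assumption lies outside $\simSet$, it must violate the energy cut-off, i.e.\ $H_S(\vec{s}) \geq \Del$. The only subtle point here, and the place where \cref{eq:weak sym ord} is essential, is that the weak-symmetry hypothesis is phrased only below $\Del$; iterating $\phi$ until $\phi^n=\mr{id}$ is what guarantees the high-energy subspace is also preserved, so no $\vec{s}$ with $H_S(\vec{s})<\Del$ can accidentally fail to be caught by some $\vec{t}$ with $H_T(\vec{t})<\Del$.
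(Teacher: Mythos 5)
Your proof is correct and follows essentially the same route as the paper's (which simply mirrors the proof of the isomorphism case, \cref{lem:iso sim}, after noting that the iteration argument around \cref{eq:weak sym ord} supplies the missing direction of the symmetry relation). The one thing I would tighten is the attribution of where the iteration argument is actually needed. You place the caveat at the end of the second case of condition \ref{def:sim energy}, but the second case is essentially tautological here: given any $\vec{s}\notin\simSet$, the configuration $\vec{t}=\vec{s}\circ\phi^{-1}$ always satisfies $\vec{s}\circ\phys=\enc\circ\vec{t}$, so the only way $\vec{s}$ can miss $\simul(\vec{t})$ is $H_S(\vec{s})\geq\Del$ — no symmetry input required. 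It is instead the \emph{first} case that needs \cref{eq:weak sym ord}: from $\vec{s}\in\simul(\vec{t})$ you only know $H_S(\vec{t}\circ\phi)<\Del$, not $H_S(\vec{t})<\Del$, so invoking ``the symmetry relation $H_S(\vec{t}\circ\phi)=H_S(\vec{t})$'' implicitly uses the iteration to upgrade the one-sided hypothesis to a two-sided equality. Your closing remark about catching every low-energy $\vec{s}$ with a low-energy $\vec{t}$ is exactly this point, so the content is there; it just reads as if it patches the second case rather than the first.
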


\begin{proof}
The proof is analogous to that of \cref{lem:iso sim}.
Condition \ref{def:sim deg} and the first case of condition \ref{def:sim energy} hold because $\phi$ preserves the Hamiltonian for low energy configurations, and the second case of 
\ref{def:sim energy} holds because $\phi$ maps high energy configurations to high energy configurations.\end{proof}

Simulations can also be used to change the spin type of a spin system. We explicitly show how an arbitrary spin system with spin type $q$ can be simulated by a spin system with spin type $2$ by encoding each $q$-level spin into $\lceil \mr{log}_2(q) \rceil$ $2$-level spins, where $\lceil x \rceil$ denotes the smallest integer greater than or equal to $x$. A similar construction works for arbitrary spin type, i.e.\ from $q$- to $p$-level spins.

\begin{proposition}[binary simulation]\label{lem:bin sim}
Let $S$ be any spin system. For any $\Del \geq 0$, there exists 
a spin system $B(S,\Del)$ with spin type $2$ and a simulation $\mc{b}(S,\Del) \colon B(S,\Del) \to S$.
\end{proposition}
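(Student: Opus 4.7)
The plan is to encode each $q_S$-level spin of $S$ into $k\coloneqq \lceil \log_2 q_S\rceil$ binary spins, and add large on-site ``validity'' penalties so that sub-configurations outside the image of the binary representation sit above the cut-off.

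More concretely, set $\enc\colon [q_S]\to [2]^k$ to be the map sending $t$ to the binary representation of $t-1$, and let $\dec\colon [2]^k\to [q_S]$ be any extension of $\enc^{-1}$ on $\mr{Im}(\enc)$ to all of $[2]^k$. Define $V_B\coloneqq V_S\times [k]$ and, for $v\in V_S$, the block $P_v\coloneqq \{(v,1),\ldots,(v,k)\}$. We build $B(S,\Del)$ as the sum of two families of canonical spin systems: for each $v\in V_S$ a penalty term on the hyperedge $P_v$ with local interaction equal to $0$ on $\mr{Im}(\enc)$ and equal to $\Del$ otherwise; and for each $e\in E_S$ an interaction term on the hyperedge $P_e\coloneqq \bigcup_{v\in e} P_v$ defined by $J_B(P_e)(\vec b)\coloneqq J_S(e)\bigl((\dec(\vec b|_{P_{v}}))_{v\in e}\bigr)$. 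The simulation $\mc{b}(S,\Del)$ is then specified by $\phys(v)\coloneqq ((v,1),\ldots,(v,k))$, the single encoding $\enc$ and the decoding $\dec$ above, $\Shift\coloneqq 0$, $\degeneracy\coloneqq 1$, and cut-off $\Del$.

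It remains to verify conditions \ref{def:sim physical spins}--\ref{def:sim energy} of \cref{def:sys simulation}. The first three are immediate: $\phys$ has pairwise disjoint components by the product structure of $V_B$, $\dec\circ \enc=\mr{id}$ by construction, and disjoint encodings is vacuous since $\vert \enc\vert=1$. For the remaining conditions the key calculation is that whenever every $\vec b|_{P_v}$ lies in $\mr{Im}(\enc)$, the penalty terms vanish and
\begin{equation}
H_B(\vec b)=\sum_{e\in E_S}J_S(e)\bigl((\dec\circ \vec b\circ \phys)|_e\bigr)=H_S(\dec\circ \vec b\circ \phys),
\end{equation}
while if some $\vec b|_{P_v}\notin \mr{Im}(\enc)$ then $H_B(\vec b)\geq \Del$ because the corresponding penalty alone contributes $\Del$ and all other local interactions are non-negative.

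From this two-case analysis, matching energies (condition \ref{def:sim energy}) follows: for $\vec b\in \simul(\vec t)$ we are in the first case and $H_B(\vec b)-\Shift=H_S(\vec t)$; for $\vec b\notin \simSet$, either we are already in the second case ($H_B(\vec b)\geq \Del$), or we are in the first case and $\vec b$ is the unique encoding of some $\vec t$ with $H_S(\vec t)\geq \Del$. Constant local degeneracy (condition \ref{def:sim deg}) is also immediate, since for any $\vec t$ with $H_S(\vec t)<\Del$ the constraint $\vec b\circ \phys=\enc\circ \vec t$ determines $\vec b$ uniquely on $\physSet=V_B$, giving $|\simul(\vec t)|=1$. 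The only subtlety worth flagging is the boundary case $q_S=2^k$, where no configurations require penalizing and the penalty terms can be omitted; our definition handles this uniformly, since $\mr{Im}(\enc)=[2]^k$ makes the penalty term identically zero.
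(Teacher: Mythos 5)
Your proof is correct and takes essentially the same route as the paper: encode each $q_S$-level spin into $k=\lceil\log_2 q_S\rceil$ binary spins via a fixed injection $\enc$ with left inverse $\dec$, take $\phys$ to be the block map, set $\Shift=0$, $\degeneracy=1$, and push all configurations that decode to nothing valid above the cut-off. The only cosmetic difference is that you isolate the validity penalty into separate per-vertex hyperedges $P_v$, whereas the paper folds the case distinction (``output $\Del$ if any block is outside $\mr{Im}(\enc)$'') directly into the local interactions $J_B(e\times[k])$; both give the same Hamiltonian behavior and the same verification of conditions \ref{def:sim physical spins}--\ref{def:sim energy}.
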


\begin{proof} 
Given any spin system $S$ and $\Del\geq 0$ we construct the required spin system $B(S,\Del)$ and simulation $\mc{b}(S,\Del)$.
To lighten the notation, we fix $S,\Del$ and omit the $(S,\Del)$ arguments, i.e.\ we write $B, \mc{b}$ instead of $B(S,\Del), \mc{b}(S,\Del)$. Note however that the entire construction depends on $(S,\Del)$.

We take $\Shift=0$.
Next, we take $k \coloneqq \lceil \mr{log}_2(q) \rceil$ and fix an injection
    \begin{equation}
        \enc:[q_S] \rightarrow [2]^k,
    \end{equation}
    and a left inverse of $\enc$, $\dec$, i.e.\ we have $\dec \circ \enc = \mr{id}$. 
    $\dec$ and $\enc$ clearly satisfy condition \ref{def:sim enc-dec} of \cref{def:sys simulation} and since $\vert \enc \vert =1$, $\enc$ also satisfies condition \ref{def:sim disjoint enc}.
        
    We define  $V_B\coloneqq V_S \times [k]$ and $\phys: V_S \rightarrow V_B^k$ by
    \begin{equation}
        \phys(v) \coloneqq ((v,1), \ldots, (v,k)).
    \end{equation}
    This clearly satisfies condition \ref{def:sim physical spins}. 
    Intuitively, $P(v)$ contains the $k$ spins that are use to encode $v$.

    Next, we set $E_B \coloneqq \{e \times [k] \mid e \in E_S \}$ and define   
    $J_B$ by the following case distinction: 
    \begin{equation}
        J_B(e \times [k])(\vec{s}) = J_S(e)(\dec \circ \vec{s} \circ \phys)
    \end{equation}
    if for all $v\in e$, $\vec{s} \circ \phys (v)$ is contained in the image of $\enc$, 
    and 
    \begin{equation}
        J_B(e \times [k])(\vec{s})= \Del 
    \end{equation}
    otherwise.
    
    By construction there are no auxiliary spins. This implies that 
    \begin{equation}\label{eq:binary config unique}
        \vec{s}_1\circ \phys = \vec{s}_2\circ \phys \Rightarrow \vec{s}_1 = \vec{s}_2.
    \end{equation}
    Let $\vec{t}\in \mc{C}_S$ and define $\vec{s}_{\vec{t}} \in \mc{C}_B$ by
    \begin{equation}
        \vec{s}_{\vec{t}}(v,i) = \enc^{(i)}(\vec{t}(v)).
    \end{equation}
    Then, by construction, we have 
    \begin{equation}\label{eq: bin phys assign}
        \vec{s}_{\vec{t}}\circ \phys = \enc \circ \vec{t}
    \end{equation}
    and by \eqref{eq:binary config unique} $\vec{s}_{\vec{t}}$ is the unique configuration satisfying this condition.
    It follows that $\simul(\vec{t}) \subseteq \{ \vec{s}_{\vec{t}}\}$.
    Note that $\simul(\vec{t})$ could be empty, since we have not yet taken into account the energy condition, $H_S(\vec{s})<\Del$.
    Finally, using the definition of $J_B$
    and the fact that, for all target spins $v$, $\vec{s}_{\vec{t}}\circ \phys(v)$ is contained in the image of $\enc$, we find 
    \begin{equation}
    \begin{split}
            H_{B}(\vec{s}_{\vec{t}}) &= \sum_{e \in E_S} J_B(e\times [k])(\vec{s}_{\vec{t}}) \\
            & = \sum_{e \in E_S} J(e)(\dec \circ \vec{s}_{\vec{t}} \circ \phys)  \\
        &= \sum_{e \in E_S} J(e)(\vec{t}) = H_S(\vec{t}),
        \end{split}
    \end{equation}
    where the second to last equality holds by \eqref{eq:binary config unique}. 
    Since $\Shift=0$, we have that if $H_S(\vec{t})<\Del$ then $\simul(\vec{t}) = \{\vec{s}_{\vec{t}} \}$, so $\mc{b}$ satisfies condition \ref{def:sim deg} with $d=1$.
    
    Moreover, 
    if $\vec{s}\notin \simSet$
    then either
    $\vec{s}= \vec{s}_{\vec{t}}$ for some $\vec{t}$ with $H_S(\vec{t})\geq \Del$ and hence also $H_{B}(\vec{s})= H_S(\vec{t})\geq \Del,$
    or there exists no $\vec{t}\in \mc{C}_S$ with $\vec{s}= \vec{s}_{\vec{t}}$. 
    In the latter case, there must exist at least one $v \in V_S$ with 
    \begin{equation}
        \vec{s}\circ \phys(v) \notin \mr{Im}(\enc).
    \end{equation}
    Then, by definition of $J_B$, whenever $v\in e$,
    \begin{equation}
        J_B(e\times [k])(\vec{s}\vert_{e\times[k]})= \Del ,
    \end{equation}
    and thus also $ H_{B}(\vec{s})\geq \Del $,
    so $\mc{b}$ satisfies condition \ref{def:sim energy}.
\end{proof}

\subsubsection{Graph Minors as Simulations} \label{sssec:minors}

Finally, sequences of graph-minor operations such as deletion of edges and vertices, or contractions of edges can be realized as spin system simulations, too (\cref{thm: minor sim}). 
To this end we consider spin systems which are defined on graphs (insetad of hypergraphs) but we expect that similar results also hold for the general case.
Given a graph $G_S$, a minor $G_T$ of $G_S$ and a spin system $T$ on $G_T$, we show how local interactions can be assigned to the edges of $G_S$ that are contracted and deleted such that the resulting spin system $S$ on $G_S$ simulates $T$.
Intuitively, 
\begin{itemize}
\item[$\vartriangleright$] deleting an edge is achieved by setting its coupling to zero, and
\item[$\vartriangleright$] contracting an edge is achieved by assigning it a coupling that leads to zero energy if the two adjacent spins are in equal states, and energy $\Del$ if not.
\end{itemize}
We start by considering the case where $G_T$ is obtained from $G_S$ by a single edge contraction.

\begin{lemma}[edge contraction as a simulation]\label{lem:edge contr}
Consider graphs $G_S, G_T$ such that $G_T$ is obtained from $G_S$ by contracting a single edge. Let $T$ be any spin system on $G_T$.
Then, for any $\Del>0$, we can construct a spin system $S$ on $G_S$ such that $S \to T$.
\end{lemma}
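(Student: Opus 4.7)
The plan is to build $S$ on $G_S$ by copying the interactions of $T$ onto $G_S$ and enforcing the contraction via a strong equality constraint on the contracted edge. Let $\{u,v\}\in E_S$ be the contracted edge and $w\in V_T$ its image, so $V_S = (V_T\setminus\{w\})\cup\{u,v\}$ and there is a natural surjection $\sigma:E_S\setminus\{\{u,v\}\}\twoheadrightarrow E_T$ obtained by replacing $u,v$ with $w$ in every edge. The heuristic guiding the construction is ``contraction $=$ forced equality'': a penalty of size $\Del$ on $s_u\neq s_v$ will confine the low-energy sector of $S$ to configurations satisfying $s_u=s_v$, which then naturally correspond to configurations of $T$.

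Concretely, I would take $q_S=q_T$ and, on the contracted edge, set
\begin{equation}
  J_S(\{u,v\})(s_u,s_v) = \Del\cdot\chi(s_u,s_v),
\end{equation}
where $\chi(s,s')=0$ if $s=s'$ and $1$ otherwise. For every target edge $e'\in E_T$ I would choose one preimage $\tilde{e}\in\sigma^{-1}(e')$ (arbitrarily if $\vert\sigma^{-1}(e')\vert>1$) and set $J_S(\tilde{e}) = J_T(e')$ after identifying $w$ with whichever of $u,v$ appears in $\tilde{e}$; every remaining edge in $\sigma^{-1}(e')$ would receive the zero interaction. The simulation data would be $\Shift=0$, $\degeneracy=1$, trivial encoding/decoding of order $1$ (so $\vert\enc\vert=1$), and the physical spin assignment $\phys:V_T\to V_S$ given by $\phys(x)=x$ for $x\neq w$ and $\phys(w)=u$; the spin $v$ is then an auxiliary spin.

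Conditions \ref{def:sim physical spins}, \ref{def:sim enc-dec} and \ref{def:sim disjoint enc} of \cref{def:sys simulation} hold immediately, since $\phys$ is injective and $\enc=\dec=\mr{id}$ has size one. For \ref{def:sim deg} and \ref{def:sim energy} I would split on whether $s_u=s_v$. If $s_u\neq s_v$, the contracted-edge term contributes $\Del$, so $H_S(\vec{s})\geq\Del$ and $\vec{s}$ falls in the high-energy case of \ref{def:sim energy}. If $s_u=s_v$, the decoded configuration $\vec{t}=\dec\circ\vec{s}\circ\phys$ is well defined independently of whether $\phys(w)$ was chosen to be $u$ or $v$, and a term-by-term comparison gives $H_S(\vec{s})=H_T(\vec{t})$: edges of $E_S$ disjoint from $\{u,v\}$ agree with their images by construction, while for edges meeting $\{u,v\}$ the equality $s_u=s_v$ ensures that the unique preimage carrying $J_T(e')$ reproduces $J_T(e')(\vec{t}\vert_{e'})$ regardless of which endpoint was selected. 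For any $\vec{t}$ with $H_T(\vec{t})<\Del$, the unique $\vec{s}\in\simul(\vec{t})$ is then $s_x=\vec{t}(x)$ for $x\notin\{u,v\}$ together with $s_u=s_v=\vec{t}(w)$, giving $\degeneracy=1$.

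The main subtlety I expect concerns parallel preimages of a single target edge, i.e.\ when both $\{u,x\}$ and $\{v,x\}$ lie in $E_S$ and both collapse to $\{w,x\}\in E_T$. Placing the full interaction $J_T(\{w,x\})$ on one preimage and zero on the other is consistent precisely because on low-energy configurations $s_u=s_v$, so both preimages would evaluate to $J_T(\{w,x\})(\vec{t}\vert_{\{w,x\}})$ under that identification; any splitting of $J_T(\{w,x\})$ between the two preimages whose sum coincides with $J_T(\{w,x\})$ under $s_u=s_v$ would serve the same purpose. Apart from this bookkeeping, nothing else goes beyond routine verification, and essentially the same construction would extend to the hypergraph setting (merely replacing ``edge'' by ``hyperedge'' in $\sigma$).
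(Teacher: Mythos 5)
Your proof is correct and takes essentially the same route as the paper: you place a $\Del$-penalty on the contracted edge to force equality of its endpoints, pick one preimage per target edge (your $\sigma$ and its section are the paper's $f$ and $g$), zero out the rest, and verify degeneracy one and matching energies exactly as the paper does.
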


\begin{proof}
Let us first spell out the assumption that $G_T$ is obtained form $G_S$ by contracting a single edge. Denote this edge $\{a,b\}$, so that vertices of $G_T$ are precisely all vertices from $V_S$ except for $b$.
Edges of $T$ are obtained by taking all edges of $S$ except for $\{a,b\}$, and renaming all occurrences of $b$ in these edges into $a$. We define $f_{b\to a} \colon V_S \to V_T$ that renames $b$ to $a$, 
\begin{equation}
f_{b\to a}(c) = \begin{cases}
a \ \text{if} \ c=b \\
c \ \text{else}
\end{cases}
\end{equation}
and further define the contraction map $f \colon E_S\setminus \{\{a,b\}\} \to E_T$ by 
\begin{equation}
    f(\{c,d\}) = \{f_{b\to a}(c), f_{ b \to a}(d)\}.
\end{equation} 
$f$ describes how edges of $G_S$ are mapped to edges of $G_T$ under the contraction of $\{a,b\}$.  
Note that $f$ is surjective but not necessarily injective, i.e.\ different edges from $G_S$ might be mapped to same edge in $G_T$ (consider the contraction of one edge of a triangle).

We construct the spin system $S$ on $G_S$ as follows. For each edge $e'$ of $G_T$, we choose one edge $e$ in $G_S$ that under the contraction is mapped to $e$, and attach the local interaction of $e'$ to $e$.
This amounts to first choosing a right inverse of $f$, $g$. 
We now assign local interactions to $E_S$.
We define
\begin{equation}\label{eq:local int contr}
    J_S(\{a,b\})(s_a,s_b) = \begin{cases}
0 \ \text{if} \ s_a= s_b \\
\Del \ \text{else}.
\end{cases}
\end{equation}
For each edge $e' \in E_T$
we define
\begin{equation}\label{eq:contr int non zero}
    J_S(g(e'))(\vec{t}\circ f_{b \to a}) =  J_T(e')(\vec{t}) ,
\end{equation}
while for the remaining edges $e \in E_S$ not contained in the image of $g$ and different from $\{a,b\}$, we set $   J_S(e) = 0$. 
It is immediate to conclude that $S \to T$ with cut-off $\Del$. 

We prove conditions \ref{def:sim deg} and \ref{def:sim energy}; the rest is trivial.
First, given any target configuration $\vec{t}$, 
we get a source configuration $\vec{s}_{\vec{t}} = \vec{t}\circ f_{b \to a}$.
Source configurations of this are precisely those that assign the same state to $a$ and $b$. 
Using that encoding, decoding and physical spin assignment are trivial, $\vec{s}_{\vec{t}}$ further satisfies
$\vec{s}_{\vec{t}}\circ P = \enc \circ \vec{t}$.
Next, by definition of $J_S$, we have 
\begin{equation}\label{eq:minor energy}
\begin{split}
    H_S(\vec{s}_{\vec{t}}) 
    &= \sum_{e\in E_S\setminus \{ \{ a,b \} \}}J_S(e)(\vec{s}_{\vec{t}}\vert_e)\\
    &=\sum_{e'\in E_T}J_S(g(e'))(\vec{s}_{\vec{t}}\vert _{g(e')}))\\
    &=\sum_{e'\in E_T} J_T(e')(\vec{t}\vert_{e'}) = H_T(\vec{t})
\end{split}
\end{equation}
where the first equality holds because $\vec{s}_{\vec{t}}$ assigns the same state to $a$ and $b$, so the local interaction \eqref{eq:local int contr} is zero.
The second equality follows from splitting the sum over remaining edges from $E_S$ 
into egdes $e'$ from $E_T$, and for each such $e'$, summing over its preimage under $f$. 
By definition of $J_S$, the latter sum collapses to a single term, namely that corresponding to $g(e')$. 
The third equality follows by inserting the definition of $\vec{s}_{\vec{t}}$ and using \eqref{eq:contr int non zero}.
This proves that $\simul(\vec{t}) = \{\vec{s}_{\vec{t}}\}$, i.e.\ \ref{def:sim deg} with $\degeneracy=1$, as well as the first case of condition \ref{def:sim energy}.

We still need to show that if $\vec{s}\notin \simSet$ then 
$H_S(\vec{s})\geq \Del$, i.e.\ the second case of condition \ref{def:sim energy}.
This holds because if $\vec{s}\notin\simSet$ then either $\vec{s}= \vec{s}_{\vec{t}}$ for a target configuration $\vec{t}$ with $H_T(\vec{t})\geq \Del$ and the claim follows from \eqref{eq:minor energy}, 
or 
$\vec{s}(a)\neq \vec{s}(b)$ and the claim follows from \eqref{eq:local int contr}.
\end{proof}

Let us now consider the case where $G_T$ is obtained from $G_S$ via a sequence of contraction and deletion operations. 

\begin{proposition}[graph minors as simulations]\label{thm: minor sim}
    Let $G_T$ be a graph minor of $G_S$ and $T$ be a spin system defined on $G_T$. 
    For every $\Del> 0$, we can construct a spin system $S$ on $G_S$ such that $S \to T$.
\end{proposition}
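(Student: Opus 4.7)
The plan is to proceed by induction on the number $n$ of minor operations needed to obtain $G_T$ from $G_S$. The base case $n=0$ is trivial: $G_T = G_S$, take $S = T$ and apply \cref{lem:iso sim} with $\phys = \mr{id}$. For the inductive step we need two more single-operation building blocks to complement \cref{lem:edge contr}: edge deletion and vertex deletion as simulations.

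\textbf{Edge deletion.} Suppose $G_T$ arises from $G_S$ by deleting a single edge $e^*$. Given $T$ on $G_T$, define $S$ on $G_S$ by $J_S(e) \coloneqq J_T(e)$ for $e \in E_T$ and $J_S(e^*) \coloneqq 0$. Then $H_S(\vec{s}) = H_T(\vec{s})$ under $V_S = V_T$, and $S \to T$ with $\Shift = 0$, $\degeneracy = 1$, trivial $\phys$, $\enc$, $\dec$, for any $\Del > 0$.

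\textbf{Vertex deletion.} Suppose $G_T$ arises from $G_S$ by deleting a vertex $v^*$ and all its incident edges. Define $S$ on $G_S$ by $J_S(e) \coloneqq J_T(e)$ for $e \in E_T$ and $J_S(e) \coloneqq 0$ on all edges of $G_S$ incident to $v^*$. Take $\phys \colon V_T \hookrightarrow V_S$ the inclusion, $\Shift = 0$, trivial $\enc$ and $\dec$. Because the state of $v^*$ does not enter $H_S$, for every $\vec{t} \in \mc{C}_T$ the simulating set $\simul(\vec{t})$ consists of all $q_S$ extensions of $\vec{t}$ to $V_S$, so condition \ref{def:sim deg} holds with $\degeneracy = q_S$. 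Matching energies (condition \ref{def:sim energy}) is immediate since the zero-coupled edges contribute nothing.

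\textbf{Induction and composition.} Now assume $G_T$ is obtained from $G_S$ by $n \geq 1$ operations, with $G_M$ the graph after the first operation. By the induction hypothesis applied to $G_T$ as a minor of $G_M$ (with $n-1$ operations) and a suitably chosen cut-off, there exists a spin system $M$ on $G_M$ with a simulation $M \to T$. Depending on the type of the first operation, apply either \cref{lem:edge contr}, the edge-deletion construction, or the vertex-deletion construction above to $M$ to obtain a spin system $S$ on $G_S$ with $S \to M$. Composing the two simulations yields $S \to T$ on $G_S$, as required.

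\textbf{Main obstacle.} The step that requires most care is composition, which has not yet been formally developed at this point in the paper; this is a forward reference to the modularity results of \cref{ssec:spin sys sim prop}. A direct argument is available because in all three building blocks the encodings, decodings, and shifts are essentially trivial, so the composed simulation has physical spin assignment equal to the composition of the two spin assignments, degeneracy equal to the product of the two degeneracies, zero shift, and matching energies that follow by transitivity. A minor technical point is ensuring the intermediate spin systems $M$ have no isolated spins, which is handled by the convention (stated after \cref{def:spin sys}) that zero-coupling edges may be freely inserted without altering the Hamiltonian. Finally, cut-offs can be propagated consistently through the induction because each single-step construction accepts an arbitrary $\Del > 0$ and \cref{thm:sim new delta} allows lowering cut-offs as needed.
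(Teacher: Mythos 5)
Your proof follows essentially the same approach as the paper's: reduce to single-minor-operation building blocks (edge contraction via \cref{lem:edge contr}, edge/vertex deletion realized by setting local interactions to zero and taking $\phys$ to be the natural inclusion) and stitch them together with the composition theorem (\cref{thm:sim trans}), invoked as a forward reference in both arguments. The only organizational difference is that the paper batches all deletions into a single simulation step $S \to R$ before iterating contractions, whereas you interleave one operation per inductive step and allow the slightly more general deletion of a non-isolated vertex together with its incident edges; the underlying constructions, the resulting degeneracies, and the handling of cut-offs and (zero) shifts coincide.
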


\begin{proof}
    We prove that each step of the transformation of $G_S$ into $G_T$ via edge contraction or deletion can be achieved by a simulation. For single edge contractions this is shown in \cref{lem:edge contr}. Deletions of edges can be achieved by setting the corresponding local interaction to zero, and deletions of isolated vertices can be achieved by an appropriate choice of $\phys$.
    The overall simulation can be obtained by composing all these single-step simulations. This crucially relies on the fact that simulations can be composed, shown in \cref{thm:sim trans}.

    Now, by assumption, $G_T$ is a minor of $G_S$, so there exists $E_{\mr{del}}\in E_S$, $E_{\mr{contr}}\in E_S \setminus E_{\mr{del}}$, $V_{\mr{del}}\subseteq V_S$ such that deleting edges in $E_{\mr{del}}$ from $E_S$, deleting isolated vertices in $V_{\mr{del}}$ and then contracting the  edges in $E_{\mr{contr}}$ yields $G_T$.
    
    We first consider the graph $G_R$ obtained from $G_S$ by delete operations only, i.e.\ $V_R=V_S \setminus V_{\mr{del}}$,  and $ E_R = E_S \setminus E_{\mr{del}}$. 
    As $G_T$ is obtained from $G_R$ by contractions only, iteratively applying \cref{lem:edge contr} we obtain a spin system $R$ on $G_R$ and a simulation $\mc{f}\colon R \to T$. Explicitly, this simulation is constructed as follows.
    First, we fix an ordering of $E_{\mr{contr}} = \{e_1, \ldots, e_n\}$. Note that since the result of contracting edges does not depend on the order of contractions this ordering is arbitrary.
    Next, construct $R$ by iteratively applying \cref{lem:edge contr}.
    For $i\in \{0, \ldots, n\}$ the graph $G_{R_i}$ is obtained from $G_R$ by contracting the edges $\{e_1, \ldots, e_i\}$, so that 
    $G_{R_0}=G_R$, $G_{R_n}=G_T$ and  contracting $e_i$ in $G_{i-1}$ yields $G_{i}$.
    By \cref{lem:edge contr}, we can construct spin systems $R_i$ on $G_{R_i}$ together with simulations 
    \begin{equation}
        \mc{f}_i \colon R_i \to R_{i+1}
    \end{equation}
    with cut-off $\Del$ and identity encoding.
   By \cref{thm:sim trans} with $R\coloneqq R_0$, their composite $\mc{f}\coloneqq \mc{f}_{0} \circ \ldots \circ \mc{f}_{n-1}$ defines a simulation
    \begin{equation}
        \mc{f} \colon R \to T,
    \end{equation}
    with cut-off $\Del$, with $V_R = V_S \setminus V_{\mr{del}}$,  and $ E_R = E_S \setminus E_{\mr{del}}$.
    
    Next we construct a spin system $S$ on $(V_S,E_S)$ together with a simulation 
    \begin{equation}
       \mc{g}\colon  S \to R
    \end{equation}
    with cut-off $\Del$ and identity encoding.
    We define $J_S(e) \coloneqq J_R(e)$ for $e \in E_R$, and $J_S(e)\coloneqq 0$
     for $e \in  E_{\mr{del}}$.
    We let $\phys$ be the inclusion of $V_R$ into $V_S$. This satisfies \cref{def:sys simulation}.
    
    In total we have $\mc{g}\colon S \to R$ and $\mc{f}\colon R \to T$. Using \cref{thm:sim trans} we obtain
    \begin{equation}
        \mc{g}\circ \mc{f} \colon S \to T
    \end{equation}
    with cut-off $\Del$.
\end{proof}

Note that the construction of $S$ in \cref{thm: minor sim} is implicit, as its explicit form depends on the choice of right inverses for the contraction maps in each application of \cref{lem:edge contr}. 
In general, though, $S$ has the following form: 
\begin{enumerate}
\item On edges $e \in E_{\mr{del}}$, $J_S(e)\coloneqq 0$; 
\item  
A sequence of contractions gives rise to a contraction map $F$ describing  how edges from $E_S$ (that are not deleted) are mapped to edges from $E_T$. 
Given a right inverse $F'$ of this contraction map, $J_S$ is constructed analogously to the single edge contraction case. For each edge from $E_T$ we attach its local interaction to $F'(e)$, i.e.\
\begin{equation}\label{eq:edge int}
        J_S(F'(e'))(\vec{s}\circ F) = J_T(e')(\vec{s}),
    \end{equation}
\item 
For all edges not in the image of $F'$, $J_S(e)=0$, 
and for all edges that are contracted, $J_S$ is given by \eqref{eq:local int contr}.
\end{enumerate}

In \cref{thm: minor sim}, $F$ is the composite of single edge contraction maps and $F'$ is the composite of their right inverses. 
However, given $F$ one might also specify its right inverse $F'$ directly,  without specifying right inverses for each single edge contraction involved  in the sequence.
We remark that whenever we use \cref{thm: minor sim} throughout the paper, especially in \cref{sec:2d Ising}, we will explicitly illustrate the choice of $F'$.

\subsection{Modularity of Simulations}\label{ssec:spin sys sim prop}

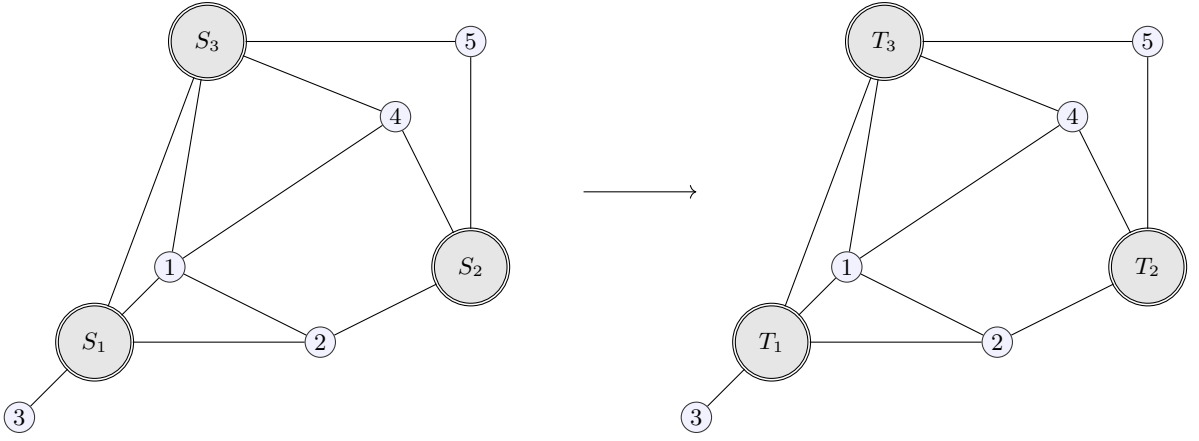
\begin{figure*}[th]
    \centering
   \begin{tikzpicture}
[sysNode/.style = {draw=black, double, fill=gray!20, circle, minimum size= 1cm, inner sep = 2pt},
cgnodeS/.style = {draw=black!80, fill=blue!5, circle, minimum size= {width("$s_{i}$")+4pt}, inner sep = 1pt}]

\node[sysNode] (T1) at (0,0) {$S_1$};
\node[cgnodeS] (s1) at (1,1) {$1$};
\node[cgnodeS] (s2) at (3,0) {$2$};
\node[cgnodeS] (s3) at (-1,-1) {$3$};

\draw[] (s1) to (T1);
\draw[] (s2) to (T1);
\draw[] (s3) to (T1);
\draw[] (s1) to (s2);

\node[sysNode] (T2) at (5,1) {$S_2$};
\node[cgnodeS] (s4) at (4,3) {$4$};
\node[cgnodeS] (s5) at (5,4) {$5$};

\draw[] (s2) to (T2);
\draw[] (s4) to (T2);
\draw[] (s5) to (T2);
\draw[] (s1) to (s4);

\node[sysNode] (T3) at (1.5,4) {$S_3$};

\draw[] (T1) to (T3);
\draw[] (s1) to (T3);
\draw[] (s4) to (T3);
\draw[] (s5) to (T3);

\draw[->] (6.5,2) to (8,2);

\node[sysNode] (2T1) at (9,0) {$T_1$};
\node[cgnodeS] (2s1) at (10,1) {$1$};
\node[cgnodeS] (2s2) at (12,0) {$2$};
\node[cgnodeS] (2s3) at (8,-1) {$3$};

\draw[] (2s1) to (2T1);
\draw[] (2s2) to (2T1);
\draw[] (2s3) to (2T1);
\draw[] (2s1) to (2s2);

\node[sysNode] (2T2) at (14,1) {$T_2$};
\node[cgnodeS] (2s4) at (13,3) {$4$};
\node[cgnodeS] (2s5) at (14,4) {$5$};

\draw[] (2s2) to (2T2);
\draw[] (2s4) to (2T2);
\draw[] (2s5) to (2T2);
\draw[] (2s1) to (2s4);

\node[sysNode] (2T3) at (10.5,4) {$T_3$};

\draw[] (2T1) to (2T3);
\draw[] (2s1) to (2T3);
\draw[] (2s4) to (2T3);
\draw[] (2s5) to (2T3);

\end{tikzpicture}
    \caption{Given simulations $S_i \to T_i$ for $i=1,2,3$, and a spin system $T$ that contains $T_1, T_2, T_3$ as subsystems  (right hand side), we locally replace each $T_i$ with the corresponding $S_i$ and thereby construct a spin system $S$ (left hand side) and a simulation $S \to T$ (by \cref{thm:sim sum}). 
    Note that $S \to T$ is a sum of the simulations $S_i \to T_i$ and the identity simulation for remaining subsystem of $T$. }
    \label{fig:sim sum}
\end{figure*}

We now define several operations that can be used to modify or combine simulations. 
Specifically, we prove that simulations can be 
composed (\cref{thm:sim trans}), 
scaled (by non-negative real numbers, \cref{thm:sim scale}) and 
added (\cref{thm:sim sum}).  

Composition of simulations is defined in the obvious way,
\begin{equation}
    \bigl( S \overset{\mc{g}}{\to}  R\bigr) \circ \bigl(R \overset{\mc{f}}{\to} T \bigr) = S \overset{\mc{g}\circ \mc{f}}{\longrightarrow} T.
\end{equation}
We denote the composition of emulations in diagrammatic order.
Scaling and addition of simulations are such that a positive linear combination of simulations yields a simulation between the corresponding positive linear combinations of the source and target systems, 
i.e.\ for $\mc{f}_i \colon S_i \to T_i$,
\begin{equation}
    \sum_i \lambda_i \cdot \mc{f_i} \colon \bigl( \sum_i \lambda_i \cdot S_i \bigr) \to \bigl( \sum_i \lambda_i \cdot T_i \bigr).
\end{equation}
This implies that we can not only obtain complex spin systems as linear combinations of simple ones, but also complex simulations as linear combinations and compositions of simple ones,  leading to a modular framework for the construction of spin system simulations. 
We term this concept \emph{modularity of simulation} (see \cref{fig:sim sum}). 

\begin{theorem}[composition of simulations]\label{thm:sim trans}
   Let $\mc{f}\colon R\to T$ and $\mc{g}\colon S \to R$ be simulations. Then, the following data defines a simulation $\mc{g}\circ\mc{f}\colon S \to T$:  
   \begin{equation}
   \begin{split}
            \Del_{\mc{g}\circ\mc{f}} &= \mr{min}(\Del_{\mc{f}},\Del_{\mc{g}}-\Shift_{\mc{f}})\\
            \Shift_{\mc{g}\circ\mc{f}} &=  \Shift_{\mc{f}}+\Shift_{\mc{g}}\\
            \degeneracy_{\mc{g}\circ\mc{f}} &= \degeneracy_{\mc{g}}\cdot \degeneracy_{\mc{f}}\\
            \phys_{\mc{g}\circ\mc{f}}^{(m,n)} &= \phys_{\mc{g}}^{(m)} \circ \phys_{\mc{f}}^{(n)}\\
            \dec_{\mc{g}\circ\mc{f}}(\vec{x}_1, \ldots, \vec{x}_{k_1})& = \dec_{\mc{f}}\bigl (\dec_{\mc{g}}(\vec{x}_1), \ldots, \dec_{\mc{g}}(\vec{x}_{k_1})\bigr)\\
            (\enc_{\mc{g}\circ\mc{f}})_{i,j}^{(m,n)} &= (\enc_{\mc{g}})_i^{(m)} \circ (\enc_{\mc{f}})_j^{(n)},
    \end{split}
    \end{equation}
where $k_1,k_2$ are the orders of the encodings $\enc_{\mc{f}}, \enc_{\mc{g}}$, respectively.
\end{theorem}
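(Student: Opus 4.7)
The plan is to verify, in turn, each of the five conditions of \cref{def:sys simulation} for the proposed composite data. Setting $k_1 = \mr{ord}(\enc_{\mc{f}})$ and $k_2 = \mr{ord}(\enc_{\mc{g}})$, one first checks type consistency: $\phys_{\mc{g}\circ\mc{f}}$ has type $V_T \to V_S^{k_1 k_2}$, $\dec_{\mc{g}\circ\mc{f}}$ has type $[q_S]^{k_1 k_2} \to [q_T]$ obtained by applying $\dec_{\mc{g}}$ to each of the $k_1$ chunks of length $k_2$ and then $\dec_{\mc{f}}$ to the result, and the encodings $(\enc_{\mc{g}\circ\mc{f}})_{i,j}$ are indexed by pairs. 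Conditions \ref{def:sim physical spins}, \ref{def:sim enc-dec}, \ref{def:sim disjoint enc} are then essentially formal: for \ref{def:sim physical spins}, an equality $\phys_{\mc{g}}^{(m)}(\phys_{\mc{f}}^{(n)}(v)) = \phys_{\mc{g}}^{(m')}(\phys_{\mc{f}}^{(n')}(w))$ forces $m=m'$ and $\phys_{\mc{f}}^{(n)}(v) = \phys_{\mc{f}}^{(n')}(w)$ by disjointness of $\phys_{\mc{g}}$, and then $n=n'$, $v=w$ by disjointness of $\phys_{\mc{f}}$. For \ref{def:sim enc-dec}, one applies $\dec_{\mc{g}}$ chunk-wise to absorb $\enc_{\mc{g}}$ and then $\dec_{\mc{f}}$ to absorb $\enc_{\mc{f}}$. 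Condition \ref{def:sim disjoint enc} follows by the same chunk-wise trick.

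The heart of the proof is the decomposition of the composite simulation assignment. I would prove the key identity
\begin{equation}
\simul^{\mc{g}\circ\mc{f}}_{(i,j)}(\vec{t}) \;=\; \bigsqcup_{\vec{r} \in \simul^{\mc{f}}_j(\vec{t})} \simul^{\mc{g}}_i(\vec{r})
\end{equation}
for every $\vec{t}$ with $H_T(\vec{t}) < \Del_{\mc{g}\circ\mc{f}}$. For $\supseteq$: if $\vec{r} \in \simul^{\mc{f}}_j(\vec{t})$ and $\vec{s} \in \simul^{\mc{g}}_i(\vec{r})$, then chaining the two physical-encoding relations yields $\vec{s}\circ \phys_{\mc{g}\circ\mc{f}} = (\enc_{\mc{g}\circ\mc{f}})_{i,j}\circ \vec{t}$, and chaining the two energy equalities gives $H_S(\vec{s}) - \Shift_{\mc{f}} - \Shift_{\mc{g}} = H_T(\vec{t}) < \Del_{\mc{g}\circ\mc{f}}$. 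For $\subseteq$: given $\vec{s}$ in the composite assignment, define $\vec{r} \coloneqq \dec_{\mc{g}} \circ \vec{s} \circ \phys_{\mc{g}}$; the decode-encode identity for $\mc{g}$ combined with $\vec{s}\circ \phys_{\mc{g}\circ\mc{f}} = (\enc_{\mc{g}})_i \circ (\enc_{\mc{f}})_j \circ \vec{t}$ shows $\vec{r}\circ \phys_{\mc{f}} = (\enc_{\mc{f}})_j \circ \vec{t}$, and a short energy computation (using $H_T(\vec{t}) < \Del_{\mc{g}} - \Shift_{\mc{f}}$) shows $\vec{r} \in \simul^{\mc{f}}_j(\vec{t})$ and $\vec{s} \in \simul^{\mc{g}}_i(\vec{r})$. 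The union is disjoint because distinct $\vec{r}$'s decode from distinct $\vec{s}$'s. Condition \ref{def:sim deg} then follows by applying \ref{def:sim deg} for $\mc{f}$ and $\mc{g}$ to count $\degeneracy_{\mc{f}} \cdot \degeneracy_{\mc{g}}$.

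For condition \ref{def:sim energy}, the first case (matching energies on $\simSet_{\mc{g}\circ\mc{f}}$) is already a by-product of the decomposition above. The second case, that $\vec{s} \notin \simSet_{\mc{g}\circ\mc{f}}$ implies $H_S(\vec{s}) - \Shift_{\mc{g}\circ\mc{f}} \geq \Del_{\mc{g}\circ\mc{f}}$, is where the choice of cut-off $\Del_{\mc{g}\circ\mc{f}} = \min(\Del_{\mc{f}}, \Del_{\mc{g}} - \Shift_{\mc{f}})$ is used decisively. I would split into cases on whether $\vec{s} \in \simSet_{\mc{g}}$: if not, \ref{def:sim energy} for $\mc{g}$ gives $H_S(\vec{s}) - \Shift_{\mc{g}} \geq \Del_{\mc{g}}$, and subtracting $\Shift_{\mc{f}}$ yields the bound since $\Del_{\mc{g}\circ\mc{f}} \leq \Del_{\mc{g}} - \Shift_{\mc{f}}$; if $\vec{s} \in \simul^{\mc{g}}_i(\vec{r})$ for some $\vec{r}$, then $H_R(\vec{r}) = H_S(\vec{s}) - \Shift_{\mc{g}}$ and, since $\vec{s}$ fails to be in the composite assignment while $\vec{r}$ is available, either $\vec{r}\notin \simSet_{\mc{f}}$ (so \ref{def:sim energy} for $\mc{f}$ gives $H_R(\vec{r}) - \Shift_{\mc{f}} \geq \Del_{\mc{f}}$) or $\vec{r}$ simulates some $\vec{t}$ with $H_T(\vec{t})\geq \Del_{\mc{g}\circ\mc{f}}$, both of which yield the claim. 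The only real subtlety --- and the step I expect to demand the most care --- is keeping the two cut-offs and shifts straight throughout this last case analysis; everything else is bookkeeping on the data already supplied.
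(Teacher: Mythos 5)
Your proposal is correct and follows essentially the same route as the paper: verify the three formal conditions directly, establish the decomposition $(\simul_{\mc{g}\circ\mc{f}})_{i,j}(\vec{t}) = \bigcup_{\vec{r}\in(\simul_{\mc{f}})_j(\vec{t})}(\simul_{\mc{g}})_i(\vec{r})$, and deduce the degeneracy and energy conditions from it. One small remark: the sub-case in your final argument where $\vec{r}$ simulates some $\vec{t}$ with $H_T(\vec{t})\geq \Del_{\mc{g}\circ\mc{f}}$ is actually vacuous (your own $\supseteq$ inclusion already derives $H_T(\vec{t}) < \Del_{\mc{g}\circ\mc{f}}$ whenever $\vec{s}\in(\simul_{\mc{g}})_i(\vec{r})$ and $\vec{r}\in(\simul_{\mc{f}})_j(\vec{t})$), and the paper records this more tersely as: $\vec{s}\notin\simSet_{\mc{g}\circ\mc{f}}$ iff $\vec{s}\notin\simSet_{\mc{g}}$ or $\vec{s}\in\simul_{\mc{g}}(\vec{r})$ with $\vec{r}\notin\simSet_{\mc{f}}$.
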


\begin{proof} See \cref{ssec:composition}.\end{proof}

In the above definition, we identified $V_S^{k_2\cdot k_1}$ with $V_S^{k_2\times k_1}$ and similarly for $[q_S]^{k_2\cdot  k_1}$, 
 where $\vec{x}_r$ is the $r$th column vector of a matrix in $[q_S]^{k_2\times k_1}$.
The key idea of the construction of $\mc{g}\circ \mc{f}$ is to first use $\mc{f}$ to encode each spin of $T$ into $k_1$ physical spins of $R$, and then use $\mc{g}$ to encode each of these into $k_2$ physical spins of $S$, yielding, in total, for each spin of $T$, an $k_2\times k_1$ matrix of physical spins of $S$.
Also $\dec,\enc$ are defined w.r.t.\ this matrix of physical spins, in particular $\vert \enc_{\mc{g}\circ \mc{f}} \vert = \vert\enc_{\mc{g}} \vert \cdot \vert \enc_{\mc{f}} \vert $. 
It follows that  
\begin{equation}\label{eq:characterization sim3 main}
                (\simul_{\mc{g}\circ\mc{f}})_{i,j}(\Vec{t}) = \{ \Vec{s} \in (\simul_{\mc{g}})_i(\Vec{r}) \mid  \Vec{r} \in (\simul_{\mc{f}})_j(\Vec{t}) \}.
\end{equation}
That is, given any low energy target configuration $\vec{t}$, we construct a source configuration $\vec{s}$ that simulates $\vec{t}$ w.r.t.\ to $\mc{g}\circ \mc{f}$ by first taking any $\vec{r}$ that simulates $\vec{t}$ w.r.t.\ $\mc{f}$, and then taking $\vec{s}$ that simulates $\vec{r}$ w.r.t.\ $\mc{g}$.
Equation \eqref{eq:characterization sim3 main} is proven in \cref{ssec:composition}.

Note that the composition of simulations is denoted in diagrammatic order, i.e.\ thinking of simulations as transformations between spin systems, $\mc{g}\circ \mc{f}$ means first applying $\mc{g}$ and then $\mc{f}$. 
The reason for this is that, in practice, we often construct simulations starting from the target, that is, given a target system $T$ we first construct $R$ and $\mc{f}\colon R \to T$, and then given $R$ (and possibly information over $\mc{f}$) we construct $S$ and $\mc{g}$. 
In this sense, the diagrammatic order corresponds to the operational order of constructing simulations.
This is also reflected in the fact that the encoding of configurations, specified by $\phys, \enc$ and ultimately $\simul$, is composed in this way. That is, configurations of $T$ are first encoded into configurations of $R$ (as specified by $\mc{f}$), which are then encoded into configurations of $S$ (as specified by $\mc{g}$). 

Finally, since in the low energy sector (up to encoding configurations) we have $H_S-\Shift_{\mc{g}} = H_R$ and $H_R - \Shift_{\mc{f}} = H_T$, 
the shift of $\mc{g}\circ \mc{f}$ is necessarily of the form $\Shift_{\mc{f}}+\Shift_{\mc{g}}$.
Since low energy target configurations $\vec{t}$ are precisely those which can be simulated by a configuration $\vec{r}$ such that $\vec{r}$ itself can be simulated by a source configuration $\vec{s}$, we have
\begin{equation}
\begin{split}
    H_R(\vec{r})-\Shift_{\mc{f}} &= H_T(\vec{t}) < \Del_{\mc{f}}\\
    H_S(\vec{s}) - \Shift_{\mc{g}} &= H_R(\vec{r}) < \Del_{\mc{g}} .
\end{split}
\end{equation}
In total, the cut-off of the composite simulation is of the form $\mr{min}(\Del_{\mc{f}}, \Del_{\mc{g}}-\Shift_{\mc{f}})$.

\begin{theorem}[scaling of simulations]\label{thm:sim scale}
Let $\mc{f} \colon S \to T $ and $\lambda \in \mathbb{R}_{\geq 0}$.  Then 
\begin{equation}
    \lambda \cdot \mc{f} \colon \lambda\cdot S \to \lambda \cdot T
\end{equation}
where 
\begin{equation}
    \Del_{\lambda\cdot \mc{f}}= \lambda \cdot \Del_{\mc{f}} \quad \text{and} \quad \Shift_{\lambda\cdot \mc{f}}= \lambda \cdot \Shift_\mc{f}
\end{equation}
and $\lambda \cdot \mc{f}$ agrees with $\mc{f}$ on the remaining data. 
\end{theorem}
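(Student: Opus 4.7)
The plan is to verify that the data proposed in the theorem satisfies all five conditions of \cref{def:sys simulation} for the pair of spin systems $\lambda \cdot S, \lambda \cdot T$. Since $\phys, \enc, \dec$ and $\degeneracy$ are inherited from $\mc{f}$, conditions \ref{def:sim physical spins}, \ref{def:sim enc-dec} and \ref{def:sim disjoint enc} hold trivially, as they make no reference to the Hamiltonians, cut-off or shift. The only nontrivial work is to show that conditions \ref{def:sim deg} and \ref{def:sim energy} transfer from $\mc{f}$ to $\lambda\cdot \mc{f}$. The key identity throughout is \cref{eq:scaling:ham}, namely $H_{\lambda \cdot X}(\vec{x}) = \lambda \cdot H_X(\vec{x})$ for any spin system $X$.

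First I would treat the case $\lambda > 0$. Multiplication by a positive scalar is a strictly monotone bijection on $\mathbb{R}$, so the inequality $H_S(\vec{s}) - \Shift_\mc{f} < \Del_\mc{f}$ is equivalent to $\lambda H_S(\vec{s}) - \lambda \Shift_\mc{f} < \lambda \Del_\mc{f}$, which by \cref{eq:scaling:ham} is exactly $H_{\lambda\cdot S}(\vec{s}) - \Shift_{\lambda\cdot \mc{f}} < \Del_{\lambda\cdot \mc{f}}$. The same observation applies to the conditions $H_T(\vec{t}) < \Del_\mc{f}$ and $H_S(\vec{s}) - \Shift_\mc{f} \geq \Del_\mc{f}$. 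Since the condition $\vec{s}\circ \phys = \enc_i\circ \vec{t}$ is independent of scaling, it follows immediately from the definition of $\simul_i$ that
\begin{equation}
(\simul_{\lambda\cdot \mc{f}})_i(\vec{t}) = (\simul_\mc{f})_i(\vec{t})
\end{equation}
for every $\vec{t}\in \mc{C}_T$. Condition \ref{def:sim deg} for $\lambda\cdot \mc{f}$ is then a direct consequence of condition \ref{def:sim deg} for $\mc{f}$, with the same $\degeneracy$. Condition \ref{def:sim energy} follows by multiplying the equation $H_S(\vec{s}) - \Shift_\mc{f} = H_T(\vec{t})$ (respectively the inequality $H_S(\vec{s}) - \Shift_\mc{f} \geq \Del_\mc{f}$) by $\lambda$ and applying \cref{eq:scaling:ham} to both sides.

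The one genuine subtlety is the degenerate case $\lambda = 0$, where the proposed $\Del_{\lambda\cdot \mc{f}} = 0$ is not a positive real and hence not a valid cut-off in the sense of \cref{def:sys simulation}. In this case both $H_{0\cdot S}$ and $H_{0\cdot T}$ vanish identically, so every source configuration trivially lies in $\simul_i(\vec{t})$ for every target $\vec{t}$ consistent with the physical spin assignment and encoding, and the simulation is trivial: one may choose any positive $\Del$, keep $\Shift = 0$, and all five conditions hold via the same identity arguments as above. Thus the statement should be read as applying to $\lambda > 0$ with the prescribed data, and extended to $\lambda = 0$ by allowing an arbitrary positive cut-off. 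I expect the main (mild) obstacle of the write-up is precisely formulating this $\lambda = 0$ boundary case while making the positive-$\lambda$ calculation clean; everything else reduces to the one-line observation that multiplication by $\lambda > 0$ commutes with the (in)equalities defining a simulation.
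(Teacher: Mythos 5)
Your proof takes exactly the route the paper does: observe that conditions \ref{def:sim physical spins}--\ref{def:sim disjoint enc} are untouched by rescaling, and that for $\lambda>0$ multiplication by $\lambda$ is an order-isomorphism of $\mathbb{R}$, so the strict and non-strict inequalities in conditions \ref{def:sim deg} and \ref{def:sim energy} transfer verbatim and $\simul_{\lambda\cdot\mc{f}}=\simul_{\mc{f}}$. That part is a match.

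Where you go beyond the paper is in flagging $\lambda=0$. You are right to do so: the biconditional $H_S(\vec{s})-\Shift_\mc{f}<\Del_\mc{f}\Leftrightarrow\lambda H_S(\vec{s})-\lambda\Shift_\mc{f}<\lambda\Del_\mc{f}$, which the paper invokes without qualification, actually fails at $\lambda=0$ (the right-hand side degenerates to the false statement $0<0$), and moreover $\Del_{0\cdot\mc{f}}=0$ violates the requirement in \cref{def:sys simulation} that the cut-off be a positive real. The theorem as stated allows $\lambda\in\mathbb{R}_{\geq 0}$, so strictly read, the paper's own one-line argument does not cover its own hypothesis; one has to patch the $\lambda=0$ case by hand, as you do, by choosing any positive cut-off for the (now trivial, constantly zero) scaled systems. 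So your write-up is not just correct but slightly more careful than the paper's on this boundary case.
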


\begin{proof}
    Only condition \ref{def:sim deg} and condition \ref{def:sim energy} of \cref{def:sys simulation} are affected by changing $\Del$ and $\Shift$. 
    These hold since 
    \begin{equation}
        H_S(\vec{s})-\Shift_\mc{f}< \Del_\mc{f} \Leftrightarrow \lambda \cdot H_S(\vec{s})- \lambda\cdot \Shift_\mc{f} < \lambda\cdot\Del_\mc{f}
    \end{equation}
    and hence also $\simul_\mc{f}(\vec{t})= \simul_{\lambda\cdot \mc{f}}(\vec{t})$.
\end{proof}

\begin{figure*}[th]
    \centering
    \begin{tikzpicture}
\pgfdeclarelayer{bg}    
\pgfsetlayers{bg,main} 

\coordinate[] (1) at (0,0) ;
\coordinate[] (2) at (2,0.66) ;
\coordinate[] (3) at (3.33,2) ;
\coordinate[] (4) at (1.33,3.33) ;
\coordinate[] (5) at (0,2) ;

\begin{pgfonlayer}{bg}
\pgfsetstrokeopacity{0.5}
\pgfsetfillopacity{0.5}
    \filldraw[fill=blue!30] ($(1)+(-0.8,-0.5)$) 
            to[out=300,in=180] ($(2) + (0.6,-0.8)$) 
            to[out=0,in=320] ($(3) + (0.8,0.2)$)
            to[out=140,in=120] ($(1) + (-0.8,-0.5)$);
     \filldraw[fill=red!30] ($(5)+(-0.8,-0.5)$) 
            to[out=340,in=240] ($(3) + (0.6,0)$) 
            to[out=60,in=320] ($(4) + (0.8,0.4)$)
            to[out=140,in=160] ($(5) + (-0.8,-0.5)$);
\end{pgfonlayer}

\coordinate[] (21) at (-10,0.5) ;
\coordinate[] (22) at (-7,1) ;
\coordinate[] (23) at (-4,3) ;
\coordinate[] (24) at (-8,5) ;
\coordinate[] (25) at (-10,3.5) ;

\coordinate (d) at (-7,-1);
\coordinate (u) at (-8,6);

\begin{pgfonlayer}{bg}
\pgfsetstrokeopacity{0.5}
\pgfsetfillopacity{0.5}
    \filldraw[fill=cyan!5] ($(21)+(-0.8,-0.5)$) 
            to[out=300,in=180] ($(d) + (0.6,-0.8)$) 
            to[out=0,in=320] ($(23) + (0.8,0.2)$)
            to[out=140,in=120] ($(21) + (-0.8,-0.5)$);
     \filldraw[fill=magenta!5] ($(25)+(-0.8,-0.5)$) 
            to[out=340,in=240] ($(23) + (0.6,0)$) 
            to[out=60,in=320] ($(u) + (0.8,0.4)$)
            to[out=140,in=160] ($(25) + (-0.8,-0.5)$);
\end{pgfonlayer}

\begin{pgfonlayer}{bg}
\pgfsetstrokeopacity{0.5}
\pgfsetfillopacity{0.5}
    \filldraw[dashed, very thick, fill=blue!30] ($(21)+(-0.8,-0.5)$) 
            to[out=300,in=180] ($(22) + (0.6,-0.8)$) 
            to[out=0,in=320] ($(23) + (0.8,0.2)$)
            to[out=140,in=120] ($(21) + (-0.8,-0.5)$);
     \filldraw[dashed, very thick, fill=red!30] ($(25)+(-0.8,-0.5)$) 
            to[out=340,in=240] ($(23) + (0.6,0)$) 
            to[out=60,in=320] ($(24) + (0.8,0.4)$)
            to[out=140,in=160] ($(25) + (-0.8,-0.5)$);
\end{pgfonlayer}

\node[] (t1) at (4,0) {$V_{T_1}$};
\node[] (t2) at (4,3) {$V_{T_2}$};

\node[] (s1) at (-10.5,-1) {$V_{S_1}$};
\node[] (s2) at (-11,6) {$V_{S_2}$};

\node[] (p1) at (-8,-0.5) {$\physSet_{\mc{f}}$};
\node[] (p2) at (-9,5.75) {$\physSet_{\mc{g}}$};

\node[] (c1) at (1.5,0.5) {$\vec{t}\vert_{V_{T_1}}$};
\node[] (c2) at (1.5,3) {$\vec{t}\vert_{V_{T_2}}$};

\node[] (c21) at (-8.5,1) {$\vec{s}_1 \in (\simul_{\mc{f}})_i(\vec{t}\vert_{V_{T_1}})$};
\node[] (c22) at (-8.5,4.5) {$\vec{s}_2 \in (\simul_{\mc{g}})_i(\vec{t}\vert_{V_{T_2}})$};

\draw[->] (c1) to [bend left = 15] (c21);
\draw[->] (c2) to [bend right = 15] (c22);

\end{tikzpicture}
    \caption{
    Simulations are modular (\cref{thm:sim sum}): If $\mc{f}: S_1\to T_1$ and $\mc{g}: S_2\to T_2$, then $S_1+S_2\to T_1+T_2$ provided certain compatibility conditions are fulfilled. 
    The left hand side shows $V_{S_1+S_2}$ and the physical spin sets $\physSet_{\mc{f}}$ and $\physSet_{\mc{g}}$ of the simulations $\mc{f},\mc{g}$, 
    whereas the right hand side shows $V_{T_1+T_2}$ and a low energy configuration $\vec{t}$ of $T_1+T_2$.  
  The constructed configurations $\vec{s}_1,\vec{s}_2$ agree on the overlap $V_{S_1}\cap V_{S_2}$ and, once combined, simulate $\vec{t}$.
    }
    \label{fig:thm sum}
\end{figure*}

\begin{theorem}[addition of simulations]\label{thm:sim sum}
    Let $\mc{f}\colon S_1 \to T_1$ and $\mc{g} \colon S_2 \to T_2$ such that 
    \begin{enumerate} 
        \item\label{eq: sim sum dec} $\dec_{\mc{f}} = \dec_{\mc{g}}$
        \item\label{eq:sum phys spins} $\phys_{\mc{f}} \vert_{V_{T_1}\cap V_{T_2}} = \phys_{\mc{g}} \vert_{V_{T_1}\cap V_{T_2}}$
        \item \label{eq:sum phys image} $\mr{Im}(\phys_{\mc{f}}\vert_{V_{T_1}\cap V_{T_2}})= V_{S_1} \cap V_{S_2}$ 
        \item\label{eq:sum common enc} $(\enc_{\mc{f}}) \cap (\enc_{\mc{g}}) \neq \emptyset$ 
        \item\label{eq:sum disjoint enc} 
        if $(\enc_{\mc{f}})_i(s) = (\enc_{\mc{g}})_j(s)$ for some $s\in [q_S]$ then $(\enc_{\mc{f}})_i = (\enc_{\mc{g}})_j$  
    \end{enumerate}
    Then $\mc{f}+\mc{g}\colon S_1+S_2 \to T_1 + T_2$ with 
    \begin{equation}
    \begin{split}
        \Del_{\mc{f}+\mc{g}} &= \mr{min}(\Del_{\mc{f}}, \Del_{\mc{g}})\\
        \Shift_{\mc{f}+\mc{g}} &= \Shift_{\mc{f}} + \Shift_{\mc{g}} \\
        \degeneracy_{\mc{f}+\mc{g}} &= \degeneracy_{\mc{f}}\cdot \degeneracy_{\mc{g}}\\
        \phys_{\mc{f}+\mc{g}}(t) & = \begin{cases}
            \phys_{\mc{f}}(t) \ \text{if} \ t \in V_{T_1}\\
            \phys_{\mc{g}}(t) \ \text{else} 
        \end{cases}\\
        \dec_{\mc{f}+\mc{g}} &= \dec_{\mc{f}}  \\
        \enc_{\mc{f}+\mc{g}} &= \enc_{\mc{f}} \cap \enc_{\mc{g}} 
    \end{split}
    \end{equation}
where in the above
\begin{equation}
\begin{split}
   &\mr{Im}(\phys_{\mc{f}} \vert_{V_{T_1}\cap V_{T_2}}) =\\
   &=\{\phys_{\mc{f}}^{(i)}(t) \mid 1 \leq i \leq \mr{ord}(\enc_{\mc{f}}), t \in V_{T_1}\cap V_{T_2} \} 
   \end{split}
\end{equation}
and $\enc_{\mc{f}} \cap \enc_{\mc{g}}$ denotes the list of encodings that are contained in both $\enc_{\mc{f}}$ and $\enc_{\mc{g}}$, ordered according to $\enc_{\mc{f}}$. 
\end{theorem}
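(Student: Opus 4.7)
The plan is to verify, in order, conditions \ref{def:sim physical spins}--\ref{def:sim energy} of \cref{def:sys simulation} for the data defining $\mc{f}+\mc{g}$, reducing each to the corresponding property of $\mc{f}$ and $\mc{g}$ via the hypotheses of the theorem. Conditions \ref{def:sim physical spins}--\ref{def:sim disjoint enc} are essentially bookkeeping: $\phys_{\mc{f}+\mc{g}}$ is well-defined thanks to assumption \ref{eq:sum phys spins}, and its disjointness combines the individual disjointness of $\phys_{\mc{f}}$ and $\phys_{\mc{g}}$ with assumption \ref{eq:sum phys image}, since any potential collision of images lying in $V_{S_1}\cap V_{S_2}$ must, by \ref{eq:sum phys image}, originate from target spins in $V_{T_1}\cap V_{T_2}$, where both physical assignments coincide. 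Condition \ref{def:sim enc-dec} inherits from $\dec_{\mc{f}}\circ (\enc_{\mc{f}})_i = \mr{id}$ together with $\enc_{\mc{f}+\mc{g}}\subseteq \enc_{\mc{f}}$ and $\dec_{\mc{f}+\mc{g}}=\dec_{\mc{f}}$, while condition \ref{def:sim disjoint enc} inherits from the disjointness of $\enc_{\mc{f}}$. Assumption \ref{eq:sum common enc} ensures that the resulting list $\enc_{\mc{f}+\mc{g}}$ is non-empty.

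The substantive step is condition \ref{def:sim deg}. Given $\vec{t} \in \mc{C}_{T_1+T_2}$ with $H_{T_1+T_2}(\vec{t}) < \Del_{\mc{f}+\mc{g}}$, additivity of the Hamiltonian over sums of spin systems, together with the non-negativity of local interactions, implies that both restrictions $\vec{t}|_{V_{T_1}}$ and $\vec{t}|_{V_{T_2}}$ are low-energy for $T_1$ and $T_2$ respectively. The key observation is that assumptions \ref{eq:sum phys spins} and \ref{eq:sum phys image} jointly force every spin in $V_{S_1}\cap V_{S_2}$ to be a physical spin of both $\mc{f}$ and $\mc{g}$, attached to a target spin in $V_{T_1}\cap V_{T_2}$; hence for any $\enc_i \in \enc_{\mc{f}+\mc{g}}$, the value of any configuration $\vec{s}_1 \in (\simul_{\mc{f}})_i(\vec{t}|_{V_{T_1}})$ on the overlap is forced to coincide with the value of any $\vec{s}_2 \in (\simul_{\mc{g}})_i(\vec{t}|_{V_{T_2}})$ there. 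The two restrictions therefore glue to a unique $\vec{s}\in \mc{C}_{S_1+S_2}$ satisfying $\vec{s}\circ \phys_{\mc{f}+\mc{g}} = (\enc_{\mc{f}+\mc{g}})_i\circ \vec{t}$, and this gluing is a bijection between $(\simul_{\mc{f}+\mc{g}})_i(\vec{t})$ and $(\simul_{\mc{f}})_i(\vec{t}|_{V_{T_1}}) \times (\simul_{\mc{g}})_i(\vec{t}|_{V_{T_2}})$, giving $|(\simul_{\mc{f}+\mc{g}})_i(\vec{t})| = \degeneracy_\mc{f}\cdot \degeneracy_\mc{g}$.

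Condition \ref{def:sim energy} splits into two cases. The equality case is a direct computation: for $\vec{s}\in (\simul_{\mc{f}+\mc{g}})_i(\vec{t})$, the Hamiltonian additivity $H_{S_1+S_2}(\vec{s}) = H_{S_1}(\vec{s}|_{V_{S_1}}) + H_{S_2}(\vec{s}|_{V_{S_2}})$ together with the matching energies of $\mc{f}$ and $\mc{g}$ yields $H_{S_1+S_2}(\vec{s}) - (\Shift_{\mc{f}}+\Shift_{\mc{g}}) = H_{T_1+T_2}(\vec{t})$. The converse case, that $\vec{s}\notin \simSet_{\mc{f}+\mc{g}}$ implies $H_{S_1+S_2}(\vec{s}) - \Shift_{\mc{f}+\mc{g}} \geq \Del_{\mc{f}+\mc{g}}$, is the main obstacle and I would argue by contrapositive. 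Assuming $\vec{s}$ has energy below the cut-off, both restrictions $\vec{s}|_{V_{S_1}}$ and $\vec{s}|_{V_{S_2}}$ are low-energy, so each lies in some $(\simul_{\mc{f}})_j(\vec{t}_1)$ and $(\simul_{\mc{g}})_k(\vec{t}_2)$; assumptions \ref{eq: sim sum dec} and \ref{eq:sum phys spins} force $\vec{t}_1$ and $\vec{t}_2$ to agree on $V_{T_1}\cap V_{T_2}$ (both decode the same physical spins via the same $\dec$), producing a well-defined target $\vec{t}\in \mc{C}_{T_1+T_2}$; and evaluating $(\enc_{\mc{f}})_j$ and $(\enc_{\mc{g}})_k$ on any value taken by $\vec{t}$ on the overlap gives equal images (both equal $\vec{s}$ on $V_{S_1}\cap V_{S_2}$), so assumption \ref{eq:sum disjoint enc} promotes this pointwise coincidence to $(\enc_{\mc{f}})_j = (\enc_{\mc{g}})_k$, which therefore lies in $\enc_{\mc{f}+\mc{g}}$ and certifies $\vec{s}\in \simSet_{\mc{f}+\mc{g}}$. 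The delicate point in this last step is the precise use of assumption \ref{eq:sum disjoint enc} together with the image condition \ref{eq:sum phys image} to transfer a single overlap equality into a full equality of encodings; this is where assumptions \ref{eq:sum common enc}--\ref{eq:sum disjoint enc} do the real work.
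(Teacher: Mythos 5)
Your proposal is correct and follows essentially the same route as the paper: verify the five conditions of \cref{def:sys simulation} in turn, with the heart of the argument being the gluing of the restrictions $\vec{s}|_{V_{S_1}}$ and $\vec{s}|_{V_{S_2}}$ along $V_{S_1}\cap V_{S_2}$, where consistency is forced by assumptions \ref{eq:sum phys spins} and \ref{eq:sum phys image}. You are, if anything, slightly more explicit than the paper in the second case of condition~\ref{def:sim energy}: where the paper merely asserts that $\vec{s}\notin\simSet_{\mc{f}+\mc{g}}$ with no high-energy decomposition forces $\vec{s}|_{V_{S_1}}\notin\simSet_{\mc{f}}$ or $\vec{s}|_{V_{S_2}}\notin\simSet_{\mc{g}}$, you prove the contrapositive by gluing the two restrictions' target configurations and promoting the pointwise agreement of encodings on the overlap to full equality via assumption~\ref{eq:sum disjoint enc}, which is exactly the argument the paper leaves implicit.
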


\begin{proof} See \cref{ssec:addition sim}. \end{proof}

Note that simulations can be added only if they satisfy the compatibility conditions stated in \cref{thm:sim sum}. 

In essence, conditions \ref{eq: sim sum dec} and \ref{eq:sum common enc} are necessary to construct a decoding and encoding of the sum simulation $\mc{f}+\mc{g}$.
Condition \ref{eq:sum disjoint enc}  requires that the encodings $\enc_{\mc{f}}$ and $\enc_{\mc{g}}$ satisfy a condition similar to  \ref{def:sim disjoint enc} in \cref{def:sys simulation}, which  prevents the existence of low energy source configurations $\vec{s}_i \in \mc{C}_{S_i}$ that agree on the overlap $V_{S_1} \cap V_{S_2}$ but stem from different encodings. Such configurations would have low energy but would not be included in $\simSet$, thus preventing $\mc{f}+\mc{g}$ from defining a simulation.

The remaining conditions, \ref{eq:sum phys spins} and \ref{eq:sum phys image}, ensure that the physical spin assignments agree on the overlap of target systems while auxiliary spins do not overlap. They allow us to construct configurations in $\vec{s} \in (\simul_{\mc{f}+\mc{g}})_i(\vec{t})$ with encoding $\enc_i \in \enc_{\mc{f}}\cap\enc_{\mc{g}}$ by combining configurations  $\vec{s}_1 \in (\simul_{\mc{f}})_i(\vec{t}\vert_{V_{T_1}})$, and $\vec{s}_2 \in (\simul_{\mc{g}})_i(\vec{t}\vert_{V_{T_2}})$. 
Since $\vec{s}_1, \vec{s}_2$ are simulating configurations which are  constructed with the same encoding, $\enc_i$, we have 
\begin{equation}
    \vec{s}_1 \circ \phys_{f}\vert_{V_{T_1}\cap V_{T_2}} =\enc_i \circ \vec{t}\vert_{V_{T_1}\cap V_{T_2}} = \vec{s}_2 \circ \phys_{g}\vert_{V_{T_1}\cap V_{T_2}} .
\end{equation}
Inserting conditions \ref{eq:sum phys spins}  and \ref{eq:sum phys image}
this implies 
\begin{equation}
    \vec{s}_1 \vert _{V_{S_1}\cap V_{S_2}} = \vec{s}_2 \vert _{V_{S_1}\cap V_{S_2}} .
\end{equation}
Thus, $\vec{s}_1, \vec{s}_2$ can be combined to a configuration $\vec{s}$ on $S_1+S_2$.
The crucial part of the proof of \cref{thm:sim sum}, provided in \cref{ssec:addition sim}, shows that all configurations $\vec{s}$ obtained this way in fact simulate the target configuration $\vec{t}$, i.e.\ they satisfy 
\begin{equation}
    H_{S_1+S_2}(\vec{s}) - \Shift_{\mc{f}+\mc{g}} = H_{T_1+T_2}(\vec{t}),
\end{equation}
while all other configurations of $S_1+S_2$ not obtained this way have high energy (see \cref{fig:thm sum}). 
Note that conditions \ref{eq:sum phys spins} and \ref{eq:sum phys image} can always be achieved by relabeling spins in $S_1$ and $S_2$.

Finally, observe that when adding simulations, their cut-offs are not added, but $\Del_{\mc{f}+\mc{g}} = \mr{min}(\Del_{\mc{f}}, \Del_{\mc{g}})$. This guarantees that 
whenever $H_{T_1+T_2}(\vec{t})<\Del_{\mc{f}+\mc{g}}$, both $H_{T_1}(\vec{t}\vert_{V_{T_1}})<\Del_{\mc{f}}$ and $H_{T_2}(\vec{t}\vert_{V_{T_2}})<\Del_{\mc{g}}$ hold.

Next we show that \cref{thm:sim sum} can be applied iteratively to construct sums of more than two simulations.

\begin{corollary}[finite sums of simulations]\label{cor:sim sum finite}
    If $\mc{f_i}\colon S_i \to T_i$ for $i=1, \ldots, n$ such that
\begin{equation}\label{eq:assumption common enc}
        \bigcap_{i=1}^n \enc_{\mc{f}_i} \neq \emptyset
    \end{equation}
    and the remaining conditions of \cref{thm:sim sum} are satisfied pairwise, then there is a simulation
    \begin{equation}
        \mc{f}\coloneqq \sum_{i=1}^n \mc{f}_i \colon \sum_{i=1}^n S_i \to \sum_{i=1}^n T_i 
    \end{equation}
    with 
    \begin{equation}\label{eq:def sim k sum}
    \begin{split}
        \Del_{\mc{f}} &= \mr{min}(\{\Del_{\mc{f}_i} \mid i=1, \ldots, n\})\\
        \Shift_{\mc{f}} &= \sum_{i=1}^n \Shift_{\mc{f}_i}\\
        \degeneracy_{\mc{f}}&= \prod_{i=1}^n \degeneracy_{\mc{f}_i}\\
        \phys_{\mc{f}} &= \phys_{\mc{f}_{i_t}}(t) \ \text{with} \ i_t= \mr{min}(\{i \mid t \in V_{T_{i_t}} \})\\
        \dec_{\mc{f}}&= \dec_{\mc{f_1}}\\
        \enc_{\mc{f}}&= \bigcap_{i=1}^n \enc_{\mc{f}_i}.
    \end{split}
    \end{equation}
\end{corollary}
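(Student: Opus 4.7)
The plan is a straightforward induction on $n$, with \cref{thm:sim sum} providing both the base case ($n=2$) and the inductive step. Assuming the corollary for $n-1$, set $\mc{g}\coloneqq\sum_{i=1}^{n-1}\mc{f}_i$ and apply \cref{thm:sim sum} to the pair $(\mc{g},\mc{f}_n)$. Once this application is legitimate, the parameter formulas in \eqref{eq:def sim k sum} follow immediately: $\mr{min}$, $+$ and $\cdot$ are associative, one has $\bigcap_{i=1}^{n-1}\enc_{\mc{f}_i}\cap\enc_{\mc{f}_n}=\bigcap_{i=1}^n\enc_{\mc{f}_i}$, and the piecewise formula for $\phys_{\mc{g}}$ from the inductive hypothesis combines correctly with the branching rule of \cref{thm:sim sum} to yield the claimed $\phys_{\mc{f}}$.

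The work is thus concentrated in verifying the five compatibility conditions of \cref{thm:sim sum} for the pair $(\mc{g},\mc{f}_n)$ from the pairwise assumptions on the $\mc{f}_i$. Conditions \ref{eq: sim sum dec} and \ref{eq:sum common enc} are essentially immediate: $\dec_{\mc{g}}=\dec_{\mc{f}_1}=\dec_{\mc{f}_n}$ by the inductive formula together with pairwise condition \ref{eq: sim sum dec}, while $\enc_{\mc{g}}\cap\enc_{\mc{f}_n}=\bigcap_{i=1}^n\enc_{\mc{f}_i}$ is nonempty by the global hypothesis \eqref{eq:assumption common enc}. Condition \ref{eq:sum phys spins} reduces, for $t$ in the overlap, to showing $\phys_{\mc{f}_{i_t}}(t)=\phys_{\mc{f}_n}(t)$, which holds by pairwise condition \ref{eq:sum phys spins} applied to $(\mc{f}_{i_t},\mc{f}_n)$. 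Condition \ref{eq:sum disjoint enc} follows because any encoding in $\enc_{\mc{g}}$ is in particular an encoding of $\mc{f}_1$, so the pairwise version between $\mc{f}_1$ and $\mc{f}_n$ suffices.

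The main obstacle is condition \ref{eq:sum phys image}, requiring the identity
\begin{equation}
\mr{Im}\Bigl(\phys_{\mc{g}}\Big\vert_{V_{\sum_{i<n}T_i}\cap V_{T_n}}\Bigr)=V_{\sum_{i<n}S_i}\cap V_{S_n}.
\end{equation}
I would prove this by decomposing the right hand side as $\bigcup_{i<n}(V_{S_i}\cap V_{S_n})$ and the left hand side as $\bigcup_{i<n}\mr{Im}(\phys_{\mc{f}_i}\vert_{V_{T_i}\cap V_{T_n}})$, using the inductive formula for $\phys_{\mc{g}}$ together with the pairwise agreement of $\phys_{\mc{f}_i}$ and $\phys_{\mc{f}_n}$ on their common overlap with $V_{T_n}$ already established when checking condition \ref{eq:sum phys spins}. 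Applying pairwise condition \ref{eq:sum phys image} termwise then identifies the two unions, completing the verification of the compatibility conditions and hence the inductive step.
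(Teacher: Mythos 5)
Your proposal is correct and follows essentially the same route as the paper's proof: an induction on $n$ where $\mc{g}=\sum_{i<n}\mc{f}_i$ is paired with $\mc{f}_n$ via \cref{thm:sim sum}, with the five compatibility conditions verified from the pairwise assumptions (and, for condition \ref{eq:sum phys spins} and \ref{eq:sum phys image}, via the same decomposition $\bigl(\bigcup_{i<n}V_{T_i}\bigr)\cap V_{T_n}=\bigcup_{i<n}(V_{T_i}\cap V_{T_n})$). The only minor difference is that you make the argument for condition \ref{eq:sum disjoint enc} slightly more explicit by routing it through $\enc_{\mc{g}}\subseteq\enc_{\mc{f}_1}$, where the paper simply cites the pairwise hypothesis.
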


\begin{proof}
    We prove the claim by induction over $n$. The case $n=2$ follows from \cref{thm:sim sum}.
    Assume the claim holds for $n-1$. 
    We start by writing 
    \begin{equation}
        \sum_{i=1}^n S_i =  \sum_{i=1}^{n-1} S_i + S_n .
    \end{equation}
    By assumption, we have simulations
    \begin{equation}\label{eq:sim k sum 1}
       \mc{f}_{<n} \coloneqq \sum_{i=1}^{n-1} \mc{f}_i \colon  \sum_{i=1}^{n-1}S_i \to \sum_{i=1}^{n-1}T_i
    \end{equation}
    and 
    \begin{equation}\label{eq:sim k sum 2}
    \mc{f_n} \colon S_n \to T_n.
    \end{equation}
    We now prove that these two simulations satisfy the conditions of \cref{thm:sim sum} and hence can be summed to obtain $\mc{f}$.
    
    Condition \ref{eq: sim sum dec} and \ref{eq:sum disjoint enc} follow from the assumption that $\{\mc{f_i} \mid i\leq n \}$ satisfy these conditions pairwise.
    Condition \ref{eq:sum common enc} holds by \eqref{eq:assumption common enc}.
    To see that condition \ref{eq:sum phys spins} holds too, we write
    \begin{equation}\label{eq:union intersection}
        \Bigl( \bigcup_{i=1}^{n-1} V_{T_i} \Bigr) \cap V_{T_n} =  \bigcup_{i=1}^{n-1} \Bigl( V_{T_i} \cap V_{T_n} \Bigr).
    \end{equation}
    Given any $v \in \Bigl( \bigcup_{i=1}^{n-1} V_{T_i} \Bigr) \cap V_{T_n}$, then $v\in V_{T_{i_v}}\cap V_{T_n}$ and by construction
    \begin{equation}
        \phys_{\mc{f}_{<n}} (v) = \phys_{\mc{f}_{i_v}}(v).
    \end{equation}
    Thus, condition \ref{eq:sum phys spins} holds since the original simulations $\{\mc{f}_i \mid i \leq n \}$ satisfy it pairwise. 
    Finally, for condition \ref{eq:sum phys image}, by \eqref{eq:union intersection} we have that
    \begin{equation}\label{eq:phys spin image union}
    \begin{split}
        &\mr{Im}\Bigl(\phys_{\mc{f}_{<n}} \Big \vert_{\bigl( \bigcup_{i=1}^{n-1} V_{T_i}  \bigr) \cap V_{T_n}}\Bigr )= \\
        &= \bigcup_{i=1}^{n-1} \mr{Im}\Bigl( \phys_{\mc{f}_{<n}}  \Big \vert_{V_{T_i} \cap V_{T_n}}
        \Bigr).
        \end{split}
    \end{equation}
    Since $\{\mc{f_i} \mid i\leq n \}$ pairwise satisfy condition \ref{eq:sum phys spins},
    \begin{equation}
        \phys_{\mc{f}_{<n}}  \Big \vert_{V_{T_i} \cap V_{T_n}}   = \phys_{\mc{f}_i}\Big \vert_{V_{T_i} \cap V_{T_n}}.
    \end{equation}
    Using this together with the assumption that $\{\mc{f_i} \mid i\leq n \}$ also satisfy condition \ref{eq:sum phys image} pairwise, the right hand side of \eqref{eq:phys spin image union} equals
    \begin{equation}
        \bigcup_{i=1}^{n-1} \bigl (V_{S_i} \cap V_{S_n} \bigr ) = \bigl( \bigcup_{i=1}^{n-1} V_{S_i} \bigr) \cap V_{S_n}.
    \end{equation}

   In total, the simulations $\mc{f}_{<n}$ and $\mc{f}_n$ satisfy the conditions of \cref{thm:sim sum}. They can thus be added to yield  
    \begin{equation}
                \Bigl( \sum_{i=1}^{n-1} \mc{f}_i\Bigr) + \mc{f}_n \colon  \sum_{i=1}^{n}S_i \to \sum_{i=1}^{n}T_i,
    \end{equation}
    which satisfies the requirements of \cref{thm:sim sum}. 
\end{proof}

\section{Spin Models and Emulations}\label{sec:spin models}

In this section we define and study spin models (\cref{ssec:spin model}), emulations (\cref{ssec:spin model simulation}), and finally characterize universal spin models (\cref{ssec:universality}).

\subsection{Spin Models}\label{ssec:spin model}

A spin model is a set of spin systems that is closed under isomorphisms and has constant spin type. 

\begin{definition}[spin model]\label{def:spin model}
A \emph{spin model} $\mc{M}$ is a set of spin systems such that for all $S,T\in \mc{M}$,
\begin{enumerate}
    \item $q_S=q_T$,
    \item\label{cond: spin model closed under iso} if $S'\cong S$ then $S'\in \mc{M}$.
\end{enumerate}
\end{definition}

Given a spin model $\mc{M}$, we denote its \emph{spin type} by $q_{\mc{M}}$.
It is worth noting that many of the following considerations work for general sets of spin systems. Only for some constructions we need the assumptions of \cref{def:spin model}. To distinguish an unrestricted set of spin systems from a spin model, in the following we denote sets of spin systems by $\mc{S}, \mc{T}, \ldots$ and reserve $\mc{M}, \mc{N}, \mc{M}', \ldots$ for spin models.

In practice, we often define spin models by specifying a set of spin systems $\mc{S}$ with constant spin type and consider its closure under isomorphisms. The latter is denoted by $[.]$, i.e.\ $[\mc{S}]$ defines a spin model with $S \in [\mc{S}]$ if and only if $S\cong S'$ for some $S' \in \mc{S}$.We can think of $[\mc{S}]$ as the spin model generated by $\mc{S}$.

We might also define spin models by restricting the available local interactions and  interaction hypergraphs. For example, allowing only Ising pair-interactions on square-grid graphs gives rise to the $2$d-Ising model (cf.\ \cref{ex:2d Ising}). 
\begin{definition}\label{def:spin model alt}
    Let $q\in \mathbb{N}_{\geq 2}$, $\mc{G}$ be a set of hypergraphs and $\mc{J}$ be a set of functions of type 
    \begin{equation}
        \mc{J} \subseteq \{[q]^{\{1, \ldots, l\}} \to \mathbb{R}_{\geq 0} \mid l \in \mathbb{N} \} .
    \end{equation}
    We define the spin model $[\mc{G}, \mc{J}]$ as the set containing all spin systems $S$ which, up to isomorphism, have interaction graph $G_S$ from $\mc{G}$ and local interactions $J_S(e)$ from $\mc{J}$.
\end{definition} 
In words, if $S \in [\mc{G}, \mc{J}]$ then $G_S$ is isomorphic to some $G \in \mc{G}$. In addition, for all $e\in E_S$, there exists a $j\in \mc{J}$ such that, up to isomorphism, i.e.\ up to relabeling spins from $e$, it holds that $J_S(e)=j$.

Conversely, given a spin model $\mc{M}$ we can recover its set of local interactions, resulting in the \emph{local spin model} of $\mc{M}$.
\begin{definition}[local spin model]\label{def:local spin model}
    Let $\mc{M}$ be a spin model. The \emph{local spin model} of $\mc{M}$, denoted $J(\mc{M})$, is defined by 
    \begin{equation}
        J(\mc{M}) = \{ S_{J_S(e)} \mid S \in \mc{M}, e \in E_S \} .
    \end{equation}
\end{definition}
This defines a spin model, as both closure under isomorphisms and constant spin type of $J(\mc{M})$ follow from the respective properties of $\mc{M}$.

Let us now see examples of spin models.
\begin{example}[flags]\label{ex:flag basis}
    Let $l \in \mathbb{N}$ and $\vec{x}\in [2]^{\{1, \ldots, l \}}$. We define 
    \begin{equation}
        \begin{split}
               f_{\vec{x}}& \colon [2]^{\{1, \ldots, l\}} \to \mathbb{R}\\
               f_{\vec{x}}(\vec{s}) &= \begin{cases}
               1 \ \text{if} \ \vec{s}=\vec{x}\\
               0 \ \text{else} . 
               \end{cases}
        \end{split}
    \end{equation}
    The space of functions of type $[2]^{\{1, \ldots, l\}} \to \mathbb{R}$ is a $2^l$ dimensional real vector space and $(f_{\vec{x}})_{\vec{x}\in [2]^{\{1, \ldots, l \}}}$ is its canonical basis.
    According to \cref{def:can sys}, $f_{\vec{x}}$ defines a spin system $S_{f_{\vec{x}}}$.
    We call $f_{\vec{x}}$ the \emph{flag function} on $\vec{x}$ and $S_{f_{\vec{x}}}$ the \emph{flag system} on $\vec{x}$.  
    
    The spin model containing all flag systems of order $l$ is given by  
    \begin{equation}
        \mc{B}_l \coloneqq \bigl [\{S_{f_{\vec{x}}} \mid \vec{x} \in [2]^{\{1, \ldots, l\}}\} \bigr ],
    \end{equation}
    and the spin model containing all flag systems by 
    \begin{equation}
        \mc{B} \coloneqq \bigcup_{l\in \mathbb{N}}\mc{B}_l.
    \end{equation}
\end{example}

    Flag systems and models will play a crucial role in the characterization of universality (\cref{thm:main}), as generic spin systems will be decomposed into non-negative linear combinations of flag systems. 

\begin{example}[generalized Ising model with fields]\label{ex:general Ising}
The Ising pair interactions from \cref{ex:Ising system} can be generalized to $j$-body interactions, for  arbitrary $j\in \mathbb{N}$, as follows: 
\begin{equation}
\begin{split}
\pi_j&\colon [2]^{\{1, \ldots, j\}} \to \mathbb{R}_{\geq 0}\\
\pi_{j}(\vec{s})&\coloneqq \begin{cases} 1 \ &\text{if } 
\vert \vec{s}^{-1}(\{1\}) \vert  \in 2\mathbb{N}\\ 
		0 \ &\text{else}\end{cases}
\end{split}
\end{equation}
and 
\begin{equation}
\begin{split}
\bar{\pi}_j&\colon [2]^{\{1, \ldots, j\}} \to \mathbb{R}_{\geq 0}\\
\bar{\pi}_{j}(\vec{s})&\coloneqq \begin{cases} 1 \ &\text{if } 
\vert \vec{s}^{-1}(\{1\}) \vert  \in 2\mathbb{N}+1\\
		0 \ &\text{else}\end{cases}.
\end{split}
\end{equation}
Note that these are parity interactions, as their energy only depends on the number of spins in state $1$. 

Finally, letting 
\begin{equation}
		\Pi_j \coloneqq \left\lbrace \lambda \cdot \pi_{j} \ |\ \lambda \in \mathbb{R}_{\geq 0} \right\rbrace \cup \left\lbrace \lambda \cdot \bar{\pi}_{j} \ |\ \lambda \in \mathbb{R}_{\geq 0} \right\rbrace 
\end{equation}
and leveraging \cref{def:spin model alt}, we define the \emph{j-body Ising model} as
\begin{equation}
    \mc{I}_j\coloneqq [\mc{G}_j, \Pi_ j ]
\end{equation}
where $\mc{G}_j$ is the set of $j$-uniform hypergraphs, that is, for each $G \in \mc{G}_j$, all of its hyperedges have cardianlity $j$. 
Similarly, the \emph{$j$-body Ising model with fields} is defined as 
\begin{equation}
    \mc{I}_{j,f}\coloneqq [\mc{G}_{j,f}, \Pi_ j \cup \Pi_1 ]
\end{equation}
where $\mc{G}_{j,f}$ is obtained by adding all single vertex hyperedges to all hypergraphs in $\mc{G}_j$.
\end{example}

\begin{example}[2d Ising model with fields]\label{ex:2d Ising}
To define the 2d Ising model with fields we restrict the interaction graphs of $\mc{I}_{2,f}$ to $2$ dimensional grid graphs, that is, graphs with vertices $[m]\times [n]$ and edges between those vertices that are separated by distance $1$, either in the vertical or horizontal direction. 
Let $m,n\in \mathbb{N}_{\geq 2}$,
\begin{equation}
    V_{m,n} \coloneqq [m] \times [n], 
\end{equation}
and
\begin{multline}
    E_{m,n} \coloneqq \{ \{v,w\}  \mid v,w \in V_{m,n}, \\
    w = v + (0,1) \text{ or } w = v + (1,0) \}.
\end{multline} 
We define the \emph{$m \times n$ grid graph} 
\begin{equation}\label{eq:grid graph}
G_{m,n} \coloneqq (V_{m,n},  E_{m,n}),
\end{equation}
the set of all such grid graphs 
\begin{equation}
    \mc{G}_{\mr{grid}} \coloneqq \{G_{m,n} \mid m,n \in \mathbb{N}{\geq 2} \},
\end{equation}
and, similar to \cref{ex:general Ising},
the set of all grid graphs that additionally contain all hyperedges of cardinality $1$, as $\mc{G}_{\mr{grid},f}$. 
We now define the \emph{2d Ising model} as 
\begin{equation}
    \mc{I}_{2\mr{d}} \coloneqq [\mc{G}_{\mr{grid}}, \Pi_{2}] ,
\end{equation}
and the \emph{2d Ising model with fields }
\begin{equation}
    \mc{I}_{2\mr{d}, f} \coloneqq [\mc{G}_{\mr{grid},f}, \Pi_{2}\cup \Pi_1].
\end{equation}
\end{example}

\subsection{Emulations}\label{ssec:spin model simulation}

While previously we were only interested in the existence of spin system simulations, often they do not only exist but can be efficiently computed.
The latter may be a crucial property, for example when the simulations ought to serve to construct reductions between computational problems over spin models, such as computing ground states. 
We capture the idea of efficiently computable spin system simulations with \emph{spin model emulations}.

We shall first define them (\cref{sssec:def emu}), illustrate how binary simulations can be promoted to emulations (\cref{sssec:bin emu}) and then turn to their transformations: after defining hulls of spin models  (\cref{sssec:comp}), we will show that emulations can be composed and lifted (\cref{sssec:comp lift emu}). 

\subsubsection{Definition} \label{sssec:def emu}

Emulations are efficient algorithms that compute spin system simulations. More precisely, a spin model emulation of type $\mc{S} \to \mc{T}$ is a polytime computable function which, for each spin system $T\in \mc{T}$ and non-negative real $\delta$ as input, computes a spin system $S \in \mc{S}$ together with a spin system simulation of type $S \to T$ with cut-off $\delta$.

\begin{definition}[emulation]\label{def:spin model sim}
    Let $\mc{S}$ and $\mc{T}$ be sets of spin systems. A \emph{emulation} of type $\mc{S} \to \mc{T}$, denoted $ \text{\sc Emu}\colon \mc{S} \to \mc{T}$, 
    is an algorithm that, on input $(T, \delta)$ with $T \in \mc{T}$ and $\delta \in \mathbb{R}_{\geq 0}$, 
    outputs a pair
    \begin{equation}
        \text{\sc Emu}(T, \delta) = (\text{\sc Emu}^{(1)}(T,\delta), \text{\sc Emu}^{(2)}(T,\delta)),
    \end{equation}
    where $\text{\sc Emu}^{(1)}(T,\delta)$ is a spin system from $\mc{S}$, and  $\text{\sc Emu}^{(2)}(T,\delta)$ is a spin system simulation of type $\text{\sc Emu}^{(1)}(T, \delta) \to T$, such that the following conditions are satisfied: 
    \begin{enumerate}
    \item\label{def:model sim right cutoff} 
    $\text{\sc Emu}^{(2)}(T, \delta)$ has cut-off $\delta$; 
        \item\label{def:efficient sim} $\text{\sc Emu}$ is efficient, i.e.\ on input $(T,\delta)$ $\text{\sc Emu}$ has runtime $\mr{poly}(\vert T \vert)$; 
        \item\label{def:cut-off indep}  $\text{\sc Emu}$  is uniform,  i.e.\ 
        for any $\delta_1,\delta_2 \in \mathbb{R}_{\geq 0}$, 
        the interaction graphs of $\text{\sc Emu}^{(1)}(T,\delta_1)$ and $\text{\sc Emu}^{(1)}(T,\delta_2)$ are equal and the simulations $\text{\sc Emu}^{(2)}(T,\delta_1)$ and $\text{\sc Emu}^{(2)}(T,\delta_2)$ can only differ in their energy shift; and 
        \item\label{def:model sim enc} the decoding and encoding of $\text{\sc Emu}^{(2)}(T,\delta)$ only depend on $q_T$.
    \end{enumerate}
\end{definition}

Condition \ref{def:cut-off indep} states that part of the output of $\text{\sc Emu}$ can be computed independently of the second input, $\delta$. 
More precisely, for each $T \in \mc{T}$, changing $\delta$ may only affect the local interactions of the output spin system and the energy shift of the output simulation.

Condition \ref{def:model sim enc} ensures that systems with the same spin type are simulated by systems with the same spin type. This condition is needed for emulations to be compatible with addition of spin systems (see \cref{thm:model sim sum}), similarly to simulations. 
If $\mc{T}$ is a spin model  condition \ref{def:model sim enc} means that both decoding and encoding are constant.

Finally, similarly to \cref{def:sys simulation}, we might write $\mc{S} \to \mc{T}$ as shorthand for ``there exists an emulation $\text{\sc Emu} \colon \mc{S} \to \mc{T}$'' and call $\mc{T}$ the \emph{target (model)} and $\mc{S}$ the \emph{source (model)} of the emulation.
The first input of an emulation is called the \emph{input spin system} and the second input, the \emph{input cut-off}. 
The first output is called the \emph{output spin system} and the second, the \emph{output simulation}.

While simulations capture the idea of approximately reproducing the properties of a single target system within a source system, emulations capture the idea of efficiently doing so for an entire range of target systems, namely those contained in the target model. However, there is another interesting aspect in which emulations differ from simulation, which is best illustrated by considering a target spin model generated by a single spin system $T$. Then, a simulation of $T$ has a single, fixed cut-off and thus approximates properties of $T$ with constant error (see for instance \cref{lem:sim spectrum}, \cref{lem:sim part fun sys}, etc).
In contrast, an emulation of $[T]$ gives rise to a \emph{family of simulations} 
\begin{equation}
\mc{f}^{\delta}\colon S^{\delta} \to T
\end{equation}
indexed by their cut-off $\delta$. By uniformity (condition \ref{def:cut-off indep}), only the local interactions of $S$ and the shift of $\mc{f}^{\delta}$ are affected by the cut-off. There is thus a sense in which $\mc{f}^{\delta}$ approximates $T$ with arbitrary precision. If we interpret a simulation of $T$ as capturing some aspects of the system, an emulation of $[T]$ captures essentially all aspects of $T$. This is explored in more detail in \cref{sec:consequences}. 

The source of an emulation, $\mc{S}$, can be interpreted as providing the resources for the construction of simulations of spin systems form the target $\mc{T}$.
It is straightforward to see that increasing the source or decreasing the target also yields an emulation. That is, if $\mc{S} \to \mc{T}$ and $\mc{T}' \subseteq \mc{T}$ and $\mc{S} \subseteq \mc{S}'$ then $\mc{S}' \to \mc{T}'$. 
Further note that by \cref{lem:iso sim},  for any set of spin systems $\mc{S}$ we obtain a trivial emulation $\mc{S} \to \mc{S}$ that on input $(T, \delta)$ simply outputs $T$ together with the identity simulation with cut-off $\delta$. Combined with the previous observation, we conclude that whenever $\mc{T} \subseteq \mc{S}$, there is a trivial emulation $\mc{S} \to \mc{T}$.

\begin{remark}[polytime computability]\label{rem:poly}
Polytime computability means polytime w.r.t.\ the number of real parameters that specify the input, assuming that basic, real arithmetic is of zero cost, similar to the real ram model \cite{Sh78}.
More precisely, in \cref{def:spin model sim}, we assume that spin systems in the input and output of $\text{\sc Emu}$ are presented such that their data can be directly read off. That is, the input $(T,\delta)$ of $\text{\sc Emu}$ consists of a list containing $V_T, E_T$ and, for each $e \in E_T$, the function graph of  $J_T(e)$, as well as the cut-off $\delta$, and similar for the output system. 
Furthermore, we assume that the output simulation is given by a list containing its entire data (according to \cref{def:sys simulation}), where again all functions are presented explicitly in terms of their function graphs.
Polytime computability is always understood w.r.t.\ this presentation of input and output. 
Recall from \eqref{eq:size of input S} that this is equivalent to requiring a runtime that is polynomial in the size of the input system, $\vert T \vert$.

Moreover, we assume that arithmetic of real numbers, addition and scaling of spin systems (see \cref{def:sum}, \cref{def:scaling}), elementary functions of real numbers including those in \cref{thm:model sim scale}, \cref{thm:model sim trans}, \cref{thm:model sim sum}, as well as exponentials and logarithms, are polytime computable. Most of these assumptions could be circumvented by working with rational numbers and/or approximations in the theorems of \cref{sec:consequences}.
\end{remark}

\subsubsection{Common Transformations as Emulations} \label{sssec:bin emu}

Many of the examples of spin system simulations of \cref{ssec:spin sys sim} can be promoted to spin model emulations, essentially because they are polytime computable. We illustrate this for spin system simulations that transform $q$-level to $2$-level spins, i.e.\ we extend \cref{lem:bin sim}.

\begin{proposition}[binary spin model emulation]\label{lem:bin sim model}
    Let $\mc{S}$ be any set of spin systems. Define
    \begin{equation}
        \mc{S}_{\mr{bin}} \coloneqq \{B(S,\delta) \mid S \in \mc{S}, \delta\geq 0 \},
    \end{equation}
    where $B(S,\Del)$ is defined in \cref{lem:bin sim}. Then 
    \begin{equation}
        \text{\sc Emu}_{\mr{bin}} \colon \mc{S}_{\mr{bin}} \to \mc{S},
    \end{equation}
    with 
    \begin{equation}
        \text{\sc Emu}_{\mr{bin}}(S,\delta) = (B(S,\delta), \mc{b}(S,\delta)),
    \end{equation}
   where $\mc{b}(S,\delta)$ is defined in \cref{lem:bin sim}.
\end{proposition}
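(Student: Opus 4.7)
The plan is to verify the four conditions of \cref{def:spin model sim} for $\text{\sc Emu}_{\mr{bin}}$. Since \cref{lem:bin sim} already establishes that $\mc{b}(S,\delta)$ is a valid simulation of type $B(S,\delta) \to S$ with cut-off $\delta$ (and that $B(S,\delta)\in\mc{S}_{\mr{bin}}$ by definition), what remains is to check that the construction is efficient and uniform, and that the encoding and decoding depend only on the spin type.

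First, I would fix once and for all a canonical family of injections $\enc_q \colon [q] \to [2]^{\lceil \mr{log}_2 q \rceil}$ together with left inverses $\dec_q$, one for each spin type $q \geq 2$, and use these to instantiate the free choice in \cref{lem:bin sim}. This immediately gives condition \ref{def:model sim enc}: the encoding and decoding of $\mc{b}(S,\delta)$ are $\enc_{q_S}$ and $\dec_{q_S}$, which depend only on $q_S$. Condition \ref{def:model sim right cutoff} is exactly the cut-off statement of \cref{lem:bin sim}. For condition \ref{def:cut-off indep} (uniformity), an inspection of the construction shows that $V_{B(S,\delta)} = V_S \times [k]$ and $E_{B(S,\delta)} = \{e \times [k] \mid e \in E_S\}$ with $k = \lceil \mr{log}_2 q_S \rceil$ are independent of $\delta$, and the simulation data $\Shift = 0$, $\degeneracy = 1$, $\phys$, $\enc_{q_S}$, $\dec_{q_S}$ do not depend on $\delta$ either. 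Only the cut-off $\Del = \delta$ and the penalty value $\Del$ appearing in $J_{B(S,\delta)}$ change with $\delta$, which is permitted.

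The main obstacle is condition \ref{def:efficient sim}: showing that $\text{\sc Emu}_{\mr{bin}}$ runs in time polynomial in $\vert S \vert$. The potential worry is that the size of the output spin system, $\vert B(S,\delta) \vert = \sum_{e \in E_S} 2^{k \vert e \vert}$, could a priori blow up. I would resolve this by using that $\vert S \vert \geq q_S^{\vert e \vert}$ for every $e \in E_S$, so $\vert e \vert \leq \mr{log}_{q_S} \vert S \vert$, and combined with $k \leq \mr{log}_2 q_S + 1$ this yields
\begin{equation}
2^{k \vert e \vert} \leq 2^{\vert e \vert} \cdot q_S^{\vert e \vert} \leq \vert S \vert^{\mr{log}_{q_S} 2} \cdot \vert S \vert \leq \vert S \vert^2.
\end{equation}
Summing over $E_S$ gives $\vert B(S,\delta) \vert = \mr{poly}(\vert S \vert)$. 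Each value $J_{B(S,\delta)}(e \times [k])(\vec{s})$ is obtained by a single query to $J_S$ together with the case distinction of \cref{lem:bin sim}, hence in constant arithmetic time under the real-ram assumptions of \cref{rem:poly}. Constructing $\phys$, $\enc_{q_S}$, and $\dec_{q_S}$ is similarly polynomial in $\vert S \vert$, so the whole algorithm runs in $\mr{poly}(\vert S \vert)$ time, which completes the verification.
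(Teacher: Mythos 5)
Your proof is correct and takes essentially the same route as the paper's: verify conditions \ref{def:model sim right cutoff}, \ref{def:cut-off indep}, \ref{def:model sim enc} directly from \cref{lem:bin sim} and concentrate the work on polytime computability (condition \ref{def:efficient sim}). You make two things the paper leaves implicit explicit—fixing a canonical family $(\enc_q,\dec_q)$ indexed by spin type so that condition \ref{def:model sim enc} genuinely holds across all $S$ of the same $q_S$, and the bound $2^{k\vert e\vert}\leq\vert S\vert^2$ that justifies why iterating over all binary configurations of $e\times[k]$ (as the paper's algorithm does) is polytime—but these are refinements of the same argument rather than a different one.
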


\begin{proof}
We have to prove that $\text{\sc Emu}_{\mr{bin}}$ satisfies the conditions of \cref{def:spin model sim}. 
Conditions \ref{def:model sim right cutoff},  \ref{def:cut-off indep} and  \ref{def:model sim enc} hold by construction (cf.\ \cref{lem:bin sim}). 
We thus only have to prove condition \ref{def:efficient sim}, namely polytime computability of $ \text{\sc Emu}_{\mr{bin}}$.

By \cref{lem:bin sim}, we have that $V_{B(S, \delta)}$, $E_{B(S, \delta)}$ as well as $\enc$, $\dec$, $\Shift$ and $\phys$ of $\mc{b}(S, \delta)$ can be constructed in polytime. 
We still need to see that $J_{B(S,\delta)}$ can be constructed in polytime.
First, recall that for any $e\in E_S$,  $J_{B(S, \delta)}(e\times [k])$ is defined in terms of the following case distinction:
\begin{equation}
    J_{B(S,\delta)}(e\times [k])(\vec{s}) = J_S(e)(\dec \circ \vec{s} \circ \phys ) 
\end{equation}
if for all $v\in e$, $\vec{s} \circ \phys(v)$ is contained in the image of $\enc$,  and 
\begin{equation}
     J_{B(S,\delta)}(e\times [k])(\vec{s})= \delta 
\end{equation}
otherwise. 

The first case consists of those configurations $\vec{s}$ that satisfy
$\vec{s} = \vec{s}_{\vec{t}}$ for some $\vec{t}\in [q]^e$, where  $\vec{s}_{\vec{t}}$ is defined by $\vec{s}_{\vec{t}}(v,i) = \enc^{(i)}(\vec{t}(v))$. 
In particular, given $\vec{t}$, we have that $\vec{s}_{\vec{t}}$ can be constructed in polytime.
We iterate over all $\vec{t}\in [q]^e$:  
for each we compute $\vec{s}_{\vec{t}}$ and $J_S(e)(\vec{t})$ and append the resulting configuration-energy pair, $(\vec{s}_{\vec{t}},J_S(e)(\vec{t}))$, to a list. 
For each $e\in E_T$ this iteration consists of $q^{\vert e \vert}$, i.e.\ polynomially many steps, each of which is polytime computable, and thus it is itself polytime computable. 

For the second case, we iterate over the remaining configurations $\vec{s}$, i.e.\ those not 
contained in $C_1\coloneqq \{\vec{s}_{\vec{t}} \mid \vec{t}\in [q]^e\}$. 
For each such, we add $(\vec{s}, \delta)$ to the previously constructed list. 
The set of configurations not contained in $C_1$ is denoted $C_2$. 
The complexity of constructing the second part of $J_S(e\times [k])$ thus amounts to the complexity of
constructing $C_2$. 
Since for the first case we have constructed $C_1$, for $C_2$ we construct all $2^{\vert e \vert \cdot k}$ configurations and test membership in $C_1$ for each of them.
This amounts to $2^{\vert e \vert \cdot k}$ membership tests in a set of $2^{\vert e \vert}$ elements, which is polytime computable. Hence, the entire construction of $J_S(e\times [k])$ is polytime computable.

Finally, since there are polynomially many edges $e\in E_T$, the construction of $J_B$ and hence that of $\mc{b}(S,\delta)$ is polytime computable.\end{proof}

\subsubsection{Hulls of Spin Models} \label{sssec:comp}

In \cref{ssec:spin sys sim prop} we proved that simulations are compatible with addition and scaling of spin systems. 
We will now prove analogous results for emulations. 
To this end, we first need to extend addition and scaling to spin models.

For any sets $A,B$, define the following operations:
\begin{enumerate}
        \item $A+B \coloneqq \{ (a,b) \mid a \in A, b \in B\}$,
        \item $\Sigma(A)\coloneqq  \{(a_1, \ldots, a_k) \mid a_i \in A, k < \infty \}$,
        \item $\Scale( A) \coloneqq \{ (\lambda,  a) \mid \lambda \in \mathbb{R}_{\geq 0}, a \in A \}$,
        \item $\Cone(A)\coloneqq \Sigma\circ \Scale(A) $.
\end{enumerate}
We are interested in the case where $A,B$ are sets of spin systems $\mc{S}, \mc{T}$, so that we can use these operations to lift addition and scaling to spin models and emulations. 
We interpret $\mc{S}+\mc{T}$ as containing formal, i.e.\ non-evaluated sums of spin systems $S$ and $T$ from $\mc{S}$ and $\mc{T}$. Similarly, $\Sigma(\mc{S})$ is interpreted as containing all finite, formal sums of spin systems  $(S_1, \ldots, S_k)$ from $\mc{S}$; $\Scale(\mc{S})$ as containing all finite, formal, non-negative scalings of spin systems $S$ from $\mc{S}$; and $\Cone(\mc{S})$ as containing all finite, formal, non-negative linear combinations of spin systems $((\lambda_1,S_1), \ldots, (\lambda_k,S_k))$ from $\mc{S}$.

We term $\mr{id}, \Sigma, \Scale, \Cone$ and arbitrary compositions of these \emph{hull operators}, and write $\mr{hull}$ as shorthand for any such composition. 
This terminology is motivated by the fact that, up to emulations,  $\mr{id}, \Sigma, \Scale$ and $\Cone$ operators are idempotent, i.e.\ define hull operators in the mathematical sense. 
Further, we term the elements of $\mr{hull}(\mc{S})$ \emph{formal expressions}, and write $\mr{expr}$ as shorthand for such. 
Formal expressions are formal, non-negative linear combinations of spin systems, possibly with all coefficients equal to one (in the case of $\Sigma$), or only one summand (in the case of $\Scale$).
We define an \emph{evaluation} of such expressions, $\mr{eval}(\mr{expr})$, to be the result of computing the linear combination defined by $\mr{expr}$, i.e.\ the actual spin system obtained by adding and scaling spin systems according to $\mr{expr}$. 

We now extend the definition of  emulation, \cref{def:spin model sim}, to the case $\mr{hull}_1(\mc{S}) \to \mr{hull}_2(\mc{T})$. 
\begin{definition}[emulation and hull operations]\label{def:emulation mod}
    Let $S,T$ be sets of spin systems and $\mr{hull}_1, \mr{hull}_2$ be hull operators. We extend the definition of spin model emulation to 
    \begin{equation}
        \text{\sc Emu} \colon \mr{hull}_1(\mc{S}) \to \mr{hull}_2(\mc{T}),
    \end{equation}
    by modifying \cref{def:spin model sim} as follows: 
    \begin{enumerate}
        \item inputs of $\text{\sc Emu}$ are pairs containing a formal expressions $\mr{expr}_2 \in \mr{hull}_2(\mc{T})$ and a cut-off $\delta$; 
        \item\label{it:emu mod 2} outputs of  $\text{\sc Emu}$ are pairs containing a formal expressions $\mr{expr}_1 \in \mr{hull}_1(\mc{S})$ and a simulation of type $\mr{eval}(\mr{expr}_1) \to \mr{eval}(\mr{expr}_2)$; and 
        \item polytime is understood w.r.t.\ the total size of the input expression. 
    \end{enumerate}
\end{definition}
Similarly, we modify emulation to cover the cases where source and/or target are of the form $\mc{M}+\mc{N}$.

Consider for example the case $\mr{hull}_1 = \mr{hull}_2 = \Sigma$. An emulation with target $\Sigma(\mc{T})$ is a polytime algorithm that, given finitely many spin systems $(T_1, \ldots, T_k)$ from $\mc{T}$, computes a simulation with target $\sum_{i=1}^kT_i$. While the input to such an emulation is a formal expression, e.g.\ $(T_1, \ldots, T_k)$), its output simulation is a simulation whose target is the evaluation of this expression, i.e.\ the spin system $\sum_{i=1}^k T_i$. 
This distinction is crucial for the compatibility of emulation and hull operator (cf.\ \cref{thm:model sim scale}, \cref{thm:model sim sum}, \cref{lem:sim model cone}).
Similarly, an emulation with source $\Sigma(\mc{S})$ is a polytime algorithm that as first output returns a formal expressions $(S_1, \ldots, S_k)$ from  $\Sigma(\mc{S})$ while its second output is a simulation with source being the evaluation of this formal expression, i.e.\ the spin system $\sum_{i=1}^k S_i$.

It is straightforward to see that for any $\mc{S}$,
\begin{equation}\label{eq:hull identities}
    \begin{split}
        \Scale(\mc{S}) &\to \Scale^2(\mc{S}) \\
        \Sigma(\mc{S}) &\to \Sigma^2(\mc{S}) \\
        \Cone(\mc{S}) &\to \mr{hull}(\mc{S}),
    \end{split}
\end{equation}
where the last emulation exists for arbitrary hull operators on the right hand side. 

These emulations can be constructed by bringing the input expression to normal form,
i.e.\ simplifying it with distributivity and associativity until it is a formal sum of formal scalings of spin systems.
Normal forms can be computed in polytime. Given any expression from any $\mr{hull}(\mc{S})$, its normal form is from $\Cone(\mc{S})$.
If the expression is from $\Scale^2(\mc{S})$ or $\Sigma^2(\mc{S})$, its normal form is from $\Scale(\mc{S})$ or $\Sigma(\mc{S})$, respectively. 
By construction, any expression evaluates to the same spin system as its normal form, as scaling and addition of spin systems satisfy the usual distributivity and associativity laws. 
In total, the above emulations can simply be defined by returning the normal form of their input expression together with the identity simulation.

\subsubsection{Emulations Can Be Composed and Lifted to Hulls} \label{sssec:comp lift emu}

We shall now see that emulations can be composed (\cref{thm:model sim trans}) and are compatible with hull operators, 
in the sense that we can \emph{lift} any emulation $\mc{M} \to \mc{N}$ to an emulation $\mr{hull}(\mc{M}) \to \mr{hull}(\mc{N})$ via
applying $\Scale$ (\cref{thm:model sim scale}),
$\Sigma$ (\cref{thm:model sim sum}), 
and $\Cone$ (\cref{lem:sim model cone}). 

These results are the spin model counterparts of the \emph{modularity} of simulations, \cref{thm:sim trans}, \cref{thm:sim scale} and \cref{thm:sim sum}, and heavily rely on their spin system cousins. 
They imply that complicated spin model emulations can be constructed by composing and lifting simpler ones, a fact that will be leveraged to prove \cref{thm:main}.

\begin{theorem}[composition of emulations]\label{thm:model sim trans}
Let $\mc{S}, \mc{R}, \mc{T}$ be sets of spin systems, $\mr{hull}_1, \mr{hull}_2, \mr{hull}_3$ be hull operators, and $\text{\sc Emu}_{\mc{f}} \colon \mr{hull}_2(\mc{R}) \to \mr{hull}_3(\mc{T})$ and $\text{\sc Emu}_{\mc{g}} \colon \mr{hull}_1(\mc{S}) \to \mr{hull}_2(\mc{R})$ be emulations. Then we can construct the composite emulation 
\begin{equation}
\text{\sc Emu}_{\mc{g}\circ \mc{f}} \colon \mr{hull}_1(\mc{S})  \to \mr{hull}_3(\mc{T}).
\end{equation}
\end{theorem}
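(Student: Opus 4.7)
The plan is to chain the two algorithms sequentially, passing the intermediate spin system (and a carefully chosen cut-off) from $\text{\sc Emu}_{\mc{f}}$ into $\text{\sc Emu}_{\mc{g}}$, and then invoke \cref{thm:sim trans} on the resulting pair of simulations. Concretely, on input $(\mr{expr}_3, \delta)$ with $\mr{expr}_3 \in \mr{hull}_3(\mc{T})$, I would first run $\text{\sc Emu}_{\mc{f}}(\mr{expr}_3, \delta)$, obtaining a formal expression $\mr{expr}_2 \in \mr{hull}_2(\mc{R})$ together with a simulation $\mc{f}\colon \mr{eval}(\mr{expr}_2) \to \mr{eval}(\mr{expr}_3)$ with $\Del_{\mc{f}} = \delta$ and some shift $\Shift_{\mc{f}}$. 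Reading off $\Shift_{\mc{f}}$ from this output, I would then invoke $\text{\sc Emu}_{\mc{g}}(\mr{expr}_2, \delta + \Shift_{\mc{f}})$ to obtain $\mr{expr}_1 \in \mr{hull}_1(\mc{S})$ and a simulation $\mc{g}\colon \mr{eval}(\mr{expr}_1) \to \mr{eval}(\mr{expr}_2)$ with $\Del_{\mc{g}} = \delta + \Shift_{\mc{f}}$. Finally, \cref{thm:sim trans} produces $\mc{g}\circ \mc{f}\colon \mr{eval}(\mr{expr}_1) \to \mr{eval}(\mr{expr}_3)$, which by the displayed formula has cut-off $\mr{min}(\Del_{\mc{f}}, \Del_{\mc{g}} - \Shift_{\mc{f}}) = \mr{min}(\delta, \delta) = \delta$. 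I then define $\text{\sc Emu}_{\mc{g}\circ \mc{f}}(\mr{expr}_3, \delta) \coloneqq (\mr{expr}_1, \mc{g}\circ \mc{f})$.

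It remains to verify the four conditions of \cref{def:spin model sim} (extended by \cref{def:emulation mod}). Condition~\ref{def:model sim right cutoff} is the cut-off computation just performed. Condition~\ref{def:efficient sim} follows because both $\text{\sc Emu}_{\mc{f}}$ and $\text{\sc Emu}_{\mc{g}}$ run in polytime in the size of their inputs, the size of the intermediate $\mr{expr}_2$ is polynomially bounded in $\vert \mr{expr}_3 \vert$, and the composition recipe of \cref{thm:sim trans} assembles the data of $\mc{g}\circ \mc{f}$ (shift, degeneracy, physical spin assignment, decoding, encoding) by elementary operations on the data of $\mc{f}$ and $\mc{g}$; read-off of $\Shift_{\mc{f}}$ and arithmetic on cut-offs are constant-cost by \cref{rem:poly}. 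Condition~\ref{def:model sim enc} follows directly from the composition formulas for $\dec_{\mc{g}\circ \mc{f}}$ and $\enc_{\mc{g}\circ \mc{f}}$ in \cref{thm:sim trans}: both are built from $\dec_{\mc{f}}, \enc_{\mc{f}}, \dec_{\mc{g}}, \enc_{\mc{g}}$, each of which depends only on the spin type of its respective target by assumption, and the spin type of $\mr{eval}(\mr{expr}_2)$ is itself determined by the spin type of $\mr{eval}(\mr{expr}_3)$ for the same reason.

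The step that requires the most care is uniformity, condition~\ref{def:cut-off indep}. Fix $\delta_1, \delta_2 \geq 0$. By uniformity of $\text{\sc Emu}_{\mc{f}}$, the evaluations $\mr{eval}(\mr{expr}_2^{\delta_1})$ and $\mr{eval}(\mr{expr}_2^{\delta_2})$ share the same interaction hypergraph, and $\mc{f}^{\delta_1}, \mc{f}^{\delta_2}$ agree on everything except their energy shift. Feeding the resulting (structurally identical) expressions into $\text{\sc Emu}_{\mc{g}}$, but with possibly different cut-offs $\delta_i + \Shift_{\mc{f}^{\delta_i}}$, uniformity of $\text{\sc Emu}_{\mc{g}}$ guarantees that $\mr{eval}(\mr{expr}_1^{\delta_1})$ and $\mr{eval}(\mr{expr}_1^{\delta_2})$ again share the same interaction hypergraph and that $\mc{g}^{\delta_1}, \mc{g}^{\delta_2}$ differ only in energy shift. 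Because the composition formulas in \cref{thm:sim trans} leave $\phys, \dec, \enc, \degeneracy$ of $\mc{g}\circ \mc{f}$ entirely determined by data that is identical for $\delta_1$ and $\delta_2$, the only remaining varying piece is $\Shift_{\mc{g}\circ \mc{f}} = \Shift_{\mc{f}} + \Shift_{\mc{g}}$, which is precisely an energy shift. This is the subtle point, and it is exactly where the uniformity clause of \cref{def:spin model sim} was designed to be preserved under composition.
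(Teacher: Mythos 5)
Your proposal follows the same construction as the paper: run $\text{\sc Emu}_{\mc{f}}$, read off the shift, re-run $\text{\sc Emu}_{\mc{g}}$ at the adjusted cut-off $\delta + \Shift_{\mc{f}}$, and compose the two output simulations via \cref{thm:sim trans}, so the core idea matches exactly. Where the paper dispatches conditions~\ref{def:cut-off indep} and~\ref{def:model sim enc} with a one-line assertion, you elaborate, and your elaboration of uniformity is worth a second look: uniformity of $\text{\sc Emu}_{\mc{g}}$ only speaks to varying the cut-off \emph{for a fixed input system}, whereas after varying $\delta$ the inputs $\mr{expr}_2^{\delta_1}$ and $\mr{expr}_2^{\delta_2}$ are guaranteed to share only their interaction hypergraph (the local interactions may differ), which is not the same input. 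Your parenthetical ``(structurally identical)'' acknowledges this but then proceeds as if uniformity of $\text{\sc Emu}_{\mc{g}}$ applies anyway; strictly speaking this requires the additional (implicit) fact that the output hypergraph of an emulation depends only on the input hypergraph, a fact the paper's own terse treatment also leans on without stating. So your argument is no less rigorous than the paper's and follows the same route, but if you want it airtight you should flag that extra dependence rather than attribute the conclusion to uniformity of $\text{\sc Emu}_{\mc{g}}$ alone.
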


\begin{proof}
We only prove the case where all hull operators are the identity, i.e.\ we have emulations $\text{\sc Emu}_{\mc{f}} \colon \mc{R} \to \mc{T}$ and $\text{\sc Emu}_{\mc{g}} \colon \mc{S} \to \mc{R}$.
The case where $\mr{hull}_1, \ldots, \mr{hull}_3$ are generic hull operators works similarly. 

We construct the composite emulation $\text{\sc Emu}_{\mc{g} \circ \mc{f}}$ from 
$\text{\sc Emu}_{\mc{f}}$ and $\text{\sc Emu}_{\mc{g}}$, together with \cref{thm:sim trans}. 
Say 
\begin{equation}\label{eq:notation f}
    \text{\sc Emu}_{\mc{f}}(T,\delta) = (R(T,\delta), \mc{f}(T,\delta))
\end{equation}
and 
\begin{equation}\label{eq:notation g}
    \text{\sc Emu}_{\mc{g}}(R, \delta) = (S(R,\delta),\mc{g}(R,\delta)) . 
\end{equation}
Then we define $\text{\sc Emu}_{\mc{g} \circ \mc{f}}$ as follows:
\begin{equation}\label{eq:def sim 3}
\begin{split}
\text{\sc Emu}_{\mc{g} \circ \mc{f}}^{(1)}(R, \delta) &= S \bigl( R(T, \delta), \delta + \Shift_{\mc{f}(T,\delta)} \bigr),\\
\text{\sc Emu}_{\mc{g} \circ \mc{f}}^{(2)}(R, \delta) &= \mc{g}\bigl (R(T, \delta), \delta + \Shift_{\mc{f}(T,\delta)} \bigr)  \circ  \mc{f}(T, \delta) .
\end{split}
\end{equation} 
In words and omitting all arguments of $S,T,\mc{f},$, given $T\in \mc{T}$ and $\delta \geq 0$ we first run $\text{\sc Emu}_{\mc{f}}$ on input $(T,\delta)$.
This yields a spin system $R$ and a simulation $\mc{f}\colon R \to T$.
We then run $\text{\sc Emu}_{\mc{g}}$ on input $(R, \delta + \Shift_{\mc{f}})$.
This yields a spin system $S$ and a simulation $\mc{g}\colon S \to R$. 
Finally, 
$\text{\sc Emu}_{\mc{g} \circ \mc{f}}$  returns $S$ and the composite simulation $\mc{g} \circ \mc{f} \colon S \to T$. 
Since $\mc{g}$ has cut-off $\delta + \Shift_{\mc{f}}$, by \cref{thm:sim trans}, $\mc{g} \circ \mc{f}$ has cut-off $\delta$.

Finally, we need to see that $\text{\sc Emu}_{\mc{g}\circ \mc{f}}$ satisfies the conditions of \cref{def:spin model sim}.
We have already argued that condition 
\ref{def:model sim right cutoff} holds by \cref{thm:sim trans}.
Conditions \ref{def:cut-off indep} and \ref{def:model sim enc} hold since they are satisfied by both $\text{\sc Emu}_{\mc{f}}$ and $\text{\sc Emu}_{\mc{g}}$.
Thus, we only have to show that $\text{\sc Emu}_{\mc{g} \circ \mc{f}}$ is polytime computable, i.e.\ satisfies condition \ref{def:efficient sim}.

Computing $\text{\sc Emu}_{\mc{g} \circ \mc{f}}(R, \delta)$ requires running both $\text{\sc Emu}_{\mc{f}}$ and $\text{\sc Emu}_{\mc{g}}$ once. Since both are polytime computable by assumption, this is polytime computable. 
Therefore, to see that $\text{\sc Emu}_{\mc{g} \circ \mc{f}}$ is polytime computable, we have to argue that given $\mc{g}$ and $\mc{f}$, their composite $\mc{g}\circ \mc{f}$ is polytime computable. By \cref{thm:sim trans} computing the data of $\mc{g}\circ \mc{f}$ amounts to applying basic manipulations, such as composition of functions or arithmetic of real numbers, to the data of $\mc{g}$ and $\mc{f}$.

Finally, in the generic case, $\text{\sc Emu}_{\mc{g}\circ \mc{f}}$ is defined in the same way, i.e.\ by \eqref{eq:def sim 3}. By the same reasoning, this defines an emulation, i.e.\ satisfies  \cref{def:emulation mod}. \end{proof}

Next, we prove that emulations can be lifted to hull operators. 
We only prove the case where the original emulation is of the form $\mc{M} \to \mc{N}$, as the results easily extend to the generic case $\mr{hull}_1(\mc{M}) \to \mr{hull}_2(\mc{N})$ (see \cref{rem:mod  lift}).

\begin{theorem}[lifting emulations to $\Scale$]\label{thm:model sim scale}
    Let $\mc{S}, \mc{T}$ be sets of spin systems and $\text{\sc Emu} \colon \mc{S} \to \mc{T}$ be an emulation. 
    Then we can construct
    \begin{equation}
        \text{\sc Emu}_{\Scale} \colon \Scale(\mc{S}) \to \Scale( \mc{T}). 
    \end{equation}
\end{theorem}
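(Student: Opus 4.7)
The plan is to lift the scaling of simulations (\cref{thm:sim scale}) to the emulation level: run $\text{\sc Emu}$ on the target with a rescaled cut-off, then scale the resulting simulation by $\lambda$. The key observation is that the cut-off scales linearly under \cref{thm:sim scale}, so dividing the cut-off passed to $\text{\sc Emu}$ by $\lambda$ exactly cancels the subsequent scaling and yields the requested output cut-off $\delta$.

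Concretely, given input $((\lambda, T), \delta) \in \Scale(\mc{T}) \times \mathbb{R}_{\geq 0}$ with $\lambda > 0$, I would set $(S, \mc{f}) \coloneqq \text{\sc Emu}(T, \delta/\lambda)$, so that by \cref{def:spin model sim} the simulation $\mc{f} \colon S \to T$ has cut-off $\delta/\lambda$. Applying \cref{thm:sim scale} then produces a simulation $\lambda \cdot \mc{f} \colon \lambda \cdot S \to \lambda \cdot T$ with cut-off $\lambda \cdot (\delta/\lambda) = \delta$ and shift $\lambda \cdot \Shift_\mc{f}$. I would return the formal expression $(\lambda, S) \in \Scale(\mc{S})$ together with the simulation $\lambda \cdot \mc{f}$, noting that by definition of $\mr{eval}$ the source and target types $\mr{eval}(\lambda, S) = \lambda \cdot S$ and $\mr{eval}(\lambda, T) = \lambda \cdot T$ match what is required by \cref{def:emulation mod}. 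The corner case $\lambda = 0$ must be treated separately, since $\delta/\lambda$ is undefined and both sides collapse to the zero spin system; in that case I would output $((0, S_0), \mc{g})$, where $S_0 \coloneqq \text{\sc Emu}^{(1)}(T, 1)$ and $\mc{g}$ is the trivial simulation between the zero spin systems $0 \cdot S_0$ and $0 \cdot T$ at cut-off $\delta$, constructed directly by taking the encoding, decoding and physical spin assignment inherited from $\text{\sc Emu}^{(2)}(T,1)$.

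To finish, I would verify the four conditions of \cref{def:spin model sim} for $\text{\sc Emu}_{\Scale}$. The correct output cut-off $\delta$ is immediate from \cref{thm:sim scale}. Polynomial runtime follows because $\text{\sc Emu}$ is polytime by assumption and the subsequent pointwise rescaling of every local interaction of $S$ and of $\Shift_\mc{f}$ by $\lambda$ costs time linear in $\vert S \vert$. Uniformity holds because $\text{\sc Emu}$ is uniform, so the interaction graph of $S$ does not depend on $\delta$ and hence neither does that of $\lambda \cdot S$; moreover, by \cref{thm:sim scale}, $\lambda \cdot \mc{f}$ agrees with $\mc{f}$ on all data other than cut-off and shift, so its $\delta$-dependence is confined to the shift $\lambda \cdot \Shift_\mc{f}$. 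Finally, the encoding and decoding of $\lambda \cdot \mc{f}$ coincide with those of $\mc{f}$, which by assumption depend only on $q_T = q_{\lambda \cdot T}$.

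There is no substantial obstacle in this argument; the main thing to watch is the book-keeping between cut-offs on either side of the rescaling, so that the output simulation has precisely cut-off $\delta$. The only mild subtlety is the degeneracy $\lambda = 0$, which has to be handled by a direct construction rather than via \cref{thm:sim scale}, since that theorem yields cut-off $0$ in this limit.
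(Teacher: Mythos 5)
Your proposal is correct and follows essentially the same route as the paper: run $\text{\sc Emu}$ with rescaled cut-off $\delta/\lambda$ and then apply \cref{thm:sim scale} so that the final cut-off is $\delta$. The paper glosses over the $\lambda=0$ degeneracy (it writes $\delta/\lambda$ for $\lambda\geq 0$), so your explicit handling of that corner case is a small but valid improvement rather than a different argument.
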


\begin{proof}
We prove the claim by leveraging \cref{thm:sim scale}.
Given $T \in \mc{T}$, $\lambda, \delta \geq 0$  
we first compute $\text{\sc Emu}(T, \delta/\lambda)$ to obtain $S(T,\delta/\lambda) \in \mc{S}$ and 
\begin{equation}
    \mc{f}(T, \delta/\lambda) \colon S(T, \delta/\lambda) \to T.
\end{equation}
By construction this simulation has cut-off $\delta/\lambda$.
Then, we set 
\begin{equation}
    \text{\sc Emu}_{\Scale} (\lambda,T,\delta) = ( \lambda, S(T, \delta/\lambda), \lambda \cdot \mc{f}(T, \delta/ \lambda)).
\end{equation}
By \cref{thm:sim scale}, $\lambda \cdot \mc{f}$ defines a simulation of type $\lambda \cdot S \to \lambda \cdot T$ with cut-off $\delta$. 
Hence, $\text{\sc Emu}_{\Scale}$ satisfies \cref{def:emulation mod}.
Finally, since $\text{\sc Emu}$ defines an emulation, i.e.\ satisfies the conditions of \cref{def:spin model sim}, so does $\text{\sc Emu}_{\Scale}$.
In particular, polytime computability of $\text{\sc Emu}_{\Scale}$ follows from poyltime computability of $\text{\sc Emu}$ and the fact that $\lambda \cdot \mc{f}$ can be efficiently computed from $\lambda$ and  $\mc{f}$ by \cref{thm:sim scale}. \end{proof}

\begin{theorem}[lifting emulations to $\Sigma$]\label{thm:model sim sum}
    Let $\mc{M},\mc{N}$ be spin models and let
    $\text{\sc Emu} \colon \mc{M} \to \mc{N}$ be an emulation.   
    Then we can construct an emulation 
    \begin{equation}
        \text{\sc Emu}_{\Sigma} \colon \Sigma(\mc{M}) \to \Sigma(\mc{N}).
    \end{equation}
\end{theorem}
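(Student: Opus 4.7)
The plan is to process each summand of the input formal sum individually with $\text{\sc Emu}$, systematically relabel the resulting source spin systems so that physical spins depend only on the underlying target spin (and auxiliary spins are disjoint across summands), and then combine the resulting simulations via \cref{cor:sim sum finite}.

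\textbf{Construction.} On input $((T_1,\ldots,T_n),\delta)$ with $T_i\in\mc{N}$, invoke $\text{\sc Emu}(T_i,\delta)$ for each $i$ to obtain a spin system $S_i\in\mc{M}$ and a simulation $\mc{f}_i\colon S_i\to T_i$ of cut-off $\delta$; let $k$ denote the order of its encoding. Because $\mc{N}$ has constant spin type $q_{\mc{N}}$, condition \ref{def:model sim enc} of \cref{def:spin model sim} guarantees that $\dec_{\mc{f}_i}$ and $\enc_{\mc{f}_i}$ are the same across all $i$. Next, for each $i$ define a bijection $\phi_i\colon V_{S_i}\to V_{S_i}'$ by sending $\phys_{\mc{f}_i}^{(j)}(v)\mapsto (v,j)$ for every $v\in V_{T_i}$ and $j\in\{1,\ldots,k\}$, and $a\mapsto(\mr{aux},i,a)$ for every auxiliary spin $a$ of $S_i$. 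Set $S_i'\coloneqq S_i\langle\phi_i\rangle$, which lies in $\mc{M}$ by closure under isomorphisms (condition \ref{cond: spin model closed under iso} of \cref{def:spin model}). Composing the isomorphism-induced simulation $S_i'\to S_i$ from \cref{lem:iso sim} with $\mc{f}_i$ via \cref{thm:sim trans} yields $\mc{f}_i'\colon S_i'\to T_i$ with $\phys_{\mc{f}_i'}(v)=((v,1),\ldots,(v,k))$, cut-off $\delta$, and all remaining data inherited from $\mc{f}_i$.

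\textbf{Combining.} Whenever $v\in V_{T_i}\cap V_{T_j}$ we have $\phys_{\mc{f}_i'}(v)=\phys_{\mc{f}_j'}(v)$, and $V_{S_i'}\cap V_{S_j'}$ consists exactly of the labels $(v,j')$ with $v\in V_{T_i}\cap V_{T_j}$ and $j'\in\{1,\ldots,k\}$, since auxiliary spins carry disjoint tags $(\mr{aux},i,a)$. Hence conditions \ref{eq:sum phys spins} and \ref{eq:sum phys image} of \cref{thm:sim sum} hold pairwise. Conditions \ref{eq: sim sum dec}, \ref{eq:sum common enc} and \ref{eq:sum disjoint enc} follow from $\dec_{\mc{f}_i'}$ and $\enc_{\mc{f}_i'}$ coinciding across all $i$, combined with condition \ref{def:sim disjoint enc} of \cref{def:sys simulation} applied within a single $\mc{f}_i'$. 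Applying \cref{cor:sim sum finite} produces a simulation $\mc{f}'\coloneqq\sum_{i=1}^{n}\mc{f}_i'\colon \sum_{i=1}^{n}S_i'\to\sum_{i=1}^{n}T_i$ of cut-off $\delta$ (as each summand has cut-off $\delta$). We set $\text{\sc Emu}_{\Sigma}((T_1,\ldots,T_n),\delta)\coloneqq((S_1',\ldots,S_n'),\mc{f}')$.

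\textbf{Verification and main obstacle.} The conditions of \cref{def:emulation mod} are readily checked: the output simulation has cut-off $\delta$ and type $\sum_i S_i'\to\sum_i T_i$; polytime computability follows from $n$ polytime invocations of $\text{\sc Emu}$ together with a polytime relabeling step and the polytime assembly prescribed by \cref{cor:sim sum finite}; uniformity follows from uniformity of $\text{\sc Emu}$, since varying $\delta$ only modifies the local interactions of each $S_i'$ and additively shifts $\Shift_{\mc{f}'}=\sum_i\Shift_{\mc{f}_i'}$; and the encoding and decoding of $\mc{f}'$ depend only on $q_{\mc{N}}$. The main obstacle is engineering the compatibility hypotheses of \cref{cor:sim sum finite}: target systems sharing spins demand consistent physical-spin assignments, while distinct summands must not accidentally identify auxiliary spins. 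Both requirements are resolved by the canonical relabeling step, which crucially exploits the fact that the encoding and decoding depend only on the (constant) spin type of the target model.
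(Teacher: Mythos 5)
Your proof is correct and follows essentially the same approach as the paper: run $\text{\sc Emu}$ on each summand, use constancy of the encoding/decoding (condition \ref{def:model sim enc}) and closure under isomorphism to relabel so the compatibility hypotheses of \cref{cor:sim sum finite} hold, compose the isomorphism-induced simulations with the output simulations via \cref{thm:sim trans}, and sum. The one cosmetic difference is that you give a canonical, one-shot labeling scheme (physical spins become $(v,j)$, auxiliaries become $(\mr{aux},i,a)$), whereas the paper constructs the relabeling by an iterative procedure that first makes the $V_{S_i}$ pairwise disjoint and then repairs the overlaps spin by spin; both achieve exactly conditions \ref{eq:sum phys spins} and \ref{eq:sum phys image} and are polytime, so the content is the same.
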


\begin{proof}
We explicitly construct $\text{\sc Emu}_{\Sigma}$ and prove that it satisfies \cref{def:spin model sim}.
The construction relies on \cref{cor:sim sum finite}, where we achieve the compatibility conditions of \cref{cor:sim sum finite} by using that spin models are closed under isomorphism (cf.\ \cref{def:spin model}) and that isomorphisms define spin system simulations (\cref{lem:iso sim}).

On input $(T_1, \ldots, T_k, \delta)$, 
for each $i\leq k$, we first compute $\text{\sc Emu}(T_i, \delta)$. This yields $k$ spin systems $S(T_i, \delta)$ and $k$ simulations 
\begin{equation}
\mc{f}(T_i, \delta) \colon S(T_i, \delta) \to T_i.
\end{equation}
For simplicity, we omit the dependencies of $S(T_i,\delta), \mc{f}(T_i,\delta)$ and write $S_i,\mc{f}_i$. 
To apply \cref{cor:sim sum finite}, we modify the simulations $\mc{f}_i$ such that they satisfy the conditions of \cref{cor:sim sum finite}.
By \cref{def:spin model} and \cref{def:spin model sim}, for any $i,j\leq k$ we have
\begin{equation}\label{eq:sum theorem conditions 1}
    \begin{split}
        q_{S_i} &= q_{S_j}\\
        \dec_{\mc{f}_i} &= \dec_{\mc{f}_j}\\
        \enc_{\mc{f}_i} &= \enc_{\mc{f}_j}.
    \end{split}
\end{equation}

We need to show the simulations satisfy, pairwise, condition \ref{eq:sum phys spins} and condition \ref{eq:sum phys image} of \cref{cor:sim sum finite}, that is, that
\begin{equation}\label{eq:sum cond 3}
    \phys_{\mc{f}_i}\vert_{V_{T_i}\cap V_{T_j}} = \phys_{\mc{f}_j}\vert_{V_{T_i}\cap V_{T_j}}
\end{equation}
and 
\begin{equation}\label{eq:sum cond 4}
    \mr{Im}(\phys_{\mc{f}_i}\vert_{V_{T_i}\cap V_{T_j}}) = V_{S_i}\cap V_{S_j}.
\end{equation}
Both conditions can be achieved by relabeling the spins of the source systems $S_1, \ldots,S_k $. 
Namely, for $i =1, \ldots, k$, we first relabel all spins such that $V_{S_i}\cap V_{S_j}=\emptyset$.
Then we iterate over spins $v \in \bigcup_{i=1}^k V_{T_i}$. For each $v$ we determine 
\begin{equation}
    i_v = \mr{min}\{ i \mid v \in V_{T_i}\}
\end{equation}
and iterate over $j\in \{i_v+1,\ldots,k\}$. Whenever $v \in V_{T_j}$ we relabel the corresponding physical spins $\phys_{\mc{f}_j}(v)$ from $V_{S_j}$ to $\phys_{\mc{f}_{i_v}}(v)$.
Thereby, after relabeling, all source systems intersect precisely on the physical spins corresponding to target spins from intersections of target systems. Moreover, on the intersections of target systems, the physical spin assignments agree. 
That is, after relabeling, \eqref{eq:sum cond 3} and \eqref{eq:sum cond 4} are satisfied.
Note that these relabelings can be performed in polytime.

In total the relabelings define isomorphisms (see \cref{def:iso spin sys})
\begin{equation}
    \phi_i \colon R_i \to S_i,
\end{equation}
where $R_i$ is the relabeled version of $S_i$. 
According to \cref{lem:iso sim}, the $\phi_i$ induce simulations
\begin{equation}
    \mc{g}_i \colon R_i \to S_i
\end{equation}
with arbitrary cut-off. We pick the cut-off
\begin{equation}
    \Del_{\mc{g}_i} = \delta + \Shift_{\mc{f}_i}.
\end{equation}
Then, by \cref{thm:sim trans}, we obtain simulations 
\begin{equation}
   \mc{g}_i \circ  \mc{f}_i \colon R_i \to T_i
\end{equation}
with cut-off $\delta$. 
By construction, these satisfy the compatibility conditions of \cref{cor:sim sum finite}, and by this result 
we obtain 
\begin{equation}
    \sum_{i=1}^k \bigl( \mc{g}_i \circ  \mc{f}_i \bigr) \colon
    \sum_{i=1}^k R_i\to \sum_{i=1}^k T_i.
\end{equation}
Define 
\begin{equation}
\begin{split}
  &\text{\sc Emu}_{\Sigma} (T_1, \ldots, T_k, \delta) =   \\
  & = \Bigl( (R_1, \ldots, R_k),  \sum_{i=1}^k \bigl( \mc{g}_i \circ  \mc{f}_i \bigr) \Bigr ).
\end{split}
\end{equation}
Note that all spin systems $R_i$, as well as simulations $\mc{f}_i,\mc{g}_i$ depend on $\delta$.
However, for the simulations $\mc{g}_i$ this dependency is only due to the choice of cut-off, as by uniformity of emulation (condition \ref{def:cut-off indep} of \cref{def:sys simulation}) the isomorphisms $\phi_i$ do not depend on $\delta$.

To finish the proof, we need to show that this satisfies \cref{def:emulation mod}. 

First, by construction, all simulations $\mc{g}_i\circ \mc{f}_i$ have cut-off $\delta$, so their sum has cut-off $\delta$, as well. Thus, $\text{\sc Emu}_{\Sigma}$ satisfies condition \ref{def:model sim right cutoff}.

Second, we prove that $\text{\sc Emu}_{\Sigma}$ is efficient, i.e. satisfies condition \ref{def:efficient sim}. 
By assumption, the spins systems $S_i$ and simulations $\mc{f}_i$ are polytime constructable, namely by using the emulation $\text{\sc Emu}$.
We have previously argued that the relabelings $\phi$ and hence both  the spin systems $R_i$ as well as the simulations $\mc{g_i}$ are poyltime constructable. 
As argued in \cref{thm:model sim trans}, composing simulations (according to \cref{thm:sim trans}) is polytime computable and thus also the simulations $\mc{g}_i \circ \mc{f}_i$ can be computed in polytime.
Similarily, the addition of spin system simulations (according to \cref{thm:sim sum}) is polytime computable since it amounts to applying basic manipulations to the individual simulations, which hence implies that also adding the simulations $\mc{g}_i \circ \mc{f}_i$ is polytime computable.
In total, this proves that 
$\text{\sc Emu}_{\Sigma}$ is polytime computable and hence satisfies condition \ref{def:efficient sim}.

To see that $\text{\sc Emu}_{\Sigma}$ is uniform, i.e.\ satisfies condition \ref{def:cut-off indep}, first, note that by assumption $\text{\sc Emu}$ is uniform, so the spin systems $S_i$ satisfy condition \ref{def:cut-off indep}. Second, as argued before, 
 the isomorphisms $\phi_i$ do not depend on the cut-off. 
 Thus, also the spin systems $R_i$ satisfy condition \ref{def:cut-off indep}, and by \cref{lem:iso sim}, also the corresponding simulations $\mc{g}_i$ satisfy condition \ref{def:cut-off indep}. 
 Finally, by \cref{thm:sim trans} and \cref{thm:sim sum},  
 the also $\sum_{i=1}^k \mc{g}_i \circ \mc{f}_i$ and therefore also
 $\text{Emu}_{\Sigma}$ satisfy condition \ref{def:cut-off indep}.

Condition \ref{def:model sim enc} holds because all simulations $\mc{f}_i$ have the same encoding $\enc$ and by \cref{lem:iso sim} all simulations $\mc{g}_i$ have encoding $\mr{id}$ Therefore, by \cref{thm:sim trans} and \cref{thm:sim sum} all output simulations of $\text{\sc Emu}_{\Sigma}$ have encoding $\enc$ and thus $\text{\sc Emu}_{\Sigma}$ satisfies condition \ref{def:model sim enc}. 
\end{proof}

\begin{corollary}[lifting emulations to $\Cone$]\label{lem:sim model cone}
    Let $\mc{M}, \mc{N}$ be spin models and let
    $\text{\sc Emu} \colon \mc{M} \to \mc{N}$ be an emulation, 
    then we can construct 
    \begin{equation}
        \text{\sc Emu}_{\Cone} \colon \Cone(\mc{M}) \to \Cone(\mc{N}).
    \end{equation}
\end{corollary}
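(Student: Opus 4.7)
My plan is to obtain $\text{\sc Emu}_{\Cone}$ directly by composing the two lifting theorems already proved, exploiting the factorization $\Cone = \Sigma \circ \Scale$. Namely, I would first apply \cref{thm:model sim scale} to $\text{\sc Emu}$ to obtain an emulation
\begin{equation}
\text{\sc Emu}_{\Scale} \colon \Scale(\mc{M}) \to \Scale(\mc{N}),
\end{equation}
and then apply \cref{thm:model sim sum} to $\text{\sc Emu}_{\Scale}$ to obtain an emulation
\begin{equation}
\text{\sc Emu}_{\Sigma \circ \Scale} \colon \Sigma(\Scale(\mc{M})) \to \Sigma(\Scale(\mc{N})),
\end{equation}
which is by definition the desired $\text{\sc Emu}_{\Cone} \colon \Cone(\mc{M}) \to \Cone(\mc{N})$.

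The only point that requires care is that \cref{thm:model sim sum}, as stated, applies to emulations between \emph{spin models}, whereas $\Scale(\mc{M})$ and $\Scale(\mc{N})$ are sets of formal expressions. The fix is to observe that if one regards such expressions up to evaluation, the underlying set $\{\lambda \cdot M \mid \lambda \in \mathbb{R}_{\geq 0}, M \in \mc{M}\}$ satisfies the two conditions of \cref{def:spin model}: it has constant spin type $q_{\mc{M}}$ (since scaling does not change the spin type, cf.\ \cref{def:scaling}), and it is closed under isomorphism (an isomorphism $\phi \colon \lambda\cdot M \xrightarrow{\sim} S'$ is also an isomorphism $\phi \colon M \xrightarrow{\sim} \tfrac{1}{\lambda}\cdot S'$, and $\mc{M}$ is closed under isomorphism). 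The proof of \cref{thm:model sim sum} only uses these two properties of the source model (to relabel source systems and still remain in the model), so it applies without change to $\Scale(\mc{M}) \to \Scale(\mc{N})$.

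The hard part is essentially notational: one must track that the formal-expression inputs of $\text{\sc Emu}_{\Cone}$ are correctly flattened into $\Sigma(\Scale(\cdot))$ normal form before feeding them into the composition, and that polytime computability is preserved across both lifts. Polytime preservation follows immediately because scaling a simulation (\cref{thm:sim scale}) and summing simulations (\cref{cor:sim sum finite}) only involve elementary arithmetic and basic data manipulations on the underlying simulation data, while uniformity (condition \ref{def:cut-off indep} of \cref{def:spin model sim}) and dependence of encoding/decoding only on $q$ (condition \ref{def:model sim enc}) are preserved by both \cref{thm:model sim scale} and \cref{thm:model sim sum}, hence by their composition. This establishes the corollary.
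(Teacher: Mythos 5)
Your approach matches the paper's: factor $\Cone = \Sigma \circ \Scale$, apply \cref{thm:model sim scale} then \cref{thm:model sim sum}, and address the fact that the latter is stated for spin models while $\Scale(\mc{M})$ is a hull. The paper handles this subtlety in \cref{rem:mod lift} by tracking relabelings through formal expressions and noting their commutativity with evaluation; your alternative observation that the evaluated set $\Scale(\mc{M})$ itself satisfies \cref{def:spin model} is a valid and arguably cleaner justification of the same point, modulo the $\lambda = 0$ case where the inverse-scaling argument for closure under isomorphism needs a separate one-line check.
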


\begin{proof}
    The claim follows by first applying \cref{thm:model sim scale} to $\text{\sc Emu}$ and then  \cref{thm:model sim sum} to the resulting emulation.
    Using that $\Cone(\mc{M}) = \Sigma\circ\Scale(\mc{M})$, this yields an emulation of type $\Cone(\mc{M}) \to \Cone(\mc{N})$.
    This construction amounts to applying \cref{thm:model sim sum} to an emulation $\Scale(\mc{M}) \to \Scale(\mc{N})$, which, strictly speaking, is not covered by \cref{thm:model sim sum}.
    We illustrate in \cref{rem:mod  lift} why this nevertheless works.
\end{proof}

    \begin{remark}[lifting emulations to arbitrary hulls]\label{rem:mod  lift}
        Theorems \ref{thm:model sim scale} and \ref{thm:model sim sum} and \cref{lem:sim model cone} can be generalized as follows: Given $\text{\sc Emu} \colon \mr{hull}_1(\mc{M}) \to \mr{hull}_2(\mc{N})$, for arbitrary hull operators $\mr{hull}_1, \mr{hull}_2$ 
        we can construct emulations
            \begin{equation}\label{eq:new emus}
            \begin{split}
               &\text{\sc Emu}_{\Scale} \colon \Scale\circ \mr{hull}_1(\mc{M}) \to \Scale\circ \mr{hull}_2(\mc{N})   \\
               &\text{\sc Emu}_{\Sigma} \colon \Sigma\circ \mr{hull}_1(\mc{M})  \to \Sigma\circ \mr{hull}_2(\mc{N}) \\
               &\text{\sc Emu}_{\Cone} \colon \Cone\circ \mr{hull}_1(\mc{M}) \to \Cone\circ \mr{hull}_2(\mc{N}). 
        \end{split}
            \end{equation}
        $\text{\sc Emu}_{\Scale}$ can be constructed as in \cref{thm:model sim scale}. 
        In particular, $\text{\sc Emu}_{\Scale}$ can be constructed for sets of spin systems $\mc{S},\mc{T}$ instead of spin models $\mc{M},\mc{N}$.
    
        $\text{\sc Emu}_{\Sigma}$ can be constructed by following the proof of \cref{thm:model sim sum}.
        However, now $\text{\sc Emu}$ returns formal expressions $\mr{expr}_1, \ldots, \mr{expr}_k$ instead of spin systems $S_1, \ldots, S_k$. 
        In order to construct the relabelings $\phi_i$ we first have to evaluate these formal expressions. 
        Note, however, that since $\text{\sc Emu}_{\Sigma}$ must return formal expressions from $\Sigma\circ \mr{hull}_1(\mc{M})$, we must apply the relabelings to the formal expressions $\mr{expr}_i$, or more precisely the spin systems contained therein, instead of their evaluations.
        The rest of the construction is analogous to that of \cref{thm:model sim sum}.
        
        Proceeding in this way, the formal expression returned by $\text{\sc Emu}_{\Sigma}$ ultimately evaluates to the source of the output simulation of $\text{\sc Emu}_{\Sigma}$. That is,       $\text{\sc Emu}_{\Sigma}$ satisfies \cref{def:emulation mod} \ref{it:emu mod 2}
        due to the following compatibility between relabelings and evaluations.
        Assume that $\mr{expr}_i$ contains spin systems $S_{i,1}, \ldots, S_{i,r}$ (we write $\mr{expr}_i(S_{i,1} \ldots, S_{i,r})$). Further assume that $\mr{expr}_i$ evaluates to $S_i$, which after relabeling will be denoted by $R_i$, while by relabeling $S_{i,l}$ accordingly, the resulting spin system will be denoted by $R_{i,l}$. Then the following diagram commutes
        \begin{center}
	     \begin{tikzcd}[column sep = huge, row sep = huge]
	          S \arrow[r,"{\text{relabel}}"] & R \\
            \mr{expr}_i(S_{i,1}, \ldots, R_{i,r}) \arrow[u,"{\text{evaluate}}"]
            \arrow[r,"{\text{relabel  }}"] &
            \mr{expr}_i(R_{i,1}, \ldots, R_{i,r}) \arrow[u,"{\text{evaluate}}"].
	     \end{tikzcd}
	 \end{center}
        Since evaluating formal expressions amounts to computing sums and scalings of spin systems, it is polytime computable. Thus, poyltime computability of $\text{\sc Emu}_{\Sigma}$ follows from 
        \ref{thm:model sim sum}.

        Finally, $\text{\sc Emu}_{\Cone}$ can be constructed as described in \cref{lem:sim model cone}.  
    \end{remark}

\subsection{Characterization of Universality}\label{ssec:universality}

Given a spin model $\mc{M}$, it is natural to consider its reach w.r.t.\ spin model emulations, i.e.\ the maximal set of spin systems that $\mc{M}$ emulates.
Surprisingly, there exist fairly simple spin models with maximal reach, i.e.\ that emulate (the set of) all spin systems. Such spin models are called \emph{universal spin models}. 

We shall first define universality as well as other properties of spin models \cref{sssec:prop}, 
and then show how to simulate high order flags with flags of order two (\cref{sssec:basis}), 
to finally characterize universality (\cref{sssec:charac univ}). 

\subsubsection{Universality and Other Properties of Spin Models} \label{sssec:prop}

\begin{definition}[universal]\label{def:universality}
A spin model $\mc{M}$ is \emph{universal} if it emulates the set of all spin systems. 
\end{definition}

Universal spin models, in some sense, can approximate arbitrary spin systems, or more precisely their spectrum (\cref{lem:sim spectrum}), their partition function (\cref{lem:sim part fun sys}) and their Boltzmann distribution (\cref{thm:boltzmann sys}) to arbitrary precision.
Moreover, by \cref{def:spin model sim}, these approximations can be computed in polytime.
We explore consequences of universality in \cref{sec:consequences}.

We now introduce several properties of spin models which shall be used to fully characterize spin model universality.

\begin{definition}[functional complete]\label{def:f.c.}
A spin model $\mc{M}$ is \emph{functional complete} (f.c.) if 
\begin{equation}
    \Cone(\mc{M}) \to \mc{B}_2.
\end{equation}
\end{definition}

Recall from \cref{ex:flag basis} that $\mc{B}_2$ contains all flag systems corresponding to canonical basis functions on the space of functions of type $[2]^{\{v_1,v_2\}}\to \mathbb{R}$, for arbitrary labels $v_1,v_2$.
Functional completeness states that positive linear combinations of spin systems from
$\mc{M}$ suffice to emulate all such basis spin systems.

Since we merely require that positive linear combinations of spin systems from $\mc{M}$ emulate all such basis spin systems, 
we interpret functional completeness as a condition on the local interactions contained in $\mc{M}$.  
In other words, \cref{def:f.c.} takes into account restrictions that might arise from the interaction hypergraphs allowed by $\mc{M}$, and thereby prevent linear combinations of spin systems from $\mc{M}$ from being included in $\mc{M}$.

\begin{definition}[closed]\label{def:closed}
A spin model $\mc{M}$ is \emph{closed} if
\begin{equation}
    \mathcal{M}\to \Sigma(\mc{M}).
\end{equation}
\end{definition}

Closed spin models contain (at least up to emulation) the sums of their spin systems.
We shall refer to the property of being closed as closure. 

In contrast to \cref{def:f.c.}, closure is a condition on the interaction hypergraphs of $\mc{M}$. Clearly, any sum of spin systems from $\mc{M}$ will only contain local interactions from $\mc{M}$, but since the interaction hypergraph of the sum is the union of the individual interaction hypergraphs, it might violate the conditions that $\mc{M}$ poses on its interaction hypergraphs.
Consider for instance the case where $\mc{M}$ only contains 2d grids as interaction graphs (see \cref{ex:2d Ising}); depending on the chosen overlap, the union of two 2d grids might not constitute a 2d grid itself.
This does not imply that the sum of two 2d Ising systems cannot be simulated by a 2d Ising system, but only  that this simulation cannot be trivial.

\begin{definition}[scalable]\label{def:scalable}
    A spin model $\mc{M}$ is \emph{scalable} if
    \begin{equation}
        \mathcal{M}\to \Scale( \mc{M}).
    \end{equation}
\end{definition}
That $\mc{M}$ is scalable means that, up to emulation, $\mc{M}$ contains its spin systems scaled by non-negative real numbers.   
Many spin models are trivially scalable because they have scalable local interactions (see e.g.\ \cref{ex:2d Ising}).

\begin{definition}[locally closed]\label{def:loc closed}
   A spin model $\mc{M}$ is \emph{locally closed} if
   \begin{equation}
       \mc{M} \to \mc{M} + J(\mc{M})
   \end{equation}
   with identity encoding.
\end{definition}

Recall from \cref{def:local spin model} that  $J(\mc{M})$ contains all local interactions of $\mc{M}$. 
Local closure weakens closure as follows: instead of requiring that $\mc{M}$ emulates arbitrary sums of its spin systems, it merely requires that $\mc{M}$ emulates sums of the form $S + R_{J_T(e)}$, where $S, T$ are spin systems from $\mc{M}$ and $R_{J_T(e)}$ is the canonical spin system corresponding to a single local interaction $J_T(e)$ of $T$. 
Note, however, that local closure requires the emulation to use the identity encoding while closure does not restrict the encoding.

\begin{proposition}\label{thm:locally closed}
    If $\mc{M}$ is locally closed 
    then $\mc{M}$ is closed.
\end{proposition}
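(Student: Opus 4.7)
The plan is to build the closure emulation $\text{\sc Emu}\colon \mc{M} \to \Sigma(\mc{M})$ by iteratively invoking the local closure emulation $\text{\sc Emu}_{l.c.}\colon \mc{M} \to \mc{M} + J(\mc{M})$ and gluing the resulting simulations together via composition (\cref{thm:sim trans}) and addition (\cref{thm:sim sum}). Given input $(T_1, \ldots, T_n, \delta) \in \Sigma(\mc{M})$, I first expand each $T_i$ for $i \geq 2$ into its canonical local pieces via \eqref{lem:can sys}, producing a list $R_1, \ldots, R_m \in J(\mc{M})$ (whose length $m$ is polynomial in the input) such that $\sum_{i=1}^n T_i = T_1 + R_1 + R_2 + \ldots + R_m$ as spin systems.

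Starting from $S^{(0)} := T_1$ and the identity simulation $\mc{f}^{(0)} \colon T_1 \to T_1$, I iterate for $k = 1, \ldots, m$: apply $\text{\sc Emu}_{l.c.}$ to $(S^{(k-1)}, R_k)$ with cut-off $\delta + \Shift_{\mc{f}^{(k-1)}}$, obtaining $S^{(k)} \in \mc{M}$ together with a simulation $\mc{g}^{(k)} \colon S^{(k)} \to S^{(k-1)} + R_k$ with identity encoding. Then I use \cref{thm:sim sum} to add $\mc{f}^{(k-1)}$ to the identity simulation $\mr{id}_{R_k}$, obtaining $\mc{f}^{(k-1)} + \mr{id}_{R_k} \colon S^{(k-1)} + R_k \to T_1 + R_1 + \ldots + R_k$, and compose with $\mc{g}^{(k)}$ via \cref{thm:sim trans} to form $\mc{f}^{(k)}\colon S^{(k)} \to T_1 + R_1 + \ldots + R_k$. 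The choice of cut-off for $\mc{g}^{(k)}$ is designed so that, by \cref{thm:sim trans}, $\mc{f}^{(k)}$ inherits cut-off $\delta$ and identity encoding, maintaining the invariant along the iteration. After $m$ steps I return $(S^{(m)}, \mc{f}^{(m)})$, whose target evaluates to $\sum_i T_i$.

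The identity-encoding hypothesis in \cref{def:loc closed} is the crucial enabler: it ensures that every $\mc{f}^{(k)}$ has identity encoding, whereupon conditions \ref{eq: sim sum dec}, \ref{eq:sum common enc} and \ref{eq:sum disjoint enc} of \cref{thm:sim sum} for the addition with $\mr{id}_{R_k}$ are automatically satisfied, and the remaining compatibility conditions \ref{eq:sum phys spins} and \ref{eq:sum phys image} on physical spins are arranged by a polytime relabeling of auxiliary spins of $S^{(k-1)}$ to be disjoint from $V_{R_k}$ (using that $\mc{M}$ is closed under isomorphism). Without the identity-encoding clause the intermediate encodings could become incompatible and the iteration would break down, which also explains why \cref{def:loc closed} imposes it while \cref{def:closed} does not.

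The main technical obstacle will be verifying all conditions of \cref{def:spin model sim} for the resulting algorithm. The cut-off condition \ref{def:model sim right cutoff} is enforced directly by the choice $\Del_{\mc{g}^{(k)}} = \delta + \Shift_{\mc{f}^{(k-1)}}$, which depends only on previously computed data; uniformity (\ref{def:cut-off indep}) and the dependence of encoding only on $q_T$ (\ref{def:model sim enc}) transfer from the corresponding properties of $\text{\sc Emu}_{l.c.}$ and from the trivial data of the inserted identity simulations. For polytime (\ref{def:efficient sim}), the number of iterations $m$ is polynomial in the input and each iteration calls $\text{\sc Emu}_{l.c.}$ once plus polytime simulation manipulations; the most delicate point I expect is to control that the intermediate sizes $|S^{(k)}|$ do not blow up super-polynomially across the iteration, which will require a careful accounting of how local closure extends a spin system.
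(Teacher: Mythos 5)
Your proposal is correct and follows the paper's proof essentially verbatim: decompose the summands $T_2,\ldots,T_n$ into canonical local pieces via \eqref{lem:can sys}, iteratively feed them into the local-closure emulation, glue each intermediate simulation to an identity simulation on the next piece via \cref{thm:sim sum} (enabled by the identity-encoding clause of \cref{def:loc closed}) and compose via \cref{thm:sim trans}, bumping the input cut-off by the accumulated shift at each step. Your closing observation---that one must check the intermediate sizes $\vert S^{(k)}\vert$ do not grow super-polynomially across the iteration---is well taken: the paper's own proof only asserts that polynomially many applications of $\text{\sc LC}$ suffice for polytime, without explicitly bounding the recursively-fed input sizes, so your proposed accounting would in fact tighten the published argument.
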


\begin{proof}
We construct the required emulation 
\begin{equation}
    \text{\sc Emu} \colon \mc{M} \to \Sigma(\mc{M})
\end{equation}
by, given $(T_1, \ldots, T_k)$ from $\Sigma(\mc{M})$,
first decomposing $T_2, \ldots, T_k$ into sums of canonical spin systems and then 
iteratively applying local closure of $\mc{M}$, together with additivity and compositionality of spin model emulations (\cref{thm:model sim sum}, \cref{thm:model sim trans}).

First, on input $(T_1, \ldots, T_k,\delta)$, 
for $i\in \{2, \ldots,  k\} $ we decompose $T_i$ into canonical spin systems, that is, we fix an enumeration $E_{T_i}= \{e_{1}, \ldots, e_{n_i}\}$  and write
\begin{equation}
    T_i = \sum_{j=1}^{n_i} T_{J_{T_i}(e_j)} ,
\end{equation} 
where $T_{J_{T_i}(e_j)}$ is the canonical spin system with $e_{j}$ as the only hyperedge and $J_{T_i}(e_{j})$ as the only local interaction.
Let us write 
$T_{i,j}\coloneqq T_{J_{T_i}(e_j)}$. 
We have 
\begin{equation}
        \sum_{i=1}^k T_i =  T_1 +\sum_{i=2}^k \sum_{j=1}^{n_i}T_{i,j}.
\end{equation}
To simplify the notation for the rest of the proof we rename the spin systems $T_{i,j}$ such that with $n\coloneqq \sum_{i=2}^k n_i$, we have
\begin{equation}
    T = T_1 + \sum_{i=1}^n R_i.
\end{equation}

Now, since $\mc{M}$ is locally closed, there exists a spin model emulation
\begin{equation}
    \text{\sc LC} \colon \mc{M} \to \mc{M} + J(\mc{M}),
\end{equation}
that uses identity encoding. 
We first apply $\text{\sc LC}$ to $(T_1,R_1, \delta)$ to obtain a spin system $S_1 \in \mc{M}$ 
and a spin system simulation
\begin{equation}
    \mc{f}_1 \colon S_1 \to T_1 + R_1.
\end{equation}
Note that both $S_1$ and $\mc{f}_1$ depend on $T$ and $\delta$ but in order to lighten the notation we omit  these dependencies. 

Next, by \cref{lem:iso sim}, the identity isomorphism induces a trivial simulation $\mc{g}_2 \colon R_2 \to R_2$ with cut-off $\delta$.
Similar to the proof of \cref{thm:model sim sum}, we now 
relabel spins in $S_1$ and redefine $\mc{f}_1$ accordingly, such that conditions 
\ref{eq:sum phys spins} and \ref{eq:sum phys image} of \cref{thm:sim sum} are satisfied. 
Note that by definition of local closure, $\mc{f}_1$ has identity encoding and by construction so does $\mc{g}_2$. Therefore, conditions \ref{eq: sim sum dec},  \ref{eq:sum common enc} and \ref{eq:sum disjoint enc} of \cref{thm:sim sum} are satisfied.
By \cref{thm:sim sum} we thus get a simulation
\begin{equation}
    \mc{f}_1+\mc{g}_2 \colon S_1+R_2 \to T_1 + R_1 + R_2.
\end{equation}
Note that $ \mc{f}_1+\mc{g}_2 $ has cut-off $\delta$ and shift $\Shift_{\mc{f}_1}$.

We now apply $\text{\sc LC}$ to $(S_1,R_2, \delta+\Shift_{\mc{f}_1})$ to obtain a spin system $S_2\in \mc{M}$ 
and a spin system simulation
\begin{equation}
    \mc{f}_2 \colon S_2 \to S_1 + R_2.
\end{equation}
Using compositionality of spin system simulations (\cref{thm:sim trans}) we thus have 
\begin{equation}
    \mc{f}_2 \circ (\mc{f}_1 + \mc{g}_2) \colon S_2 \to T_1 + R_1+R_2.
\end{equation}
This simulation, by construction, has cut-off
\begin{equation}
\mr{min}(\delta, \delta + \Shift_{\mc{f}_1}-\Shift_{\mc{f}_1})=\delta.    
\end{equation}
Proceeding iteratively, we obtain the required simulation with target $T$.

To finish the proof we show that this construction defines a spin model emulation, i.e.\ satisfies \cref{def:spin model sim}.
Condition \ref{def:model sim right cutoff} is satisfied by construction.

Condition \ref{def:efficient sim} holds because, 
first, decomposing each $T_i$ into $T_i = \sum_j T_{i,j}$ can simply be read off from the description of $T_i$. Second, $\text{\sc LC}$ is polytime by assumption. We need a total of $n$ applications of $\text{\sc LC}$, where $n$ is the number of hyperedges in $T_2, \ldots, T_k$. In particular,  $n \leq \sum_{i=2}^k \vert T_i \vert$, i.e.\ the above construction needs polynomially many applications of $\text{\sc LC}$ which thus can also be computed in polytime. 
Third, by \cref{thm:model sim trans} and \cref{thm:model sim sum}, composition and addition of spin system simulation is polytime computable and the above construction requires $n$ compositions and $n$ compositions, which thus are also polytime computable.

Condition \ref{def:cut-off indep} is satisfied because it is satisfied by $\text{\sc LC}$, and 
condition \ref{def:model sim enc} is satisfied because the final simulation has identity encoding, independently of the target spin system $T$.                   
\end{proof}

We shall characterize universal spin models as those spin models which are closed, scalable and functional complete (\cref{thm:main}). 
The first step of this characterization consists of proving that closed, scalable and functional complete spin models emulate arbitrary, non-negative linear combinations of flag functions systems from $\mc{B}_2$.
\begin{lemma}\label{lem:closed scalable f.c.}
    If $\mc{M}$ is closed, scalable and functional complete then 
    \begin{equation}
         \mc{M} \to \Cone(\mc{B}_2).
    \end{equation}
\end{lemma}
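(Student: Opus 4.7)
The plan is to exhibit the desired emulation as the composition
$\mc{M} \to \Cone(\mc{M}) \to \Cone(\mc{B}_2)$
of two simpler emulations, each assembled by elementary liftings and compositions of the three hypotheses (closure, scalability, functional completeness) via the modularity results of \cref{thm:model sim trans}, \cref{thm:model sim sum} and \cref{lem:sim model cone}, together with their extensions to arbitrary hull operators discussed in \cref{rem:mod lift}.

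For the first arrow, I would start from scalability, $\mc{M} \to \Scale(\mc{M})$, and lift it by $\Sigma$ using \cref{thm:model sim sum} (in the form of \cref{rem:mod lift}) to obtain $\Sigma(\mc{M}) \to \Sigma\circ\Scale(\mc{M}) = \Cone(\mc{M})$. Precomposing with closure $\mc{M} \to \Sigma(\mc{M})$ via \cref{thm:model sim trans} then yields the emulation $\mc{M} \to \Cone(\mc{M})$.

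For the second arrow, I would take functional completeness $\Cone(\mc{M}) \to \mc{B}_2$ and lift it by $\Cone$ via \cref{lem:sim model cone} (again in the form of \cref{rem:mod lift}) to obtain $\Cone\circ\Cone(\mc{M}) \to \Cone(\mc{B}_2)$. The trivial normalization emulation $\Cone(\mc{M}) \to \Cone\circ\Cone(\mc{M})$ supplied by the identities \eqref{eq:hull identities} then composes with this, via \cref{thm:model sim trans}, to give $\Cone(\mc{M}) \to \Cone(\mc{B}_2)$. A final application of \cref{thm:model sim trans} joins the two arrows and produces the claimed $\mc{M} \to \Cone(\mc{B}_2)$. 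Operationally, this corresponds exactly to the picture already advertised in \cref{fig:intro-modular}: decompose the target as a formal non-negative combination of flag systems, emulate each flag by a combination of systems from $\mc{M}$ via functional completeness, scale and add via scalability and closure, and finally absorb everything back into a single system of $\mc{M}$.

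The potentially delicate point is not the algebraic chain itself but the bookkeeping required by \cref{rem:mod lift}: \cref{thm:model sim sum} and \cref{lem:sim model cone} are stated for sources and targets that are bare spin models, whereas the extensions I invoke act on formal expressions inside iterated hulls. I would verify that the evaluate/relabel compatibility diagram of \cref{rem:mod lift} goes through at each lifting step, and that the polytime computability, uniformity and encoding-independence conditions of \cref{def:spin model sim} are inherited at every composition; both follow directly from the cited modularity results, so no genuinely new ingredient is needed beyond careful chaining.
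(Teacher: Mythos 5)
Your proof is correct but takes a genuinely different chaining from the paper's. The paper's proof starts from the f.c.\ emulation, lifts it by $\Scale$ via \cref{thm:model sim scale}, precomposes with scalability to reach $\mc{M}\to\Scale(\mc{B}_2)$, lifts that by $\Sigma$ via \cref{thm:model sim sum} to reach $\Cone(\mc{B}_2)$, and finally precomposes with closure. You instead factor through $\mc{M}\to\Cone(\mc{M})$: you build this from scalability ($\Sigma$-lifted) and closure, then apply the $\Cone$-lift of functional completeness via \cref{lem:sim model cone} and the hull identity $\Cone(\mc{M})\to\Cone^2(\mc{M})$. Both routes invoke the same modularity toolkit, but the intermediate objects and the choice of lift operator differ ($\Scale$-lift of f.c.\ in the paper versus $\Cone$-lift in yours). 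Two things your route buys: first, $\mc{M}\to\Cone(\mc{M})$ is a reusable intermediate of independent interest; second, and more importantly, you invoke functional completeness exactly as \cref{def:f.c.} states it, as $\Cone(\mc{M})\to\mc{B}_2$, whereas the paper's proof text and diagram label the bottom arrow as $\mc{M}\to\mc{B}_2$, which is not literally the definition. That arrow is in fact derivable from all three hypotheses (by precomposing f.c.\ with $\mc{M}\to\Cone(\mc{M})$), so the paper's argument is morally fine, but your chain avoids this minor mismatch and is the cleaner read against the stated definitions. Your caution about the bookkeeping of \cref{rem:mod lift} is well placed and is indeed the only point requiring care; the cited remark covers every lifting step you use.
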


\begin{proof}
    By assumption of $\mc{M}$ being functional complete, we have an emulation
    \begin{equation}
        \mc{M} \to \mc{B}_2.
    \end{equation}
    Using \cref{thm:model sim scale} we lift it to 
    \begin{equation}
        \Scale(\mc{M}) \to \Scale(\mc{B}_2).
    \end{equation}
    We then use scalability of $\mc{M}$ and compositionality of emulations (\cref{thm:model sim trans}) to obtain 
    \begin{equation}
        \mc{M} \to \Scale(\mc{B}_2).
    \end{equation}
    By \cref{thm:model sim sum} (see \cref{rem:mod lift}) we lift this emulation to 
    \begin{equation}
        \Sigma(\mc{M}) \to \Sigma \circ \Scale(\mc{B}_2) = \Cone(\mc{B}_2),
    \end{equation}
    and finally we use closure of $\mc{M}$ together with \cref{thm:model sim trans} to construct 
    \begin{equation}
        \mc{M} \to \Cone(\mc{B}_2).
    \end{equation}
    \end{proof}
    
    The proof is illustrated in the following diagram, with $\to$ representing emulations that by \cref{thm:model sim trans} can be composed and $\Rightarrow$ representing lifts of emulations: 
    \begin{center}
	     \begin{tikzcd}
	          \mc{M} \arrow[r,"{\text{closed}}"] & \Sigma(\mc{M}) \arrow[rr, " \ref{thm:model sim sum}"] & &  \Cone(\mc{B}_2)\\
	          & \mc{M} \arrow[u, Rightarrow, "\Sigma"] \arrow[r, "{\text{scalable}}"] & \Scale(\mc{M}) \arrow[r, " \ref{thm:model sim scale}"] & \Scale(\mc{B}_2) \arrow[u, Rightarrow, "\Sigma"]  \\
           & & \mc{M} \arrow[u, Rightarrow, "\Scale"]  \arrow[r, "{\text{f.c.}}"] & \mc{B}_2 \arrow[u, Rightarrow, "\Scale"] .
	     \end{tikzcd}
	 \end{center}

The second step to characterize universality (\cref{thm:main}) consists of proving that non-negative linear combinations of flag systems, each acting on two spins, suffice to simulate arbitrary flag systems acting on arbitrarily many spins.
More precisely, we construct a spin model emulation 
\begin{equation}
    \Cone(\mc{B}_2) \to  \mc{B}.
\end{equation}
This situation is reminiscent to Boolean logic, where certain sets of Boolean functions such as $\{\mathrm{not}, \mathrm{and} \}$, involving only functions of arity $1$ and $2$ suffice to generate all  Boolean functions, including those of higher arity. 
Indeed, we construct $\Cone(\mc{B}_2) \to \mc{B}$ by constructing spin systems from $\Cone(\mc{B}_2)$  that,  in their ground state,  compute the Boolean functions  $\mr{not}, \mr{nor}, \mr{and}$ over some of their spins.  
Generic flag systems form $\mc{B}$ can be constructed by combining these spin systems in an appropriate way.

Since the previous construction involves the explicit definition of several spin systems, 
 while it often suffices to define them up to isomorphism,  
we introduce the following graphical notation. 
\begin{remark}[Graphical notation of spin systems]\label{rem:graphical notation}
    In \cref{lem:f.c.} as well as in \cref{sec:2d Ising} we define spin systems by (graphically) defining their interaction graphs. Their local interactions are represented as colors of the edges, subject to the colour coding provided in \cref{fig:2D close legend}.
    Coloured edges without an attached number represent local interactions scaled by $\delta$, while those with an attached number represent local interactions scaled by that number.
    
    Note that all spin systems defined in this way have 2-level spins and only pair interactions and fields.
    In addition, in the interaction graphs, we only label those spins which are important to understand the (ground state) behaviour of the corresponding spin system. In most cases these, spin systems simulate certain target systems with physical spin assignment the identity embedding and identity encoding, in which case we only label the physical spins explicitly.
    Thus, the labels implicitly define the physical spin assignment of the corresponding simulation.
    
    We repeatedly use some of the spin systems defined in \cref{lem:f.c.} and \cref{sec:2d Ising} to construct new spin systems by adding them. 
    To simplify this process we introduce symbols for some of the spin systems, and addition is represented by building graphs out of these symbols (see \cref{fig:AndSim}).
    In the symbols, only physical spins are drawn explicitly; auxiliary spins, in some sense, are internal to the symbols.
    The graphical representation of addition is understood such that all involved spin systems  have pairwise disjoint auxiliary spins, ensuring that the conditions of \cref{thm:sim sum} are always satisfied.
\end{remark}

\begin{figure}[th]
\begin{tikzpicture}[scale = 1, node distance={30mm},
cgnodeS/.style = {draw=black!80, fill=blue!5, thick, circle, minimum size= {width("$v_{i}$")+4pt}, inner sep=1pt},
Ifield/.style = {draw=black!80, fill = blue!5, circle, minimum size = {width("$v_{i}$")+4pt}},
] 

\node[cgnodeS] (1) at (0.75,2) {$v_i$};

\node[cgnodeS] (2) at (0.75,1) {};

\node[cgnodeS] (3) at (0.75,0) {};
\draw[very thick, red](0.75,0) circle (1em);

\node[cgnodeS] (2-21) at (0.75,-1) {};
\draw[very thick, blue](0.75,-1) circle (1em);

\node[cgnodeS] (2-11) at (0,-2) {};
\node[cgnodeS] (2-12) at (1.5,-2) {};

\node[cgnodeS] (2-31) at (0,-3) {};
\node[cgnodeS] (2-32) at (1.5,-3) {};

\node[cgnodeS] (2-41) at (0,-4) {};
\node[cgnodeS] (2-42) at (1.5,-4) {};

\node[cgnodeS] (2-51) at (0,-5) {};
\node[cgnodeS] (2-52) at (1.5,-5) {};

\node[cgnodeS] (2-61) at (0,-6) {};
\node[cgnodeS] (2-62) at (1.5,-6) {};

\node[cgnodeS] (2-71) at (0,-7) {};
\node[cgnodeS] (2-72) at (1.5,-7) {};

\node[cgnodeS] (2-81) at (0,-8) {};
\node[cgnodeS] (2-82) at (1.5,-8) {};
\node[cgnodeS] (2-91) at (0,-8.5) {};
\node[cgnodeS] (2-92) at (1.5,-8.5) {};

\node[cgnodeS] (2-101) at (0.75,-9.5) {};
\draw[very thick, black](0.75,-9.5) circle (0.9em);
\draw[very thick, brown](0.75,-9.5) circle (1.1em);

\node[cgnodeS] (2-111) at (0,-10.5) {};
\node[cgnodeS] (2-112) at (1.5,-10.5) {};

\node[cgnodeS] (2-121) at (0.75,-11.5) {};
\draw[very thick, black](0.75,-11.5) circle (1em);
\node (lambda) at (1.3,-11.5) {$\lambda$};

\node[cgnodeS] (2-131) at (0,-12.5) {};
\node[cgnodeS] (2-132) at (1.5,-12.5) {};

\draw[red, very thick] (2-11)--(2-12);

\draw[blue, very thick] (2-31)--(2-32);

\draw[green, very thick] (2-41)--(2-42);

\draw[yellow, very thick] (2-51)--(2-52);

\draw[violet, very thick] (2-61)--(2-62);

\draw[orange, very thick] (2-71)--(2-72);

\draw[black, very thick] (2-81)--(2-82);
\draw[brown, very thick] (2-91)--(2-92);

\draw[brown, very thick,shifted path=from 2-111 to 2-112 by 1pt];
\draw[black, very thick, shifted path=from 2-111 to 2-112 by -1pt];

\draw[brown, very thick] (2-131) to node[midway, above] {$\lambda$} (2-132);

\node (d1-1) at (3.2,0) {$\delta\cdot \pi_1$};

\node (d1-3) at (3.2,1) {auxiliary spin};

\node (d1-3) at (3.2,2) {physical spin};

\node (d2-2) at (3.2,-1) {$\delta \cdot \bar{\pi}_1$};

\node (d2-1) at (3.2,-2) {$\delta \cdot \pi_2$};

\node (d2-3) at (3.2,-3) {$\delta\cdot \bar{\pi}_2$};

\node (d2-4) at (3.2,-4) {$\delta\cdot f_{(1,1)}$};

\node (d2-5) at (3.2,-5) {$\delta \cdot f_{(1,2)}$};

\node (d2-6) at (3.2,-6) {$\delta \cdot f_{(2,1)}$};

\node (d2-7) at (3.2,-7) {$\delta \cdot f_{(2,2)}$};

\node[align = center] (d2-8) at (3.2,-8.25) {arbitrary \\
interactions};

\node[align = center] (d2-9) at (3.2,-10) {sum of \\
interactions:\\
$J_1+J_2$};

\node[align = center] (d2-10) at (3.2,-12) {scaling of \\
interactions:\\
$\lambda \cdot J$};

\node[] (t1) at (0.75, 3) {Symbol};
\node[] (t2) at (3.2,3) {Meaning};

\coordinate (top) at (2,3.5) {};
\coordinate (btm) at (2,-13) {};

\coordinate (lt) at (-0.5,3.5) {};
\coordinate (rt) at (4.5,3.5) {};

\draw[-] (lt) to (rt);

\coordinate (sep1-1) at (-0.5,-0.5) {};
\coordinate (sep1-2) at (4.5,-0.5) {};

\coordinate (sep2-1) at (-0.5,-1.5) {};
\coordinate (sep2-2) at (4.5,-1.5) {};

\coordinate (sep3-1) at (-0.5,-2.5) {};
\coordinate (sep3-2) at (4.5,-2.5) {};

\coordinate (sep4-1) at (-0.5,-3.5) {};
\coordinate (sep4-2) at (4.5,-3.5) {};

\coordinate (sep5-1) at (-0.5,-4.5) {};
\coordinate (sep5-2) at (4.5,-4.5) {};

\coordinate (sep6-1) at (-0.5,0.5) {};
\coordinate (sep6-2) at (4.5,0.5) {};

\coordinate (sep7-1) at (-0.5,1.5) {};
\coordinate (sep7-2) at (4.5,1.5) {};

\coordinate (sep8-1) at (-0.5,2.5) {};
\coordinate (sep8-2) at (4.5,2.5) {};

\coordinate (sep9-1) at (-0.5,-5.5) {};
\coordinate (sep9-2) at (4.5,-5.5) {};

\coordinate (sep10-1) at (-0.5,-6.5) {};
\coordinate (sep10-2) at (4.5,-6.5) {};

\coordinate (sep11-1) at (-0.5,-7.5) {};
\coordinate (sep11-2) at (4.5,-7.5) {};

\coordinate (sep12-1) at (-0.5,-9) {};
\coordinate (sep12-2) at (4.5,-9) {};

\coordinate (sep13-1) at (-0.5,-11) {};
\coordinate (sep13-2) at (4.5,-11) {};

\coordinate (lb) at (-0.5,-13) {};
\coordinate (rb) at (4.5,-13) {};

\draw (lb) to (rb);
\draw (lt) to (lb);
\draw (rt) to (rb);

\draw (sep1-1)--(sep1-2);

\draw (sep2-1)--(sep2-2);

\draw (sep3-1)--(sep3-2);

\draw (sep4-1)--(sep4-2);

\draw (sep5-1)--(sep5-2);

\draw (sep6-1)--(sep6-2);

\draw (sep7-1)--(sep7-2);

\draw[] (sep8-1) to (sep8-2);

\draw (sep9-1)--(sep9-2);

\draw (sep10-1)--(sep10-2);

\draw (sep11-1)--(sep11-2);

\draw (sep12-1)--(sep12-2);

\draw (sep13-1)--(sep13-2);

\draw (top)--(btm);

\end{tikzpicture}
\caption{Color coding of local interactions for the graphical definition of spin systems. Note that for $i\neq j$, $f_{(i,j)}$ is not symmetric w.r.t.\ swapping the spins. In the graphical notation, the order of spins for these interactions is understood left to right (top to bottom).
}
\label{fig:2D close legend}
\end{figure}

\subsubsection{Simulating High Order Flags} \label{sssec:basis}

In analogy to Boolean functions of arbitrary arity being simulated by Boolean functions of arity two, we now show that high order flags can be simulated by flags of order two. 

\begin{lemma}[simulation of flags]\label{lem:f.c.}
There exists a spin model emulation 
\begin{equation}
   \text{\sc Basis} \colon \Cone(\mc{B}_2) \to  \mc{B}.
\end{equation}
\end{lemma}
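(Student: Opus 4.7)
The plan is to construct $\text{\sc Basis}$ so that on input $(T,\delta)$ with $T \cong S_{f_{\vec x}}$ for some $\vec x \in [2]^l$, it outputs a formal expression in $\Cone(\mc{B}_2)$ whose evaluation simulates $T$ with cut-off $\delta$, identity encoding, and constant degeneracy. By \cref{lem:iso sim} and \cref{thm:sim trans}, it suffices to handle the canonical labelling $V_T = \{v_1,\dots,v_l\}$ and then compose with the relabelling simulation read off from $T$. The core idea is the Boolean fact that $\{\mr{not},\mr{and}\}$ is functionally complete: I will build gadgets in $\Cone(\mc{B}_2)$ that, in their low-energy configurations, realise these operations on designated spins with unique auxiliary completions, and assemble them into a detector whose root spin is forced to $1$ exactly when $\vec s = \vec x$.

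The two elementary gadgets are realised as nonnegative linear combinations of order-2 flag systems (with a constant number of auxiliary 2-level spins), with weights scaled by $\delta$. A \emph{negation} gadget forces an auxiliary $a$ to the state opposite a physical input $v$ and is realised by a single order-2 flag. An \emph{AND} gadget on inputs $a,b$ and output $c$ assigns energy $0$ with a unique auxiliary completion to each valid triple with $c = a \wedge b$, and energy at least $\delta$ to every other configuration; this is a standard Ising quadratization of the 3-body AND constraint using a constant number of auxiliary spins, which transfers to the flag basis because $\pi_2 = f_{(1,1)}+f_{(2,2)}$, $\bar\pi_2 = f_{(1,2)}+f_{(2,1)}$, and single-spin fields decompose with nonnegative flag coefficients when attached to any auxiliary neighbour. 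Given these gadgets, the detector is assembled by prepending a negation gadget to each $v_i$ with $x_i = 2$ (producing $a_i$; setting $a_i := v_i$ when $x_i = 1$) and chaining $l-1$ AND gadgets in a balanced binary tree over $a_1,\dots,a_l$ to obtain a root spin $r$ whose forced value in every low-energy source configuration is $1$ iff $\vec s = \vec x$; a single order-1 style field on $r$ of weight $1$ (realised in $\Cone(\mc{B}_2)$ via two pair flags of $r$ against a fixed auxiliary neighbour) then attaches energy exactly $1$ to the unique $\vec s = \vec x$ configuration and $0$ elsewhere.

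Summing all gadget subsystems and combining their individual simulations via \cref{cor:sim sum finite}, after choosing disjoint auxiliary labels and checking that all gadgets share identity encoding and matching decoding, yields the desired simulation with cut-off $\delta$, shift $0$, constant degeneracy, and identity encoding/decoding depending only on $q_T = 2$. Uniformity (condition \ref{def:cut-off indep} of \cref{def:spin model sim}) holds because $\delta$ appears only as a scalar multiplier on the gadget weights while the interaction hypergraph is determined by $\vec x$ alone; polytime computability is immediate since the detector consists of $O(l)$ gadgets of constant size and $l = \log_2 |S_{f_{\vec x}}|$. The main obstacle is the explicit AND gadget: verifying that the quadratization can be written with nonnegative flag coefficients, achieves the threshold $\delta$ on every invalid input, and admits a unique auxiliary ground state on every valid input. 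Once this is secured, the remaining assembly is mechanical via the modularity results of \cref{ssec:spin sys sim prop} and the lifting machinery of \cref{sssec:comp lift emu}.
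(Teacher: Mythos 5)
Your plan mirrors the paper's strategy at the top level: decompose the problem by flag order, use $\{\mr{not},\mr{and}\}$-completeness to build a detector whose root spin flags $\vec s = \vec x$, and attach a single-unit field to that root. You also handle relabelling via \cref{lem:iso sim} and modularity, as the paper does. But there is a genuine gap, and you name it yourself: the explicit AND gadget is never constructed. This is not a mechanical detail to be deferred; it is the heart of the lemma. Two of your specific assertions about the gadgets are also incorrect as stated.

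First, the negation gadget is not ``a single order-2 flag.'' A single flag $S_{f_{(x,y)}}$ penalizes exactly one of the four joint states, which does not force $a \neq v$ in the low-energy sector. You need to penalize both agreement states, i.e.\ (at least) $\delta\cdot S_{f_{(1,1)}}+\delta\cdot S_{f_{(2,2)}}$, which is precisely what the paper's $S^{\delta}_{\mr{not}}$ does. Second, your claim that a ``standard Ising quadratization'' of AND yields energy $0$ on the valid triples and $\geq\delta$ on the rest, with non-negative flag coefficients, does not hold without auxiliary spins: if a Hamiltonian built from non-negative pair interactions and fields on $\{a,b,c\}$ vanishes on all four triples with $c=a\wedge b$, then a short case analysis (forcing all relevant coefficients to zero via $(1,1,1)$, $(2,2,2)$, $(1,2,2)$, $(2,1,2)$) shows it also vanishes on $(1,1,2)$, so the threshold cannot be met. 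Thus auxiliary spins are unavoidable, and the question of exactly how to wire them with non-negative flag coefficients, a uniform gap $\geq\delta$, and a controlled degeneracy per valid input is exactly the content the paper supplies via $S^\delta_{\mr{nor}}$ (seven auxiliaries, carefully chosen weights) together with $\mr{and}(x,y)=\mr{nor}(\mr{not}\,x,\mr{not}\,y)$. Relatedly, your claimed shift of $0$ is unjustified: the paper's gadgets have ground-state energy $3\delta$ each and the resulting simulation has shift $(l-1)\cdot 3\delta$, which is perfectly fine since $\Shift$ is a free parameter of \cref{def:sys simulation}, but you should not assert $\Shift=0$ without a gadget that actually attains zero ground-state energy. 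The balanced-binary-tree versus linear-chain choice, and your direct-to-AND route versus the paper's NOR-first route, are both legitimate cosmetic variations; the missing piece is the explicit gadget construction with the verified threshold, non-negativity, and degeneracy properties.
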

\begin{proof}

We divide the target spin model $\mc{B}$ into three disjoint subsets, 
\begin{equation}
    \mc{B} = \mc{B}_1 \cup \mc{B}_2 \cup \mc{B}_{\geq 3},
\end{equation}
where $\mc{B}_1$ consists of flag systems of order 1 (i.e.\ that act on a single spin), $\mc{B}_1$ consists of flag systems of order $2$ and $\mc{B}_{\geq 3}$ consists of
flag systems of order at least 3.
We then construct separate emulations for these three cases,  
\begin{equation}
\begin{split}
     &\text{\sc Basis}_1 \colon \Cone(\mc{B}_2) \to  \mc{B}_1  \\
    &\text{\sc Basis}_2 \colon \Cone(\mc{B}_2) \to  \mc{B}_2  \\
   &\text{\sc Basis}_{3} \colon \Cone(\mc{B}_2) \to  \mc{B}_{\geq 3}.
\end{split}
\end{equation}
They can be easily combined to obtain the emulation with target spin model $\mc{B}$. Given $(T,\delta)$ as input, we simply first check the cardinality of $V_T$ to decide the subset to which $T$ belongs. Depending on the result apply either $\text{\sc Basis}_1$, $\text{\sc Basis}_2$ or $\text{\sc Basis}_3$.

While the construction of $\text{\sc Basis}_1$ and $\text{\sc Basis}_2$ is straightforward, that of $\text{\sc Basis}_3$ is more involved. 
To construct the latter, we leverage that non-negative linear combinations of spin systems from $\mc{B}_2$ suffice to encode Boolean logic in their ground state.
More precisely we construct spin systems that subject to the identification $1 = \text{TRUE}$, $2 = \text{FALSE}$ compute the Boolean functions $\{\mr{not} ,\mr{and}\}$ in their ground state. Since  $\{\mr{not} ,\mr{and}\}$ is functionally complete for Boolean logic, this suffices to compute arbitrary Boolean functions in the ground state. 
We construct spin systems $S^{\delta}_{\vec{x}}$ that, in their ground state, flag configurations $\vec{x}\in [2]^{\{1, \ldots, l\}}$ in the state of a single spin $u_{l-1}$, that is, in the ground state of $S^{\delta}_{\vec{x}}$, $\vec{s}(u_{l-1})=1$ if and only if for all $i$, $\vec{s}(v_i)=\vec{x}(i)$.
We then add interactions to $S^{\delta}_{\vec{x}}$ that yield an energy contribution of $+1$ if and only if $\vec{s}(u_{l-1})=1$. This 
finally results in the required simulation $S^{\delta}_{\vec{x}} \to T_{f_{\vec{x}}}$. 

Let us explain in more detail the constructions of the simulations and prove that they indeed define simulations.
All sources of these simulations will be explicitly constructed as non-negative linear combinations of spin systems from $\mc{B}_2$. For simplicity, we write 
$S \in \Cone(\mc{B}_2)$ or state that $\text{\sc Basis}_i$ returns $S$; according to \cref{def:emulation mod} we actually mean the formal expression obtained from the construction of $S$.

First, we construct $\text{\sc Basis}_1$.
A generic spin system in $\mc{B}_1$ is of the form 
$T_{f_x}\langle v_1\rangle$ (see \cref{def:can sys}), for some $x\in [ 2] \cong [2]^{\{1 \}}$.
Deciding if $x=1$ or $x=2$ can simply be done by reading off the local interaction of $T_{f_x}\langle v_1 \rangle$ from the input.

Consider first the case $x=1$.
We define the spin system
\begin{equation}\label{eq:s1 sum}
    S_1 \langle v_1,v_2\rangle \coloneqq  S_{f_{(1,2)}} \langle v_1,v_2 \rangle  +  S_{f_{(1,1)}} \langle v_1,v_2\rangle.
\end{equation}
Clearly, $S_1 \in \Cone(\mc{B})$.
Moreover, by construction,
\begin{equation}
    H_{S_1\langle v_1,v_2 \rangle}(\vec{s}) = \begin{cases}
        1 \ \text{if} \ \vec{s}(v_1)=1.\\
        0 \ \text{else}
    \end{cases}
\end{equation}
Consequently, for an arbitrary cut-off $\delta$, the physical spin assignment given by the identity injection of $v_1$ into $\{v_1,v_2\}$ yields a simulation $S_1\langle v_1,v_2\rangle \to T_{f_1}\langle v_1 \rangle$ with degeneracy $2$.

The second case, $x=2$, can be treated analogously, by defining 
\begin{equation}
    S_2\langle v_1,v_2\rangle \coloneqq S_{f_{(2,1)}}\langle v_1,v_2 \rangle +S_{f_{(2,2)}} \langle v_1,v_2\rangle.
\end{equation}
Combining these two cases defines the emulation $\text{\sc Basis}_1$.
Clearly, $\text{\sc Basis}_1$ satisfies the four conditions of \cref{def:spin model sim}.

The emulation $\text{\sc Basis}_2$ can be trivially constructed by noting that, according  to \cref{lem:iso sim}, identity isomorphisms define simulations with arbitrary cut-off. We simply take $\text{\sc Basis}_2$ to be the identity emulation, returning the corresponding identity simulation for each target system and cut-off.

To finish the proof, we construct $\text{\sc Basis}_3$ by constructing spin systems that implement Boolean logic in the ground state.

\begin{figure}[th]
    \centering
    \begin{tikzpicture}[circuit logic US, cgnodeS/.style = {draw=black!80, fill=blue!5, thick, circle, minimum size= {width("$v_{i}$")+4pt}, inner sep=1pt}]
        \node[cgnodeS] (v1) at (0,2) {$v_1$};
        \node[cgnodeS] (v2) at (0,0) {$v_2$};

        \draw[green, very thick,shifted path=from v1 to v2 by 1pt];
        \draw[orange, very thick,shifted path=from v1 to v2 by -1pt];

        \node[] (eq) at (1,1) {$\eqqcolon$};

        \node[cgnodeS] (v3) at (2,2) {$v_1$};
        \node[cgnodeS] (v4) at (2,0) {$v_2$};
        \node[not gate,point down, info=center:$\neg$] (g) at (2,1) {};
        \draw[-] (v3) to (g);
        \draw[-] (v4) to (g);

    \end{tikzpicture}
    \caption{Definition of $S^{\delta}_{\mr{not}}$.}
    \label{fig:notGagdget}
\end{figure}

Let $\delta>0$. 
We first construct the spin system $S^{\delta}_{\mr{not}}$ (see \cref{fig:notGagdget}) implementing the logical negation in its ground state: 
\begin{equation}
    S^{\delta}_{\mr{not}}\langle v_1,v_2\rangle  = \delta\cdot S_{f_{(1,1)}}\langle v_1,v_2\rangle + \delta\cdot S_{f_{(2,2)}} \langle v_1,v_2\rangle.
\end{equation}
This yields a Hamiltonian
\begin{equation}
    H_{S^{\delta}_{\mr{not}}\langle v_1,v_2\rangle} (\vec{s})= \begin{cases}
        0 \ \text{if} \ \vec{s}(v_1)\neq \vec{s}(v_2)\\
        \delta \ \text{else}.  
    \end{cases}
\end{equation}
That is, in its ground state, $\vec{s}(v_1)= \mr{not}(\vec{s}(v_2))$.

Next, we construct $S^{\delta}_{\mr{and}}$, implementing the logical and (see \cref{fig:AND}) in two steps. We first build $S^{\delta}_{\mr{nor}}$, which implements the logical $\mr{nor}$ in its ground state (see \cref{fig:NOR}).
We then use 
\begin{equation}\label{eq:norandid}
    \mr{and}(x,y) = \mr{nor}(\mr{not}(x), \mr{not}(y))
\end{equation}
to build $S^{\delta}_{\mr{and}}$ out of $S^{\delta}_{\mr{not}}$ and $S^{\delta}_{\mr{nor}}$.

\begin{figure*}[th]
    \centering
    \begin{subfigure}[t]{1.0\columnwidth}
    \caption{}\label{fig:NOR}
    \centering
    \begin{tikzpicture}[circuit logic US, cgnodeS/.style = {draw=black!80, fill=blue!5, thick, circle, minimum size= {width("$v_{i}$")+4pt}, inner sep=1pt}] 

    \node[cgnodeS] (v2) at (2,0) {$v_2$}; 
    \node[cgnodeS] (a4) at (3,0) {}; 

    \node[cgnodeS] (v1) at (-2,0) {$v_1$};
    \node[cgnodeS] (a3) at (-1,0) {};

    \node[cgnodeS] (a2) at (0,-1) {};
    \node[cgnodeS] (a7) at (1,-1) {};

    \node[cgnodeS] (a1) at (0,-3) {};
    \node[cgnodeS] (a6) at (1,-3) {};

    \node[cgnodeS] (v3) at (0,-5) {$v_3$};
    \node[cgnodeS] (a5) at (1,-5) {};

    \
    \draw[very thick, green] (v1) to [out=270,in=135]  node[midway,right] {$2\delta$}  (a1);

    \draw[very thick, green] (v1) to [out=270,in=135] node[midway, left] {$2\delta$} (v3);

    \draw[very thick, green] (a1) to node[midway,left] {$2\delta$} (a2);

    \draw[very thick, green] (v3) to node[midway, left] {$2\delta$} (a1);

    \draw[very thick, green] (v2) to [out=270,in=45] node[midway, right] {$2\delta$} (v3);

    \draw[very thick, green] (v2) to [out=270,in=-45] node[midway, below] {$2\delta$} (a2);

    \draw[violet, very thick,shifted path=from v1 to a3 by 1pt];
    \draw[orange, very thick,shifted path=from v1 to a3 by -1pt];

    \draw[violet, very thick,shifted path=from v2 to a4 by 1pt];
    \draw[orange, very thick,shifted path=from v2 to a4 by -1pt];

    \draw[violet, very thick,shifted path=from v3 to a5 by 1pt];
    \draw[orange, very thick,shifted path=from v3 to a5 by -1pt];

    \draw[violet, very thick,shifted path=from a1 to a6 by 1pt];
    \draw[orange, very thick,shifted path=from a1 to a6 by -1pt];

    \draw[violet, very thick,shifted path=from a2 to a7 by 1pt];
    \draw[orange, very thick,shifted path=from a2 to a7 by -1pt];

     \node (=) at (3,-3) {$\eqqcolon$};

    \node[nor gate, point down, info=center:$\downarrow$] (g) at (5,-3) {};

    \node[cgnodeS] (e21) at (4.5,-2) {$v_1$};
    \node[cgnodeS] (e22) at (5.5,-2) {$v_2$};
    \node[cgnodeS] (e23) at (5,-4) {$v_3$};

    \draw[-] (e21) to (g);
    \draw[-] (e22) to (g);
    \draw[-] (e23) to (g);

    \end{tikzpicture}
\end{subfigure}
    \begin{subfigure}[t]{1.0\columnwidth}
    \caption{}\label{fig:AND}
    \centering 
    \begin{tikzpicture}[circuit logic US, cgnodeS/.style = {draw=black!80, fill=blue!5, thick, circle, minimum size= {width("$v_{i}$")+4pt}, inner sep=1pt},
    ] 

    \node[cgnodeS] (v1) at (-0.5,0) {$v_1$};
    \node[cgnodeS] (v2) at (0.5,0) {$v_2$};

    \node[not gate, point down, info=center:$\neg$] (n1) at (-0.5,-1) {};
    \node[not gate, point down, info=center:$\neg$] (n2) at (0.5,-1) {};

    \draw[-] (v1) to (n1);
    \draw[-] (v2) to (n2);

    \node[nor gate, point down, info=center:$\downarrow$] (g1) at (0,-3) {};

    \node[cgnodeS] (e1) at (-0.5,-2) {};
    \node[cgnodeS] (e2) at (0.5,-2) {};
    \node[cgnodeS] (e3) at (0,-4) {$v_3$};

    \draw[-] (n1) to (e1);
    \draw[-] (n2) to (e2);

    \draw[-] (e1) to (g1);
    \draw[-] (e2) to (g1);
    \draw[-] (e3) to (g1);

    \node (=) at (2,-3) {$\eqqcolon$};

    \node[and gate, point down, info=center:$\land$] (g2) at (4,-3) {};

    \node[cgnodeS] (e21) at (3.5,-2) {$v_1$};
    \node[cgnodeS] (e22) at (4.5,-2) {$v_2$};
    \node[cgnodeS] (e23) at (4,-4) {$v_3$};

    \draw[-] (e21) to (g2);
    \draw[-] (e22) to (g2);
    \draw[-] (e23) to (g2);
    \end{tikzpicture}
\end{subfigure}
\caption{Definition and symbolic representation of spin systems $S^{\delta}_{\mr{nor}}$ (\ref{fig:NOR}) and $S^{\delta}_{\mr{and}}$ (\ref{fig:AND}). 
}\label{fig:NORAND}
\end{figure*}
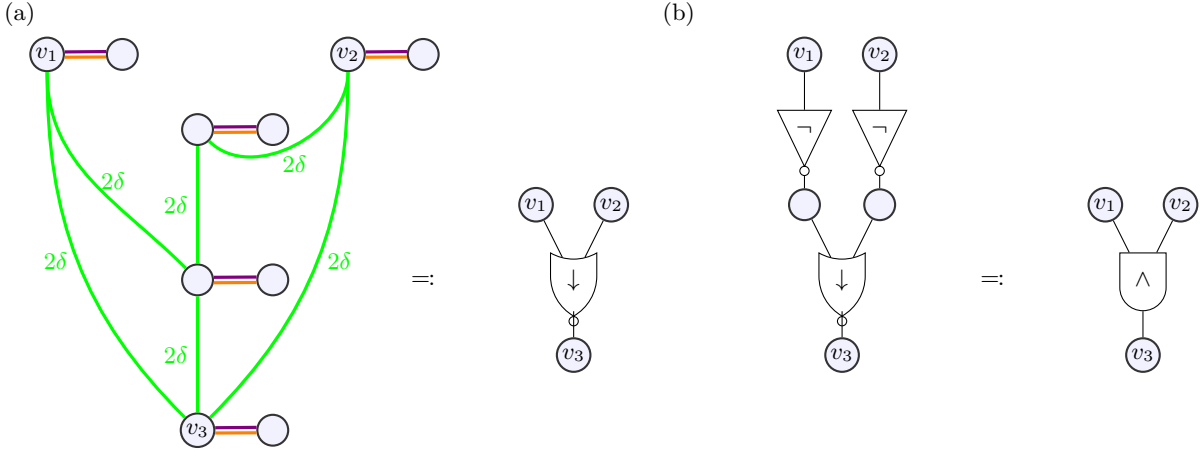

By definition (see \cref{fig:NOR}), it is clear that $S^{\delta}_{\mr{nor}} \in \Cone(\mc{B}_2) $.
Let $\vec{s}\in [2]^{\{v_1,v_2,v_3, a_1, \ldots, a_7 \}}$, where $a_1,a_2$ correspond to the two internal spins of $S^{\delta}_{\mr{nor}}$, i.e.\ those connected with a $2\delta\cdot f_{(1,1)}$ interaction and $a_3, \ldots, a_7$ correspond to the $5$ additional spins drawn on the RHS of the $\delta \cdot (f_{(2,1)+f_{(2,2)}})$ interactions. 
It can easily be seen that the ground state of $S^{\delta}_{\mr{nor}}$ consists of the following configurations
\begin{equation}
\begin{split}
   \mr{GS}_{S^{\delta}_{\mr{nor}}} \coloneqq \Bigl \{(2,2,1,2,1,\ldots),(2,1,2,1,2,\ldots),\\
   (1,2,2,2,1, \ldots), (1,1,2,2,2,\ldots) \Bigr \},
   \end{split}
\end{equation}
where the first $5$ numbers in each tuple correspond to the states of $(v_1,v_2,v_3,a_1,a_2)$ and $,\ldots$ indicates that in the ground state the states of $a_3, \ldots, a_7$ can be chosen arbitrarily, yielding a total of $4\cdot 2^5$ ground states.
Note that, in the ground state, 
\begin{equation}\label{eq:GSNOR}
    \vec{s}(v_3) = \mr{nor}\bigl(\vec{s}(v_1), \vec{s}(v_2)\bigr).
\end{equation}
Furthermore, the ground state energy of $S^{\delta}_{\mr{nor}}$ equals $3\delta$, while all other configurations have energy at least $4\delta$.

Next, we construct $S^{\delta}_{\mr{and}}$ from $S^{\delta}_{\mr{not}}$ and $S^{\delta}_{\mr{nor}}$ (see \cref{fig:AND}).
Denote the top two physical spins of $S^{\delta}_{\mr{nor}}$ in \cref{fig:AND} by $e_1$ and $e_2$. Then 
\begin{equation}
\begin{split}
    &S^{\delta}_{\mr{and}}\langle v_1,v_2,v_3, \ldots \rangle =  \\
    & = S^\delta_{\mr{nor}}\langle e_1,e_2,v_3, \ldots\rangle + S^\delta_{\mr{not}}\langle v_1,e_1 \rangle + S^\delta_{\mr{not}}\langle v_2,e_2 \rangle . 
    \end{split}
\end{equation} 
Thus, in the ground state of $S^{\delta}_{\mr{and}}$, the states of $v_1$ and $v_2$ satisfy $\vec{s}(v_i)= \mr{not}(\vec{s}(e_i))$. By \eqref{eq:GSNOR} (with $v_{1,2}$ replaced by $e_{1,2}$) and \eqref{eq:norandid}, 
\begin{equation}
\begin{split}
    \vec{s}(v_3)  
    &= \mr{nor}\bigl(\mr{not}(\vec{s}(v_1)), \mr{not}(\vec{s}(v_2))\bigr)  \\
    &=  \mr{and}\bigl(\vec{s}(v_1), \vec{s}(v_2)\bigr).
\end{split}
\end{equation}
Note that $S^{\delta}_{\mr{and}} \in \Cone(\mc{B}_2)$.
The ground state energy of $S^{\delta}_{\mr{and}}$ is $3\delta$ and all other configurations have energy at least $4\delta$.

We now use $S^{\delta}_{\mr{not}}$ and $S^{\delta}_{\mr{and}}$ to construct the emulation $\text{\sc Basis}_3$.
A generic spin system from $\mc{B}_{\geq 3}$ is of the form $T_{f_{\vec{x}}}\langle v_1, \ldots, v_l\rangle$ for some $\vec{x}\in [2]^{\{1, \ldots, l\}}$ and some $l\geq 3$.
We shall now construct a spin system $S^{\delta}_{\vec{x}}$, which will be used as the source of a simulation of $T_{f_{\vec{x}}}\langle v_1, \ldots, v_l\rangle$.
Without loss of generality (w.l.o.g.) we restrict to the case  $\vec{x}=\vec{x}_1$,
where $\vec{x}_1$ denotes the all one configuration, 
i.e.\ $\vec{x}_1(i)=1$ for all $i \in \{1,\ldots, l\}$.
The generic case can be treated by modifying the following construction
such that all spins in state $2$, i.e.\ all $i$ with $\vec{x}(i)=2$, are first negated. More precisely,
for all such $i$, we add $S^{\delta}_{\mr{not}}\langle v_i, f_i\rangle$ and replace all  occurrences of $v_i$ by $f_i$ as shown in \cref{fig:AndSim}.
The resulting ground state satisfies that $\vec{s}(f_i)=1$.

$S^{\delta}_{\vec{x}_1}$ is obtained by, first, adding spin systems $S^{\delta}_{\mr{and}}$ as shown in \cref{fig:AndSim1}.   
Denoting the top physical spins of the various copies of $S^{\delta}_{\mr{and}}$ in \cref{fig:AndSim1} by $u_1, \ldots, u_{l-1}$, from left to right, 
in the ground state $S^{\delta}_{\vec{x}_1}$ satisfies
\begin{equation}
    \begin{split}
        \vec{s}(u_1) &= \mr{and}\bigl(\vec{s}(v_1),\vec{s}(v_2)\bigr)\\
        \vec{s}(u_i) &= \mr{and}\bigl(\vec{s}(u_{i-1}),\vec{s}(v_{i+1})\bigr).
    \end{split}
\end{equation}
In total,
\begin{equation}
    \vec{s}(u_{l-1}) = \mr{and} \bigl(\vec{s}(v_1), \ldots, \vec{s}(v_l) \bigr),
\end{equation}
i.e.\ in the ground state $\vec{s}(u_{l-1})=1$ if and only if $\vec{s}(v_1, \ldots, v_l)=(1, \ldots, 1)$.

So far we have constructed a sum of $l-1$ copies of $S^{\delta}_{\mr{and}}$, each having a ground state energy of $3\delta$ and an energy gap (between ground state energy and non-groundstate energy) of $\delta$. For this reason, all ground state configurations of  $S^{\delta}_{\vec{x}_1}$ have energy $(l-1) \cdot 3\delta$, while all other configurations have energy at least $(l-1)\cdot 3\delta+\delta$.

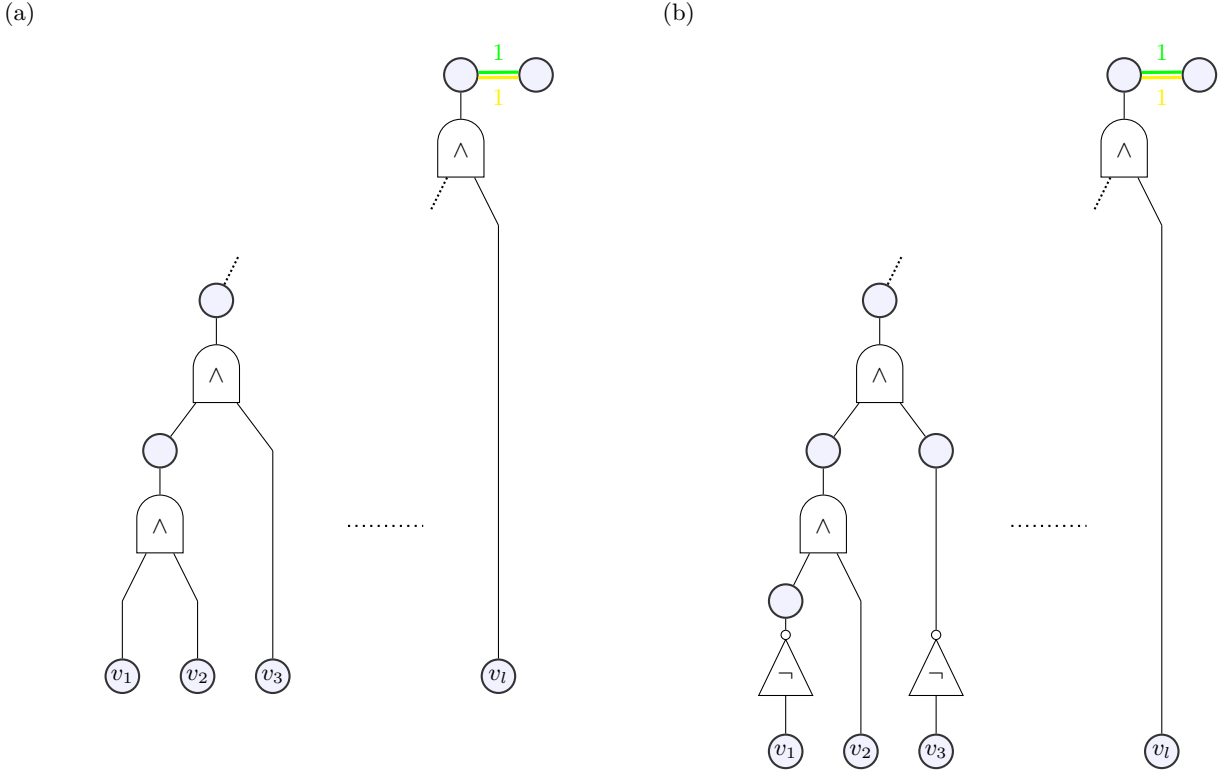
\begin{figure*}[th]
    \centering
    \begin{subfigure}[t]{1.0\columnwidth}
    \caption{}\label{fig:AndSim1}
    \centering
 \begin{tikzpicture}[circuit logic US, cgnodeS/.style = {draw=black!80, fill=blue!5, thick, circle, minimum size= {width("$v_{4}$")+4pt}, inner sep=1pt},
]
	
\node[cgnodeS] (e1) at (0,1) {$v_1$};
\node[cgnodeS] (e2) at (1,1) {$v_2$};
\node[cgnodeS] (e3) at (2,1) {$v_3$};

\draw[-] (e1) to (0,2);
\draw[-] (e2) to (1,2);
\draw[-] (e3) to (2,4);

\node[and gate,  point up, info=center:$\land$] (a1) at (0.5,3) {};
\draw[-] (0,2) to (a1);
\draw[-] (1,2) to (a1);

\node[cgnodeS] (u1) at (0.5,4) {};
\draw[-] (u1) to (a1);

\node[and gate,  point up, info=center:$\land$] (a2) at (1.25,5) {};
\draw[-] (a2) to (u1);
\draw[-] (a2) to (2,4);

\draw[dotted, thick] (3,3) to (4,3);
\node[cgnodeS] (el) at (5,1) {$v_l$};
\draw[-] (el) to (5,7);

\node[cgnodeS] (u2) at (1.25,6) {};
\draw[-] (u2) to (a2);

\node[cgnodeS] (ul) at (4.5,9) {};

\node[and gate,  point up, info=center:$\land$] (al) at (4.5,8) {};

\draw[-] (al) to (ul);

\draw[-] (al) to (5,7);
\draw[densely dotted, thick] (al) to (4.1,7.2);
\draw[densely dotted, thick] (u2) to (1.55,6.6);

\node[cgnodeS] (b) at (5.5,9) {};

\draw[green, very thick,shifted path=from ul to b by 1pt];
\draw[yellow, very thick, shifted path=from ul to b by -1pt];

\node[text = green] (s1) at (5,9.3) {$1$};
\node[text = yellow] (s1) at (5,8.7) {$1$};
	
\end{tikzpicture}
\end{subfigure}
    \begin{subfigure}[t]{1.0\columnwidth}
    \caption{}\label{fig:AndSim2}
    \centering 
   \begin{tikzpicture}[circuit logic US, cgnodeS/.style = {draw=black!80, fill=blue!5, thick, circle, minimum size= {width("$v_{4}$")+4pt}, inner sep=1pt},
]
	
\node[cgnodeS] (e1) at (0,0) {$v_1$};
\node[cgnodeS] (e2) at (1,0) {$v_2$};
\node[cgnodeS] (e3) at (2,0) {$v_3$};

\node[cgnodeS] (f1) at (0,2) {};
\node[cgnodeS] (f3) at (2,4) {};



\node[not gate,  point up, info=center:$\neg$] (n1) at (0,1) {};
\node[not gate,  point up, info=center:$\neg$] (n2) at (2,1) {};

\draw[-] (e1) to (n1);
\draw[-] (n1) to (f1);

\draw[-] (e3) to (n2);
\draw[-] (n2) to (f3);

\draw[-] (e2) to (1,2);

\node[and gate,  point up, info=center:$\land$] (a1) at (0.5,3) {};
\draw[-] (f1) to (a1);
\draw[-] (1,2) to (a1);

\node[cgnodeS] (u1) at (0.5,4) {};
\draw[-] (u1) to (a1);

\node[and gate,  point up, info=center:$\land$] (a2) at (1.25,5) {};
\draw[-] (a2) to (u1);
\draw[-] (a2) to (f3);

\draw[dotted, thick] (3,3) to (4,3);
\node[cgnodeS] (el) at (5,0) {$v_l$};
\draw[-] (el) to (5,7);

\node[cgnodeS] (u2) at (1.25,6) {};
\draw[-] (u2) to (a2);

\node[cgnodeS] (ul) at (4.5,9) {};

\node[and gate,  point up, info=center:$\land$] (al) at (4.5,8) {};

\draw[-] (al) to (ul);

\draw[-] (al) to (5,7);
\draw[densely dotted, thick] (al) to (4.1,7.2);
\draw[densely dotted, thick] (u2) to (1.55,6.6);

\node[cgnodeS] (b) at (5.5,9) {};

\draw[green, very thick,shifted path=from ul to b by 1pt];
\draw[yellow, very thick, shifted path=from ul to b by -1pt];

\node[text = green] (s1) at (5,9.3) {$1$};
\node[text = yellow] (s1) at (5,8.7) {$1$};

\end{tikzpicture}
\end{subfigure}
\caption{
Definition of spin systems $S^{\delta}_{\vec{x}_1}$ (\ref{fig:AndSim1}) and $S^{\delta}_{\vec{x}}$ for a generic $\vec{x} \in [2]^{\{1, \ldots, l\}}$ (\ref{fig:AndSim2}). 
The generic case differs from the $\vec{x}_1$ case in terms of the additional spin systems $S^{\delta}_{\mr{not}}$ which are added for all $i$ with $\vec{x}(i)=2$, i.e.\ $i\in \{1,3\}$ in the specific case of \cref{fig:AndSim2}.}\label{fig:AndSim}
\end{figure*}

To finish the construction of $S^{\delta}_{\vec{x}_1}$, we add $S_1\langle u_{l-1},b\rangle$. This leads to an energy contribution of $+1$ whenever $\vec{s}(u_{l-1})=1$. Overall, in the ground state
\begin{equation}
    H_{S^{\delta}_{\vec{x}_1}}(\vec{s}) = \begin{cases}
        (l-1)\cdot 3\delta +1 \ \text{ if } \ \vec{s}(v_1, \ldots, v_l)=(1, \ldots, 1) \\
        (l-1)\cdot 3\delta  \ \text{ if } \ \vec{s}(v_1, \ldots, v_l)\neq (1, \ldots, 1). \\
    \end{cases}
\end{equation}
It is straightforward to conclude that letting the physical spin assignment be the identity injection of $\{v_1, \ldots, v_l\}$ and the energy shift be $(l-1)\cdot 3\delta$ results in a simulation of type $S^{\delta}_{\vec{x}_1} \to T_{f_{\vec{x}_1}}$ with cut-off $ \delta$. 
The degeneracy of this simulation can be obtained from the individual degeneracies of $S^{\delta}_{\mr{and}}$ and $S_1$ as $2 \cdot (4\cdot2^5)^{l-1}$.

We now define the emulation $\text{\sc Basis}_3$ as follows:
On input $(T_{f_{\vec{x}}}\langle v_1, \ldots, v_l \rangle, \delta)$ we return $S^{\delta}_{\vec{x}}$  and the  simulation constructed above, both subject to the modifications necessary to cover the generic case instead of the special case $\vec{x}= \vec{x}_1$, outlined above and shown in \cref{fig:AndSim2}. 

It can easily be checked that this defines an emulation.
By construction, $S^{\delta}_{\vec{x}} \in \Cone(\mc{B}_2)$.
The simulation has cut-off $\delta $ and hence satisfies \cref{def:spin model sim} \ref{def:model sim right cutoff}.
Moreover, this construction is polytime computable and hence satisfies \cref{def:spin model sim} \ref{def:efficient sim}.
Changing the target cut-off of the simulation amounts to modifying the parameter $\delta$; hence, $\text{\sc Basis}_3$ satisfies \cref{def:spin model sim} \ref{def:cut-off indep}.
Finally, the constructed simulation has identity encoding, independent of the target, showing that $\text{\sc Basis}_3$ satisfies \cref{def:spin model sim} \ref{def:model sim enc}.

Finally, all three emulations, $\text{\sc Basis}_1$, $\text{\sc Basis}_2$ and $\text{\sc Basis}_3$ use identity encoding. Hence, they indeed can be combined to obtain the required emulation 
\begin{equation}
        \text{\sc Basis} \colon \Cone(\mc{B}_2) \to \mc{B}.
\end{equation}
\end{proof}

Now consider the spin model containing all spin systems of spin type $2$, $\mc{M}_2$.
We now prove that $\Cone(\mc{B}) \to \mc{M}_2$ by decomposing a generic spin system with spin type $2$ into a non-negative linear combination of spin systems from $\mc{B}$.

\begin{lemma}[decomposing into flag basis]\label{lem:basis decomp}
    \begin{equation}
        \Cone(\mc{B}) \to \mc{M}_2.
    \end{equation}
\end{lemma}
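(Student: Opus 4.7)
The plan is to decompose every spin system $T\in\mc{M}_2$ into a non-negative linear combination of flag systems and then return this decomposition together with the identity simulation.

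On input $(T,\delta)$ with $T\in\mc{M}_2$, I would proceed as follows. For each hyperedge $e\in E_T$, pick an ordering $e=(v_1^e,\ldots,v_{|e|}^e)$ and expand the local interaction in the flag basis of \cref{ex:flag basis}: since the functions $(f_{\vec{x}})_{\vec{x}\in [2]^e}$ form a basis of the space of real-valued functions on $[2]^e$, we have
\begin{equation}
J_T(e) \;=\; \sum_{\vec{x}\in [2]^e}\, \lambda_{e,\vec{x}}\cdot f_{\vec{x}}, \qquad \lambda_{e,\vec{x}} \coloneqq J_T(e)(\vec{x})\in\mathbb{R}_{\geq 0},
\end{equation}
where non-negativity of the coefficients is exactly the non-negativity condition built into \cref{def:spin sys}. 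Relabeling spins in the flag system $S_{f_{\vec{x}}}$ to match $e$ (possible because $\mc{B}$ is closed under isomorphisms), we obtain the formal expression
\begin{equation}
\mr{expr}_T \;\coloneqq\; \bigl( (\lambda_{e,\vec{x}},\, S_{f_{\vec{x}}}\langle v_1^e,\ldots,v_{|e|}^e\rangle)\bigr)_{e\in E_T,\,\vec{x}\in [2]^e} \;\in\; \Cone(\mc{B}).
\end{equation}
By \cref{lem:can sys} combined with \cref{def:sum} and \cref{def:scaling}, the evaluation of $\mr{expr}_T$ is precisely $T$.

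The emulation then outputs $\mr{expr}_T$ together with the identity simulation $\mr{eval}(\mr{expr}_T)\to T$ with cut-off $\delta$, which exists by \cref{lem:iso sim} applied to the identity isomorphism. I would now verify the four conditions of \cref{def:spin model sim} (adapted to hulls via \cref{def:emulation mod}): condition \ref{def:model sim right cutoff} is immediate; condition \ref{def:efficient sim} holds because computing $\mr{expr}_T$ amounts to reading off the $\sum_{e}2^{|e|}=|T|$ values of the local interactions and tagging each with the corresponding flag label, which is linear in $|T|$; condition \ref{def:cut-off indep} holds because $\mr{expr}_T$ does not depend on $\delta$ and only the cut-off of the (identity) output simulation is $\delta$-dependent; and condition \ref{def:model sim enc} holds because the output simulation has trivial encoding and decoding, depending only on $q_T=2$.

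I do not expect any substantive obstacle. The only subtle point is bookkeeping: the elements of $\Cone(\mc{B})$ are formal expressions rather than spin systems, so one must be careful that it is $\mr{eval}(\mr{expr}_T)$, and not $\mr{expr}_T$ itself, that is the source of the simulation — and to ensure, via the relabeling of the flag systems, that this evaluation agrees with $T$ (not merely up to isomorphism, so that the simulation really is the identity). Everything else is a direct application of the fact that flag functions form a basis of real functions on $[2]^e$ together with the non-negativity of local interactions.
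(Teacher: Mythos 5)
Your proposal is correct and follows essentially the same route as the paper: decompose $T$ into canonical spin systems on each hyperedge, expand each local interaction in the flag basis with non-negative coefficients $\lambda_{e,\vec{x}}=J_T(e)(\vec{x})$, and return the resulting formal expression together with the identity simulation. You are just slightly more explicit than the paper about the relabeling of the flag systems and about verifying the four conditions of the emulation definition, but the underlying argument is identical.
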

\begin{proof}
Given a spin system $T$ from $\mc{M}_2$, we first decompose it into canonical spin systems
\begin{equation}
    T = \sum_{e \in E_T} T_{J_T(e)}.
\end{equation}
This decomposition holds by definition.
Now recall that $J_T(e)\colon [2]^e \to \mathbb{R}_{\geq 0}$.
Thus, given $e\in E_T, \vec{x}\in [2]^e$, defining $\lambda_{e,\vec{x}}\coloneqq J_T(e)(\vec{x})$ we obtain 
\begin{equation}\label{eq: lin comb}
    T = \sum_{e\in E_T} \sum_{\vec{x} \in [2]^e} \lambda_{e, \vec{x}} \cdot T_{f_{\vec{x}}}.
\end{equation}

Since $T_{f_{\vec{x}}} \in \mc{B}$ and $\lambda_{e,\vec{x}}\geq 0$ the required emulation can be constructed as follows: given $(T, \delta)$, return the formal linear combination corresponding to \cref{eq: lin comb} 
and the identity simulation $T \to T$.
This trivially satisfies \cref{def:spin model sim}. \end{proof}

\subsubsection{Characterization of Universality} \label{sssec:charac univ}

We are finally ready to characterize universal spin models. 

\begin{theorem}[characterization of universality]\label{thm:main}
    A spin model $\mc{M}$ is universal if and only if it is closed, scalable and functional complete.
\end{theorem}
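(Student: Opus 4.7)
The plan is to prove the two directions of the equivalence separately. The forward direction (closed, scalable, and functional complete implies universal) does essentially all the work by assembling a chain of emulations out of the preceding lemmas, while the reverse direction follows directly from the definition of universality together with trivial emulations. For the forward direction, suppose $\mc{M}$ is closed, scalable, and functional complete. \cref{lem:closed scalable f.c.} gives $\mc{M} \to \Cone(\mc{B}_2)$. Lifting the flag emulation $\text{\sc Basis} \colon \Cone(\mc{B}_2) \to \mc{B}$ of \cref{lem:f.c.} by $\Cone$ via the generalized form of \cref{lem:sim model cone} described in \cref{rem:mod  lift} yields $\Cone^2(\mc{B}_2) \to \Cone(\mc{B})$. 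The normal-form emulation $\Cone(\mc{B}_2) \to \Cone^2(\mc{B}_2)$ from \eqref{eq:hull identities}, composed with the previous step and then with the decomposition $\Cone(\mc{B}) \to \mc{M}_2$ of \cref{lem:basis decomp} (where $\mc{M}_2$ denotes all spin-type-$2$ systems), yields $\Cone(\mc{B}_2) \to \mc{M}_2$. Composing again with $\mc{M} \to \Cone(\mc{B}_2)$ produces $\mc{M} \to \mc{M}_2$. To reach arbitrary spin type, let $\mc{S}$ be any set of spin systems: \cref{lem:bin sim model} provides a binary emulation $\mc{S}_{\mr{bin}} \to \mc{S}$, and since $\mc{S}_{\mr{bin}} \subseteq \mc{M}_2$ the trivial inclusion emulation gives $\mc{M}_2 \to \mc{S}_{\mr{bin}}$; composing all pieces via \cref{thm:model sim trans} then delivers $\mc{M} \to \mc{S}$, whence $\mc{M}$ is universal.

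For the reverse direction, assume $\mc{M}$ is universal, so $\mc{M}$ emulates every set of spin systems. Functional completeness demands $\Cone(\mc{M}) \to \mc{B}_2$; this holds because the universal emulation $\mc{M} \to \mc{B}_2$ can be post-processed so that its source $S$ is returned as the one-term formal expression $((1, S)) \in \Cone(\mc{M})$, whose evaluation is $S$ itself. Closure requires $\mc{M} \to \Sigma(\mc{M})$; given an input $(T_1, \ldots, T_k, \delta) \in \Sigma(\mc{M})$, evaluating the formal sum to the spin system $T = \sum_i T_i$ is polytime (\cref{rem:poly}), and applying the universal emulation on $(T, \delta)$ produces the required source and simulation. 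Scalability, $\mc{M} \to \Scale(\mc{M})$, is handled identically, by evaluating $(\lambda, T)$ to the spin system $\lambda \cdot T$ and invoking universality.

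The principal obstacle is accounting: one must verify that the hull liftings, normal-form identities and composition rules chain correctly through the various nested hull operators that appear, in particular that the generalized lifting of \cref{rem:mod  lift} applies to $\text{\sc Basis}$ and that its output type meshes with the source type of \cref{lem:basis decomp}. The conceptual content is already packaged into the supporting lemmas, so the theorem essentially records that these lemmas combine cleanly to yield universality, and that conversely universality is strong enough to force the three constituent properties.
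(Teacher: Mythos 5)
Your proof is correct and follows essentially the same route as the paper's: the same chain of emulations $\mc{M} \to \Cone(\mc{B}_2) \to \Cone(\mc{B}) \to \mc{M}_2 \to \mc{S}_{\mr{bin}} \to \mc{S}_{\mr{all}}$ assembled via \cref{lem:closed scalable f.c.}, the $\Cone$-lift of \cref{lem:f.c.}, \cref{lem:basis decomp}, and \cref{lem:bin sim model}, with the converse handled by pre- and post-processing the universal emulation exactly as the paper does. The only cosmetic difference is that you bundle the spin-type-reduction steps as $\Cone(\mc{B}) \to \mc{M}_2 \to \mc{S}_{\mr{bin}} \to \mc{S}$ whereas the paper compresses the middle inclusion into $\Cone(\mc{B}) \to (\mc{S}_{\mr{all}})_{\mr{bin}}$ directly; and your explicit remark that the converse for functional completeness requires wrapping the source as a one-term formal expression in $\Cone(\mc{M})$ is a small but welcome elaboration of a point the paper leaves implicit.
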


\begin{proof}
First, by \cref{lem:bin sim model}, 
\begin{equation}\label{eq:emu -1}
    (\mc{S}_{\mr{all}})_{\mr{bin}} \to \mc{S}_{\mr{all}}.
\end{equation}
Since $(\mc{S}_{\mr{all}})_{\mr{bin}} \subseteq \mc{M}_2$, and by \cref{lem:basis decomp} $\Cone(\mc{B}) \to \mc{M}_2 $, we have 
\begin{equation}\label{eq:emu 0}
    \Cone(\mc{B}) \to (\mc{S}_{\mr{all}})_{\mr{bin}}.
\end{equation}

Next, recall from \cref{lem:f.c.} that 
\begin{equation}\label{eq:emu 1}
    \Cone(\mc{B}_2) \to \mc{B}.
\end{equation}
Applying \cref{lem:sim model cone} (see also \cref{rem:mod  lift}) to this emulation
yields
\begin{equation}
    \Cone^2(\mc{B}_2) \to \Cone(\mc{B}).
\end{equation}
By \eqref{eq:hull identities} for any $\mc{S}$, $\Cone(\mc{S}) \to \Cone^2(\mc{S})$. Together with \cref{thm:model sim trans} we obtain 
\begin{equation}\label{eq:emu 2}
    \Cone(\mc{B}_2) \to \Cone(\mc{B}).
\end{equation}

Finally, recall from \cref{lem:closed scalable f.c.}, that since $\mc{M}$ is closed, scalable and functional complete, 
\begin{equation}\label{eq:emu 3}
    \mc{M} \to \Cone(\mc{B}_2).
\end{equation}
Composing the four emulations \eqref{eq:emu -1} \eqref{eq:emu 0}, \eqref{eq:emu 2} and \eqref{eq:emu 3} (by applying \cref{thm:model sim trans}) yields 
\begin{equation}
    \mc{M} \to \mc{S}_{\mr{all}}.
\end{equation}
The situation is illustrated in the following diagram, with $\to$ representing emulations (that can be composed by \cref{thm:model sim trans}) and the arrows $\Rightarrow$ representing the $\Cone$-lift (according to \cref{lem:sim model cone}).
\begin{center}
	     \begin{tikzcd}[column sep = small]
	          \mc{M} \arrow[r, "\ref{lem:closed scalable f.c.}"] & \Cone(\mc{B}_2) \arrow[r, " \ref{lem:sim model cone}"] &
           \Cone(\mc{B}) \arrow[r, "\ref{lem:basis decomp}"] & (S_{\mr{all}})_{\mr{bin}} \arrow[r, "\ref{lem:bin sim model}"] &
           \mc{S}_{\mr{all}}\\
            & \Cone(\mc{B}_2) \arrow[u, Rightarrow, "\Cone"] \arrow[r, " \ref{lem:f.c.}"] & \mc{B} \arrow[u, Rightarrow, "\Cone"] &  & 
	     \end{tikzcd}
\end{center}

Conversely, if $\mc{M}$ is universal then by definition
\begin{equation}
    \mc{M} \to \mc{S}_{\mr{all}}.
\end{equation}
Since $\mc{B}_2 \subseteq \mc{S}_{\mr{all}}$ we trivially have $\mc{M} \to \mc{B}_2$ and hence functional completeness of $\mc{M}$.

In order to conclude that $\mc{M}$ is closed we construct $\mc{M} \to \Sigma(\mc{M})$. Given spin systems  $(T_1, \ldots, T_k)$ from $\mc{M}$  and a cut-off $\delta$, we first evaluate their sum, $T\coloneqq  \sum_{i=1}^kT_i$. Note that  this is polytime computable and that $T \in \mc{S}_{\mr{all}}$ trivially. We then run the emulation $\mc{M} \to \mc{S}_{\mr{all}}$ on input $(T,\delta)$ to obtain a spin system from $\mc{M}$ and a simulation with target $T$.
Similarly,  scalability of $\mc{M}$ follows by given $(\lambda, T, \delta)$, running  $\mc{M} \to \mc{S}_{\mr{all}}$ on $(\lambda\cdot T, \delta)$, using that also $\lambda \cdot T$ is polytime computable.
\end{proof}

\begin{corollary}\label{cor: universality with locally closed}
    If $\mc{M}$ is locally closed, scalable, and functional complete then $\mc{M}$ is universal.
\end{corollary}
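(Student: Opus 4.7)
The plan is to observe that this corollary is an immediate consequence of two results already proved in the excerpt, and to spell out how they chain together. The main characterization of universality, \cref{thm:main}, states that universality is equivalent to the conjunction of closure, scalability, and functional completeness. Meanwhile, \cref{thm:locally closed} asserts that local closure is a sufficient condition for closure. Putting these two facts together is straightforward.

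Concretely, I would first invoke \cref{thm:locally closed} on the hypothesis that $\mc{M}$ is locally closed to deduce that $\mc{M}$ is closed. At this point, all three properties appearing in the hypothesis of \cref{thm:main} are satisfied: $\mc{M}$ is closed (just obtained), scalable (by hypothesis), and functional complete (by hypothesis). Applying \cref{thm:main} then yields that $\mc{M}$ is universal, which is precisely the conclusion of the corollary.

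There is no real obstacle here: the entire content of the corollary is that local closure can be substituted for closure in the characterization of universality, and this substitution is licensed by \cref{thm:locally closed}. The only thing worth noting is that the emulation $\mc{M} \to \mc{M} + J(\mc{M})$ arising from local closure has identity encoding, which is needed in the proof of \cref{thm:locally closed} to apply \cref{thm:sim sum} iteratively; this is already handled inside that proposition, so no additional work is required at the level of the corollary itself.
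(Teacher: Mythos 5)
Your proof is correct and follows exactly the same route as the paper: apply \cref{thm:locally closed} to upgrade local closure to closure, then invoke \cref{thm:main}. The extra remark about identity encoding is accurate but, as you note, already internal to \cref{thm:locally closed} and not needed at this level.
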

\begin{proof}
    By \cref{thm:locally closed} $\mc{M}$ is closed, and by \cref{thm:main} universal.
\end{proof}

\section{Consequences of Universality}\label{sec:consequences}

In this section we derive several consequences of spin model emulations and universality. 
Building on the results of \cref{ssec:spin model simulation}, we show that if $\mc{M} \to \mc{M}'$ then, from the emulation we can construct the following polytime computable reductions: 
\begin{itemize}
    \item[$\vartriangleright$] From the ground state energy problem of $\mc{M}'$ to that of $\mc{M}$ (\cref{ssec:gse});
    \item[$\vartriangleright$]  From the partition function problem of $\mc{M}'$ to that of $\mc{M}$ (\cref{ssec:partfun}); and 
    \item[$\vartriangleright$]  From the approximate sampling problem of $\mc{M}'$ to that of $\mc{M}$ (\cref{ssec:sample}).
\end{itemize} 
That is, if $\mc{M}$ emulates $\mc{M}'$ and we have an algorithm that solves one of the above problems for $\mc{M}$, then we can construct an algorithm that solves that problem for $\mc{M}'$. 
The benefit is twofold. First, emulations serve as tools to solve these computational problems. 
Secondly, hardness results for the above problems immediately extend to all universal spin models: If there exists a spin model for which one of the above problems is hard then this problem must be hard for all universal spin models. 
It follows that for universal spin models, the complexity of the above problems is maximal, i.e.\ there do not exist spin models of greater complexity.

\subsection{Ground State Energy}\label{ssec:gse}

We start by considering the ground state energy problem in its decision form.

\begin{definition}[ground state energy problem] \label{def:GSE}
    Let $\mc{M}$ be a spin model. The \emph{ground state energy problem} of $\mc{M}$, $\text{\sc Gse}_{\mc{M}}$, is the decision problem 
    \begin{equation}
        \text{\sc Gse}_{\mc{M}} \coloneqq \{(S,k) \mid\\
          S \in \mc{M}, k \in \mathbb{R}, 
          \mr{min}(H_S)<k\}. 
    \end{equation}  
\end{definition} 

In words, $\text{\sc Gse}_{\mc{M}}$ is the following decision problem:
\begin{quote}
        Given a spin system $S \in \mc{M}$ and a real number $k$, is the ground state energy of $S$ smaller than $k$?
\end{quote}
We follow the convention of defining decision problems in terms of their yes instances.

\begin{theorem}\label{thm:sim GSE}
    Let $\mc{M}, \mc{M}'$ be spin models. 
    An emulation $\text{\sc Emu} \colon \mc{M} \to \mc{M}'$ induces a polytime reduction
    \begin{equation}
        r \colon \text{\sc Gse}_{\mc{M}'} \to \text{\sc Gse}_{\mc{M}}.
    \end{equation}
\end{theorem}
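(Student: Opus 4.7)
The plan is to construct the reduction $r$ directly from the emulation, using spectrum preservation of the underlying spin system simulations (\cref{lem:sim spectrum}). On input $(T,k)\in\mc{M}'\times \mathbb{R}$, the reduction runs the emulation at cut-off $\delta\coloneqq k$, i.e.\ computes
\begin{equation}
    (S,\mc{f})\coloneqq \text{\sc Emu}(T,k),
\end{equation}
and outputs the pair $r(T,k)\coloneqq (S,\, k+\Shift_{\mc{f}})$. Since $\mc{f}\colon S\to T$ has cut-off $k$ by \cref{def:spin model sim} \ref{def:model sim right cutoff}, and $S\in\mc{M}$, this is a valid instance of $\text{\sc Gse}_{\mc{M}}$.

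For correctness I need the equivalence $\mr{min}(H_T)<k \Leftrightarrow \mr{min}(H_S)<k+\Shift_{\mc{f}}$. By \cref{lem:sim spectrum} applied to $\mc{f}$,
\begin{equation}
    \mr{Im}(H_T)_{<k}=\mr{Im}(H_S-\Shift_{\mc{f}})_{<k}.
\end{equation}
If $\mr{min}(H_T)<k$, then $\mr{min}(H_T)$ lies in the left-hand set, hence in the right-hand set, so some configuration $\vec s$ has $H_S(\vec s)-\Shift_{\mc{f}}=\mr{min}(H_T)<k$; conversely no configuration of $S$ can realise a value of $H_S-\Shift_{\mc{f}}$ strictly below $\mr{min}(H_T)$ (anything $<k$ lies in $\mr{Im}(H_T)$, and anything $\geq k$ is already $>\mr{min}(H_T)$). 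Thus $\mr{min}(H_S)-\Shift_{\mc{f}}=\mr{min}(H_T)<k$. Conversely, if $\mr{min}(H_T)\geq k$, the left-hand set is empty, so is the right-hand set, hence $\mr{min}(H_S)-\Shift_{\mc{f}}\geq k$. This gives the required equivalence.

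For the polytime bound, $\text{\sc Emu}$ runs in time $\mr{poly}(\vert T\vert)$ by \cref{def:spin model sim} \ref{def:efficient sim}, and constructing $r(T,k)$ only appends the single real number $k+\Shift_{\mc{f}}$ to the returned spin system, which is constant-cost in the real-RAM conventions of \cref{rem:poly}. The encoding of the output instance is therefore polynomial in the encoding of $(T,k)$.

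The only subtlety is the choice of cut-off: picking $\delta=k$ is just large enough so that the true minimum of $H_T$, if it is below $k$, falls inside the low-energy window where \cref{lem:sim spectrum} applies, and just small enough so that any new source configuration created above the cut-off cannot spuriously dip below $k+\Shift_{\mc{f}}$. I do not anticipate any real obstacle beyond getting this bookkeeping right; the theorem is essentially a direct translation of spectrum preservation from spin systems to spin models, together with the polytime clause built into \cref{def:spin model sim}.
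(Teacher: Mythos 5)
Your proposal is correct and takes essentially the same approach as the paper: the reduction $r(T,k)=(S,\,k+\Shift_{\mc{f}})$ with the emulation run at cut-off $\delta=k$, correctness via \cref{lem:sim spectrum}, and polytime inherited from \cref{def:spin model sim}. You spell out the spectrum equivalence in slightly more detail than the paper does, but the construction and key lemma are identical.
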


\begin{proof}
    Given an instance $(T,k)$ of $\text{\sc Gse}_{\mc{M}'}$, we compute $\text{\sc Emu}(T,k)$.
    This yields a spin system $S(T,k)\in \mc{M}$ and a simulation 
    \begin{equation}
        \mc{f}(T,k) \colon S(T,k) \to T
    \end{equation}
    with cut-off $k$.
    We construct the reduction $r$ as
    \begin{equation}
        r(T,k) \coloneqq (S(T,k),k +\Shift_{\mc{f}(T,k)}).
    \end{equation}
    By \cref{lem:sim spectrum}, below the cut-off $k$ the spectra of $H_T$ and $H_{S(T,k)}-\Shift_{\mc{f}(T,k)}$ agree.
   In particular, $T$ has a configuration with energy less than $k$ if and only if $S(T,k)$ has a configuration with energy less than $k+\Shift_{\mc{f}(T,k)}$.
   Finally, $r$ is polytime since so is $\text{\sc Emu}$.
\end{proof}

Note that we do not only prove the existence of a reduction, but 
given an emulation $\mc{M}\to \mc{M}'$, we construct the reduction $r$. 
In fact, the construction of $r$ in the proof of \cref{thm:sim GSE} works for any emulations. 
This will also be the case for further reductions that build on spin model emulations (see \cref{thm:sim FGSE}, \cref{thm:sim part fun}, \cref{thm:approx sampl}).

\begin{corollary}\label{cor: univ GSE}
    If $\mc{M}$ is universal then $ \text{\sc Gse}_{\mc{M}}$ is $\text{\sc NP}$-hard.
\end{corollary}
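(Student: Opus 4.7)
The plan is to combine universality with the reduction machinery of \cref{thm:sim GSE} and a known $\text{\sc NP}$-hardness result for the ground state energy problem of some specific spin model. Since $\mc{M}$ is universal, by \cref{def:universality} there exists an emulation $\text{\sc Emu}\colon \mc{M} \to \mc{S}_{\mr{all}}$, and in particular, for any spin model $\mc{M}'$, restricting $\text{\sc Emu}$ to inputs from $\mc{M}'$ yields an emulation $\mc{M} \to \mc{M}'$.

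Next I would invoke Barahona's theorem (\cite{Ba82b}, mentioned in the introduction), which states that $\text{\sc Gse}_{\mc{I}_{2\mr{d},f}}$ for the 2d Ising model with fields is $\text{\sc NP}$-hard. Applying \cref{thm:sim GSE} to the emulation $\mc{M} \to \mc{I}_{2\mr{d},f}$ yields a polytime reduction
\begin{equation}
r \colon \text{\sc Gse}_{\mc{I}_{2\mr{d},f}} \to \text{\sc Gse}_{\mc{M}}.
\end{equation}
Composing with Barahona's $\text{\sc NP}$-hardness reduction (from an arbitrary $\text{\sc NP}$ problem into $\text{\sc Gse}_{\mc{I}_{2\mr{d},f}}$) gives the required polytime reduction from any $\text{\sc NP}$ problem into $\text{\sc Gse}_{\mc{M}}$, establishing $\text{\sc NP}$-hardness of the latter.

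I do not expect significant obstacles: the two ingredients --- the emulation-to-reduction lifting (\cref{thm:sim GSE}) and Barahona's hardness --- combine essentially immediately. The only minor point requiring care is verifying that $\mc{I}_{2\mr{d},f}$ is indeed a spin model in the sense of \cref{def:spin model} (which follows from \cref{ex:2d Ising}) so that the universality of $\mc{M}$ applies to it, and that polytime reductions compose, which is standard.
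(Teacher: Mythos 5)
Your proof is correct and follows essentially the same route as the paper: cite Barahona's NP-hardness of the 2d Ising ground state energy problem, use universality to get an emulation $\mc{M} \to \mc{I}_{2\mr{d},f}$, and apply \cref{thm:sim GSE} to obtain the reduction. (The paper's own proof writes $\mc{I}_{2\mr{d}}$, but clearly intends the model with fields as in Barahona; your more careful bookkeeping is a minor improvement, not a difference of approach.)
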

\begin{proof}
     By \cite{Ba82b}‚ $\text{\sc Gse}_{\mc{I}_{2d}}$ is $\text{\sc NP}$-hard. Since $\mc{M}$ is universal, $\mc{M} \to \mc{I}_{2d}$, which by \cref{thm:sim GSE} yields a reduction from $\text{\sc Gse}_{\mc{I}_{2d}}$ to $\text{\sc Gse}_{\mc{M}}$.
\end{proof}

We now derive a similar result for the function problem of computing ground state configurations.
\begin{definition}[ground state function problem] \label{def:FGSE}
    Let $\mc{M}$ be a spin model. The \emph{ground state function problem} of $\mc{M}$, $\text{\sc Fgse}_{\mc{M}}$, is the following function problem: Given $S \in \mc{M}$, compute a configuration from  
    \begin{equation}
       \mr{GS}(S)  \coloneqq H_S^{-1}(\mr{min}(H_S)).
    \end{equation}
    \end{definition}

    A solution to $\text{\sc Fgse}_{\mc{M}}$ is an algorithm that takes $S \in \mc{M}$ as input and outputs a ground state configuration of $S$.

\begin{theorem}\label{thm:sim FGSE}
    Let $\mc{M}, \mc{M}'$ be spin models. 
    An emulation $\text{\sc Emu} \colon \mc{M} \to \mc{M}'$ induces a polytime reduction
    \begin{equation}
        (f,g) \colon \text{\sc Fgse}_{\mc{M}'} \to \text{\sc Fgse}_{\mc{M}}.
    \end{equation}
\end{theorem}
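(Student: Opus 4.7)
The plan is to mimic the construction from \cref{thm:sim GSE} but apply \cref{lem:ground state} to transport ground states from the source back to the target. Given $T \in \mc{M}'$, I would first compute a polytime upper bound on $\min(H_T)$. A cheap choice is $\delta(T) \coloneqq 1 + \sum_{e \in E_T} \max_{\vec{x} \in [q_T]^e} J_T(e)(\vec{x})$, which is clearly polytime in $\vert T \vert$ and satisfies $\delta(T) > \max(H_T) \geq \min(H_T)$. (Evaluating $H_T$ on any fixed configuration would also work.)

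With this cut-off, I would define the input transformation as $f(T) \coloneqq \text{\sc Emu}^{(1)}(T, \delta(T))$, which outputs some $S \in \mc{M}$, and set
\begin{equation}
    \mc{f}(T) \coloneqq \text{\sc Emu}^{(2)}(T, \delta(T)) \colon S \to T,
\end{equation}
a simulation with cut-off $\delta(T) > \min(H_T)$. The solution transformation is
\begin{equation}
    g(T, \vec{s}) \coloneqq \dec_{\mc{f}(T)} \circ \vec{s} \circ \phys_{\mc{f}(T)},
\end{equation}
where $\dec_{\mc{f}(T)}$ and $\phys_{\mc{f}(T)}$ are extracted from $\mc{f}(T)$ (which is recomputed in polytime by running $\text{\sc Emu}$ once more, or alternatively $g$ can be given $\mc{f}(T)$ as auxiliary output of $f$). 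By \cref{lem:ground state}, if $\vec{s} \in \mr{GS}(S)$ then $g(T, \vec{s}) \in \mr{GS}(T)$, so $(f,g)$ is a valid reduction in the sense that any oracle solving $\text{\sc Fgse}_{\mc{M}}$ on $f(T)$ yields, via $g$, a solution to $\text{\sc Fgse}_{\mc{M}'}$ on $T$.

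For polynomiality: $f$ is polytime because $\delta(T)$ is polytime in $\vert T \vert$ and $\text{\sc Emu}$ is polytime by \cref{def:spin model sim} \ref{def:efficient sim}; $g$ is polytime because it only requires reading off $\dec$ and $\phys$ from the simulation (polytime in $\vert T \vert$ by \cref{rem:poly}) and evaluating one composition of functions on $\vec{s}$, which has length polynomial in $\vert S \vert = \mr{poly}(\vert T \vert)$. The main (minor) subtlety is ensuring the cut-off strictly exceeds $\min(H_T)$ without solving the problem we are reducing from; this is handled by replacing $\min(H_T)$ with any trivial, efficiently computable upper bound on the spectrum, as above. Everything else is a direct application of \cref{lem:ground state} together with the definition of emulation.
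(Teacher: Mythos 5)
Your proposal matches the paper's proof essentially step for step: compute a polytime upper bound on $H_T$ by summing local maxima, use it as the input cut-off to $\text{\sc Emu}$, and transport the ground state back via $\dec \circ \vec{s} \circ \phys$ using \cref{lem:ground state}. The only cosmetic difference is that you add $1$ to the bound to make the cut-off strictly exceed $\max(H_T)$ (and hence $\min(H_T)$), which is a slightly cleaner way to ensure the hypothesis of \cref{lem:ground state}; the paper uses $k_T = \sum_e \max(J_T(e))$ directly and also explicitly handles the degenerate input where $\vec{s}$ is not a configuration on the constructed $S$, which you omit but which is a minor technicality.
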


Note that a reduction between function problems consists of two functions $(f,g)$. In a reduction between decision problems, a single function $r$ sufficed, as the output of a solution consists of a single bit (yes or no). 
For function problems, the output of a solution is non-trivial, and there is freedom in modifying the output, which is exactly the role of $g$.
Other than this, reductions between function problems work similarly as between decision problems. If $(f,g)$ is a reduction from function problem $A$ to function problem $B$, then a solution for $B$ can be turned into a solution for $A$: On input $a$ we apply the solution for $B$ to $f(a)$ to obtain an output $s$ that solves $f(a)$, then by construction $g(a,s)$ solves $a$. 

\begin{proof}
    The proof is similar to that of \cref{thm:sim GSE}.
    We construct two polytime computable functions $f$ and $g$,
    \begin{equation}
        f\colon \mc{M}' \to \mc{M}
    \end{equation}
    and 
    \begin{equation}
        g\colon \mc{M}' \times \mc{C}_{\mc{M}} \to \mc{C}_{\mc{M}'}
    \end{equation}
    that satisfy
    \begin{equation}\label{eq:red fun cond}
        \vec{s} \in \mr{GS}(f(T)) \Rightarrow g(T,\vec{s}) \in \mr{GS}(T),
    \end{equation}
    where $\mc{C}_{\mc{M}} \coloneqq \bigcup_{S\in \mc{M}}\mc{C}_S$.
    
    Given $T\in \mc{M}'$ we first compute an upper bound $k_T$ on $H_T$ by defining
    \begin{equation}
        k_T \coloneqq \sum_{e \in E_T}\mr{max}(J_T(e)).
    \end{equation}
    While computing the exact maximum (or minimum) of $H_T$ might be NP-hard, computing an upper (or lower) bound for $H_T$ by maximization (or minimization) of local terms is polytime computable, as the latter  amounts to an iteration over $e \in E_T$, where each step consists of a search over $\vert [q]^e \vert $ parameters and by definition of $\vert T \vert$,
    \begin{equation}
      \sum_{e\in E_T} \vert  [q_T]^e \vert  < \vert T \vert . 
    \end{equation}
    
    Now we compute $\text{\sc Emu}(T,k_T)$.
    This yields a spin system $S(T, k_T)\in \mc{M}$ and a simulation 
    \begin{equation}
        \mc{f}(T,k_T) \colon S(T, k_T) \to T
    \end{equation}
     with cut-off $k_t\geq \mr{max}(H_T)$. By \cref{lem:ground state}, we have for $\vec{s}\in GS_{\mc{M}}(S(T,k_T))$, 
    \begin{equation}
        \dec_{\mc{f}(T, k_T)} \circ \vec{s} \circ \phys_{\mc{f}(T, k_T)}\in GS_{\mc{M}'}(T).
    \end{equation}
    We define
    \begin{equation}
    \begin{split}
                f(T) &\coloneqq S(T, k_T),\\
                g(T,\vec{s}) &\coloneqq \dec_{\mc{f}(T, k_T)} \circ \vec{s} \circ \phys_{\mc{f}(T, k_T)}.
    \end{split}
    \end{equation}
    This is polytime computable, since  $k_T$, $\text{\sc Emu}$ and composing $\vec{s}$ with $\dec_{\mc{f}(T, k_T)}$ and $\phys_{\mc{f}(T, k_T)}$ are too.
    Moreover, by \cref{lem:ground state} this satisfies \cref{eq:red fun cond}.
    Thus, $(f,g)$ yields a reduction.

    Finally note that, strictly speaking, the right hand side of the definition of $g$ is only defined for $\vec{s}$ being a configuration on $V_{S(T,k_T)}$.
    If this is not the case, the left hand side of the implication of \eqref{eq:red fun cond} evaluates to wrong, and thus we can define $g$ arbitrarily on such inputs $(T,\vec{s})$. 
    Since $S(T,k_T)$ can be computed in polytime, checking whether $\vec{s}$ is a configuration on $V_{S(T,k_T)}$ and hence the modified definition of $g$ are polytime computable, too. 
\end{proof}

\begin{corollary}\label{cor:univ FGSE}
    If $\mc{M}$ is universal then $\text{\sc Fgse}_{\mc{M}}$ is $\text{\sc FNP}$-hard.
\end{corollary}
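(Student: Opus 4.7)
The plan is to mirror the proof of \cref{cor: univ GSE} but at the level of function problems, using \cref{thm:sim FGSE} instead of \cref{thm:sim GSE}. The starting point is that the ground state problem for the 2d Ising model with fields is hard: Barahona's result \cite{Ba82b} shows $\text{\sc Gse}_{\mc{I}_{2d}}$ is $\text{\sc NP}$-hard, and by a standard self-reduction (or by observing that the decision version reduces to the function version), $\text{\sc Fgse}_{\mc{I}_{2d}}$ is $\text{\sc FNP}$-hard.

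Then, since $\mc{M}$ is universal (in the sense of \cref{def:universality}), it emulates every spin model, and in particular $\mc{M}\to\mc{I}_{2d}$. Applying \cref{thm:sim FGSE} to this emulation yields a polytime reduction $(f,g)\colon\text{\sc Fgse}_{\mc{I}_{2d}}\to\text{\sc Fgse}_{\mc{M}}$. Composing this reduction with the $\text{\sc FNP}$-hardness of $\text{\sc Fgse}_{\mc{I}_{2d}}$ shows that any $\text{\sc FNP}$ function problem reduces in polynomial time to $\text{\sc Fgse}_{\mc{M}}$, establishing the claim.

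I do not anticipate any significant obstacle: all the heavy lifting was done in \cref{thm:sim FGSE} (preserving ground states through an emulation by choosing a cut-off above $\max(H_T)$ and decoding via $\dec$ and $\phys$) and in the classical hardness result for the 2d Ising model with fields. The only point that deserves a brief comment is why $\text{\sc Fgse}_{\mc{I}_{2d}}$ is $\text{\sc FNP}$-hard rather than merely ``witness-$\text{\sc NP}$-hard'': one argues that any solver for the function version immediately decides the decision version (compare $H_S$ of the returned ground state against $k$), so $\text{\sc NP}$-hardness of $\text{\sc Gse}_{\mc{I}_{2d}}$ transfers to $\text{\sc FNP}$-hardness of $\text{\sc Fgse}_{\mc{I}_{2d}}$ via a trivial Turing reduction, which combined with the polytime many-one reduction $(f,g)$ gives the desired hardness for $\mc{M}$.
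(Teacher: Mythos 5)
Your proof is correct and follows essentially the same route as the paper: establish FNP-hardness of $\text{\sc Fgse}_{\mc{I}_{2d}}$ by noting that a polytime solver for it would decide $\text{\sc Gse}_{\mc{I}_{2d}}$ (Barahona), then use universality to get $\mc{M}\to\mc{I}_{2d}$ and apply \cref{thm:sim FGSE} to transfer the hardness. The paper's own proof is a condensed version of exactly this argument, including the same caveat that the hardness claim implicitly assumes $\text{\sc P}\neq\text{\sc NP}$.
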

\begin{proof}
    By \cite{Ba82b},  $ \text{\sc Fgse}_{\mc{I}_{2d}}$ is $\text{\sc FNP}$-hard: A polytime solution for $ \text{\sc Fgse}_{\mc{I}_{2d}}$ would induce a polytime solution for $ \text{\sc Gse}_{\mc{I}_{2d}}$. (But the antecedent does not exist unless $\text{\sc P} = \text{\sc NP}$, a caveat that applies to the conclusions throughout). 
    Since $\mc{M}$ is universal, $\mc{M} \to \mc{I}_{2d}$. By \cref{thm:sim FGSE} we obtain a reduction from  $\text{\sc Fgse}_{\mc{I}_{2d}} $ to $ \text{\sc Fgse}_{\mc{M}}$.
\end{proof} 

\cref{thm:sim GSE} and \cref{thm:sim FGSE} state that  ground states of arbitrary spin systems can be efficiently encoded into ground states of a universal spin model.
Since many $\text{\sc NP}$ problems can be phrased as ground state energy problems (see \cite{Lu14b}), universal spin models, together with \cref{thm:main} could serve as tools to look for solutions of such problems. 
Not surprisingly, by \cref{cor: univ GSE} and \cref{cor:univ FGSE} their ground state energy problems are themselves $ \text{\sc NP}$-hard.

One way of circumvent this problem may be to use quantum computing protocols. For example, \cite{Le15} and \cite{Ng23} consider specific spin models and describe, first, how generic $ \text{\sc NP}$ problems can be encoded into its ground state, and, second, how its ground states can be computed via quantum annealing. 
The framework developed in this work could serve to construct novel quantum annealing protocols (by constructing appropriate emulations), as well as  analyze the (dis)advantages of existing protocols. It would be interesting to study what properties of the emulation determine the performance of adiabatic quantum annealing, i.e.\ the adiabatic time scale (see \cite{Ng23,Eb22}). 
We shall come back to this point in the \nameref{sec:outlook}. 

\subsection{Partition Function Approximation}\label{ssec:partfun}
In this section we study the complexity of approximating the partition function of a spin system from  a spin model $\mc{M}$. 

\begin{definition}[partition function problem]\label{def:part fun prob}
    Let $\mc{M}$ be a spin model. The \emph{partition function problem} of $\mc{M}$, $\text{\sc Part}_{\mc{M}}$, is the following function problem: Given $S\in \mc{M}$, $\beta>0$, compute 
    \begin{equation}
            \mr{Part}_{\mc{M}}(S,\beta) \coloneqq Z_S(\beta).
    \end{equation}
\end{definition}

In certain cases it suffices to compute $\text{\sc Part}_{\mc{M}}$ approximately. 
An \emph{approximate solution} of $\text{\sc Part}_{\mc{M}}$ consists of an algorithm that, 
given $S\in \mc{M}$, $\beta>0$ and an error parameter $\epsilon>0$,  returns a real number $z$ that satisfies 
\begin{equation}
    \vert z - \mr{Part}_{\mc{M}}(S,\beta) \vert < \epsilon \cdot \mr{Part}_{\mc{M}}(S,\beta).
\end{equation} 
If the runtime is polynomial in the size of the input $(S,\beta)$ it is called a \emph{polynomial time approximation scheme} ($\text{\sc PTAS}$). If it is polynomial in both the size of the input and the inverse approximation error, $\epsilon^{-1}$, it is called a \emph{fully polynomial time approximation scheme} ($\text{\sc FPTAS}$). If it is polynomial in both the size of input and the inverse approximation error, but the algorithm uses randomness and returns a valid approximation with probability at least $3/4$, it is called a \emph{fully polynomial randomized approximation scheme} ($\text{\sc FPRAS}$) \cite{Mo11}.

For approximation problems, a common notion of reduction is that of $\text{\sc PTAS}$ reductions \cite{We05}. They can be used to transform approximate solutions of any of the above introduced types, that is, $\text{\sc PTAS}$, $\text{\sc FPTAS}$ and $\text{\sc FPRAS}$. 

\begin{theorem}\label{thm:sim part fun}
    Let $\mc{M}$ and $\mc{M}'$ be spin models.  An emulation $\text{\sc Emu} \colon \mc{M}\to \mc{M}'$ induces a PTAS reduction 
    \begin{equation}
        (f,g,\alpha) \colon \text{\sc Part}_{\mc{M}'} \to \text{\sc Part}_{\mc{M}}.
    \end{equation}
\end{theorem}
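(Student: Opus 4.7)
The plan is to leverage \cref{lem:sim part fun sys}, which states that whenever $S \overset{\mc{f}}{\to} T$ with cut-off $\Del > \mr{max}(H_T)$, the quantity $Z_S(\beta)/(e^{-\Shift_{\mc{f}}\beta}md)$ approximates $Z_T(\beta)$ up to additive error $q_S^{\vert V_S\vert}e^{-\beta\Del}/(md)$. Since this error decays exponentially in $\Del$, for any desired relative precision $\epsilon$ it suffices to invoke the emulation with a sufficiently large cut-off, polynomially sized in $\vert T\vert$, $\beta$ and $\log(1/\epsilon)$, and then recover $Z_T(\beta)$ by the above linear rescaling.

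Concretely, given an input $(T,\beta,\epsilon)$ to $\text{\sc Part}_{\mc{M}'}$, first compute the polytime upper bound $k_T := \sum_{e \in E_T}\mr{max}(J_T(e)) \geq \mr{max}(H_T)$ already used in the proof of \cref{thm:sim FGSE}, which also yields the lower bound $Z_T(\beta) \geq e^{-\beta k_T}$. By uniformity of emulations (condition \ref{def:cut-off indep} of \cref{def:spin model sim}), the interaction graph of $\text{\sc Emu}^{(1)}(T,\delta)$ and the encoding size $m$ and degeneracy $d$ of $\text{\sc Emu}^{(2)}(T,\delta)$ do not depend on $\delta$, so an initial call to $\text{\sc Emu}(T,0)$ reveals $\vert V_S\vert, q_S, m, d$. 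One then sets
\begin{equation*}
\Del := k_T + \frac{1}{\beta}\log\!\left(\frac{2\, q_S^{\vert V_S\vert}}{\epsilon\, m d}\right),
\end{equation*}
so that \cref{lem:sim part fun sys} guarantees $\bigl\vert Z_S(\beta)/(e^{-\Shift_{\mc{f}}\beta} m d) - Z_T(\beta)\bigr\vert \leq (\epsilon/2)\,e^{-\beta k_T} \leq (\epsilon/2)Z_T(\beta)$. The $\text{\sc PTAS}$ reduction $(f,g,\alpha)$ is defined by $f(T,\beta,\epsilon) := (\text{\sc Emu}^{(1)}(T,\Del),\beta)$, $\alpha(\epsilon):=\epsilon/3$, and $g(T,\beta,\epsilon,z) := z/(e^{-\Shift_{\mc{f}}\beta}md)$, where $\mc{f}=\text{\sc Emu}^{(2)}(T,\Del)$. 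Given a solver output $z$ with $\vert z - Z_{f(T,\beta,\epsilon)}(\beta)\vert \leq \alpha(\epsilon)Z_{f(T,\beta,\epsilon)}(\beta)$, a triangle inequality combining the rescaled $\alpha(\epsilon)$-error of the solver with the $\epsilon/2$-error from \cref{lem:sim part fun sys} yields $\bigl\vert g(T,\beta,\epsilon,z) - Z_T(\beta)\bigr\vert \leq \bigl(\alpha(\epsilon)(1+\epsilon/2) + \epsilon/2\bigr)Z_T(\beta) \leq \epsilon Z_T(\beta)$ for $\epsilon\leq 1$, as required.

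The main technical point is polynomial time. The quantities $k_T$, $\vert V_S\vert$, $q_S$, $m$, $d$ are polytime in $\vert T\vert$, and logarithms and elementary arithmetic are polytime by \cref{rem:poly}; the second call $\text{\sc Emu}(T,\Del)$ is polytime in $\vert T\vert$ by definition of emulation, regardless of the magnitude of $\Del$. Although $\Del$ depends on $\log(1/\epsilon)$, by uniformity only the local interaction values of the output system depend on $\Del$, so $\vert f(T,\beta,\epsilon)\vert$ is polynomially bounded in $\vert T\vert$ and independent of $\epsilon$, as required of a $\text{\sc PTAS}$ reduction. The subtle point to check carefully is the interaction of the solver's relative error with the rescaling by $(e^{-\Shift_{\mc{f}}\beta}md)^{-1}$: since this rescaling is exact and the approximation of $Z_S(\beta)$ by $z$ is relative, the relative error carries over to $g(T,\beta,\epsilon,z)$ as an error relative to $Z_S(\beta)/(e^{-\Shift_{\mc{f}}\beta}md)$, which in turn is at most $(1+\epsilon/2)Z_T(\beta)$; this produces the factor appearing in the triangle inequality above.
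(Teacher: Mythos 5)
Your proof is correct and follows essentially the same route as the paper: a first call to $\text{\sc Emu}$ to read off the $\delta$-independent data, a choice of cut-off growing as $\frac{1}{\beta}\log(1/\epsilon)$ so that \cref{lem:sim part fun sys} gives a relative-error bound, and a triangle inequality combining the solver's rescaled error with the cut-off error. The only differences are cosmetic: you take $\alpha(\epsilon)=\epsilon/3$ with a simpler $\Del$ (using $k_T$ as both the hypothesis bound $\Del>\mr{max}(H_T)$ and the lower bound on $Z_T$), whereas the paper takes $\alpha(\epsilon)=\epsilon/2$ and absorbs the factor $(\epsilon/2+1)$ into $\delta_{\epsilon,\beta}$ via an additional max with an upper bound on $H_T$; both accountings close the same way.
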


A $\text{\sc PTAS}$ reduction $(f,g,\alpha): A \to B$ between function problems $A$ and $B$ consist of polytime computable functions $f,g, \alpha$ that transform an approximate solution for $B$ into an approximate solution for $A$. Given an instance $a$ of $A$ and an error parameter $\epsilon$, the approximate solution for $B$ is applied to $(f(a,\epsilon),\alpha(\epsilon))$. This yields an output $s$ that is an $\alpha(\epsilon)$ approximation of the solution for $f(a,\epsilon)$. By construction, $g(a,s,\epsilon)$ is an $\epsilon$ approximation of the solution for $a$.

\begin{proof} 
    We need to construct polytime computable functions
    \begin{equation}
        f \colon \mc{M}' \times \mathbb{R}_{>0} \times \mathbb{R}_{>0} \to \mc{M} \times \mathbb{R}_{>0},
    \end{equation}
    \begin{equation}
        g \colon \mc{M}' \times \mathbb{R}_{>0} \times  \mathbb{R}_{>0} \times \mathbb{R} \to \mathbb{R}_{>0}
    \end{equation}
    and 
    \begin{equation}
        \alpha \colon \mathbb{R} \to \mathbb{R},
    \end{equation}
    such that 
    if $z$ is an $\alpha(\epsilon)$ approximation of $\mr{Part}_{\mc{M}}(f(T,\beta,\epsilon))$ then $g(T,\beta,z,\epsilon)$ is an $\epsilon$ approximation of $\mr{Part}_{\mc{M}'}(T,\beta)$. That is, 
    \begin{equation}\label{eq:PTAS red}
    \begin{split}
        \left\vert z - \mr{Part}_{\mc{M}}(f(T,\beta,\epsilon)) \right\vert 
        < \alpha(\epsilon) \cdot \mr{Part}_{\mc{M}}(f(T,\beta,\epsilon)) \\
        \Rightarrow 
        \left\vert g(T,\beta,z,\epsilon) - \mr{Part}_{\mc{M}'}(T,\beta) \right\vert < \epsilon \cdot \mr{Part}_{\mc{M}'}(T,\beta).
        \end{split}
    \end{equation}
    
    By \cref{lem:sim part fun sys}, whenever $\mc{f}\colon S \to T$ with cut-off $\Del_{\mc{f}} > \mr{max}(H_T)$, we have 
    \begin{equation}\label{eq:part fun 2}
        \left\vert \frac{1}{e^{-\Shift_{\mc{f}} \beta} m \degeneracy_{\mc{f}}}\cdot Z_S(\beta) - Z_T(\beta) \right\vert
        \leq \frac{1}{m \degeneracy_{\mc{f}}} \cdot q_S^{\vert V_S \vert} \cdot e^{-\beta \Del_{\mc{f}}},
    \end{equation}
    where $m = \vert \enc_{\mc{f}} \vert$.

    In particular, if $S=S(T,\delta)$ and $\mc{f}= \mc{f}(T, \delta)$ are the output  of $\text{\sc Emu}(T, \delta)$ then $q_S=q_{\mc{M}}$, 
    and by \cref{def:spin model sim} condition \ref{def:cut-off indep} the parameters $\vert \enc_{\mc{f}} \vert ,d_{\mc{f}},\vert V_{S}\vert$ only depend on $T$, not on $\delta$. Thus, for fixed $T$, increasing $\delta$ makes the right hand side of \eqref{eq:part fun 2} arbitrary small. 
    This is the key idea to construct the reduction $(f,g,\alpha)$.

    We start with $f$.
    Given $(T,\beta, \epsilon)$ we first compute $\text{\sc Emu}(T, \delta')$, where $\delta'$ is arbitrary. This yields
    $S(T, \delta')\in \mc{M}$ and a simulation 
    \begin{equation}
    \mc{f}(T, \delta') \colon S(T, \delta')      \to T.
    \end{equation}
    From this, we read off $\vert \enc_{\mc{f}(T, \delta')}\vert ,d_{\mc{f}(T, \delta')}$ and $\vert V_{S(T, \delta')} \vert$. 
    By \cref{def:spin model sim} condition \ref{def:cut-off indep},   these parameters are independent of the cut-off $\delta'$. Henceforth we write 
    \begin{equation}
        \begin{split}
            m_T &\coloneqq \vert \enc_{\mc{f}(T, \delta')}\vert \\
            d_T &\coloneqq d_{\mc{f}(T, \delta')}\\
            v_T &\coloneqq \vert V_{S(T, \delta')} \vert .
        \end{split}
    \end{equation}

    We now evaluate $H_T$ on an arbitrary configuration $\vec{t}$ to obtain $\xi_T \coloneqq H_T(\vec{t})$.
    By construction it satisfies
    \begin{equation}
        e^{-\beta \xi_T} \leq Z_T(\beta).
    \end{equation}
    We compute
    \begin{equation}
        \xi_1\coloneqq \sum_{e \in E_T}\mr{max}(J_T(e)) 
    \end{equation}
    and 
    \begin{equation}
        \xi_2(\epsilon, \beta) = \frac{1}{\beta} \cdot \mr{ln} \Bigl (\frac{2}{\epsilon}\cdot  \frac{(\frac{\epsilon}{2}+1)q_{\mc{M}}^{v_T}}{ m_T d_T} \Bigr) + \xi_T
    \end{equation}
    and define 
    \begin{equation}
        \delta_{\epsilon, \beta} \coloneqq \mr{max}(\xi_1,\xi_2(\epsilon, \beta)).
    \end{equation}
    We use $\text{\sc Emu}$ to obtain a spin system
    $S(T, \delta_{\epsilon, \beta})$ and a simulation
    \begin{equation}
        \mc{f}(T, \delta_{\epsilon, \beta}) \colon S(T, \delta_{\epsilon, \beta}) \to T,
    \end{equation}
    with cut-off $\delta_{\epsilon, \beta}$. 
    We now define $(f,g,\alpha)$ by 
    \begin{equation}
    \begin{split}
        f(T,\beta, \epsilon) &= (S(T, \delta_{\epsilon, \beta}),\beta)\\
        g(T,\beta, z, \epsilon) &= \frac{1}{e^{-\beta \Shift_{\mc{f}(T, \delta_{\epsilon, \beta})}} m_T d_T} \cdot z\\
        \alpha(\epsilon) &= \frac{\epsilon}{2}.
    \end{split}
    \end{equation}
    Note that these three functions are polytime computable (see \cref{rem:poly}).
    In particular, they require two executions of $\text{\sc Emu}$, a first one with arbitrary cut-off $\delta'$ to obtain $m_T, d_T, v_T$ and thereby compute $\delta_{\epsilon, \beta}$, 
    and a second one with cut-off $\delta_{\epsilon, \beta}$ to compute $S(T, \delta_{\epsilon, \beta})$ and $\mc{f}_{T, \delta_{\epsilon, \beta}}$, which are used to define the reduction.
    
    To finish the proof, we prove that this defines a $\text{\sc PTAS}$ reduction, namely satisfies \cref{eq:PTAS red}.
    We shall omit the $(T, \delta_{\epsilon, \beta})$ arguments and write $S$ instead of $S(T, \delta_{\epsilon, \beta})$, and similarly for $\mc{f}$, as well as the subscripts of the simulation parameters of $\mc{f}$.
    Assume 
    \begin{equation}
        \vert z - Z_{S}(\beta) \vert < \alpha(\epsilon)\cdot Z_{S}(\beta) = \frac{\epsilon}{2} \cdot Z_{S}(\beta).
    \end{equation}
    We insert the definition of $g$ into the right hand side of \cref{eq:PTAS red}: 
    \begin{widetext}
\begin{equation}
\begin{split}
            \left\vert \frac{1}{e^{-\beta \Shift}m_T d_T} z - Z_T(\beta) \right\vert 
            &\leq 
            \left\vert \frac{1}{e^{-\beta \Shift}m_T d_T}z - \frac{1}{e^{-\beta \Shift}m_T d_T}Z_{S}(\beta) \right\vert 
            +\left\vert \frac{1}{e^{-\beta \Shift}m_T d_T }Z_{S}(\beta) - Z_T(\beta) \right\vert 
             \\
            & < 
            \frac{\epsilon}{2e^{-\beta \Shift}m_T d_T} Z_{S}(\beta) 
            + \left\vert \frac{1}{e^{-\beta \Shift}m_T d_T}Z_{S}(\beta) - Z_T(\beta) \right\vert 
             \\
            &= \frac{\epsilon}{2} Z_T(\beta) +  \frac{\epsilon}{2e^{-\beta \Shift}m_T d_T} \sum_{\vec{s} \notin \simSet} e^{-\beta H_{S}(\vec{s})}  
            + \frac{1}{e^{-\beta \Shift}m_T d_T} \sum_{\vec{s} \notin \simSet} e^{-\beta H_{S}(\vec{s})} 
             \\
            &< \frac{\epsilon}{2} Z_T(\beta) 
            + \frac{\frac{\epsilon}{2}+1}{m_T d_T}  \cdot q_{\mc{M}}^{v_T} e^{-\beta \delta_{\epsilon,\beta}},
            \label{eq:part fun red err}
\end{split}
\end{equation}
\end{widetext}
    where the first inequality follows from the triangle inequality, 
    the second from the left hand side of \cref{eq:PTAS red},
    the equality from \cref{lem:sim part fun sys} (which can be applied  since $\delta_{\epsilon,\beta}> \mr{max}(H_T)$), 
    and the last inequality since $H_S(\vec{s})-\Shift\geq\delta_{\epsilon, \beta}$ for configurations $\vec{s}\notin \simSet$, and the number of such configurations is less than the total number of configurations of $S$, $q_{\mc{M}}^{v_T}$.

    Next, we use 
    \begin{equation}
        e^{-\beta \delta_{\epsilon, \beta}} = e^{-\beta(\delta_{\epsilon,\beta}-\xi_T)}\cdot e^{-\beta \xi_T}\leq e^{-\beta(\delta_{\epsilon,\beta}-\xi_T)} \cdot Z_T(\beta).
    \end{equation}
    Inserting this into \eqref{eq:part fun red err} yields
\begin{equation}\label{eq:part fun red err 2}
\begin{split}
       &\left\vert \frac{1}{e^{-\beta \Shift}m_T d_T} z - Z_T(\beta) \right\vert  < \\
        &< \biggl[ \frac{\epsilon}{2} + 
       \frac{(\frac{\epsilon}{2}+1)q_{\mc{M}}^{v_T}}{m_T d_T}    \cdot e^{-\beta(\delta_{\epsilon,\beta}-\xi_T)} \biggr]
         \cdot Z_T(\beta)
\end{split}
\end{equation}
    By construction of $\delta_{\epsilon,\beta}$,  
    \begin{equation}\label{eq:Delta eps}
        e^{-\beta (\delta_{\epsilon, \beta}-\xi_T)} \leq
        e^{-\beta (\xi_2(\epsilon, \beta)-\xi_T)} \leq 
        \frac{\epsilon}{2}\cdot 
        \frac{ m_T d_T}{(\frac{\epsilon}{2} + 1)q_{\mc{M}}^{v_T}}.
    \end{equation}
    Inserting this into \eqref{eq:part fun red err 2} finally yields 
    \begin{equation}
        \left\vert \frac{1}{e^{-\beta \Shift_{\mc{f}}}m_T d_T} z - Z_T(\beta) \right\vert < \epsilon \cdot Z_T(\beta) .
    \end{equation}
\end{proof}

\begin{corollary}\label{cor:part univ}
    If $\mc{M}$ is universal, there does not exist a $\text{\sc FPRAS}$ for $\text{\sc Part}_{\mc{M}}$.
\end{corollary}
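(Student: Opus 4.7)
The plan is to use the $\text{\sc PTAS}$ reduction of \cref{thm:sim part fun} in exactly the same style as the NP-hardness arguments of \cref{cor: univ GSE} and \cref{cor:univ FGSE}. Since $\mc{M}$ is universal, in particular $\mc{M}\to \mc{I}_{2\mr{d},f}$, and \cref{thm:sim part fun} then produces a $\text{\sc PTAS}$ reduction $(f,g,\alpha)\colon \text{\sc Part}_{\mc{I}_{2\mr{d},f}}\to \text{\sc Part}_{\mc{M}}$. Crucially, the error transformation constructed in that theorem is $\alpha(\epsilon)=\epsilon/2$, which has polynomial inverse, so the reduction converts any $\text{\sc FPRAS}$ for the source into an $\text{\sc FPRAS}$ for the target. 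Therefore it suffices to show that $\text{\sc Part}_{\mc{I}_{2\mr{d},f}}$ admits no $\text{\sc FPRAS}$ (under the standard hypothesis $\text{\sc RP}\neq\text{\sc NP}$).

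The core of the argument is to extract the ground state energy from a precise enough estimate of $Z_S(\beta)$ at large inverse temperature. Given a 2d Ising instance $S\in \mc{I}_{2\mr{d},f}$ whose local interactions take non-negative integer values (as in Barahona's reduction \cite{Ba82b}), the ground state energy lies in $\{0\}\cup[1,\infty)$. Write $N=q_{\mc{M}}^{|V_S|}$ and set $\beta \coloneqq \ln(8N)$, which is polynomial in $|S|$. Then I would bound $Z_S(\beta)$ in the two cases separately:
\begin{itemize}
\item if $\mr{min}(H_S)=0$, then $Z_S(\beta)\geq e^{-\beta\cdot 0}=1$;
\item if $\mr{min}(H_S)\geq 1$, then $Z_S(\beta)\leq N\cdot e^{-\beta}=1/8$.
\end{itemize}
Invoking a hypothetical $\text{\sc FPRAS}$ with $\epsilon=1/4$ returns $\tilde Z$ satisfying $\tilde Z\geq (3/4)Z_S(\beta)$ and $\tilde Z\leq (5/4)Z_S(\beta)$ with probability $\geq 3/4$. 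The two cases give $\tilde Z\geq 3/4$ versus $\tilde Z\leq 5/32$, which are separated. Thresholding $\tilde Z$ at, say, $1/2$ yields a randomized polynomial-time decision procedure for $\text{\sc Gse}_{\mc{I}_{2\mr{d},f}}$, contradicting its $\text{\sc NP}$-hardness under $\text{\sc RP}\neq\text{\sc NP}$.

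Assembling the pieces: the $\text{\sc PTAS}$ reduction $(f,g,\alpha)$ of \cref{thm:sim part fun} composes with a hypothetical $\text{\sc FPRAS}$ for $\text{\sc Part}_{\mc{M}}$ to yield an $\text{\sc FPRAS}$ for $\text{\sc Part}_{\mc{I}_{2\mr{d},f}}$, which by the paragraph above decides the $\text{\sc NP}$-hard problem $\text{\sc Gse}_{\mc{I}_{2\mr{d},f}}$ in randomized polynomial time. Hence no $\text{\sc FPRAS}$ for $\text{\sc Part}_{\mc{M}}$ exists unless $\text{\sc NP}=\text{\sc RP}$.

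The main obstacle I anticipate is two-fold and essentially bookkeeping. First, one must confirm that the $\text{\sc PTAS}$ reduction of \cref{thm:sim part fun} genuinely promotes an $\text{\sc FPRAS}$ to an $\text{\sc FPRAS}$: this amounts to checking that $\alpha^{-1}$ and $f,g$ are polynomial in $1/\epsilon$ as well as in $|T|,\beta$, which is transparent from the explicit expressions in the proof of \cref{thm:sim part fun} (in particular, $\delta_{\epsilon,\beta}$ depends only logarithmically on $1/\epsilon$, so $|S(T,\delta_{\epsilon,\beta})|$ is polynomial in the input parameters by \cref{def:spin model sim} condition \ref{def:cut-off indep}). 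Second, one has to ensure that the $\text{\sc NP}$-hard Ising instances produced by Barahona's reduction have the integer-gap structure $\mr{min}(H_S)\in\{0\}\cup[1,\infty)$; this is the standard presentation of \cite{Ba82b}, and failing that one can rescale the local interactions uniformly, which by \cref{eq:scaling:ham} multiplies the spectrum by a constant and can be absorbed into the choice of $\beta$.
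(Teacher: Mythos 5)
Your proposal is correct in spirit but takes a genuinely different route from the paper. The paper's own proof is much shorter: it simply cites the Sly--Sun result that there is no $\text{\sc FPRAS}$ for $\text{\sc Part}_{\mc{I}_{\mr{hom},3}}$ (the homogeneous Ising model on degree-$3$ graphs with couplings below a threshold), then applies \cref{thm:sim part fun} to conclude. In contrast, you avoid the Sly--Sun input and instead \emph{derive} the $\text{\sc FPRAS}$-hardness of $\text{\sc Part}_{\mc{I}_{2d,f}}$ from scratch, by showing that an $\text{\sc FPRAS}$ at large $\beta$ can be used to extract ground-state information and hence decide an $\text{\sc NP}$-hard problem. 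What your approach buys is self-containment: it requires only the Barahona-style $\text{\sc GSE}$ $\text{\sc NP}$-hardness that the paper already invokes for \cref{cor: univ GSE}, rather than the substantially deeper Sly--Sun inapproximability theorem. What the paper's approach buys is brevity and the fact that it sidesteps the bookkeeping you yourself flag. Your observation that the reduction of \cref{thm:sim part fun} promotes $\text{\sc FPRAS}$ to $\text{\sc FPRAS}$ is correct, for exactly the reasons you give ($\alpha(\epsilon)=\epsilon/2$ has polynomial inverse, and $\delta_{\epsilon,\beta}$ enters only logarithmically in $1/\epsilon$ with the interaction graph fixed by uniformity).

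There is one genuine gap in the second half of your argument that you should close. As written, your thresholding test decides only the two-way question ``$\mr{min}(H_S)=0$ versus $\mr{min}(H_S)\geq 1$''. But $\text{\sc Gse}_{\mc{I}_{2d,f}}$ (\cref{def:GSE}) is the general threshold problem $(S,k)\mapsto [\mr{min}(H_S)<k]$, and the standard Barahona reduction from MaxCut-type problems produces instances where the relevant threshold $k$ is strictly positive. You cannot shift the Hamiltonian by $-(k-1)$ to reduce to the zero-minimum case, because local interactions are required to be non-negative in this framework, so that shift is not an available operation. The rescaling you mention in your last paragraph does not help either: multiplying all couplings by a constant rescales the spectrum, so it cannot move a strictly positive ground-state energy to zero. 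The fix is easy and standard, but needs to be stated: with integer couplings and any $\beta\geq 2\ln N$, one has
\begin{equation}
 \mr{min}(H_S) \;\leq\; -\tfrac{1}{\beta}\ln Z_S(\beta) + \tfrac{\ln N}{\beta} \qquad\text{and}\qquad \mr{min}(H_S) \;\geq\; -\tfrac{1}{\beta}\ln Z_S(\beta),
\end{equation}
so a multiplicative $(1\pm\epsilon)$ estimate of $Z_S(\beta)$ pins $\mr{min}(H_S)$ to within an additive window of size $\tfrac{\ln N + \ln\frac{1+\epsilon}{1-\epsilon}}{\beta}<1$ for a suitable polynomial $\beta$, and rounding to the nearest integer then recovers $\mr{min}(H_S)$ exactly with probability $\geq 3/4$. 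Comparing against $k$ decides the general $\text{\sc Gse}$ question. With this replacement for the two-case threshold test, your proof is complete.
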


\begin{proof}
  In \cite{Sl12} it is proven that there does not exist a $\text{\sc FPRAS}$ for the partition function problem of the homogeneous Ising model on regular graphs with degree at most 3, where the homogeneous couplings are restricted to below a certain threshold.
  We denote this spin model by $\mc{I}_{\mr{hom}, 3}$, details regarding its precise definition can be found in \cite{Sl12}.
  Since $\mc{M}$ is universal it emulates $\mc{I}_{\mr{hom}, 3}$. By \cref{thm:sim part fun}, an $\text{\sc FPRAS}$ that approximates the partition function of $\mc{M}$ would induce an $\text{\sc FPRAS}$ that approximates the partition function $\mc{I}_{\mr{hom}, 3}$. 
  Since such a $\text{\sc FPRAS}$ does not exist, the $\text{\sc FPRAS}$ for the partition function of $\mc{M}$ cannot exist in the first place.
\end{proof}

The natural class of problems that can be rephrased as partition functions are counting problems, $\#$P. With a universal spin model $\mc{M}$ we can approximate arbitrary partition functions by \cref{thm:sim part fun}, and thus approximately solve arbitrary counting problems. 
However, partition functions in $\mc{M}$ cannot be approximated in polynomial time by \cref{cor:part univ}. Similarly to the ground state energy case, quantum computing protocols may help circumvent the hardness of approximating partition functions from $\mc{M}$.

\subsection{Approximate Sampling}\label{ssec:sample}
In this section we consider the problem of approximately sampling configurations from the Boltzmann distribution of spin systems from a spin model $\mc{M}$.
\begin{definition}[approximate sampling problem]\label{def:approx sampl}
Let $\mc{M}$ be a spin model. The \emph{approximate sampling problem} of $\mc{M}$ $\text{\sc Asp}_{\mc{M}}$, is the following problem: Given $S \in \mc{M}$, $\beta>0$ and an error parameter $\epsilon >0$, compute a configuration $\vec{s}\in \mc{C}_S$ drawn from a probability distribution $q_{S,\beta,\epsilon}$ on $\mc{C}_S$ that satisfies
\begin{equation}
    \Vert p_{S,\beta}-q_{S,\beta,\epsilon} \Vert < \epsilon.
\end{equation}
\end{definition}
A solution to the approximate sampling problem is a probabilistic algorithm that, on input $(S,\beta,\epsilon)$, samples configurations form a probability distribution which is $\epsilon$-close to $p_{S,\beta}$ w.r.t.\ the total variation distance $\Vert \: \Vert $ (defined in \eqref{eq:def tot var}).

\begin{theorem}\label{thm:approx sampl}
    An emulation $\text{\sc Emu} \colon \mc{M}\to \mc{M}'$ induces a $\text{\sc PTAS}$ reduction
    \begin{equation}
        (f,g,\alpha)\colon \text{\sc Asp}_{\mc{M}'} \to \text{\sc Asp}_{\mc{M}}.
    \end{equation}
\end{theorem}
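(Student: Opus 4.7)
The plan is to mimic closely the structure of the proof of \cref{thm:sim part fun}, using \cref{thm:boltzmann sys} as the key quantitative input in place of \cref{lem:sim part fun sys}. Given $\text{\sc Emu}\colon \mc{M}\to\mc{M}'$, an input $(T,\beta,\epsilon)$ for $\text{\sc Asp}_{\mc{M}'}$ and a solver for $\text{\sc Asp}_{\mc{M}}$, I would produce a spin system $S\in\mc{M}$ with a simulation $\mc{f}\colon S\to T$ whose cut-off is chosen so large that the simulation distribution $p_{\mc{f},\beta}$ is within $\epsilon/2$ of $p_{T,\beta}$; then I would sample a configuration $\vec{s}$ from a distribution $q$ that is within $\epsilon/2$ of $p_{S,\beta}$ and return $g(T,\beta,\vec{s},\epsilon)\coloneqq \dec_{\mc{f}}\circ\vec{s}\circ\phys_{\mc{f}}$. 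The PTAS error is $\alpha(\epsilon)\coloneqq \epsilon/2$.

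More concretely, I would first call $\text{\sc Emu}(T,\delta')$ for an arbitrary $\delta'$ to read off the parameters $m_T,d_T,v_T$ (which by uniformity, \cref{def:spin model sim}\ref{def:cut-off indep}, do not depend on the cut-off), evaluate $\xi_T\coloneqq H_T(\vec{t})$ on an arbitrary $\vec{t}$, and compute
\begin{equation}
\delta_{\epsilon,\beta} \coloneqq \mr{max}\Bigl(\sum_{e\in E_T}\mr{max}(J_T(e)),\ \xi_T+\tfrac{1}{\beta}\ln\bigl(\tfrac{2 q_{\mc{M}}^{v_T}}{\epsilon\, m_T d_T}\bigr)\Bigr),
\end{equation}
so that $\delta_{\epsilon,\beta}>\mr{max}(H_T)$ and the bound of \cref{thm:boltzmann sys} applied to the simulation $\mc{f}$ output by $\text{\sc Emu}(T,\delta_{\epsilon,\beta})$ yields
\begin{equation}
\Vert p_{\mc{f},\beta}-p_{T,\beta}\Vert \leq \tfrac{q_{\mc{M}}^{v_T}}{m_T d_T}\, e^{-\beta(\delta_{\epsilon,\beta}-\mr{min}(H_T))} \leq \tfrac{q_{\mc{M}}^{v_T}}{m_T d_T}\, e^{-\beta(\delta_{\epsilon,\beta}-\xi_T)} \leq \tfrac{\epsilon}{2},
\end{equation}
using $\mr{min}(H_T)\leq \xi_T$. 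Then I define $f(T,\beta,\epsilon)\coloneqq(\text{\sc Emu}^{(1)}(T,\delta_{\epsilon,\beta}),\beta,\epsilon/2)$ and the decoder $g$ as above.

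To verify the PTAS reduction, let $\vec{s}\sim q_{S,\beta,\epsilon/2}$ with $\Vert q_{S,\beta,\epsilon/2}-p_{S,\beta}\Vert<\epsilon/2$, and let $g_*\nu$ denote the pushforward of a distribution $\nu$ on $\mc{C}_S$ along the deterministic map $\vec{s}\mapsto \dec_{\mc{f}}\circ\vec{s}\circ\phys_{\mc{f}}$. The key observation is that $g_* p_{S,\beta}=p_{\mc{f},\beta}$ by \cref{def:sim prob}, and that pushforward along a deterministic map is a contraction for total variation, i.e.\ $\Vert g_*\nu_1-g_*\nu_2\Vert\leq \Vert \nu_1-\nu_2\Vert$. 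Thus by the triangle inequality
\begin{equation}
\Vert g_* q_{S,\beta,\epsilon/2}-p_{T,\beta}\Vert \leq \Vert g_* q_{S,\beta,\epsilon/2}-p_{\mc{f},\beta}\Vert + \Vert p_{\mc{f},\beta}-p_{T,\beta}\Vert < \tfrac{\epsilon}{2}+\tfrac{\epsilon}{2}=\epsilon,
\end{equation}
which is exactly the PTAS condition. Polytime computability of $f,g,\alpha$ follows because $\text{\sc Emu}$ is polytime, $\xi_T$ and the local-maxima bound are polytime computable from $T$ (as in the proof of \cref{thm:sim FGSE}), and $\delta_{\epsilon,\beta}$ is a polynomially short formula in the inputs (using \cref{rem:poly} for logarithms).

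The main obstacle, essentially the only nontrivial step, is choosing $\delta_{\epsilon,\beta}$ large enough to drive $\Vert p_{\mc{f},\beta}-p_{T,\beta}\Vert$ below $\epsilon/2$ while keeping the resulting simulation polytime constructible; the rest is bookkeeping. The only subtlety is that \cref{thm:boltzmann sys} involves $\mr{min}(H_T)$, which we cannot compute, but this is harmless because $\mr{min}(H_T)\leq \xi_T$ and replacing the former by the latter only weakens (and therefore preserves) the bound, just as the proof of \cref{thm:sim part fun} sidesteps the analogous issue by evaluating $H_T$ on a single configuration.
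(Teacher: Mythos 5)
Your proposal is correct and follows essentially the same route as the paper's proof: both use \cref{thm:boltzmann sys} to bound $\Vert p_{\mc{f},\beta}-p_{T,\beta}\Vert$, choose the cut-off $\delta_{\epsilon,\beta}$ as in \cref{thm:sim part fun} (bounding $\mr{min}(H_T)$ from above by a single evaluation $\xi_T$), establish the TV-contraction under decoding (the paper proves this directly by the triangle inequality on the definition of $q_{\mc{f}}$, which is the same as your pushforward-contraction observation), and conclude with $\alpha(\epsilon)=\epsilon/2$. The only cosmetic difference is that you fold $\alpha(\epsilon)$ into the output of $f$, whereas the paper keeps $f(T,\beta,\epsilon)=(S,\beta)$ and supplies $\epsilon/2$ separately via $\alpha$.
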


\begin{proof}
The proof is similar to that of \cref{thm:sim part fun}.
First, assume $\mc{f}\colon S \to T$ is a simulation and let $q$ be a probability distribution on $\mc{C}_S$.
Similar to \cref{def:sim prob}, we let
\begin{equation}
    q_{\mc{f}}(\vec{t}) \coloneqq \sum_{\substack{\vec{s} \in \mc{C}_S,\\ \dec \circ \vec{s} \circ P = \vec{t}}} q(\vec{s}).
\end{equation}
Then, for any $\beta>0$, 
\begin{equation}\label{eq:tot var sim}
\begin{split}
    \Vert q_{\mc{f}} - p_{\mc{f}, \beta} \Vert 
    &= \frac{1}{2} \sum_{\vec{t}\in \mc{C}_T} \Bigl \vert \sum_{\substack{\vec{s} \in \mc{C}_S,\\ \dec \circ \vec{s} \circ P = \vec{t}}} (q(\vec{s})-p_{S,\beta}(\vec{s}))   \Bigr \vert\\
    &\leq \frac{1}{2} \sum_{\vec{t}\in \mc{C}_T}  \sum_{\substack{\vec{s} \in \mc{C}_S,\\ \dec \circ \vec{s} \circ P = \vec{t}}} \vert q(\vec{s})-p_{S,\beta}(\vec{s})  \vert \\
    &=
    \frac{1}{2}
    \sum_{\vec{s}\in \mc{C}_S} \vert q(\vec{s})-p_{S,\beta}(\vec{s}) \vert\\
    &= \Vert q - p_{S,\beta} \Vert,
\end{split}
\end{equation}
where $ p_{\mc{f}, \beta} $ is given in \cref{def:sim prob}.
The first equality follows from the definition of the total variation distance, $q_{\mc{f}}$ and $p_{\mc{f},\beta}$, 
the next line from the triangle inequality, 
and the last two equalities form combining the two sums into one sum over all configurations $\vec{s}$.

Second, recall that by \cref{thm:boltzmann sys}, whenever $\mc{f}\colon S \to T$ with cut-off $\Del > \mr{max}(H_T)$, then 
\begin{equation}\label{eq:tot var err}
    \Vert p_{\mc{f}, \beta} - p_{T,\beta} \Vert < \frac{1}{md} \cdot q_S^{\vert V_S \vert} \cdot e^{-\beta(\Delta + \mr{min}(H_T))},
\end{equation}
where $m = \vert \enc \vert$.
Following the proof of \cref{thm:sim part fun}, given $T \in \mc{M}'$ and $\epsilon, \beta > 0$ we define $\delta_{\epsilon, \beta}$ such that the right hand side of \eqref{eq:tot var err} becomes smaller than $\epsilon/2$. By the same arguments as in \cref{thm:sim part fun} this can be computed in polytime.

Next,  we compute $\text{\sc Emu}(T, \delta_{\epsilon, \beta})$ to obtain a spin system $S(T, \delta_{\epsilon, \beta})$ and a simulation $\mc{f}(T, \delta_{\epsilon, \beta}) \colon S(T, \delta_{\epsilon, \beta}) \to T$.
In the following we omit the arguments of $S$ and $\mc{f}$. 
Note that all simulation parameters refer to the computed simulation $\mc{f}$ and hence depend on $T,\beta$ and $\epsilon$. 
We define the reduction $(f,g,\alpha)$ by 
\begin{equation}
\begin{split}
    f(T,\beta, \epsilon) &= (S,\beta)\\
    g(T,\beta, \vec{z}, \epsilon) &= \dec \circ \vec{z} \circ \phys\\
    \alpha(\epsilon) &= \frac{\epsilon}{2}.
\end{split}
\end{equation}
Finally, let us show that this defines a $\text{\sc PTAS}$ reduction. 
By assumption, $\vec{z}$ is distributed according to $q$, with 
\begin{equation}\label{eq:assumption approx sample}
    \Vert q - p_{S, \beta} \Vert < \epsilon/2.
\end{equation}
By definition of $g$, $g(T,\beta, \vec{z}, \epsilon)$ is distributed according to 
$q_{\mc{f}}$. 
The total variation distance satisfies 
\begin{equation}
    \Vert q_{\mc{f}} - p_{T,\beta} \Vert \leq  
    \Vert q_{\mc{f}} - p_{\mc{f}, \beta} \Vert + \Vert p_{\mc{f}, \beta} - p_{T, \beta} \Vert .
\end{equation}
Now, by \eqref{eq:assumption approx sample} and \eqref{eq:tot var sim} the first summand is smaller than $\epsilon/2$, and by construction of $\delta_{\epsilon, \beta}$, the second summand too.
In total,  
\begin{equation}
    \Vert q_{\mc{f}} - p_{T,\beta} \Vert < \epsilon.
\end{equation}\end{proof}

\begin{corollary}\label{cor:uni boltz}
 If $\mc{M}$ is universal then there does not exist a $\text{\sc FPRAS}$ for $\text{\sc Asp}_{\mc{M}}$.
\end{corollary}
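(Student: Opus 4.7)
The plan is to mirror the strategy used for Corollaries \ref{cor: univ GSE}, \ref{cor:univ FGSE}, and especially \ref{cor:part univ}: identify a concrete spin model $\mc{N}$ for which $\text{\sc Asp}_{\mc{N}}$ is already known to admit no $\text{\sc FPRAS}$, use universality of $\mc{M}$ to obtain an emulation $\mc{M} \to \mc{N}$, and then invoke \cref{thm:approx sampl} to transfer the hardness.

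First, I would appeal to a known hardness result for approximate sampling from the Boltzmann distribution of the Ising model. The most natural candidate is the same regime used in \cite{Sl12} for the partition function, namely the homogeneous Ising model on bounded-degree graphs above the uniqueness threshold, $\mc{I}_{\mr{hom}, 3}$. The standard self-reducibility argument (Jerrum--Valiant--Vazirani) shows that an $\text{\sc FPRAS}$ for approximate sampling from $p_{S,\beta}$ on such models yields an $\text{\sc FPRAS}$ for the partition function $Z_S(\beta)$; combined with \cite{Sl12}, this gives the non-existence of an $\text{\sc FPRAS}$ for $\text{\sc Asp}_{\mc{I}_{\mr{hom},3}}$.

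Next, because $\mc{M}$ is universal it emulates every spin system, so in particular there exists an emulation $\text{\sc Emu}\colon \mc{M} \to \mc{I}_{\mr{hom},3}$. By \cref{thm:approx sampl}, this emulation induces a $\text{\sc PTAS}$ reduction $(f,g,\alpha)\colon \text{\sc Asp}_{\mc{I}_{\mr{hom},3}} \to \text{\sc Asp}_{\mc{M}}$. Since $\text{\sc PTAS}$ reductions preserve the existence of $\text{\sc FPRAS}$-type approximation schemes, an $\text{\sc FPRAS}$ for $\text{\sc Asp}_{\mc{M}}$ would compose with $(f,g,\alpha)$ to give an $\text{\sc FPRAS}$ for $\text{\sc Asp}_{\mc{I}_{\mr{hom},3}}$, contradicting the previous paragraph.

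The main obstacle is not any of the steps above per se, but making the reliance on self-reducibility clean. One option is to cite a direct hardness result for approximate sampling from the Ising model (which exists in the literature) to avoid invoking self-reducibility at all; another is to briefly sketch that, for the Ising systems appearing in $\mc{I}_{\mr{hom},3}$, one can use an $\text{\sc FPRAS}$ sampler to estimate ratios $Z_{S'}(\beta)/Z_{S''}(\beta)$ along a chain of systems differing in a single edge, thereby approximating $Z_S(\beta)$ and contradicting \cite{Sl12}. Either route completes the proof, and the rest of the argument is purely a combination of \cref{thm:approx sampl} with a known hardness result, entirely parallel to the proof of \cref{cor:part univ}.
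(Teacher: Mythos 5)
Your proposal follows the same route as the paper: both invoke the counting--sampling interreducibility for the homogeneous Ising model $\mc{I}_{\mr{hom},3}$ (you cite Jerrum--Valiant--Vazirani self-reducibility, the paper cites \cite{ko18b} and \cite{Je86,Je89}), combine it with the non-existence of an $\text{\sc FPRAS}$ for $\text{\sc Part}_{\mc{I}_{\mr{hom},3}}$ from \cite{Sl12}, and then transfer hardness to $\text{\sc Asp}_{\mc{M}}$ via the $\text{\sc PTAS}$ reduction from \cref{thm:approx sampl} furnished by universality. The only difference is cosmetic ordering: you first establish hardness of $\text{\sc Asp}_{\mc{I}_{\mr{hom},3}}$ and then reduce, while the paper assumes an $\text{\sc FPRAS}$ for $\text{\sc Asp}_{\mc{M}}$ and chases the contradiction forward; these are logically equivalent.
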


\begin{proof}
    To prove the claim, we use the interreducibility between approximate counting (such as approximating the partition function) and approximate sampling for self-reducible problems (see \cite{Je86, Je89}), and then apply \cref{cor:part univ}.
    
    In \cite{ko18b} (see also \cite{ST07}) it is proven that a $\text{\sc FPRAS}$ for approximate sampling of the homogeneous Ising model on arbitrary interaction graphs induces a $\text{\sc FPRAS}$ for approximating its partition function.
    Now assume there exists a $\text{\sc FPRAS}$ for approximate sampling from $\mc{M}$. Then by
    \cref{thm:approx sampl} we obtain a $\text{\sc FPRAS}$ for approximate sampling from $\mc{I}_{\mr{hom}, 3}$, by \cref{cor:part univ}.
    This, by \cite{ko18b}, induces a $\text{\sc FPRAS}$ for approximating the partition function of $\mc{I}_{\mr{hom}, 3}$, which by \cite{Sl12}, does not exist. Hence, the $\text{\sc FPRAS}$ for approximate sampling from $\mc{M}$ cannot exist in the first place.
\end{proof}

Similarly to the previous section, the hardness result of \cref{cor:uni boltz} prevents us from directly using \cref{thm:main} to construct algorithms that sample from arbitrary spin models.

In practice, sampling from spin systems is done by constructing Markov chains on the configuration space that have the corresponding Boltzmann distribution as stationary distribution. 
A common choice of Markov chain is the so-called \emph{Glauber dynamics} \cite{Gl63}.
For most interesting examples (and, by \cref{cor:uni boltz}, for universal spin models) this approach suffers from exponential \emph{mixing times}, that is, the convergence of the Markov chain to its stationary distribution is exponentially slow \cite{Di09}.
However, it seems that the properties of a spin model that determine its mixing time are not fully understood \cite{Di09b}.
\cref{thm:approx sampl} implies that whenever $\mc{M}'$ has exponential mixing times and $\mc{M} \to \mc{M}'$, then so does $\mc{M}$. This seems to indicate that the properties that lead to exponential mixing times are preserved by emulations.

Thus, it could be interesting to apply \cref{thm:approx sampl} to cases where the mixing time of the target model is either not known at all, or known to be not exponential.
This could yield, depending on the mixing time of the source model of the emulation, a novel, efficient, approximate sampling algorithm for this target model. It could also help us understand what properties of spin models determine their mixing times, and reveal which of them are preserved by emulations.

 \cref{thm:approx sampl} is very reminiscent of the universal approximation theorem of restricted Boltzmann machines \cite{LR08,Re24}. We shall return to this point in \nameref{sec:outlook}.

\bigskip 

\section{Universality of the 2d Ising Model with Fields}\label{sec:2d Ising}

Universality of a spin model is satisfied by a widely studied model, the 2d Ising model with fields $\mathcal{I}\textsubscript{2d}$, defined in \cref{ex:2d Ising}.

\begin{theorem}\label{thm:2dIsing}
The 2d Ising model with fields is universal. 
\end{theorem}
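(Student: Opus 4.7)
The plan is to apply Corollary \ref{cor: universality with locally closed}, which reduces universality of $\mathcal{I}_{2\mathrm{d},f}$ to proving three properties: scalability, functional completeness, and local closure. Scalability is essentially by definition: every local interaction in $\Pi_1\cup\Pi_2$ is closed under multiplication by a non-negative real, and scaling a spin system from $\mathcal{I}_{2\mathrm{d},f}$ leaves the grid structure untouched, so the trivial emulation returning $(\lambda\cdot S,\mathrm{id})$ on input $(\lambda,S,\delta)$ already witnesses $\mathcal{I}_{2\mathrm{d},f}\to\Scale(\mathcal{I}_{2\mathrm{d},f})$.

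For functional completeness, I would work locally on a single edge. Each flag $f_{\vec{x}}\colon[2]^2\to\mathbb{R}_{\geq0}$ with $\vec{x}\in[2]^{\{v_1,v_2\}}$ can be written as a non-negative combination of the four Ising building blocks on $\{v_1,v_2\}$: a two-body term from $\{\pi_2,\bar\pi_2\}$ together with two fields from $\{\pi_1,\bar\pi_1\}$, one on each spin. Concretely, each $f_{\vec{x}}$ differs from one of $\pi_2,\bar\pi_2$ by a non-negative field configuration on $v_1,v_2$ plus an additive constant; the constant is absorbed into the shift $\Shift$ of the resulting simulation. This exhibits $\mathrm{Cone}(\mathcal{I}_{2\mathrm{d},f})\to\mathcal{B}_2$ with identity encoding and trivial physical spin assignment.

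The main obstacle is local closure, where we must emulate $S+R_{J_T(e)}$ by a grid-with-fields spin system for arbitrary $S,T\in\mathcal{I}_{2\mathrm{d},f}$ and $e\in E_T$, using identity encoding. Since adding $R_{J_T(e)}$ may introduce an edge between spins that are not nearest neighbours in the grid of $S$, we need to \emph{route} the interaction $J_T(e)$ through the 2d lattice. The strategy is threefold: (i) construct a \emph{copy/wire gadget} in $\mathcal{I}_{2\mathrm{d},f}$ whose unique low-energy configurations force two designated physical spins to agree, allowing us to transport the effective state of a spin along a grid path; (ii) construct a \emph{crossing gadget} - the main new technical contribution promised in \cref{ssec:gadget sim} - whose ground states realize the identity $[2]^2\to[2]^2$ on four boundary spins arranged on the North, South, East, West of a small grid patch, without coupling the two ``wires'' energetically; and (iii) given $S+R_{J_T(e)}$, blow up $S$'s grid to a sufficiently large rectangle, lay out the two endpoints of $e$ using wire gadgets and replace every geometric crossing between wires (or between a wire and an existing edge of $S$) by the crossing gadget, finally attaching the original local interaction $J_T(e)$ at the meeting point. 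The resulting spin system lies in $\mathcal{I}_{2\mathrm{d},f}$ by construction.

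Correctness is then a modular assembly argument: each gadget is itself an emulation with identity encoding, and pairwise they satisfy the compatibility hypotheses of \cref{thm:sim sum} because physical spins coincide on overlaps by construction and auxiliary spins are disjoint; \cref{cor:sim sum finite} then combines them into a simulation of $S+R_{J_T(e)}$ with identity encoding, and polytime computability is clear because the routing has size polynomial in $|S|+|T|$. The technical crux, and where I expect to spend the most effort, is designing the couplings of the crossing gadget so that (a) all four boundary patterns $(s_1,s_2)\in[2]^2$ appear with exactly the same ground-state energy and the same auxiliary-spin degeneracy (so \cref{def:sys simulation} \ref{def:sim deg} and \ref{def:sim energy} hold), and (b) no unintended low-energy configuration leaks through, i.e.\ every configuration outside $\simSet$ is separated from the ground energy by at least the input cut-off $\delta$ - which forces the gadget's energy gap to scale linearly in $\delta$, as already reflected in the colour convention of \cref{fig:2D close legend}.
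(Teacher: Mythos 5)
Your proposal follows essentially the same route as the paper: universality is reduced to scalability, functional completeness, and local closure via \cref{cor: universality with locally closed}, and the technical heart is a crossing gadget deployed via modularity of simulations. Two remarks on where it deviates or falls short.

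On functional completeness, your decomposition of $f_{\vec x}$ as a non-negative combination of one Ising pair interaction and two fields plus a negative constant (absorbed into $\Shift$) is correct and in fact simpler than the paper's $S^{\delta}_{1,1}$, which uses two auxiliary spins with $\delta$-strong fields to pin them, yielding degeneracy $1$; your gadget leaves auxiliary spins free (degeneracy $4$), which is equally valid. One small point to fix: each summand in $\Cone(\mathcal{I}_{2\mathrm{d},f})$ must itself be a grid spin system with fields, not a bare canonical pair or field system, so you must pad to a $2\times 2$ grid with zero couplings on the unused edges and fields; once you do, the physical spins are an injection $\{v_1,v_2\}\hookrightarrow V_{2,2}$ and the rest works as you say.

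On local closure there is a genuine gap. You claim the crossing gadget can be realized directly on a small grid patch with the four boundary spins on its perimeter, and therefore that the assembled source lies in $\mathcal{I}_{2\mathrm{d},f}$ ``by construction.'' But if you build the crossing gadget out of Boolean-logic sub-gadgets, as the paper does via $I^{\delta}_{\mathrm{nor}}$ and $I^{\delta}_{\mathrm{iff}}$ (or indeed the paper's alternative LP-derived gadget of \cref{fig:int lin S}), the resulting planar graph has interior vertices of degree greater than $4$, and it is not a subgraph of a grid. The paper needs an explicit degree-reduction step, applying the local rewrite rules of \cref{fig:deg} with strong $\delta'\cdot\pi_2$ chains to split high-degree spins, and only afterwards invokes a linear-time planar grid embedding algorithm to obtain a genuine grid system. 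Your proposal does not address either of these, so the step from ``planar graph with a crossing gadget at each intersection'' to ``spin system in $\mathcal{I}_{2\mathrm{d},f}$'' is not by construction; you would have to either (a) insert a degree-reduction phase and an explicit grid-embedding phase, as the paper does, or (b) construct a crossing gadget that is literally a fixed grid patch with the four physical spins on its boundary, which is a strictly stronger requirement and would need its own proof.
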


The goal of this section is to prove this theorem. 
Because of \cref{cor: universality with locally closed} and the fact that $\mathcal{I}\textsubscript{2d}$ is trivially scalable, we only need to show that $\mathcal{I}\textsubscript{2d}$ is functional complete and locally closed. 
We prove functional completeness in \cref{ssec:f.c.Ising}. 
In \cref{ssec:gadget sim} we construct a crossing gadget, 
which is the essential piece to prove local closure in \cref{ssec:l.c.Ising}.

\subsection{Functional Completeness}\label{ssec:f.c.Ising}
In this subsection we prove functional completeness of $\mathcal{I}\textsubscript{2d}$.

\begin{lemma}\label{thm:2D f.c.}
The 2d Ising model with fields is functional complete.
\end{lemma}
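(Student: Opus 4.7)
The plan is to emulate each of the four basis flag systems $T_{f_{\vec{x}}}\in\mc{B}_2$, $\vec{x}\in[2]^2$, by a single 2d Ising system with fields on the smallest allowed grid $G_{2,2}$, embedding the two target spins in one row and using the two remaining grid vertices as inert auxiliaries. The emulation branches on the four possible values of $\vec{x}$, which are read off in constant time from the local interaction of the input target.

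The key identity, verified by evaluating both sides on the four elements of $[2]^2$, is
\begin{equation}
f_{\vec{x}}(s_1,s_2) \;=\; \tfrac{1}{2}\pi_2^{\vec{x}}(s_1,s_2) \;+\; \tfrac{1}{2}\pi_1^{x_1}(s_1) \;+\; \tfrac{1}{2}\pi_1^{x_2}(s_2) \;-\; \tfrac{1}{2},
\end{equation}
where $\pi_2^{\vec{x}}:=\pi_2$ if $x_1=x_2$ and $\pi_2^{\vec{x}}:=\bar{\pi}_2$ otherwise, and $\pi_1^{x}:=\bar{\pi}_1$ if $x=1$ and $\pi_1^{x}:=\pi_1$ if $x=2$. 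At $(s_1,s_2)=\vec{x}$ each of the three non-negative terms contributes $\tfrac{1}{2}$ (totalling $3/2$), while at each of the three other inputs exactly one of the three terms contributes $\tfrac{1}{2}$, so subtracting the constant $\tfrac{1}{2}$ reproduces $f_{\vec{x}}$.

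Given input $(T_{f_{\vec{x}}}\langle w_1,w_2\rangle,\delta)$, the emulation will output the source $S_{\vec{x}}\in\mathcal{I}\textsubscript{2d}\subseteq\Cone(\mathcal{I}\textsubscript{2d})$ built on $G_{2,2}$ by identifying $v_1:=(1,1)$ and $v_2:=(1,2)$, attaching the pair coupling $\tfrac{1}{2}\pi_2^{\vec{x}}$ to $\{v_1,v_2\}$ and the fields $\tfrac{1}{2}\pi_1^{x_1},\tfrac{1}{2}\pi_1^{x_2}$ to the singleton edges $\{v_1\},\{v_2\}$, while every other grid edge and every singleton edge touching the auxiliary spins $(2,1),(2,2)$ carries the zero coupling $0\in\Pi_2\cup\Pi_1$. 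The accompanying output simulation has physical spin assignment $\phys(w_i)=v_i$, identity encoding and decoding, shift $\Shift=\tfrac{1}{2}$, degeneracy $\degeneracy=4$ (the two auxiliary spins range freely), and cut-off $\delta$.

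Checking Definitions \ref{def:sys simulation} and \ref{def:spin model sim} is then routine and free of obstacles. Conditions \ref{def:sim physical spins}--\ref{def:sim disjoint enc} are immediate because $\phys$ is injective and there is a single identity encoding. By the displayed identity together with the fact that auxiliary edges contribute nothing, $H_{S_{\vec{x}}}(\vec{s})-\tfrac{1}{2}=f_{\vec{x}}(\vec{s}(v_1),\vec{s}(v_2))$ holds on all of $\mc{C}_{S_{\vec{x}}}$; hence for every low-energy target $\vec{t}$ the set $\simul(\vec{t})$ consists of exactly the four free extensions of $\vec{t}$ over $\{(2,1),(2,2)\}$, giving constant local degeneracy $\degeneracy=4$ and the first clause of \ref{def:sim energy}, while the second clause of \ref{def:sim energy} is automatic because every source configuration lies in $\simSet$ as soon as $H_{S_{\vec{x}}}(\vec{s})-\tfrac{1}{2}<\delta$. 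Finally, the interaction graph, $\phys$, $\enc$, $\dec$, $\degeneracy$, and even $\Shift$ are all independent of $\delta$, the construction is constant-time, and the encoding depends only on $q=2$, verifying conditions \ref{def:model sim right cutoff}--\ref{def:model sim enc} of \cref{def:spin model sim}. I do not anticipate any significant obstacle here; the genuine difficulty in proving universality of $\mathcal{I}\textsubscript{2d}$ lies in the crossing gadget and local-closure arguments treated in the subsequent subsections.
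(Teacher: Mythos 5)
Your proposal is correct, and it takes a genuinely different (and cleaner) route than the paper's. The paper builds each source $S^{\delta}_{i,j}$ on four spins by coupling the two physical spins to two auxiliaries via $\tfrac{1}{2}\pi_2$ terms and then pinning the auxiliaries with $\delta$-scaled interactions (a field and a pair coupling on the auxiliary pair); the effective fields on the physical spins therefore materialize only in the low-energy sector, the local interactions depend on $\delta$, the degeneracy is $1$, and the cut-off of the resulting simulation is $\delta - \tfrac{1}{2}$ (handled via \cref{thm:sim new delta}). You instead place the three $\tfrac{1}{2}$-scaled Ising terms \emph{directly} on the physical spins, using the pointwise identity
\begin{equation}
f_{\vec{x}} \;=\; \tfrac{1}{2}\pi_2^{\vec{x}} \;+\; \tfrac{1}{2}\pi_1^{x_1} \;+\; \tfrac{1}{2}\pi_1^{x_2} \;-\; \tfrac{1}{2},
\end{equation}
with the two grid auxiliaries fully inert. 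This identity is exactly the effective Hamiltonian that the paper's gadget induces on its low-energy sector (e.g.\ for $S^{\delta}_{1,1}$, pinning the auxiliaries to $(1,1)$ turns the two $\tfrac{1}{2}\pi_2$ couplings into $\tfrac{1}{2}\bar{\pi}_1$ fields), so the two proofs implement the same linear-algebraic fact; the paper simply realizes the fields indirectly via auxiliary pinning, whereas you use the fields that $\mc{I}_{2\mathrm{d},f}$ already provides. What your version buys: the source spin system and all simulation data except the cut-off itself are independent of $\delta$, so uniformity (condition \ref{def:cut-off indep} of \cref{def:spin model sim}) is immediate and no cut-off-lowering step is needed; your degeneracy is constantly $4$ (the two free auxiliaries) rather than $1$. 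One small caveat worth making explicit in a write-up: a single canonical pair-interaction or field system is not itself in $\mc{I}_{2\mathrm{d},f}$ (its interaction hypergraph is not a grid plus singletons), so the output formal expression must be the padded $G_{2,2}$ system $S_{\vec{x}}$ as a single summand of $\Cone(\mc{I}_{2\mathrm{d},f})$ rather than the three local terms separately — you handle this correctly, but it is the one place where being sloppy would break the typing of the emulation.
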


\begin{proof}
    We construct an emulation $\Cone(\mc{I}_{\textsubscript{2d}}) \to \mc{B}_2$.
    A generic spin system from $\mc{B}_2$ is of the form $T_{f_{(i,j)}}\langle v_1, v_2 \rangle$, for $i,j \in [2]$.
    First, we consider the case $i=j=1$. 
    By \cref{thm:sim new delta}, w.l.o.g.\ we construct the required simulation for $\delta > 3$. $S^{\delta}_{1,1}$, defined in \cref{fig:IsingFlag}, has the following $3$ ground states with energy $+1/2$: 
    \begin{equation}
        \{ (2,2,1,1),(2,1,1,1),(1,2,1,1)\}.   
    \end{equation}
    Configuration $(1,1,1,1)$ has energy $+3/2$, and 
    all other configurations have energy $\geq \delta$.
    Hence, it yields a simulation $S^{\delta}_{1,1} \to T_{f_{(1,1)}}$ with cut-off $\delta-1/2$ and shift $+1/2$.
    Note that the cut-off can be arbitrarily high by modifying $\delta$.

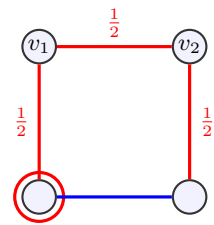
\begin{figure}[th]
    \centering
     \begin{tikzpicture}[circuit logic US, cgnodeS/.style = {draw=black!80, fill=blue!5, thick, circle, minimum size= {width("$v_{4}$")+4pt}, inner sep=1pt}] 

    \node[cgnodeS] (v1) at (-2,0) {$v_1$};
    \node[cgnodeS] (v3) at (0,0) {$v_2$};
    \draw[very thick, red] (v1) to node[midway, above] {$\frac{1}{2}$} (v3);

    \node[cgnodeS] (x1) at (-2,-2) {};  
    \draw[very thick, red] (-2,-2) circle (1em);

    \node[cgnodeS] (x2) at (0,-2) {};
    \draw[very thick, red] (v1) to node[midway, left] {$\frac{1}{2}$} (x1);
    \draw[very thick, blue] (x2) to (x1);
    \draw[very thick, red] (v3) to node[midway, right] {$\frac{1}{2}$} (x2);

    \end{tikzpicture}
    \caption{Definition of $S^{\delta}_{1,1}$.}
    \label{fig:IsingFlag}
\end{figure}

The remaining cases $(i,j)\neq (1,1)$ can be treated similarly, by defining 
\begin{widetext}
\begin{equation}
\begin{split}
S^{\delta}_{1,2}\langle v_1,v_2,w_1,w_2\rangle &=  
\frac{1}{2} \cdot \Bigl( S_{\bar{\pi}_2}\langle v_1,v_2 \rangle  +  S_{\pi_2}\langle v_2,w_2 \rangle 
		+  S_{\bar{\pi}_2}\langle v_1,w_1 \rangle\ \Bigr)
        + \delta \cdot S_{\bar{\pi}_1}\langle  w_1 \rangle
		 + \delta\cdot S_{\bar{\pi}_2} \langle w_1,w_2 \rangle   \\ 
S^{\delta}_{2,1}\langle v_1,v_2,w_1,w_2\rangle &=  
\frac{1}{2} \cdot \Bigl( S_{\bar{\pi}_2}\langle v_1,v_2 \rangle  + 
		 S_{\pi_2}\langle v_2,w_2 \rangle 
		+  S_{\bar{\pi}_2}\langle v_1,w_1 \rangle\ \Bigr)
        + \delta \cdot S_{\pi_1}\langle  w_1 \rangle
		 + \delta\cdot S_{\bar{\pi}_2} \langle w_1,w_2 \rangle   \\
	S^{\delta}_{2,2}\langle v_1,v_2,w_1,w_2\rangle &=  
\frac{1}{2} \cdot \Bigl(S_{\pi_2}\langle v_1,v_2 \rangle  + 		 S_{\pi_2}\langle v_2,w_2 \rangle 		+  S_{\pi_2}\langle v_1,w_1 \rangle\ \Bigr)        + \delta \cdot S_{\bar{\pi}_1}\langle  w_1 \rangle
		 + \delta\cdot S_{\bar{\pi}_2} \langle w_1,w_2 \rangle .
   \label{eq: S_22}
\end{split}
\end{equation}
\end{widetext}
This construction satisfies \cref{def:spin model sim}.    
\end{proof}

\subsection{The Crossing Gadget}\label{ssec:gadget sim}

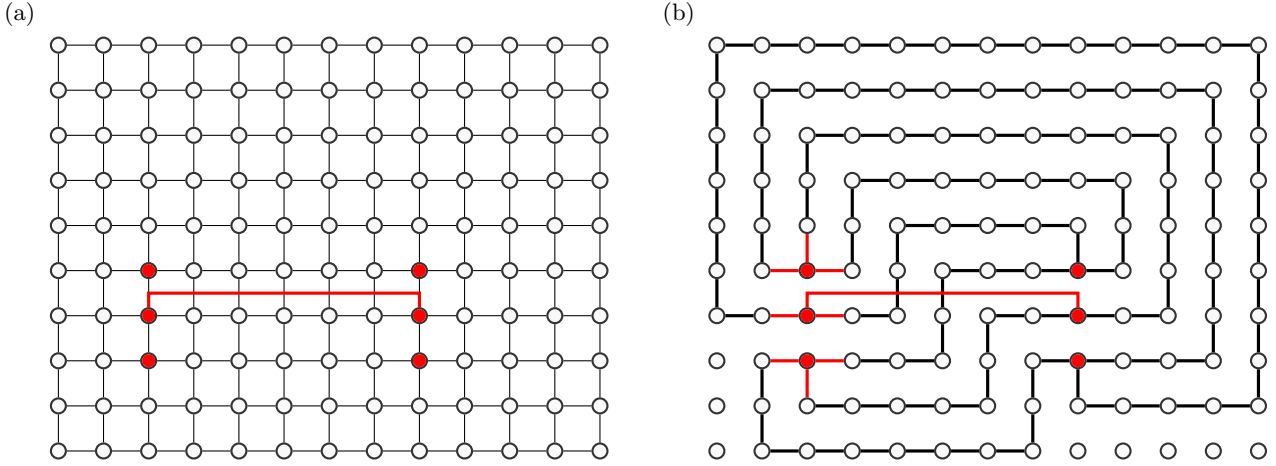
\begin{figure*}[th]
\centering 
\begin{subfigure}{1\columnwidth}
\caption{}\label{fig:lattice additional edge}
\centering 
\begin{tikzpicture}[cgnodeS/.style = {draw=black!80, fill=gray!5, thick, circle, minimum size= 0.5em, inner sep=2pt},scale = 0.6] 
\foreach \x in {1,...,13} {
    \foreach \y in {1,...,10} {
        \node[cgnodeS] (\x a\y) at (\x, \y)  {} ;
        }
}
\foreach[evaluate=\x as \a using int(\x +1)] \x in {1,...,12} {
    \foreach[evaluate=\y as \b using int(\y +1)] \y in {1,...,9} {
        \draw[-] (\x a\y) to (\x a\b);    
        \draw[-] (\x a\y) to (\a a\y);             

    }
}
\foreach[evaluate=\x as \a using int(\x +1)] \x in {1,...,12} {
\draw[-] (\x a10) to (\a a10);
}
\foreach[evaluate=\y as \b using int(\y +1)] \y in {1,...,9} {
\draw[-] (13a\y) to (13a\b);
}
\draw[very thick, red] (3a4) to (3,4.5) to (9,4.5) to (9a4);

\filldraw[red](3,5) circle (0.4em);
\filldraw[red](3,4) circle (0.4em);
\filldraw[red](3,3) circle (0.4em);

\filldraw[red](9,5) circle (0.4em);
\filldraw[red](9,4) circle (0.4em);
\filldraw[red](9,3) circle (0.4em);

\end{tikzpicture}
\end{subfigure}
\begin{subfigure}{1\columnwidth}
    \caption{}
    \label{fig:Lattice3-3Minor}
    \centering 
    \begin{tikzpicture}[cgnodeS/.style = {draw=black!80, fill=gray!5, thick, circle, minimum size= 0.5em, inner sep=2pt},
    scale = 0.6] 

    \foreach \x in {1,...,13} {
        \foreach \y in {1,...,10} {
            \node[cgnodeS] (\x a\y) at (\x, \y)  {} ;
    }
  }

    \filldraw[red](3,5) circle (0.4em);
    \filldraw[red](3,4) circle (0.4em);
    \filldraw[red](3,3) circle (0.4em);

    \filldraw[red](9,5) circle (0.4em);
    \filldraw[red](9,4) circle (0.4em);
    \filldraw[red](9,3) circle (0.4em);

    \draw[very thick, red] (3a5) to (2a5);

    \draw[-, very thick] (2a5) to (2a6) to (2a7) to (2a8) to (2a9) to  (3a9) to (4a9) to (5a9) to (6a9) to (7a9) to (8a9) to (9a9) to (10a9) to (11a9) to (12a9) to (12a8) to (12a7) to (12a6) to (12a5) to (12a4) to (12a3) to (11a3) to (10a3) to (9a3);

    \draw[very thick, red] (3a5) to (3a6);

    \draw[-, very thick] (3a6) to (3a7) to (3a8) to (4a8) to (5a8) to (6a8) to (7a8) to (8a8) to (9a8) to (10a8) to (11a8) to (11a7) to (11a6) to (11a5) to (11a4) to (10a4) to (9a4);

    \draw[very thick, red] (3a5) to (4a5);

    \draw[-, very thick] (4a5) to (4a6) to (4a7) to (5a7) to (6a7) to (7a7) to (8a7) to (9a7) to (10a7) to (10a6) to (10a5) to (9a5);

    \draw[very thick, red] (3a4) to (2a4);

    \draw[-, very thick] (2a4) to (1a4) to (1a5) to (1a6) to (1a7) to (1a8) to (1a9) to (1a10) to (2a10) to (3a10) to (4a10) to (5a10) to (6a10) to (7a10) to (8a10) to (9a10) to (10a10) to (11a10) to (12a10) to (13a10) to (13a9) to (13a8) to (13a7) to (13a6) to (13a5) to (13a4) to (13a3) to (13a2) to (12a2) to (11a2) to (10a2) to (9a2) to (9a3);

    \draw[very thick, red] (3a4) to (4a4);

    \draw[-, very thick] (4a4) to (5a4) to (5a5) to (5a6) to (6a6) to (7a6) to (8a6) to (9a6) to (9a5);

    \draw[very thick, red] (3a3) to (4a3);

    \draw[-, very thick] (4a3) to (5a3) to (6a3) to (6a4) to (6a5) to (7a5) to (8a5) to (9a5);

    \draw[very thick, red] (3a3) to (3a2);

    \draw[-, very thick] (3a2) to (4a2) to (5a2) to (6a2) to (7a2) to (7a3) to (7a4) to (8a4) to (9a4);

    \draw[very thick, red] (3a3) to (2a3);

    \draw[-, very thick] (2a3) to (2a2) to (2a1) to (3a1) to (4a1) to (5a1) to (6a1) to (7a1) to (8a1) to (8a2) to (8a3) to (9a3);  

    \draw[very thick, red] (3a4) to (3,4.5) to (9,4.5) to (9a4);
    
    \end{tikzpicture}
    \end{subfigure}
    \caption{Graph $G$ consisting of a $13\times 10$ lattice with an additional edge between vertex $(3,4)$ and vertex $(9,4)$ drawn in red (\ref{fig:lattice additional edge}). Proof that $G$ is non-planar (\ref{fig:Lattice3-3Minor}): merging along the black edges and deleting all other edges and isolated vertices yields $K_{3,3}$ as minor of $G$ (cf.\ Wagner's theorem), namely each of the three red vertices on the left is connected to each of the three red vertices on the right. 
    Here colors just highlight elements and do not refer to the graphical notation for spin systems of \cref{fig:2D close legend}.}\label{fig:lattice minor}
\end{figure*}

Here we define the crossing gadget $I^{\delta}_{\times}$, which we shall use to prove of local closure of $\mc{I}_{\textsubscript{2d}}$ in \cref{ssec:l.c.Ising}.
To this end, we shall first explain why non-planarity is unavoidable in the construction of the emulation $\mc{I}_{\textsubscript{2d}} \to \mc{I}_{\textsubscript{2d}} + J(\mc{I}_{\textsubscript{2d}})$, which is necessary to prove local closure. 
We shall then construct the crossing gadget and explain how it can be used to simulate arbitrary non-planar spin systems with planar source systems.

\begin{figure*}[th]
    \centering
    \begin{subfigure}[t]{1\columnwidth}
    \caption{}\label{fig:IsingNor}
    \begin{tikzpicture}[circuit logic US, cgnodeS/.style = {draw=black!80, fill=blue!5, thick, circle, minimum size= {width("$v_{4}$")+4pt}, inner sep=1pt}] 

    \node[cgnodeS] (v2) at (2.5,0) {$v_2$}; 
    \draw[very thick, red](2.5,0) circle (1em);

    \node[cgnodeS] (v1) at (-2.5,0) {$v_1$};
    \draw[very thick, red](-2.5,0) circle (1em);

    \node[cgnodeS] (a2) at (0,-1) {};
    \draw[very thick, red](0,-1) circle (1em);

    \node[cgnodeS] (a1) at (0,-4) {};
    \draw[very thick, red](0,-4) circle (1em);

    \node[cgnodeS] (v3) at (0,-7) {$v_3$};
    \draw[very thick, red](0,-7) circle (1em);

    \draw[very thick, red] (v1) to [out=270,in=135] (a1);

    \draw[very thick, red] (v1) to [out=270,in=135] (v3);

    \draw[very thick, red] (a1) to (a2);

    \draw[very thick, red] (v3) to (a1);

    \draw[very thick, red] (v2) to [out=270,in=45] (v3);

    \draw[very thick, red] (v2) to [out=270,in=-35] (a2);

    \node (=) at (3.6,-3) {$\eqqcolon$};

    \node[nor gate, point down, info=center:$I_\downarrow$] (g) at (4.5,-3) {};

    \node[cgnodeS] (e21) at (4.25,-2) {$v_1$};
    \node[cgnodeS] (e22) at (4.75,-2) {$v_2$};
    \node[cgnodeS] (e23) at (4.5,-4) {$v_3$};

    \draw[-] (e21) to (g);
    \draw[-] (e22) to (g);
    \draw[-] (e23) to (g);

    \node[cgnodeS] (x1) at (-3,-2) {};  
    \draw[very thick, red] (-3,-2) circle (1em);
    \node[text = red] (d1) at (-3.35,-1.6) {$2\delta^2$};

    \node[cgnodeS] (x2) at (-3,-4) {};
    \draw[very thick, red] (v1) to (x1);
    \draw[very thick, blue] (x2) to node[midway, right] {$2\delta^2$} (x1);
    \draw[very thick, red] (v3) to (x2);

    \node[cgnodeS] (x3) at (3,-2) {};    
    \draw[very thick, red] (3,-2) circle (1em);
    \node[text = red] (d2) at (3.5,-1.65) {$2\delta^2$};
    \node[cgnodeS] (x4) at (3,-4) {};
    \draw[very thick, red] (v2) to (x3);
    \draw[very thick, blue] (x4) to node[midway, left] {$2\delta^2$} (x3);
    \draw[very thick, red] (v3) to (x4);

    \node[cgnodeS] (x5) at (-1,-4.5) {}; 
    \draw[very thick, red] (-1,-4.5) circle (1em);
    \node[text = red] (d1) at (-1.35,-4.1) {$2\delta^2$};
    \node[cgnodeS] (x6) at (-0.5,-5.5) {};
    \draw[very thick, red] (a1) to (x5);
    \draw[very thick, blue] (x5) to node[midway, right] {$2\delta^2$} (x6);
    \draw[very thick, red] (v3) to (x6);

    \node[cgnodeS] (x7) at (1,-2) {};  
    \draw[very thick, red] (1,-2) circle (1em);
    \node[text = red] (d3) at (1.6,-1.8) {$2\delta^2$};
    \node[cgnodeS] (x8) at (1,-3) {};
    \draw[very thick, red] (a2) to (x7);
    \draw[very thick, blue] (x7) to node[midway, right] {$2\delta^2$}  (x8);
    \draw[very thick, red] (a1) to (x8);

    \node[cgnodeS] (x9) at (-1.5,-0.2) {};   
    \draw[very thick, red] (-1.5,-0.2) circle (1em);
    \node[text = red] (d4) at (-1.35,0.3) {$2\delta^2$};
    \node[cgnodeS] (x10) at (-0.5,-1.5) {};
    \draw[very thick, red] (v1) to (x9);
    \draw[very thick, blue] (x9) to node[midway, left] {$2\delta^2$}  (x10);
    \draw[very thick, red] (a1) to (x10);

    \node[cgnodeS] (x11) at (1.5,0.2) {}; 
     \draw[very thick, red] (1.5,0.2) circle (1em);
    \node[text = red] (d5) at (1.8,0.7) {$2\delta^2$};
    \node[cgnodeS] (x12) at (0.5,-0.5) {};
    \draw[very thick, red] (v2) to (x11);
    \draw[very thick, blue] (x11) to node[midway, below] {$2\delta^2$}  (x12);
    \draw[very thick, red] (a2) to (x12);
    \end{tikzpicture}
\end{subfigure}
\begin{subfigure}[t]{1\columnwidth}
    \caption{}\label{fig:iffIsing}
    \begin{tikzpicture}[circuit logic US, cgnodeS/.style = {draw=black!80, fill=blue!5, thick, circle, minimum size= {width("$v_{4}$")+4pt}, inner sep=1pt}] 

    \node[nor gate, point down, info=center:$I_\downarrow$] (g1) at (0,0) {};
    \node[nor gate, point down, info=center:$I_\downarrow$] (g2) at (-1,-2) {};
    \node[nor gate, point down, info=center:$I_\downarrow$] (g3) at (1,-2) {};
    \node[nor gate, point down, info=center:$I_\downarrow$] (g4) at (0,-4) {};

    \node[cgnodeS] (v1) at (-0.5,1) {$v_1$};
    \node[cgnodeS] (v2) at (0.5,1) {$v_2$};

    \draw[-] (v1) to (g1); 
    \draw[-] (v2) to (g1);

    \draw[-] (v1) to (-1,0) to (g2); 
    \draw[-] (v2) to (1,0) to (g3);

    \node[cgnodeS] (v3) at (0,-1) {};
    \draw[-] (v3) to (g1); 
    \draw[-] (v3) to (-0.5,-1) to (g2); 
    \draw[-] (v3) to (0.5,-1) to (g3);

    \draw[-] (g2) to (-1,-3) to (-0.5,-3) to (g4);
    \draw[-] (g3) to (1,-3) to (0.5,-3) to (g4);

    \node[cgnodeS] (v4) at (0,-5) {$v_3$};
    \draw[-] (g4) to (v4);

    \node (eq) at (2,-3) {$\eqqcolon$};

    \node[xnor gate, point down, info=center:$I_\Leftrightarrow$] (g5) at (3,-3) {};

    \node[cgnodeS] (v11) at (2.75,-2) {$v_1$};
    \node[cgnodeS] (v21) at (3.25,-2) {$v_2$};
    \node[cgnodeS] (v31) at (3,-4) {$v_3$};

    \draw[-] (v11) to (g5);
    \draw[-] (v21) to (g5);
    \draw[-] (v31) to (g5);
  
    \end{tikzpicture}
\end{subfigure}
\caption{Definition and symbolic notation of $I^{\delta}_{\mr{nor}}$ (\ref{fig:IsingNor}) and  $I^{\delta}_{\mr{iff}}$ (\ref{fig:iffIsing}).}\label{fig:Iff}
\end{figure*}

A graph is planar if it can be embedded into the 2d plane such that none of its edges cross. 
Planar graphs can be fully characterized in terms of their graph minors: According to Wagner's theorem, a graph is planar if and only if it does not contain the complete graph on $5$ vertices, $K_5$, or the complete, bipartite graph on $3+3$ vertices, $K_{3,3}$, as minor.

Clearly, all interaction graphs from $\mc{I}_{\textsubscript{2d}}$, i.e.\ all 2d lattices, are planar. 
But adding a single edge to a 2d lattice, as in $\mc{I}_{\textsubscript{2d}} + J(\mc{I}_{\textsubscript{2d}})$, generally leads to non-planarity (see \cref{fig:lattice minor}). 
Spin systems over non-planar graphs can be simulated with spin systems over planar graphs by means of the crossing gadget. 

\begin{figure*}[th]
        \centering 
        \begin{subfigure}[t]{1\columnwidth} 
        \caption{}\label{fig:cross1}
        \begin{tikzpicture}[circuit logic US, cgnodeS/.style = {draw=black!80, fill=blue!5, thick, circle, minimum size= {width("$v_{4}$")+4pt}, inner sep=1pt},
        and/.style = {draw=black, double,  rectangle, minimum size = 2em }] 
         \node[xnor gate, point down, info=center:$I_\Leftrightarrow$] (g1) at (0,0) {};
         \node[xnor gate, point down, info=center:$I_\Leftrightarrow$] (g2) at (-1,-2) {};
        \node[xnor gate, point down, info=center:$I_\Leftrightarrow$] (g3) at (1,-2) {};

        \node[cgnodeS] (v1) at (-1,1.5) {$v_1$};
        \node[cgnodeS] (v2) at (1,1.5) {$v_2$};

        \draw[-] (v1) to (-0.25, 1) to (g1); 
        \draw[-] (v2) to (0.25,1) to (g1);

        \draw[-] (v1) to (-1,0) to (g2); 
        \draw[-] (v2) to (1,0) to (g3);

        \node[cgnodeS] (v3) at (0,-1) {};
        \draw[-] (v3) to (g1); 
        \draw[-] (v3) to (-0.5,-1) to (g2); 
        \draw[-] (v3) to (0.5,-1) to (g3);

        \node[cgnodeS] (v4) at (-1,-3) {$v_3$};
        \node[cgnodeS] (v5) at (1,-3) {$v_4$};

        \draw[-] (g2) to (v4);
        \draw[-] (g3) to (v5);

        \node (=) at (2,-1) {$\eqqcolon$};
    
        \node[cgnodeS] (e1) at (2.5,0) {$v_1$};
        \node[cgnodeS] (e2) at (3.5,0) {$v_2$};
        \node[cgnodeS] (e3) at (2.5,-2) {$v_3$};
        \node[cgnodeS] (e4) at (3.5,-2) {$v_4$};
    
        \node[and] (g4) at (3,-1) {$\times$};
        \draw[-] (e1) to (g4);
        \draw[-] (e2) to (g4);
        \draw[-] (e3) to (g4);
        \draw[-] (e4) to (g4);
        \end{tikzpicture} 
        
        \end{subfigure}
    \begin{subfigure}[t]{1\columnwidth}
    \caption{}\label{fig:cross2}
     \begin{tikzpicture}[circuit logic US, cgnodeS/.style = {draw=black!80, fill=blue!5, thick, circle, minimum size= {width("$v_{4}$")+4pt}, inner sep=1pt},
        and/.style = {draw=black, double,  rectangle, minimum size = 2em }] 
        \node[cgnodeS] (e1) at (2.5,0) {$v_1$};
        \node[cgnodeS] (e2) at (3.5,0) {$v_2$};
        \node[cgnodeS] (e3) at (2.5,-2) {};
        \node[cgnodeS] (e4) at (3.5,-2) {};
    
        \node[and] (g4) at (3,-1) {$\times$};
        \draw[-] (e1) to (g4);
        \draw[-] (e2) to (g4);
        \draw[-] (e3) to (g4);
        \draw[-] (e4) to (g4);

        \node[cgnodeS] (e5) at (4,-3) {$v_4$};
        \node[cgnodeS] (e6) at (2,-3) {$v_3$};

        \draw[very thick, brown] (e4) to (e5);
        \draw[very thick, black] (e3) to (e6);

        \node[cgnodeS] (f1) at (6,-0.5) {$v_1$};
        \node[cgnodeS] (f2) at (7,-0.5) {$v_2$};
        \node[cgnodeS] (f3) at (6,-2.5) {$v_3$};
        \node[cgnodeS] (f4) at (7,-2.5) {$v_4$};

        \draw[very thick, brown] (f1) to (f4);
        \draw[very thick, black] (f2) to (f3);

        \draw[->] (4,-1.5) to (5.5,-1.5);
       
        \end{tikzpicture} 
        
    \end{subfigure}
         \caption{Definition and symbolic notation of $I^{\delta}_{\times}$ (\ref{fig:cross1}), where $I_{\Leftrightarrow}$ is defined in \cref{fig:Iff}. Simulation of two crossing edges with arbitrary local interactions by $I^{\delta}_{\times}$ (\ref{fig:cross2}).}\label{fig:CrossGadget}
\end{figure*}
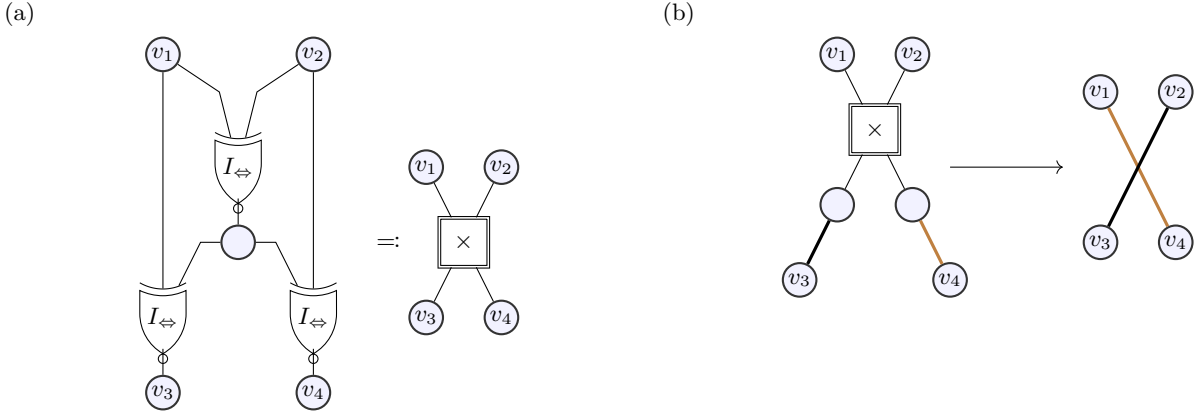

To construct the crossing gadget, we leverage that 2d Ising systems with fields can encode Boolean logic in their ground state (cf.\ \cref{thm:2D f.c.}). 
First, we construct an Ising system $I^{\delta}_{\mr{nor}}$ (defined in \cref{fig:IsingNor}) which in the ground state computes the logical $\mr{nor}$ of the states of two of its spins $v_1,v_2$ and stores the result in the state of a third spin $v_3$. It is obtained by replacing the $\mc{B}_{2}$ interactions in $S^{\delta}_{\mr{nor}}$ (see \cref{fig:NOR}) with the corresponding Ising interactions. More precisely, by \cref{thm:2D f.c.}, $S^{\delta}_{1,1}$ (defined in \cref{fig:IsingFlag}) simulates $f_{(1,1)}$ with cut-off $\delta-1/2$ and shift $1/2$. Thus, by \cref{thm:sim scale}, 
\begin{equation}
    2\delta \cdot S^{\delta}_{1,1} \to 2\delta\cdot f_{1,1}
\end{equation}
with cut-off $2\delta (\delta - 1/2)$ and shift $\delta$.

We now replace 
all $2\delta \cdot f_{(1,1)}$ interactions by $2\delta \cdot S^{\delta}_{1,1}$ spin systems,
and all $\delta \cdot f_{(2,1)}+ \delta \cdot f_{(2,2)}$ by $\delta \cdot \pi_1$ fields.
In the following we assume that 
$2\delta^2>10\delta$, which guarantees that in the ground state of $I^{\delta}_{\mr{nor}}$ the subsystems $2\delta \cdot S^{\delta}_{1,1}$  correctly simulate $2\delta\cdot f_{1,1}$ interactions.
Then, similarly to $S^{\delta}_{\mr{nor}}$, the ground state of $I^{\delta}_{\mr{nor}}$ satisfies
\begin{equation}\label{eq:nor}
    \vec{s}(v_3) = \mr{nor}(\vec{s}(v_1),\vec{s}(v_2)).
\end{equation}
Its ground state energy is $9\delta$, and all other configurations have energy greater than $10\delta$.
Note that while $S^{\delta}_{\mr{nor}}$ has $4\cdot 2^5$ ground states, 
$I^{\delta}_{\mr{nor}}$ only has $4$ ground states, one for each possibility of satisfying \cref{eq:nor}. This results from replacing the $5$ copies of $\delta \cdot f_{(2,1)}+ \delta \cdot f_{(2,2)}$ each of which has a 2-fold degenerate ground state, with $\delta \cdot \pi_1$ which has a non-degenerate ground state.

Next, we construct an Ising system $I^{\delta}_{\mr{iff}}$ (defined in \cref{fig:iffIsing}) that in its ground state implements the logical bidirectional $\Leftrightarrow$, also called $\mr{xnor}$.
It is constructed from $S^{\delta}_{\mr{nor}}$, using the fact that $x \Leftrightarrow y$ can be expressed in terms of $\mr{nor}$ as
\begin{equation}
   ( x \Leftrightarrow y ) = \mr{nor}\bigl(\mr{nor}(x,\mr{nor}(x,y)), \mr{nor}(y, \mr{nor}(x,y)) \bigr).
\end{equation}
In the ground state of $I^{\delta}_{\mr{iff}}$, $\vec{s}(v_3) = \bigl(\vec{s}(v_1) \Leftrightarrow \vec{s}(v_2)\bigr)$, which can also be stated as 
\begin{equation}\label{eq:iff}
    \vec{s}(v_3) =  \begin{cases}
        1 \ \text{if } \ \vec{s}(v_1) = \vec{s}(v_2) \\
        2 \ \text{else}.
    \end{cases}
\end{equation}
Its ground state energy is $36\delta$, and all other configurations have energy at least $37\delta$.
Note that $I^{\delta}_{\mr{iff}}$ has $4$ ground states only, one for each possibility of satisfying \eqref{eq:iff}.

Finally, the crossing gadget $I^{\delta}_{\times}$ (defined in \cref{fig:CrossGadget}) is constructed from $I^{\delta}_{\mr{iff}}$. From the identity 
\begin{equation}
  x = \bigl(x \Leftrightarrow (x \Leftrightarrow y) \bigr) 
\end{equation}
we conclude that in the ground state of $I^{\delta}_{\times}$,  
\begin{equation}\label{eq:cross}
    \vec{s}(v_3) = \vec{s}(v_2) \quad \text{and} \quad \vec{s}(v_4) = \vec{s}(v_1).
\end{equation}
Its ground state energy is $108\delta$, and all other configurations have energy at least $109\delta$.
Note that $I^{\delta}_{\times}$ has precisely $4$ ground states, one for each possibility of satisfying \eqref{eq:cross}.

So, in the ground state, $I^{\delta}_{\times}$ copies the state of $v_2$ to $v_3$ and the state of $v_1$ to $v_4$. Since both $I^{\delta}_{\mr{nor}}$ and $I^{\delta}_{\mr{iff}}$ are planar, so is $I^{\delta}_{\times}$.
We will now show how to simulate two crossing edges with a planar source system by means of 
$I^{\delta}_{\times}$ (see \cref{fig:cross2}).

\begin{lemma}[crossing gadget]\label{lem:crossGadget}
    Let $T$ be a spin system consisting of two edges $\{v_1,v_3\}$ and $\{v_2,v_4\}$ with arbitrary local interactions. Then \cref{fig:cross2} defines a simulation with cut-off $\delta$ and shift $108\delta$.
\end{lemma}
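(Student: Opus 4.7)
The plan is to verify directly the five conditions of \cref{def:sys simulation} with parameters $\Delta = \delta$, $\Shift = 108\delta$, $\degeneracy = 1$, trivial encoding and decoding, and physical spin assignment identifying $v_1, v_2$ with the two top physical spins of $I^{\delta}_{\times}$ and identifying the remaining two target spins with the external source spins to which the black and brown edges of \cref{fig:cross2} are attached. Because the encoding is trivial and the four target spins map to four distinct source spins, conditions \ref{def:sim physical spins}, \ref{def:sim enc-dec}, and \ref{def:sim disjoint enc} hold immediately.

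The core of the argument is the low-energy analysis, which rests entirely on the copying property established when constructing $I^{\delta}_{\times}$: its ground state has energy $108\delta$, is four-fold degenerate, and consists exactly of those configurations realizing \eqref{eq:cross}; every other configuration has energy at least $109\delta$. Given a target configuration $\vec{t}$, I construct a source configuration $\vec{s}$ by setting the top gadget spins to $\vec{t}(v_1), \vec{t}(v_2)$, assigning all internal gadget spins the unique values forming the corresponding ground state, and setting the two external source spins according to the values $\vec{t}$ assigns to the bottom target spins. By the copying relation \eqref{eq:cross}, each external edge then connects a bottom gadget spin carrying the value of some $\vec{t}(v_i)$ to the external spin carrying $\vec{t}(v_j)$, where $\{v_i,v_j\}$ is exactly one edge of $T$. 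Hence the two external edges contribute precisely $H_T(\vec{t})$, and adding the gadget contribution gives $H_S(\vec{s}) - 108\delta = H_T(\vec{t})$, establishing the first half of \ref{def:sim energy}. Since the gadget ground state is uniquely determined once $\vec{t}(v_1), \vec{t}(v_2)$ are fixed and the remaining two source spins are pinned by $\vec{t}$, the construction yields exactly one simulating source configuration per target configuration, so condition \ref{def:sim deg} holds with $\degeneracy = 1$.

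For the second half of \ref{def:sim energy}, any $\vec{s} \notin \simSet$ must assign to the gadget spins values that fail to realize one of its four ground states, incurring at least $\delta$ of extra gadget energy. The external edge interactions are non-negative by \cref{def:spin sys}, so they cannot offset this penalty, giving $H_S(\vec{s}) - 108\delta \geq \delta = \Delta$. The main subtlety to watch for is exactly this non-cancellation step: one must invoke non-negativity of local interactions to rule out that a ``bad'' gadget configuration combined with a very favourable external edge assignment sneaks back below the cut-off. Once this is noted, the proof reduces to bookkeeping of the gadget spectrum and the construction of $\vec{s}$ from $\vec{t}$, both of which are already established when defining $I^{\delta}_{\times}$.
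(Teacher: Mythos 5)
Your proof is correct and follows the same approach as the paper, which simply appeals to the ground-state spectrum of $I^{\delta}_{\times}$. One remark on where the non-negativity argument actually does its work: since you use the trivial encoding and $\phys$ is an inclusion of $V_T$ into $V_S$, the set $\simSet$ is by definition exactly $\{\vec{s} \mid H_S(\vec{s}) - \Shift < \Del\}$, so the second half of condition \ref{def:sim energy} is automatic, and your claim that every $\vec{s}\notin\simSet$ must assign the gadget a non-ground configuration is not accurate --- a ground-state gadget whose restriction to the physical spins has $H_T \geq \delta$ also lies outside $\simSet$. The non-negativity of the external interactions is, however, precisely what shows that every $\vec{s}\in\simul(\vec{t})$ (i.e.\ every source configuration below the cut-off) must have its gadget portion in a ground state, and that is what your degeneracy count and the first half of condition \ref{def:sim energy} for \emph{arbitrary} $\vec{s}\in\simul(\vec{t})$ rely on; as written you only verify the energy identity for the single configuration you construct, so the argument should be moved to support those two conditions rather than the trivially satisfied one.
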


\begin{proof}
    It follows from the ground state behavior of $I^{\delta}_{\times}$. 
\end{proof}
By \cref{thm:sim sum}, the crossing gadget can simulate an arbitrary non-planar spin system $T$ (containing only pair interactions and fields) with a planar source system. One ought to iterate over all crossings  of $T$ (w.r.t.\ some embedding into the plane), and locally replace the two crossing edges according to \cref{fig:cross2} while keeping the rest of $T$ unchanged (see \cref{fig:sim sum} for $S_i\to T_i$ being the simulation from \cref{lem:crossGadget}).

\subsection{Local Closure}\label{ssec:l.c.Ising}

In this subsection we prove local closure of $\mathcal{I}\textsubscript{2d}$ and thereby finish the proof of its universality. To this end we construct an emulation $\mc{I}_{\textsubscript{2d}} \to \mc{I}_{\textsubscript{2d}} + J(\mc{I}_{\textsubscript{2d}})$ by means of the crossing gadget defined in \cref{fig:CrossGadget}.

\begin{lemma}\label{thm:2d local closed}
The 2d Ising model with fields is locally closed.
\end{lemma}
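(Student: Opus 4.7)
The plan is to construct the required emulation $\text{\sc LC}\colon \mc{I}_{\textsubscript{2d}} \to \mc{I}_{\textsubscript{2d}} + J(\mc{I}_{\textsubscript{2d}})$ with identity encoding by case analysis on the input $(S,R,\delta)$, with the nontrivial case handled by the crossing gadget of \cref{lem:crossGadget}. By definition of $J(\mc{I}_{\textsubscript{2d}})$, the canonical spin system $R$ carries a single local interaction that is either a field ($\lambda\cdot\pi_1$ or $\lambda\cdot\bar{\pi}_1$) or an Ising pair interaction ($\lambda\cdot\pi_2$ or $\lambda\cdot\bar{\pi}_2$). The easy cases are when $R$ is a field, or a pair interaction whose two spins $\{v,w\}$ are already grid-adjacent in (a bounded enlargement of) $G_{m,n}$: in these cases I would simply merge $R$'s interaction into the appropriate edge or vertex of a grid containing $V(S)\cup V(R)$, returning the resulting 2d Ising system together with the identity simulation of cut-off $\delta$ and shift $0$.

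The nontrivial case is the one illustrated in \cref{fig:lattice minor}, where $R$ is a pair interaction on $\{v,w\}$ that is not a grid edge, so that $S+R$ has a non-planar interaction graph in general. Here the plan is to (i) embed $G_{m,n}$ into a larger grid $G_{M,N}$ (with $M,N$ polynomial in $m,n$) that leaves routing space between the original vertices, keeping the original interactions of $S$ between the corresponding positions; (ii) introduce a chain of auxiliary spins connecting $v$ to a position grid-adjacent to $w$ in $G_{M,N}$, joined by $\delta$-scaled equality couplings $\delta\cdot\pi_2$ so that in every low-energy configuration their values all coincide with $\vec{s}(v)$; (iii) wherever this chain must cross an existing edge of $G_{m,n}$, locally replace the two crossing edges by the crossing gadget $I^{\delta}_{\times}$ of \cref{lem:crossGadget}; and (iv) attach the interaction of $R$ between the final link of the copy chain and $w$. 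Modularity via \cref{cor:sim sum finite}, together with the trivial identity simulations on the unaffected part of $S$, would then assemble these local simulations, all using identity encoding, into a global simulation of $S+R$ whose source is a single 2d Ising system with fields.

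The main obstacle I anticipate is the energy-scale bookkeeping: the internal $\delta$-scaling of the copy chain and of $I^{\delta}_{\times}$ (recall the condition $2\delta^2>10\delta$ used in \cref{ssec:gadget sim}) must dominate the fixed-magnitude local interactions of $S$ and $R$ by enough to lift every spurious configuration above the cut-off, so one must choose the relative scaling of the copying couplings, the crossing-gadget parameter, and the output cut-off carefully, and also verify the pairwise compatibility conditions of \cref{thm:sim sum} at every junction between subsimulations. A secondary technical point is to confirm that $I^{\delta}_{\times}$, which is only established to be planar in \cref{ssec:gadget sim}, can be realized with bounded overhead on a literal 2d grid, and that the overall construction satisfies polytime computability, uniformity in $\delta$, and identity encoding as required by \cref{def:spin model sim}.
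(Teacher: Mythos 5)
Your high-level plan coincides with the paper's: route the long edge as a path of $\delta\cdot\pi_2$-coupled copy spins, replace each crossing with $I^{\delta}_{\times}$, and assemble the local pieces via \cref{thm:sim sum} / \cref{cor:sim sum finite}. The easy cases (field on an existing or fresh spin; pair interaction on a grid edge; one endpoint outside the grid) are also treated the same way — merge interactions or extend the grid by a border row/column and use a trivial simulation. So the route is not a genuinely different one.

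However, the two points you flag as ``obstacles to work out'' are in fact the substance of the proof, and as written they are genuine gaps. First, degree reduction: $I^{\delta}_{\times}$ (built from $I^{\delta}_{\mr{nor}}$ and $I^{\delta}_{\mr{iff}}$) has internal spins of degree $>4$, its external ports $v_1,\dots,v_4$ are not degree-$1$ hooks, and the two endpoints $v,w$ themselves gain a fifth edge from the routed chain — none of these can sit on a $4$-regular grid. The paper resolves this with an explicit intermediate system $S_{\mr{deg}}$ built from the replacement rules of \cref{fig:deg}, each of which is itself a minor-type simulation at cut-off $\delta'=\delta+\Shift_{\mr{cross}}$ with $\Shift_{\mr{cross}}=108\delta\cdot(r+s-2)$, and this enlarged cut-off is exactly what makes the energy bookkeeping across the chain $S\to S_{\mr{deg}}\to S_{\mr{cross}}\to S_{\mr{part}}\to T+K$ close under \cref{thm:sim trans}. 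Your ``pre-enlarge $G_{m,n}$ to $G_{M,N}$ with routing space'' step does not by itself address either of these: after inserting the gadgets the graph is planar but not a grid graph, and no amount of interstitial space helps unless the (degree-reduced) gadget already comes with a grid layout. The paper avoids trying to pre-compute such a layout by instead passing the planar degree-$\leq 4$ graph $G_{S_{\mr{deg}}}$ to the Tamassia--Tollis linear-time grid-embedding algorithm \cite{Ta89} and then placing $\delta'\cdot\pi_2$ along the resulting lattice paths; this is the step you would need to supply, or else exhibit a concrete grid realization of the degree-reduced crossing gadget so that your $G_{M,N}$ construction can be carried out. Until one of these is done, your proposal establishes only that $S_{\mr{cross}}$ exists, not that a bona fide $\mc{I}_{\textsubscript{2d}}$ source system does.
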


\begin{proof}
    Given as input spin systems $T\in \mc{I}_{\textsubscript{2d}}$ and $K \in J(\mc{I}_{\textsubscript{2d}})$, as well as a positive real number $\delta$, we provide a polytime construction of 
    a spin system $S\in \mathcal{I}\textsubscript{2d}$ and a
    simulation $S \to T+K$ with cut-off $\delta$.  
    
    First, a generic spin system $T$ from $ \mathcal{I}\textsubscript{2d}$ is isomorphic to a spin system $\tilde{T}$ with interaction graph $G_{m,n}$ (cf.\ \eqref{eq:grid graph}) for some $m,n\geq 2$.
    We now argue that w.l.o.g.\ we can assume that $T$ has interaction graph $G_{m,n}$.
    By \cite{Bu85}, we can construct a graph isomorphism between $V_T$ and $G_{m,n}$ in polytime. This induces a spin system isomorphism $\phi$ between $\tilde{T}$ and $T$, which by \cref{lem:iso sim} induces a simulation $\tilde{T} \to T$. By \cref{thm:sim sum} we therefore obtain a simulation 
    \begin{equation}
        \mc{r}\colon \tilde{T}+\tilde{K} \to T + K,
    \end{equation}
    where $\tilde{K}$ is obtained by relabeling spins in $V_K\cap V_T$ according to $\phi$.   
    Thus, if the interaction graph of $T$ is not of the form $G_{m,n}$, we simply work with $\tilde{T}$, use it to construct the simulation of $\tilde{T}+\tilde{K}$ and finally compose the result with the simulation $\mc{r}$.
    We henceforth call spins from $T$ with degree $2$ or $3$ the boundary spins, and those with degree $4$ the interior, where the degree of a spin is the number of edges it is included in.

    A spin system $K$ from $J(\mc{I}_{\textsubscript{2d}})$ consists either of a single spin $v$ with local interaction given by an Ising field $\pi_1, \bar{\pi}_1$, or of two spins $v,w$ sharing an edge with local interaction given by an Ising pair interaction $\pi_2, \bar{\pi}_2$.
    
    \subsubsection{Single Spin $v$}
    We first consider the case of $K$ consisting of a single spin $v$. 
    There are two sub-cases, either $v\in V_T$ or $v\notin V_T$.

    \medskip
    \paragraph*{Sub-case $v \in V_T$.}
    We start with 
    $v \in V_T$.  We further distinguish the two sub-sub-cases of $J_T(\{v\})$ and $J_K(\{v\})$ being of the same type, i.e.\ both scalar multiples of $\pi_1$ or $\bar{\pi}_1$, or $J_T(\{v\})$ and $J_K(\{v\})$
    being of different types. 
    
If they are of the same type, say $J_T(\{v\}) = \lambda_1 \cdot \pi_1$ and $J_K(\{v\}) = \lambda_2 \cdot \pi_1$, then their sum equals $(\lambda_1 + \lambda_2) \cdot \pi_1$ and thus is itself an Ising field. Hence, $T + K \in \mathcal{I}\textsubscript{2d}$, and we take $S = T+K$ together with the identity simulation. 
     
If they are of different types then their sum is not an Ising interaction.
We construct an Ising interaction that equals their sum up to a constant shift. 
    Say $J_T(\{v\}) = \lambda_1 \cdot \pi_1$ and $J_K(\{v\}) = \lambda_2 \cdot \bar{\pi}_1$ with $\lambda_1 \geq \lambda_2$, then 
    \begin{equation}\label{eq:Ising field sum}
        (\lambda_1-\lambda_2)\cdot \pi_1 + \lambda_2 = \lambda_1 \cdot \pi_1 + \lambda_2 \cdot \bar{\pi}_1.
    \end{equation}
    Let $S$ be equal to $T$ except for the local interaction $J_S(\{v\})$ which we take to be $(\lambda_1-\lambda_2)\cdot \pi_1$. This yields a simulation $S \to T+K$ which is trivial except for the non-trivial shift of $-\lambda_2$.
   
    \medskip
    \paragraph*{Sub-case $v \notin V_T$.}
    Consider now the sub-case $v\notin V_T$. We define $S$ by 
    taking $G_{m+1,n+1}$ as interaction graph (where $G_{m,n}$ is the interaction graph of $T$) and defining
    \begin{equation}
        J_S(e) = \begin{cases}
            J_T(e) \ \text{if } e \in E_T \\
            J_K(e) \ \text{if } e=\{(m+1,1) \} \\
            0 \ \text{else},
        \end{cases}
    \end{equation}
    i.e.\ on hyperedges from $G_{m,n}$, $S$ has the same interactions as $T$, 
    the field $J_K(\{v\})$ acts on 
    the additional spin $(m+1,1)$,
    and all other additional interactions are zero.
    This yields a simulation $S \to T+K$ with physical spin assignment $\phys(v) = (m+1,n)$.

    This can also be interpreted as follows. 
    First, we extend the interaction grid graph of $T$ by one row and one column only containing  zero interactions. This defines a spin systems $T_+$. 
    By \cref{thm: minor sim} we trivially get a simulation $T_+ \to T$.
    Then we apply an isomorphism to $K$ that identifies $v$ with the border spin $(m+1,n)$ of $T_+$.
    This defines a simulation $\tilde{K} \to K$. 
    By \cref{thm:sim sum} we have $T_+ +\tilde{K} \to T+K$ and by construction 
    $T_+, \tilde{K}$ belong to the sub-case $v \in V_T$. This effectively reduces the sub-case $v \notin V_T$ to the sub-case $v \in V_T$.

\begin{figure*}[th]
\centering 
\begin{subfigure}{0.5\columnwidth}
\caption{}\label{fig:lc 1}
\centering 
\begin{tikzpicture}[cgnodeS/.style = {draw=black!80, fill=gray!5, thick, circle, minimum size= 0.5em, inner sep=2pt},scale = 0.6] 
\foreach \x in {1,...,6} {
    \foreach \y in {1,...,5} {
        \node[cgnodeS] (\x a\y) at (\x, \y)  {} ;
        }
}
\foreach[evaluate=\x as \a using int(\x +1)] \x in {1,...,5} {
    \foreach[evaluate=\y as \b using int(\y +1)] \y in {1,...,4} {
        \draw[-] (\x a\y) to (\x a\b);    
        \draw[-] (\x a\y) to (\a a\y);             

    }
}
\foreach[evaluate=\x as \a using int(\x +1)] \x in {1,...,5} {
\draw[-] (\x a5) to (\a a5);
}
\foreach[evaluate=\y as \b using int(\y +1)] \y in {1,...,4} {
\draw[-] (6a\y) to (6a\b);
}

\draw[very thick, brown] (2a2) to (3a2);

\end{tikzpicture}
\end{subfigure}
\begin{subfigure}{0.5\columnwidth}
    \caption{}
    \label{fig:lc 2}
    \centering 
    \begin{tikzpicture}[cgnodeS/.style = {draw=black!80, fill=gray!5, thick, circle, minimum size= 0.5em, inner sep=2pt},
    scale = 0.6] 

   \foreach \x in {1,...,6} {
    \foreach \y in {1,...,5} {
        \node[cgnodeS] (\x a\y) at (\x, \y)  {} ;
        }
}
\foreach[evaluate=\x as \a using int(\x +1)] \x in {1,...,5} {
    \foreach[evaluate=\y as \b using int(\y +1)] \y in {1,...,4} {
        \draw[-] (\x a\y) to (\x a\b);    
        \draw[-] (\x a\y) to (\a a\y);             

    }
}
\foreach[evaluate=\x as \a using int(\x +1)] \x in {1,...,5} {
\draw[-] (\x a5) to (\a a5);
}
\foreach[evaluate=\y as \b using int(\y +1)] \y in {1,...,4} {
\draw[-] (6a\y) to (6a\b);
}
\node[cgnodeS] (v1) at (7,5) {};
\draw[very thick, brown] (6a5) to (v1);

\node[cgnodeS] (w1) at (7,1) {};
\node[cgnodeS] (w2) at (7,2) {};
\draw[very thick, brown] (w1) to (w2);

    \end{tikzpicture}
    \end{subfigure}
\begin{subfigure}{0.5\columnwidth}
    \caption{}
    \label{fig:lc 3}
    \centering 
    \begin{tikzpicture}[cgnodeS/.style = {draw=black!80, fill=gray!5, thick, circle, minimum size= 0.5em, inner sep=2pt},
    scale = 0.6] 

   \foreach \x in {1,...,6} {
    \foreach \y in {1,...,5} {
        \node[cgnodeS] (\x a\y) at (\x, \y)  {} ;
        }
}
\foreach[evaluate=\x as \a using int(\x +1)] \x in {1,...,5} {
    \foreach[evaluate=\y as \b using int(\y +1)] \y in {1,...,4} {
        \draw[-] (\x a\y) to (\x a\b);    
        \draw[-] (\x a\y) to (\a a\y);             

    }
}
\foreach[evaluate=\x as \a using int(\x +1)] \x in {1,...,5} {
\draw[-] (\x a5) to (\a a5);
}
\foreach[evaluate=\y as \b using int(\y +1)] \y in {1,...,4} {
\draw[-] (6a\y) to (6a\b);
}

\node[cgnodeS] (w) at (7,4) {};
\draw[very thick, brown] (2a2) to [out = 20, in = 270] (w);

\end{tikzpicture}
\end{subfigure}
\begin{subfigure}{0.5\columnwidth}
    \caption{}
    \label{fig:lc 4}
    \centering 
    \begin{tikzpicture}[cgnodeS/.style = {draw=black!80, fill=gray!5, thick, circle, minimum size= 0.5em, inner sep=2pt},
    scale = 0.6] 

   \foreach \x in {1,...,6} {
    \foreach \y in {1,...,5} {
        \node[cgnodeS] (\x a\y) at (\x, \y)  {} ;
        }
}
\foreach[evaluate=\x as \a using int(\x +1)] \x in {1,...,5} {
    \foreach[evaluate=\y as \b using int(\y +1)] \y in {1,...,4} {
        \draw[-] (\x a\y) to (\x a\b);    
        \draw[-] (\x a\y) to (\a a\y);             

    }
}
\foreach[evaluate=\x as \a using int(\x +1)] \x in {1,...,5} {
\draw[-] (\x a5) to (\a a5);
}
\foreach[evaluate=\y as \b using int(\y +1)] \y in {1,...,4} {
\draw[-] (6a\y) to (6a\b);
}
\draw[very thick, brown] (2a2) to [out = 15, in = 260] (5a4);

\end{tikzpicture}
\end{subfigure}
    \caption{If $K$ consists of two interacting spins, we consider four sub-cases: sub-case \ref{it:lc1} (\ref{fig:lc 1}), 
the two sub-sub-cases of sub-case \ref{it:lc2} (\ref{fig:lc 2}), 
sub-case \ref{it:lc3} (\ref{fig:lc 3}), 
and sub-case \ref{it:lc4} (\ref{fig:lc 4}). 
}\label{fig:lc}
\end{figure*}
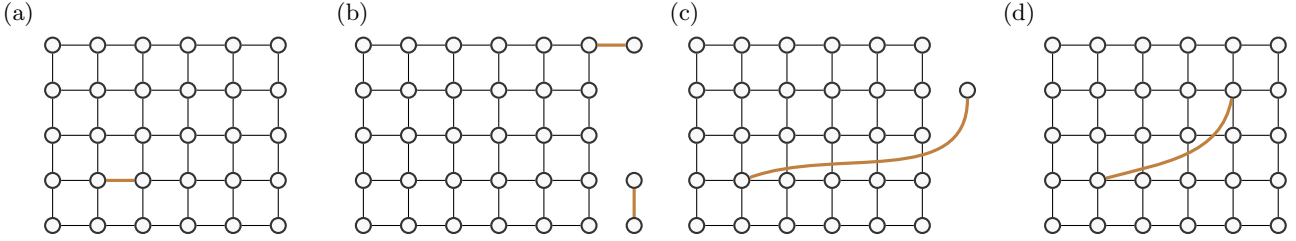

\subsubsection{Two spins $v,w$}
    Next, we consider the case of $K$ consisting of two spins $V_K=\{v,w\}$ that interact via $\lambda \cdot \pi_2$ or $\lambda \cdot \bar{\pi}_2$.
    We distinguish 4 sub-cases (see \cref{fig:lc}):
    \begin{enumerate}
        \item\label{it:lc1} $\{v,w\} \in E_T$
        \item\label{it:lc2} $\{v,w\} \cap V_T = \emptyset$ or  $\{v,w\} \cap V_T = \{v\}$ (or $\{w\}$) and $v$ (or $w$) is part of the boundary of $T$ 
        \item\label{it:lc3} $\{v,w\} \cap V_T  = \{v\}$ (or $\{w\}$) and $v$ (or $w$) is part of the interior of $T$
        \item\label{it:lc4} $\{v,w\} \subseteq V_T $ but $\{v,w\} \notin E_T$.
    \end{enumerate}

\medskip
\paragraph*{Sub-cases \ref{it:lc1} and \ref{it:lc2}.}
Sub-cases \ref{it:lc1} and \ref{it:lc2} can be treated similarly to the sub-cases of $K$ consisting of a single spin. In sub-case \ref{it:lc1} we let $S$ be equal to $T$ and change the local interaction $J_T(\{v,w\})$.
Similarly to $\pi_1, \bar{\pi}_1$, also $\pi_2, \bar{\pi}_2$ satisfy 
\begin{equation}\label{eq:Ising int sum}
    (\lambda_1-\lambda_2) \cdot \pi_2 + \lambda_2 = \lambda_1 \cdot \pi_2 + \lambda_2 \cdot \bar{\pi}_2.
\end{equation}
In sub-case \ref{it:lc2} we construct $S$ by enlarging the interaction graph of $T$ by one row and one column and take all additional local interactions equal to zero, except for the local interaction corresponding to $J_K(\{v,w\})$. 

\medskip
\paragraph*{Sub-case \ref{it:lc3}.}
Sub-case \ref{it:lc3} reduces to sub-case \ref{it:lc4} by enlarging the interaction graph of $T$ by one row and one column, and identifying $w$ with a border spin of the so defined spin system $T_+$. 

\medskip
\paragraph*{Sub-case \ref{it:lc4}.}
Sub-case \ref{it:lc4} is the only non-trivial sub-case.
We assume w.l.o.g.\ that $v=(i,j)$ and $w=(i+r,j+s)$ for $r,s>0$; all other possibilities can be treated similarly.
We construct $S$ and the simulation $S \to T+K$ in four steps, namely spin systems $S,S_{\mr{deg}}, S_{\mr{cross}}$ and $S_{\mr{part}}$, as well as simulations 
\begin{equation}
    S \to S_{\mr{deg}} \to S_{\mr{cross}} \to S_{\mr{part}} \to T + K.
\end{equation}
More precisely, starting from $T+K$ and proceedings `backwards', we construct $S_{\mr{part}}$ by partitioning the additional edge $\{v,w\}$ into a sequence of edges, each of which crosses precisely one edge of $T$. We construct $S_{\mr{cross}}$ by replacing each such crossing with a crossing gadget, resulting in a planar interaction graph. We construct $S_{\mr{deg}}$ by introducing additional vertices such that the interaction graph has degree $\leq 4$. Finally, $S$ is obtained by embedding this interaction graph of degree $\leq 4$ into a large enough grid graph.
In the following, we devote one paragraph to each of these four steps.

\begin{figure}[th]
    \centering
    \begin{tikzpicture}[cgnodeS/.style = {draw=black!80, fill=blue!5, thick, circle, minimum size= 0.5em, inner sep=2pt}, scale = 0.9] 

   \foreach \x in {1,...,4} {
    \foreach \y in {1,...,3} {
        \node[cgnodeS] (\x a\y) at (\x, \y)  {} ;
        }
}
\foreach[evaluate=\x as \a using int(\x +1)] \x in {1,...,3} {
    \foreach[evaluate=\y as \b using int(\y +1)] \y in {1,...,2} {
        \draw[-] (\x a\y) to (\x a\b);    
        \draw[-] (\x a\y) to (\a a\y);             

    }
}
\foreach[evaluate=\x as \a using int(\x +1)] \x in {1,...,3} {
\draw[-] (\x a3) to (\a a3);
}
\foreach[evaluate=\y as \b using int(\y +1)] \y in {1,...,2} {
\draw[-] (4a\y) to (4a\b);
}

\draw[very thick, brown] (1a1) to [out = 30, in = 180] (2.5,1.5) to [out = 0, in = 225] (3.5,1.5) to [out = 45, in = 270] (3.5,2.5) to [out = 90, in = 225] (4a3);

\foreach[evaluate=\x as \a using int(\x -5)] \x in {1,...,4} {
    \foreach \y in {1,...,3} {
        \node[cgnodeS] (\x b\y) at (\a, \y)  {} ;
        }
}
\foreach[evaluate=\x as \a using int(\x +1)] \x in {1,...,3} {
    \foreach[evaluate=\y as \b using int(\y +1)] \y in {1,...,2} {
        \draw[-] (\x b\y) to (\x b\b);    
        \draw[-] (\x b\y) to (\a b\y);             

    }
}
\foreach[evaluate=\x as \a using int(\x +1)] \x in {1,...,3} {
\draw[-] (\x b3) to (\a b3);
}
\foreach[evaluate=\y as \b using int(\y +1)] \y in {1,...,2} {
\draw[-] (4b\y) to (4b\b);
}

\node[cgnodeS] (i1) at (-2.5,1.5) {};
\node[cgnodeS] (i2) at (-1.5,1.5) {};

\draw[very thick, brown] (1b1) to [out = 30, in = 180] (i1);
\draw[very thick, red] (i1) to [out = 0, in = 225] (i2) to [out = 45, in = 270] (-1.5,2.5) to [out = 90, in = 225] (4b3);

\draw[->] (-0.4, 2) to (0.4, 2);

\foreach \x in {1,...,4}{
\draw[-] (\x a3) to ( [yshift = 10.0] \x a3);
\draw[-] (\x a1) to ( [yshift = -10.0] \x a1);

\draw[-] (\x b3) to ( [yshift = 10.0] \x b3);
\draw[-] (\x b1) to ( [yshift = -10.0] \x b1);
}

\foreach \y in {1,...,3}{
\draw[-] (1a\y) to ( [xshift = -10.0] 1a\y);
\draw[-] (4a\y) to ( [xshift = 10.0] 4a\y);

\draw[-] (1b\y) to ( [xshift = -10.0] 1b\y);
\draw[-] (4b\y) to ( [xshift = 10.0] 4b\y);
}

\node[] (v1) at (-4.2,0.8) {$v$};
\node[] (v2) at (0.8,0.8) {$v$};

\node[] (w1) at (-0.8,3.2) {$w$};
\node[] (w2) at (4.2,3.2) {$w$};

\node[] (i3) at (-2.5,1.8) {$v_1$};
\node[] (i4) at (-1.7,1.75) {$v_2$};

\draw[] (-4,0.3) to node[below] {$r$} (-1,0.3);
\draw[] (-4,0.2) to (-4,0.4);
\draw[] (-1,0.2) to (-1,0.4);

\draw[] (-4.7,1) to node[left] {$s$} (-4.7,3);
\draw[] (-4.8,1) to (-4.6,1);
\draw[] (-4.8,3) to (-4.6,3);

\end{tikzpicture}
    \caption{The simulation $S_{\mr{part}} \to T+K$ for $r=3, s=2$. By introducing two spins, $v_1,v_2$, the edge $\{v,w\}$ is partitioned into 3 edges, the first of which carries the original interaction, while the other two carry $\delta\cdot \pi_2$ interactions which, below the cut-off, force the two interacting spins to be in equal states. Note that drawing the additional spins and edges as illustrated leads to precisely one crossing for each of the three additional edges. }
    \label{fig:crossing path}
\end{figure}
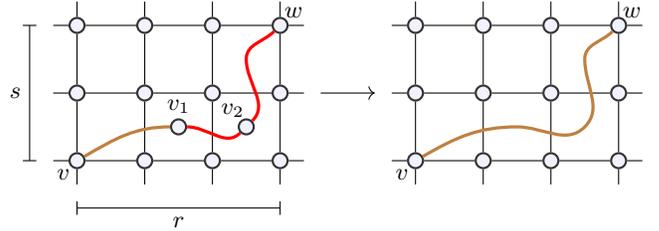

\medskip
\paragraph*{Step 1:  $S_{\mr{part}}$.}
First, we partition the edge $\{v,w\}$ into a path of $r+s-2$ edges (\cref{fig:crossing path}), each of which crosses precisely one edge from $T$. This amounts to modifying the interaction graph of $T$ by removing edge $\{v,w\}$ and introducing $r+s-3$ additional vertices $v_1, \ldots, v_{r+s-3}$ and $r+s-2$ additional edges, $\{v,v_1\}$, $\{v_i,v_{i+1}\}$ for $i = 1, \ldots, r+s-4$ and $\{v_{r+s-3},w\}$. We denote the resulting graph by $G_{\mr{part}}$. 

We now construct a spin system $S_{\mr{part}}$ with interaction graph $G_{\mr{part}}$, by
choosing the local interaction attached to the first additional edges, $\{v,v_1\}$, as $J_K(\{v,w\})$. We choose $\delta \cdot \pi_2$ for all other additional edges, and  $J_T(e)$ for all edges $e$ which are contained in $T$. 
By \cref{thm: minor sim}, we obtain a simulation 
\begin{equation}\label{eq:sim S part}
    S_{\mr{part}} \to T+K 
\end{equation}
with cut-off $\delta$ and shift $0$. Note that the edges with  $\delta\cdot \pi_2$ interactions are precisely those  that must be contracted to recover $T+K$ from $S_{\mr{part}}$. The $\delta\cdot \pi_2$ interactions implement this contraction by forcing the two interaction spins to be equal whenever a configuration has energy below the cut-off.

\begin{figure*}[th]
\centering 

\begin{subfigure}{1\columnwidth}
    \caption{}\label{fig:cross sim 1}
\begin{tikzpicture}[circuit logic US, cgnodeS/.style = {draw=black!80, fill=blue!5, thick, circle, minimum size=1.0em, inner sep=2pt},
        and/.style = {draw=black, double,  rectangle, minimum size = 2em }] 
        \node[cgnodeS] (11) at (0,1) {};
        \node[cgnodeS] (12) at (0,2) {};
        \node[cgnodeS] (21) at (1,1) {};
        \node[cgnodeS] (22) at (1,2) {};
        \node[cgnodeS] (31) at (2,1) {};
        \node[cgnodeS] (32) at (2,2) {};
        \node[cgnodeS] (x1) at (1.5,1.5) {};
        \node[] (v1) at (-0.3,0.8) {$v$};
        \node[] (v2) at (1.6,1.8) {$v_1$};

        \draw[very thick] (11) to (12);
        \draw[very thick] (11) to (21);
        \draw[very thick] (12) to (22);
        \draw[very thick] (21) to (22);
        \draw[very thick] (21) to (31);
        \draw[very thick] (31) to (32);
        \draw[very thick] (22) to (32);

        \draw[very thick, brown] (11) to [out = 15, in = 200] (x1);

        \node[cgnodeS] (11b) at (-6,0) {};
        \node[cgnodeS] (12b) at (-6,3) {};
        \node[cgnodeS] (21b) at (-4,0) {};
        \node[cgnodeS] (22b) at (-4,3) {};
        \node[cgnodeS] (31b) at (-2,0) {};
        \node[cgnodeS] (32b) at (-2,3) {};
        \node[cgnodeS] (x2) at (-2.8,2) {};
        \node[] (v3) at (-6.3,-0.2) {$v$};
        \node[] (v4) at (-2.7,2.3) {$v_1$};

        \node[and] (c) at (-4,1.5) {$\times$};
        \draw[] (22b) to (c);
        \draw[] (x2) to [out = 200, in = 70] (c);
        \node[cgnodeS] (c1) at (-5,0.75) {};
        \node[cgnodeS] (c2) at (-4,0.75) {};

        \draw[very thick] (11b) to (12b);
        \draw[very thick] (11b) to (21b);
        \draw[very thick] (12b) to (22b);
        \draw[very thick] (21b) to (31b);
        \draw[very thick] (31b) to (32b);
        \draw[very thick] (22b) to (32b);

        \draw[very thick, brown] (11b) to (c1);
        \draw[very thick] (21b) to (c2);
        \draw[] (c1) to [out = 45, in = 240] (c);
        \draw[] (c2) to (c);

        \draw[->] (-1.5,1.5) to (-0.5,1.5);

        \draw[very thick, red] (x1) to (2.25,1.75);

        \draw[very thick, red] (x2) to (-1.8,2.2);

        \end{tikzpicture}
        \end{subfigure}
        \begin{subfigure}{1\columnwidth}
    \caption{}\label{fig:cross sim 2}
\begin{tikzpicture}[circuit logic US, cgnodeS/.style = {draw=black!80, fill=blue!5, thick, circle, minimum size= 1.0em, inner sep=2pt},
        and/.style = {draw=black, double,  rectangle, minimum size = 2em }] 
        \node[cgnodeS] (11) at (0,1) {};
        \node[cgnodeS] (12) at (0,2) {};
        \node[cgnodeS] (21) at (1,1) {};
        \node[cgnodeS] (22) at (1,2) {};
        \node[cgnodeS] (31) at (2,1) {};
        \node[cgnodeS] (32) at (2,2) {};
        \node[cgnodeS] (x1) at (1.5,1.6) {};
        \node[cgnodeS] (x2) at (0.5,1.4) {};
        \node[] (v1) at (0.4,1.7) {$v_1$};
        \node[] (v2) at (1.6,1.3) {$v_2$};

        \draw[very thick, brown] (x2) to (-0.2,1.2); 
         \draw[very thick, red] (x1) to (2.2,1.8); 

        \draw[very thick] (11) to (12);
        \draw[very thick] (11) to (21);
        \draw[very thick] (12) to (22);
        \draw[very thick] (21) to (22);
        \draw[very thick] (21) to (31);
        \draw[very thick] (31) to (32);
        \draw[very thick] (22) to (32);

        \draw[very thick, red] (x1) to (x2);

        \node[cgnodeS] (11b) at (-6,0) {};
        \node[cgnodeS] (12b) at (-6,3) {};
        \node[cgnodeS] (21b) at (-4,0) {};
        \node[cgnodeS] (22b) at (-4,3) {};
        \node[cgnodeS] (31b) at (-2,0) {};
        \node[cgnodeS] (32b) at (-2,3) {};
        \node[cgnodeS] (x3) at (-5.5,0.8) {};
        \node[cgnodeS] (x4) at (-2.8,2.5) {};
        \node[] (v3) at (-5.6,1.1) {$v_1$};
        \node[] (v4) at (-2.7,2.2) {$v_2$};

        \node[and] (c) at (-4,2) {$\times$};
        \draw[] (22b) to (c);
        \draw[] (x4) to [out = 200, in = 70] (c);
        \node[cgnodeS] (c1) at (-4.6,1.2) {};
        \node[cgnodeS] (c2) at (-4,1.2) {};

        \draw[very thick] (11b) to (12b);
        \draw[very thick] (11b) to (21b);
        \draw[very thick] (12b) to (22b);
        \draw[very thick] (21b) to (31b);
        \draw[very thick] (31b) to (32b);
        \draw[very thick] (22b) to (32b);

        \draw[very thick, red] (x3) to (c1);
        \draw[very thick] (21b) to (c2);
        \draw[] (c1) to [out = 45, in = 240] (c);
        \draw[] (c2) to (c);

        \draw[->] (-1.5,1.5) to (-0.5,1.5);

        \draw[very thick, brown] (x3) to (-6.2,0.6);
         \draw[very thick, red] (x4) to (-1.8,2.6);

        \end{tikzpicture}
        \end{subfigure}
        \caption{The simulation $S_{\mr{cross}} \to S_{\mr{part}}$ to replace the first (\ref{fig:cross sim 1}) and second (\ref{fig:cross sim 2}) crossing with a crossing gadget (see \cref{fig:CrossGadget}).
        The full simulation is obtained by treating every crossing this way, where for the last $s-1$ crossings, i.e.\ those that cross horizontal edges, the construction must be rotated. 
        Only the spins $v,v_1,v_2$ are labelled explicitly.
        }
    \label{fig:cross sim}  
\end{figure*}
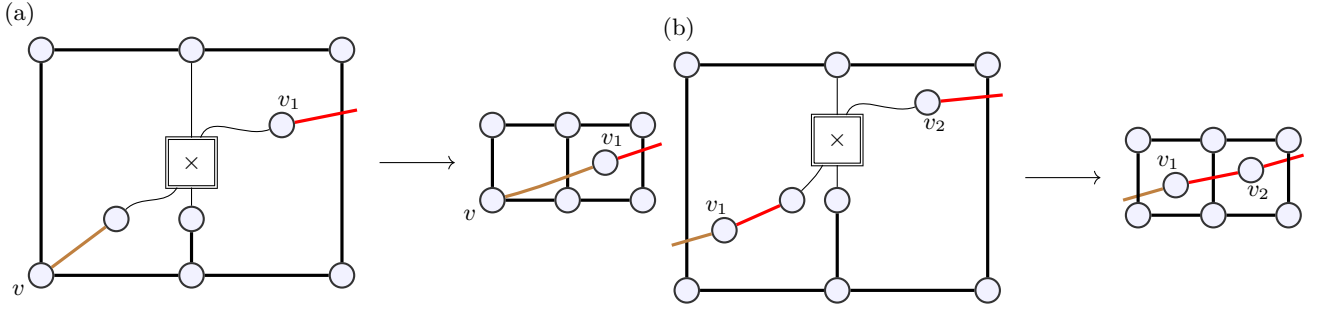

\medskip
\paragraph*{Step 2: $S_{\mr{cross}}$.}
We replace each pair of crossing edges with the crossing gadget (see \cref{fig:CrossGadget}). For  a single crossing this is illustrated  in \cref{fig:cross sim}.
This defines a spin system $S_{\mr{cross}}$, 
and using the simulation of \cref{fig:cross2} (see also \cref{lem:crossGadget}) together with \cref{thm:model sim sum}, a simulation 
\begin{equation}\label{eq:sim S cross}
    S_{\mr{cross}} \to S_{\mr{part}}
\end{equation}
with cut-off $\delta$ and shift $\Shift_{\mr{cross}} \coloneqq 108\delta \cdot (r+s-2)$.  
Note that $S_{\mr{cross}}$ is a planar spin system with interactions from $\mathcal{I}\textsubscript{2d}$.

Observe that we can compute the $r+s-2$ crossings of $S_{\mr{part}}$ in polytime.
By construction, the horizontal distance between $v$ and $w$ is $r$. This means that the first $r-1$ of the additional edges cross vertical edges of $T$, i.e.\ the first additional edge $\{v,v_1\}$ crosses edge $\{(i+1,j),(i+1,j+1)\}$, and for $k=1, \ldots, r-2$, edge $\{v_k,v_{k+1}\}$ crosses $\{(i+k+1,j),(i+k+1,j+1)\}$. The remaining $s-1$ additional edges cross vertical edges of $T$, i.e.\ 
for $l=1, \ldots,s-2 $,  edge $\{v_{r+l-2},v_{r+l-1} \}$ crosses $\{(i+r-1,j+l),(i+r,j+l) \}$ and the last additional edge $\{v_{r+s-3}, w\}$ crosses  edge $\{(i+r-1,j+s-1),(i+r,j+s-1) \}$.
Therefore, also $S_{\mr{cross}}$ and the simulation \eqref{eq:sim S cross} can be computed in polytime.

\medskip
\paragraph*{Step 3: $S_{\mr{deg}}$.}
For the construction of $S_{\mr{deg}}$, we start by noting that there are three mechanisms that might lead to vertices of $S_{\mr{cross}}$ having degree $> 4$: 
\begin{itemize}
\item[(a)] Vertices of the crossing gadget \cref{fig:CrossGadget} of degree $>4$; 
\item[(b)]External vertices of the crossing gadget ($v_1, \ldots, v_4$ in \cref{fig:cross1}) of degree $>1$ (note that in \cref{fig:cross sim 2} such external vertices are connected to vertices which have degree at most $3$); 
and 
\item[(c)] The two vertices $v,w$ belonging to the additional edge from $K$.
\end{itemize}

We construct $S_{\mr{deg}}$  
by distributing the edges of these spins with degree $>4$ over multiple spins which are connected amongst each other with $\delta'\cdot \pi_2$ interactions.
The latter ensure that in the low energy sector, the spins have equal states and thus mimic a single spin.
Since the previous 
simulation has shift $\Shift_{\mr{cross}}$, the relevant low energy sector is below $\Shift_{\mr{cross}}+\delta$. We thus pick
\begin{equation}
    \delta' \coloneqq \delta + \Shift_{\mr{cross}}
\end{equation}
to achieve this.

The precise replacement rules in the construction of $S_{\mr{deg}}$ are listed in \cref{fig:deg}.  By \cref{thm: minor sim},  applying any number of such rules results in a simulation, in our case of type $S_{\mr{deg}} \to S_{\mr{cross}}$.

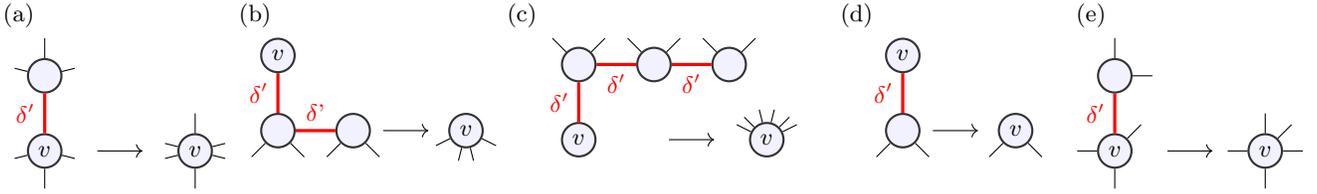
\begin{figure*}[th]
\centering
\begin{subfigure}[t]{0.35\columnwidth}
\caption{}\label{fig:deg1}
\begin{tikzpicture}[circuit logic US, cgnodeS/.style = {draw=black!80, fill=blue!5, thick, circle, minimum size= {width("$v_{4}$")+4pt}, inner sep=1pt},
        and/.style = {draw=black, double,  rectangle, minimum size = 2em }] 
       \node[cgnodeS] (v) at (0,0) {$v$};
       \node[cgnodeS] (w) at (0,1) {};
       \draw[] (v) to (-0.4,-0.1);
        \draw[] (v) to (0.4,-0.1);
       \draw[] (v) to (0,-0.5);
       \draw[very thick, red] (w) to node[midway,left] {$\delta'$} (v);
       \draw[] (w) to (-0.4,1.1);
       \draw[] (w) to (0.4,1.1);
       \draw[] (w) to (0,1.5);

\draw[->] (0.7,0) to (1.3,0);
\node[cgnodeS] (v2) at (2,0) {$v$};
\draw[] (v2) to (2,-0.5); 
\draw[] (v2) to (1.6,-0.1); 
\draw[] (v2) to (2.4,-0.1);
\draw[] (v2) to (2,0.5);
\draw[] (v2) to (1.6,0.1); 
\draw[] (v2) to (2.4,0.1); 

\end{tikzpicture}
\end{subfigure}
\begin{subfigure}[t]{0.4\columnwidth}
\caption{}\label{fig:deg2}
\begin{tikzpicture}[circuit logic US, cgnodeS/.style = {draw=black!80, fill=blue!5, thick, circle, minimum size= {width("$v_{4}$")+4pt}, inner sep=1pt},
        and/.style = {draw=black, double,  rectangle, minimum size = 2em }] 
       \node[cgnodeS] (v) at (0,0) {};
       \node[cgnodeS] (w1) at (0,1) {$v$};
       \node[cgnodeS] (w2) at (1,0) {};

       \draw[] (v) to (-0.35,-0.35);
       \draw[] (v) to (0.35,-0.35);
       \draw[very thick, red] (w1) to node[midway, left] {$\delta'$} (v);
        \draw[very thick, red] (w2) to node[midway, above] {$\delta$'} (v);
       \draw[] (w2) to (0.65,-0.35);
       \draw[] (w2) to (1.35,-0.35);

\draw[->] (1.4,0) to (2,0);
\node[cgnodeS] (v2) at (2.5,0) {$v$};
\draw[] (v2) to (2.1,-0.2); 
\draw[] (v2) to (2.9,-0.2); 
\draw[] (v2) to (2.4,-0.4);
\draw[] (v2) to (2.6,-0.4);
\end{tikzpicture}
\end{subfigure}
\begin{subfigure}[t]{0.5\columnwidth}
\caption{}\label{fig:deg3}
\begin{tikzpicture}[circuit logic US, cgnodeS/.style = {draw=black!80, fill=blue!5, thick, circle, minimum size= {width("$v_{4}$")+4pt}, inner sep=1pt},
        and/.style = {draw=black, double,  rectangle, minimum size = 2em }] 
       \node[cgnodeS] (v) at (0,0) {$v$};
       \node[cgnodeS] (w1) at (0,1) {};
       \node[cgnodeS] (w2) at (1,1) {};
        \node[cgnodeS] (w3) at (2,1) {};
        \draw[very thick, red] (w1) to node[midway, left] {$\delta'$} (v);
        \draw[very thick, red] (w2) to node[midway, below] {$\delta'$} (w1);
        \draw[very thick, red] (w2) to node[midway, below] {$\delta'$} (w3);

       \draw[] (w1) to (-0.35,1.35);
       \draw[] (w1) to (0.35,1.35);

       \draw[] (w2) to (0.65,1.35);
       \draw[] (w2) to (1.35,1.35);
       
       \draw[] (w3) to (1.65,1.35);
       \draw[] (w3) to (2.35,1.35);

\draw[->] (1.2,0) to (1.8,0);
\node[cgnodeS] (v2) at (2.5,0) {$v$};
\draw[] (v2) to (2.1,0.2); 
\draw[] (v2) to (2.9,0.2); 
\draw[] (v2) to (2.2,0.3); 
\draw[] (v2) to (2.8,0.3); 
\draw[] (v2) to (2.4,0.4);
\draw[] (v2) to (2.6,0.4);
\end{tikzpicture}
\end{subfigure}
\begin{subfigure}[t]{0.35\columnwidth}
\caption{}\label{fig:deg4}
\begin{tikzpicture}[circuit logic US, cgnodeS/.style = {draw=black!80, fill=blue!5, thick, circle, minimum size= {width("$v_{4}$")+4pt}, inner sep=1pt},
        and/.style = {draw=black, double,  rectangle, minimum size = 2em }] 
       \node[cgnodeS] (v) at (0,0) {};
       \node[cgnodeS] (w1) at (0,1) {$v$};

       \draw[] (v) to (-0.35,-0.35);
       \draw[] (v) to (0.35,-0.35);
       \draw[very thick, red] (w1) to node[midway, left] {$\delta'$} (v);

\draw[->] (0.4,0) to (1,0);
\node[cgnodeS] (v2) at (1.5,0) {$v$};
\draw[] (v2) to (1.15,-0.35); 
\draw[] (v2) to (1.85,-0.35);
\end{tikzpicture}
\end{subfigure}
\begin{subfigure}[t]{0.35\columnwidth}
\caption{}\label{fig:deg5}
\begin{tikzpicture}[circuit logic US, cgnodeS/.style = {draw=black!80, fill=blue!5, thick, circle, minimum size= {width("$v_{4}$")+4pt}, inner sep=1pt},
        and/.style = {draw=black, double,  rectangle, minimum size = 2em }] 
       \node[cgnodeS] (v) at (0,0) {$v$};
       \node[cgnodeS] (w) at (0,1) {};
       \draw[] (v) to (-0.5,0);
        \draw[] (v) to (0.35,0.35);
       \draw[] (v) to (0,-0.5);
       \draw[very thick, red] (w) to node[midway, left] {$\delta'$} (v);
       \draw[] (w) to (0.5,1);
       \draw[] (w) to (0,1.5);

\draw[->] (0.7,0) to (1.3,0);
\node[cgnodeS] (v2) at (2,0) {$v$};
\draw[] (v2) to (2,-0.5); 
\draw[] (v2) to (1.5,0); 
\draw[] (v2) to (2.5,0);
\draw[] (v2) to (2,0.5);
\draw[] (v2) to (2.35,0.35); 

\end{tikzpicture}
\end{subfigure}
    \caption{Replacement rules used in the construction of $S_{\mr{deg}}$. Each rule defines a simulation with cut-off $\delta'$, by \cref{thm: minor sim}.}
    \label{fig:deg}
\end{figure*}

We first explain how to modify the crossing gadget so that it has no vertices of degree $>4$ and its external vertices have degree $1$, thereby treating cases (a) and (b) of high degree vertices of $S_{\mr{cross}}$.   
Recall that the crossing gadget is constructed from three copies of $I^{\delta}_{\mr{iff}}$, 
each of which is constructed from four copies of 
$I^{\delta}_{\mr{nor}}$ (see \cref{fig:Iff} and \cref{fig:CrossGadget}).

$I^{\delta}_{\mr{nor}}$ contains one internal spin with degree $>4$. We apply rule \ref{fig:deg1} to this spin. On top of that, the upper two external spins $v_1,v_2$ of $I^{\delta}_{\mr{nor}}$ have degree $4$ and the lower external spin $v_3$ has degree $6$. We apply rule \ref{fig:deg2} to the upper external spins and rule \ref{fig:deg3} to the lower external spin. In total, the modified spin system $\tilde{I}^{\delta}_{\mr{nor}}$ has no spin with degree $>4$ and all of its external spins have degree $1$.

Next, we construct $\tilde{I}^{\delta}_{\mr{iff}}$ by first following the construction of \cref{fig:iffIsing} but replacing each $I^{\delta}_{\mr{nor}}$ with the corresponding $\tilde{I}^{\delta}_{\mr{nor}}$. 
This leads to the upper external spins having degree $2$. We apply rule \ref{fig:deg4} to those. The resulting spin system, $\tilde{I}^{\delta}_{\mr{iff}}$, has the same functionality as  $I^{\delta}_{\mr{iff}}$, but all of its internal spins have degree $<4$ and its three external spins $v_1,v_2,v_3$ have degree $1$.

Finally, we construct $\tilde{I}^{\delta}_{\times}$ by first following \cref{fig:cross sim 1}, but replacing $I^{\delta}_{\mr{iff}}$ with $\tilde{I}^{\delta}_{\mr{iff}}$ and then again, applying rule \ref{fig:deg4} to the two upper external spin. This ultimately yields a crossing gadget with no spin of degree $>4$. Additionally, all of its external spins have degree $1$. Hence, connecting this modified crossing gadget as illustrated in \cref{fig:cross sim} does not create additional spin of degree $>4$. 

Since this modification concerns the crossing gadget itself, i.e.\ is independent of the input spin systems $T,K$, it can be computed in constant time. As the number of crossing gadgets in $S_{\mr{cross}}$, $r+s-2$ is clearly polynomial in $\vert T \vert$, applying the resulting modifications to $S_{\mr{cross}}$ is polytime computable.

We finish the construction of $S_{\mr{deg}}$ by 
applying rule \ref{fig:deg5} to $v$ and $w$. Since this modification does not depend on the size of the input either, it is polytime computable too. 
Finally, using \cref{thm: minor sim} we obtain 
\begin{equation}
    S_{\mr{deg}} \to S_{\mr{cross}}
\end{equation}
with cut-off $\delta'$.

\medskip 
\paragraph*{Step 4: $S$.}
Finally, we construct $S\in \mathcal{I}\textsubscript{2d}$, together with $S\to S_{\mr{deg}}$, by embedding $S_{\mr{deg}}$ into a large enough grid graph $G_{m',n'}$. 
Specifically, we use the linear time grid embedding algorithm presented in \cite{Ta89}. 
Given a planar graph $G$ of degree at most $4$ as input, this algorithm identifies vertices of $G$ with grid coordinates, and edges of $G$ with lattices paths between the appropriate grid coordinates. We apply this algorithm to the interaction graph of $S_{\mr{deg}}$ and denote by $G_{m',n'}$ the minimal 2-dimensional grid graph that contains all obtained coordinates of vertices of $G_{S_{\mr{deg}}}$.
By construction $S_{\mr{deg}}$ is planar and has degree $4$, i.e.\ meets the requirements of the grid embedding algorithm.

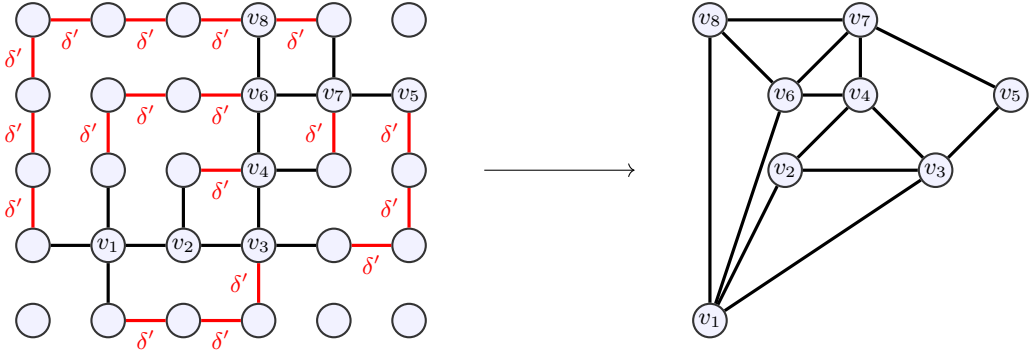
\begin{figure*}[th]
        \centering 
               \begin{tikzpicture}[circuit logic US, cgnodeS/.style = {draw=black!80, fill=blue!5, thick, circle, minimum size= {width("$v_{4}$")+4pt}, inner sep=1pt},
        and/.style = {draw=black, double,  rectangle, minimum size = 2em }] 
        \node[cgnodeS] (1) at (0,0) {$v_1$};
        \node[cgnodeS] (2) at (1,2) {$v_2$};
        \node[cgnodeS] (3) at (3,2) {$v_3$};
        \node[cgnodeS] (4) at (2,3) {$v_4$};
        \node[cgnodeS] (5) at (4,3) {$v_5$};
        \node[cgnodeS] (6) at (1,3) {$v_6$};
        \node[cgnodeS] (7) at (2,4) {$v_7$};
        \node[cgnodeS] (8) at (0,4) {$v_8$};

        \draw[very thick] (1) to (8);
        \draw[very thick] (1) to (3);
        \draw[very thick] (1) to (6);
        \draw[very thick] (1) to (2);
        \draw[very thick] (2) to (3);
        \draw[very thick] (2) to (4);
        \draw[very thick] (3) to (4);
        \draw[very thick] (3) to (5);
        \draw[very thick] (4) to (7);
        \draw[very thick] (5) to (7);
        \draw[very thick] (4) to (6);
        \draw[very thick] (6) to (7);
        \draw[very thick] (6) to (8);
        \draw[very thick] (7) to (8);

\foreach[evaluate=\x as \a using int(\x -10)] \x in {1,...,6} {
    \foreach[evaluate=\y as \b using int(\y -1)] \y in {1,...,5} {
        \node[cgnodeS] (\x a\y) at (\a, \b)  {} ;
        }
}

\node[] (v1) at (-6,4) {$v_8$};
\node[] (v1) at (-6,3) {$v_6$};
\node[] (v1) at (-6,2) {$v_4$};
\node[] (v1) at (-6,1) {$v_3$};
\node[] (v1) at (-5,3) {$v_7$};
\node[] (v1) at (-4,3) {$v_5$};
\node[] (v1) at (-7,1) {$v_2$};
\node[] (v1) at (-8,1) {$v_1$};

\draw[very thick, red] (4a5) to node[midway, below] {$\delta'$}(3a5);
\draw[very thick, red] (3a5) to  node[midway, below] {$\delta'$}(2a5);
\draw[very thick, red] (2a5) to  node[midway, below] {$\delta'$}(1a5);
\draw[very thick, red] (1a5) to  node[midway, left] {$\delta'$}(1a4);
\draw[very thick, red] (1a4) to  node[midway, left] {$\delta'$}(1a3);
\draw[very thick, red] (1a3) to  node[midway, left] {$\delta'$}(1a2);
\draw[very thick] (1a2) to (2a2);

\draw[very thick] (4a5) to (4a4);

\draw[very thick] (4a4) to (4a3);

\draw[very thick] (4a3) to (4a2);

\draw[very thick, red] (4a4) to 
 node[midway, below] {$\delta'$}(3a4);
\draw[very thick, red] (3a4) to  node[midway, below] {$\delta'$}(2a4);
\draw[very thick, red] (2a4) to  node[midway, left] {$\delta'$}(2a3);
\draw[very thick] (2a3) to (2a2);

\draw[very thick] (2a2) to (3a2);

\draw[very thick, red] (4a3) to 
 node[midway, below] {$\delta'$}(3a3);
\draw[very thick] (3a3) to (3a2);

\draw[very thick] (3a2) to (4a2);

\draw[very thick] (2a2) to (2a1);
\draw[very thick, red] (2a1) to  node[midway, below] {$\delta'$}(3a1);
\draw[very thick, red] (3a1) to  node[midway, below] {$\delta'$}(4a1);
\draw[very thick, red] (4a1) to  node[midway, left] {$\delta'$}(4a2);

\draw[very thick, red] (4a5) to  node[midway, below] {$\delta'$}(5a5);
\draw[very thick] (5a5) to (5a4);

\draw[very thick] (4a4) to (5a4);

\draw[very thick] (5a4) to (6a4);

\draw[very thick] (4a3) to (5a3);
\draw[very thick, red] (5a3) to  node[midway, left] {$\delta'$}(5a4);

\draw[very thick] (4a2) to (5a2);
\draw[very thick, red] (5a2) to  node[midway, below] {$\delta'$}(6a2);
\draw[very thick, red] (6a2) to  node[midway, left] {$\delta'$}(6a3);
\draw[very thick, red] (6a3) to  node[midway, left] {$\delta'$} (6a4);

\draw[->] (-3,2) to (-1,2);
   
\end{tikzpicture}
\caption{Example of simulation obtained form the grid embedding algorithm (see \cite{Ta89}). Black edges represent arbitrary interactions. For simplicity, the original spin system (right) has no fields.}\label{fig:gridEmbedd}
\end{figure*}

We now construct the 2d Ising system $S$ by first choosing the fields $J_S(\{v\})$ of $S$ to be $J_{\mr{S}}(\{w\})$ whenever the embedding algorithm maps $w$ to $v$, and zero for all those spins in $S$ that do not correspond to spins from $S_{\mr{deg}}$. Second, we choose pair interactions of $S$ such that for each edge $\{v,w\}$ in $S_{\mr{deg}}$, one edge of the lattice path corresponding to $\{v,w\}$ has interaction $J_{S_{\mr{deg}}}(\{v,w\})$ while all other edges of the lattice path have interaction $\delta'\cdot\pi_2$. Moreover, for all those edges in $G_{m',n'}$ that are not part of the lattice paths representing the edges of $S_{\mr{deg}}$ we pick the local interaction to be zero.
By \cref{thm: minor sim} we obtain a simulation 
\begin{equation}
    S \to S_{\mr{deg}}
\end{equation}
with cut-off $\delta'$. Note that the physical spin assignment of this simulation is precisely the map assigning grid coordinates to the spin from $S_{\mr{deg}}$ obtained from the algorithm.
The situation is illustrated in \cref{fig:gridEmbedd}.

\medskip
\paragraph*{Combining the four steps.}
Finally, combining the four constructed simulations yields the desired simulation $S \to T + K$.
Note that  $S \to S_{\mr{deg}}$ and $S_{\mr{deg}} \to S_{\mr{cross}}$ have cut-off $\delta'=\delta + \Shift_{\mr{cross}}$, while all other simulations have cut-off $\delta$. Moreover, $S_{\mr{cross}} \to S_{\mr{part}}$ has shift $\Shift_{\mr{cross}}$ while all other simulations have shift $0$. By \cref{thm:sim trans}, $S \to T+K$ has cut-off $\delta$.
As argued above, each step of the construction of $S \to T+K$ is polytime computable, and hence by \cref{thm:model sim trans} the construction itself is poyltime computable.
Further note that $\delta$ is merely a parameter in this construction, and changing the cut-off  amounts to changing this parameter, so the constructed emulation satisfies \cref{def:spin model sim} \ref{def:cut-off indep}. 
Finally, by construction $S \to T+K$ has identity encoding.
We conclude that the construction of $S$ and $S \to T+K$ satisfies \cref{def:spin model sim}, finishing the proof.
\end{proof}

\section{Computing Simulations by Linear Programs}\label{sec:lin prog}

Many of the problems studied in this work are of the form: 
Given a target spin system $T$, construct a simulation $S\to T$ with source $S$ with certain properties, often specified by requiring that $S\in \mc{M}$ for a spin model $\mc{M}$.
While our modular framework for simulations can be used to construct $S\to T$ from simpler simulations, we have so far constructed these simpler simulations by hand\,---\,e.g.\ $I^{\delta}_{\mr{nor}}$ in \cref{fig:Iff}.

Yet, this is not necessary: Simulations can be computed by linear programs, as we show in this section.  

We consider the problem:
\begin{quote}
Given a target system $T$, a cut-off $\delta$, and a finite set of local interactions $\mc{J}\coloneqq\{J_1, \ldots, J_n\}$ of spin type $q$, construct a simulation $S \to T$ where, up to isomorphism, the local interactions of $S$ are non-negative linear combinations of those from $\mc{J}$.
\end{quote}
We show how such spin systems $S$ and simulations $S\to T$ can be constructed via linear programs, 
first, for fixed interaction hypergraph of $S$ (\cref{ssec:lin prog fixed}) and then for an arbitrary one (\cref{ssec:lin prog arb}).  
We then illustrate the power of this approach by constructing an alternative crossing gadget for the 2d Ising model with fields (\cref{ssec:alternative crossing}). 

For simplicity, we consider simulations with $\degeneracy=\vert \enc\vert =1$ and where all interactions in $\mc{J}$ are invariant under permutation of spins. If this is not the case, each local interaction acting on $k$ spins (via relabellings) gives rise to up to $k!$ ways to be attached to a hyperedge of order $k$, increasing the number of variables associated to each hyperedge in the linear program. 
We write ``$S$ has local interactions from $\Cone(\mc{J})$" for ``up to isomorphisms, the local interactions of $S$ are non-negative linear combinations from $\mc{J}$".

\subsection{Fixed Interaction Hypergraph}\label{ssec:lin prog fixed}

We first consider the case where $S$ has fixed interaction hypergraph $G_S$. 

\begin{theorem}\label{thm:lin prog 1}
   Let $T,\mc{J},\delta$ and $G_S$ be as above. A spin system $S$ on $G_S$ with local interactions from $\Cone(\mc{J})$ and a simulation $S\to T$ with cut-off $\delta$ can be computed by solving a system of linear inequalities. 
\end{theorem}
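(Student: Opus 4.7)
The plan is to treat the real couplings of $S$ and the shift $\Shift$ as the continuous unknowns of a linear program, while the purely combinatorial ingredients of the would-be simulation are enumerated externally.

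First I parametrize $S$ on the fixed hypergraph $G_S$ by introducing a non-negative variable $\lambda_{e,i}\geq 0$ for every pair $(e,i)$ with $e\in E_S$ and $i\in\{1,\ldots,n\}$, and setting $J_S(e)\coloneqq \sum_{i=1}^n \lambda_{e,i}\, J_i$. This guarantees that, whatever the $\lambda_{e,i}$ turn out to be, $S$ has interaction hypergraph $G_S$ and local interactions in $\Cone(\mc{J})$. Crucially, $H_S(\vec{s}) = \sum_{e,i} \lambda_{e,i}\, J_i(\vec{s}|_e)$ is a linear form in the $\lambda_{e,i}$, so any evaluation $H_S(\vec{s})$ at a specific $\vec{s}$ becomes a linear expression in the unknowns.

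The remaining data of a simulation $\mc{f}\colon S\to T$, namely $\phys,\enc,\dec$ and, by the standing assumption $\degeneracy=|\enc|=1$, the unique simulating source configuration $\vec{s}_{\vec{t}}\in\simul(\vec{t})$ for each target configuration $\vec{t}$ with $H_T(\vec{t})<\delta$, is purely combinatorial. Since $V_S, V_T, q_S, q_T$ are all finite, these choices range over a finite set, which I enumerate. For each fixed choice I form the following system in the unknowns $(\lambda_{e,i},\Shift)$: the non-negativity constraints $\lambda_{e,i}\geq 0$; the matching-energy equalities $H_S(\vec{s}_{\vec{t}})-\Shift = H_T(\vec{t})$ for every $\vec{t}$ with $H_T(\vec{t})<\delta$, enforcing the first case of condition \ref{def:sim energy}; and the inequalities $H_S(\vec{s})-\Shift \geq \delta$ for every source configuration $\vec{s}\in\mc{C}_S$ that is not one of the designated $\vec{s}_{\vec{t}}$, which simultaneously enforces $\degeneracy=1$ (condition \ref{def:sim deg}) and the second case of condition \ref{def:sim energy}. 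All of these are linear in $(\lambda_{e,i},\Shift)$, so the collection forms a system of linear (in)equalities. By construction, any feasible point produces a valid spin system $S$ and simulation $\mc{f}\colon S\to T$ with cut-off $\delta$; conversely, any such $(S,\mc{f})$ with couplings in $\Cone(\mc{J})$ on $G_S$ arises as a feasible point of exactly one of the enumerated LPs.

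The main obstacle is purely one of bookkeeping: the number of high-energy inequalities is $q^{|V_S|}$, which is exponential in $|V_S|$, and the outer enumeration over $(\phys,\enc,\dec,(\vec{s}_{\vec{t}})_{\vec{t}})$ is likewise large, so the overall construction is not polynomial-time. This is immaterial for the statement, which only asserts that the problem reduces to solving a (finite) system of linear inequalities. The substantive obstruction to overcome is verifying that no non-linear requirement is hidden in \cref{def:sys simulation}: here the key observation is that uniqueness of the simulating configuration for each low-energy $\vec{t}$ is captured linearly by applying the inequality $H_S(\vec{s})-\Shift\geq\delta$ to precisely those $\vec{s}$ with $\vec{s}\circ\phys=\enc\circ\vec{t}$ but $\vec{s}\neq\vec{s}_{\vec{t}}$. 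Dropping the assumption that $G_S$ is fixed, which is the subject of the next subsection, will then require searching simultaneously over candidate hypergraphs, and is the genuinely harder follow-up.
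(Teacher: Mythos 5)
Your proof matches the paper's: parametrize each $J_S(e)$ by non-negative coefficients $\lambda_{e,i}$, fix the combinatorial data $(\phys,\enc,\dec)$ and the designated simulating configurations $\vec{s}_{\vec{t}}$, and impose the linear matching-energy equalities for $\vec{s}\in\simSet$ together with the high-energy inequalities $H_S(\vec{s})-\Shift\geq\delta$ for $\vec{s}\notin\simSet$, which as you correctly observe also enforce $\degeneracy=1$; the paper likewise relegates the search over combinatorial choices to a remark after the theorem. One small correction: the sum defining $J_S(e)$ should range only over those $J_i\in\mc{J}$ of arity $|e|$ (the paper's index set $I_{|e|}$), since a $J_i$ of mismatched arity is not a function on $[q]^e$ and the linear combination $\sum_{i=1}^n\lambda_{e,i}J_i$ is otherwise ill-typed.
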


Systems of linear equations can be solved by a linear program. 

\begin{proof}
We construct the system of linear inequalities.
For each natural number $a\geq 1$, denote by $\mc{J}_a$ the subset of local interactions from $\mc{J}$ which have arity $a$.
Denote by $I_a$ the set of indices of local interactions from $\mc{J}$ which are contained in $\mc{J}_a$, i.e.\ 
$\mc{J}_a = \{J_i \mid i \in I_a \}$.
For each $e\in E_S$, $\mc{J}_{\vert e \vert}$ are those local interactions that could potentially be used for the construction of $J_S(e)$.
Thus, a general $J_S(e)$ is a non-negative linear combination of local interactions from $\mc{J}_{\vert e \vert}$.  
For each $e \in E_T$ and for each $i \in I_{\vert e \vert}$ we introduce a variable $\lambda_{e,i}$ corresponding to the scalar coefficient of $J_i \in \mc{J}_{\vert e \vert}$ in $J_S(e)$. That is, we define
\begin{equation}
    J_S(e) \coloneqq \sum_{i\in I_{\vert e \vert}}\lambda_{e,i}\cdot J_i.
\end{equation}
This way, for each source configuration $\vec{s}$, $H_S(\vec{s})$ is a linear expression containing variables $\{\lambda_{e,i} \mid e \in E_S, i \in I_{\vert e \vert}\}$.
Adjusting the local interactions of $S$ such that $S \to T$ amounts to computing appropriate values for these variables.

Next we choose an arbitrary encoding, compatible decoding and physical spin assignment. 
Each low energy target  configuration $\vec{t}$ determines by $\vec{s}\circ \phys = \enc \circ \vec{t}$ a corresponding sub-configuration of physical spins, and extending $\vec{s}$ to auxiliary spins yields the configuration of $S$ which simulates $\vec{t}$. This extension can be done arbitrarily. 
This defines a function $\simul$ mapping low energy target configurations $\vec{t}$ to source configurations $\vec{s}$, which satisfy $\vec{s}\circ \phys = \enc \circ \vec{t}$.
Note that we have not yet imposed the energy condition (\cref{def:sys simulation} \ref{def:sim energy}) necessary for $\simul$ to define a simulation assignment. 
We denote the set of source configurations in the image of $\simul$ by $\simSet$.

Finally, imposing the energy conditions 
amounts to requiring that the variables $\{\lambda_{e,i}\}$ and $\Shift$ satisfy the inequalities: 
\begin{align}
\label{eq:lin prog 0} & \lambda_{e,i} \geq 0   && \forall  e \in E_S, i \in I_{\vert e \vert} \\
   \label{eq:lin prog 1} &H_S(\vec{s})-\Shift = H_T(\dec \circ \vec{s} \circ \phys)  &&\forall \vec{s} \in \simSet \\
   \label{eq:lin prog 2} & H_S(\vec{s})-\Shift \geq \delta  &&  \forall \vec{s} \notin \simSet.
\end{align} 
In total, the simulation $S \to T$ can be computed by solving this system of linear inequalities.\end{proof}

The first condition ensures that the local interactions of $S$ are non-negative functions\,---\,this condition should be omitted if one used a definition of a spin system with arbitrary interactions.

The objective function of this linear program can be chosen such that additional properties of the simulation $S \to T$ are imposed. 
For instance, one can minimize the absolute value of $\Shift$ to obtain a spin system $S$ whose ground state energy is maximally close to the ground state energy of the target spin system,  
or  minimize the sum of variables $\lambda_{e,i}$ to force a large number of them to be equal to zero and  obtain a sparser connectivity in $S$.

Other choices in the above construction, such as that of $\enc, \dec, \phys$ and $\simul$, are arbitrary.
They might, however, determine whether the resulting linear program has a solution. 
In practice, 
one may need to 
set up linear programs for several of these choices.

\subsection{Arbitrary Interaction Hypergraph}\label{ssec:lin prog arb}

We now consider the case where the interaction hypergraph of $S$ is not fixed. The task is thus: 
\begin{quote}
Given a set of local interactions $\mc{J}$ and a target system, construct a simulation $S\to T$ such that $S$ has interactions from $\Cone(\mc{J})$ but can have any interaction graph. 
\end{quote}
The construction $I^{\delta}_{\mr{nor}}$ from \cref{fig:Iff} is of this type.

The lack of restrictions on the interaction hypergraph (and hence on the number of spins of $S$) gives rise to an infinite search space for $S$.
Inspired by \cite[Lemma 3.7]{Tr00} we show how this infinite search space can be restricted to a finite one, and thereby reduce the case of unfixed $G_S$ to that of fixed $G_S$.

We assume that the local interactions $\mc{J}$ are such that, for every $J\in \mc{J}$ acting on spins $e$ 
with $\vert e \vert \geq 2$ and every pair of distinct spins $s_i,s_j\in e$, there exists an interaction $J_{s_i,s_j}\in \mc{J}$ acting on spins $e'\coloneqq e\setminus\{s_j\}$ such that for all configurations $\vec{s}$ on $e$ that satisfy $\vec{s}(s_i)=\vec{s}(s_j)$,
\begin{equation}\label{eq:hered}
    J_{s_i,s_j}(\vec{s}\vert_{e'}) = J(\vec{s}).
\end{equation}

Restricting a local interaction with arity $k$ to configurations that agree on two fixed spins $s_i,s_j$ effectively defines a local interaction of arity $k-1$. 
The above condition then states that 
$\mc{J}$ is closed under such restrictions, i.e.\ if $J \in \mc{J}$ then all of its restrictions are contained in $\mc{J}$, too. 
Following \cite{Tr00}, 
we term sets of local interactions satisfying the above condition \emph{hereditary}.
This is for instance satisfied for Ising interactions of arbitrary arity. 

We will prove that for hereditary $\mc{J}$ the search space is finite, by showing that, given $T$ and $\delta$, we can compute the number of auxiliary spins needed to simulate $T$. This allows us to fix the interaction hypergraph of $S$ and resort to the case of \cref{ssec:lin prog fixed}.

For simplicity, we restrict to $\enc$ and $\phys$ trivial. We expect similar considerations to apply for the general case. 

We start by introducing terminology.
Assume the target system $T$ has $r$ spins $s_1, \ldots, s_r$ and $k$ low energy configurations $\vec{t}_1, \ldots, \vec{t}_k$.
Since $\enc$ and $\phys$ are trivial, the physical spins of $S$ are given by $s_1, \ldots, s_r$.
Denote the auxiliary spins of $S$ by $s_{r+1}, \ldots, s_{r+a}$. Specifying $\simul$ amounts to giving the states of these $a$ auxiliary spins for each of the $k$ low energy target configurations. The states of physical spins of $\simul$ are fixed by the requirement $\vec{s}\circ \phys = \enc \circ \vec{t}_i$.  
We denote the state of spin $s_j$ w.r.t.\ configuration $\simul(\vec{t_i})$ by $\mr{state}_i(s_j)$.
We define 
\begin{equation}
    \mr{state}(s_j) \coloneqq (\mr{state}_1(s_j), \ldots, \mr{state}_k(s_j)),
\end{equation}
and call $\mr{state}(s_j)$ the \emph{low-energy vector} of spin $s_j$.
Specifying $\simul$ amounts to giving the low-energy vectors of all auxiliary spins.

The relation between $\simul(\vec{t}_i)$ and $ \mr{state}(s_j)$ can be given as follows: 
$\simul$ is fully determined by a $k\times (r+a)$ matrix with rows being the individual values  $\simul(\vec{t}_i)$,
\begin{equation}\label{eq:simul mat}
\begin{pmatrix}
 - &\simul(\vec{t}_1) & - \\
  & \vdots &  \\
  - &\simul(\vec{t}_k) & -  
\end{pmatrix}
= \begin{pmatrix}
\mid &  & \mid \\
 \mr{state}(s_1) & \hdots & \ \mr{state}(s_{r+a}) \\
 \mid &  & \mid 
\end{pmatrix}.
\end{equation}
The columns of this matrix are the low-energy vectors.

Finally, two source spins $s_i,s_j$ are termed \emph{indistinguishable} if their low-energy vectors are equal, and distinguishable otherwise. The former correspond to equal columns in $\simul$ (cf.\ \eqref{eq:simul mat}). 
In the low energy sector, indistinguishable spins have equal states.
The number of indistinguishable physical spins is denoted by $i_s$.
Since $\enc,\phys$ are trivial, $i_s$ is determined by the low energy target configurations $\vec{t}_i$.
In the presence of symmetries of the local interactions $\mc{J}$, it might suffice to call spins indistinguishable if their low-energy vectors agree up to this symmetry.

\begin{lemma}[bounding the search space]\label{thm:lin prog 2}
    Let $T$ be a spin system and $\mc{J}$ a hereditary set of local interactions. If there exists a simulation $S \to T$ where $S$ has local interactions from $\Cone(\mc{J})$, then there exists a simulation $S'\to T$ with equal cut-off, where $S'$ has local interactions from $\Cone(\mc{J})$ and at most $q^k-i_s$ auxiliary spins (and hence $q^k-i_s+r$ total spins).
\end{lemma}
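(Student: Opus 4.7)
The plan is to take a simulation $\mc{f}\colon S\to T$ realising $S$ with local interactions in $\Cone(\mc{J})$ and ``quotient'' it by identifying indistinguishable auxiliary spins, obtaining a smaller $S'$ that still simulates $T$ with the same cut-off. First I would introduce the equivalence relation on $V_S$ defined by $s_a\sim s_b$ iff $\mr{state}(s_a)=\mr{state}(s_b)$; since low-energy vectors live in $[q]^k$, there are at most $q^k$ classes. I would fix a representative in each class, using a physical spin whenever one is available, and I would leave all physical spins unchanged (two physical spins in the same class must remain distinct, as $T$ already fixes them). This defines a projection $\pi\colon V_S\to V_{S'}$ that is the identity on physical spins and collapses each auxiliary spin onto the representative of its class.

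Next I would build $S'$ by pushing forward hyperedges and interactions along $\pi$. For each $e\in E_S$ with $J_S(e)=\sum_i\lambda_{e,i}\,J_i$ and $J_i\in\mc{J}$, the image $\pi(e)\subseteq V_{S'}$ has the same arity when $\pi|_e$ is injective (and the interaction is unchanged). When $\pi|_e$ identifies a pair $s_a,s_b\in e$, I would invoke the hereditary property to replace each $J_i$ by its restriction $J_{i,s_a,s_b}\in\mc{J}$; by \eqref{eq:hered} this agrees with $J_i$ on every configuration satisfying $\vec{s}(s_a)=\vec{s}(s_b)$, which is automatic once the two spins are identified in $V_{S'}$. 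Iterating over all coinciding pairs yields an interaction still in $\Cone(\mc{J})$ on $\pi(e)$. If several original hyperedges land on the same image, their reduced interactions are summed (preserving $\Cone(\mc{J})$ membership); hyperedges that collapse onto a single vertex become fields, and any residual constants are absorbed into the shift.

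Third, I would verify that the tuple $\mc{f}'=(\Del_{\mc{f}},\Shift_{\mc{f}'},1,\phys,\dec,\enc,\simul')$, with $\simul'(\vec{t})$ defined as the restriction of $\simul(\vec{t})$ to $V_{S'}$, is a simulation with cut-off $\Del_{\mc{f}}$. The central observation is that every $\vec{s}\in\simSet_{\mc{f}}$ is, by definition of the equivalence relation, constant on each class, hence descends unambiguously to some $\vec{s}'$ on $V_{S'}$; conversely every $\vec{s}'$ on $V_{S'}$ lifts canonically by copying the representative's state back to its class. This defines a bijection between $\mc{C}_{S'}$ and the class-constant configurations of $S$, and under this bijection the hereditary construction gives $H_{S'}(\vec{s}')=H_S(\vec{s})-C$, where $C$ is the constant folded into $\Shift_{\mc{f}'}$. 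Matching energies and the degeneracy condition for $\mc{f}'$ then transfer from those of $\mc{f}$, while any $\vec{s}'\notin\simSet_{\mc{f}'}$ whose lift lies outside $\simSet_{\mc{f}}$ satisfies $H_{S'}(\vec{s}')-\Shift_{\mc{f}'}\geq\Del_{\mc{f}}$.

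Finally, the spin count is a direct accounting: $V_{S'}$ contains all $r$ physical spins and, in addition, one auxiliary representative for every equivalence class that does not already contain a physical spin. Since there are at most $q^k$ classes and exactly $i_s$ of them are represented by physical spins, the number of auxiliary spins is at most $q^k-i_s$, giving a total of $q^k-i_s+r$. The main obstacle I foresee is the second step: keeping the bookkeeping clean so that the iterated applications of the hereditary property, the pushforward of possibly non-injective hyperedges, and the subsequent summation of collapsed contributions genuinely produce a spin system whose local interactions remain in $\Cone(\mc{J})$, while simultaneously guaranteeing that no spurious low-energy configurations appear in $S'$ (i.e.\ that high-energy $\vec{s}'$ in $S'$ lift to high-energy $\vec{s}$ in $S$, rather than to configurations outside the class-constant subspace where our energy identity does not a priori apply).
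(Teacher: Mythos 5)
Your proposal is correct and uses the same core idea as the paper's proof: merge indistinguishable spins via the hereditary restrictions (following the Trevisan et al.\ gadget-reduction technique), and count the surviving low-energy vectors. The only organizational difference is that the paper removes one redundant auxiliary spin at a time and iterates, which keeps each step to a single pairwise identification (so \eqref{eq:hered} applies once per affected hyperedge), whereas you take the full quotient at once and therefore must iterate the hereditary restriction inside a single hyperedge; both work, the paper's version just keeps the bookkeeping you flag as a worry to a minimum. Your self-flagged concern about spurious low-energy configurations dissolves once you observe that the canonical lift of any $\vec{s}'\in\mc{C}_{S'}$ is automatically class-constant, so the energy identity $H_{S'}(\vec{s}')=H_S(\vec{s})$ always applies to it, and $\vec{s}'\notin\simSet'$ forces $\vec{s}\notin\simSet$ (else restricting would put $\vec{s}'$ in $\simSet'$), whence $H_S(\vec{s})-\Shift\geq\Del$. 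One small imprecision: no ``residual constants'' ever appear, since the hereditary restriction of an arity-$a$ interaction always has arity $a-1\geq 1$, so the shift $\Shift$ is unchanged (the paper keeps it fixed throughout).
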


\begin{proof}
We follow \cite[Lemma 3.7]{Tr00}.
Assume there exists a simulation with more than $q^k-i_s$ auxiliary spins.
This has a simulation assignment $\simul$ which determines the
low-energy vectors for all spins of $S$.
Since each low-energy vector consists of $k$ numbers from $[q]$, there exist $q^k$ different low-energy vectors.
By assumption, $i_s$ of these are associated to physical spins.
Since $S$ has more than $q^k-i_s$ auxiliary spins, at least one low-energy vector appears at least twice in $\simul$. That is, there exists an auxiliary spin $s_j$ such that $\mr{state}(s_j)=\mr{state}(s_i)$ for some source spin $s_i$ different from $s_j$.
Note that $s_i$ might be an auxiliary or a physical spin.

We now construct a spin system $S'$ with spins $V_S\setminus \{s_j\}$. 
First, we explain how $E_{S'}$ is constructed.
Given any edge $e\in E_S$, we define an edge $r_j(e)$ by distinguishing the following three cases: 
\begin{enumerate}
    \item \label{it:hered 1} 
    If $e$ does not contain $s_j$ then $r_j(e)$ equals $e$; 
    \item \label{it:hered 2} If $e$ contains $s_j$ but does not contain $s_i$ then $r_j(e)$ is obtained from $e$ by relabeling $s_j$ with $s_i$;
    \item \label{it:hered 3} If $e$ contains both $s_j$ and $s_i$ then $r_j(e)$ equals $e\setminus\{s_j\}$.
\end{enumerate}
We now define 
\begin{equation}
    E_{S'} \coloneqq \{ r_j(e) \mid e \in E_S \}.
\end{equation}

Next, we construct $J_{S'}$.
By assumption, local interactions from $S$ are non-negative linear combinations from $\mc{J}$.
Given any local interaction $J_S(e) = \sum_i \lambda_i \cdot J_i$ from $S$ we first define a local interaction $R_j(J_S(e))$ by again distinguishing the same three cases 
\begin{enumerate}
    \item If $e$ does not contain $s_j$ then $R_j(J_S(e))$ equals $J_S(e)$;
    \item If $e$ contains $s_j$ but does not contain $s_i$ then $R_j(J_S(e))$ is obtained from $J_S(e)$ by relabeling $s_j$ to $s_i$ (as in \cref{def:iso spin sys});
    \item If $e$ contains both $s_i$ and $s_j$ then $R_j(J_S(e))$ equals $\sum_{i}\lambda_i\cdot (J_i)_{s_i,s_j}$.
\end{enumerate}
First, note that by construction $R_j(J_S(e))$ defines a local interaction on $r_j(e)$.
Second, since $\mc{J}$ is hereditary, $R_j(J_S(e))$  is a local interaction from $\Cone(\mc{J})$.

Now, given any $e'\in E_{S'}$, we define 
\begin{equation}
    J_{S'}(e') = \sum_{e \in r_j^{-1}(\{ e'\})} R_j(J_S(e)), 
\end{equation}
which clearly defines a local interaction from $\Cone(\mc{J})$, acting on hyperedge $e'$.
Note that since $r_j$ is not necessarily injective, the preimage of $e'$, $r_j^{-1}(\{ e'\})$ might contain multiple edges from $E_S$.
Consider for instance
\begin{equation}
    r_j(\{a,b,s_j\}) = r_j(\{a,b,s_i,s_j \}) = \{a,b,s_i\}.
\end{equation}

Finally, using \cref{eq:hered}, for any $e'\in E_{S'}$ and any configuration $\vec{s}\in \mc{C}_S$ that agrees on $s_i$ and $s_j$ the following holds:
\begin{equation}
    J_{S'}(e')(\vec{s}\vert _{e'}) = 
    \sum_{e \in r_j^{-1}(\{ e'\})} J_S(e)(\vec{s}\vert_e).
\end{equation}
This implies for such configurations
\begin{equation}
    H_{S'}(\vec{s}\vert_{V_{S'}}) = H_S(\vec{s}).
\end{equation}

Since all configurations from $\simSet$ agree on $s_i$ and $s_j$, defining 
\begin{equation}
    \simul'(\vec{t}) = \simul(\vec{t})\vert_{V_{S'}}
\end{equation}
implies that $H_{S'}$ satisfies   \cref{eq:lin prog 1}.
Note that this definition of $\simul'$ precisely corresponds to removing the column $\mr{state}(s_j)$ from $\simul$.

Since for each $\vec{s'}\notin\simSet'$ there exists a configuration $\vec{s}\notin\simSet$ that agrees on $s_i, s_j$ and restricts to $\vec{s'}$, $H_{S'}$ also satisfies \cref{eq:lin prog 2}.
Finally, since
the construction of $J_{S'}$ does not change the coefficients of linear combinations of local interactions and 
$S$ satisfies \cref{eq:lin prog 0}, so does $S'$.

By \cref{thm:lin prog 1} we have constructed a simulation $S'\to T$ such that $S'$ has one spin less than $S$.
The simulation assignment of this simulation in its matrix form is obtained by removing the column containing the low-energy vector $\mr{state}(s_j)$ from $\simul$.
As long as the resulting spin system $S'$ still has at least $q^k-i_s+r$ spins we are guaranteed the existence of another auxiliary spins $s_{j'}$ indistinguishable from some other spin $s_{i'}$.
Thus, the claim follows by applying  this construction iteratively, until the number of auxiliary spins equals $q^k-i_s$.
\end{proof}

We now use \cref{thm:lin prog 2} to prove that also simulations with arbitrary interaction hypergraph $G_S$ can be computed by linear programs.
Denote by $n_a$ the number of local interactions from $\mc{J}$ of arity $a$ and by $a_{\mr{max}}$ the maximum arity of local interactions from $\mc{J}$.

\begin{theorem}[simulation as a linear program]\label{thm:lin prog 3}
   Let $T$ be a target system with $r$ spins, $\delta>0$ and  $\mc{J}$ a hereditary set of local interactions. If there exists a spin system $S$ with local interactions from $\Cone(\mc{J})$ that simulates $T$ with cut-off $\delta$ and with trivial $\enc, \phys$, then such a spin system can be computed by a linear program with
   \begin{equation}\label{eq:num var}
       1 + \sum_{a=1}^{a_{\max}} n_a \cdot \binom{q^k-i_s+r}{a} 
   \end{equation}
   variables.
\end{theorem}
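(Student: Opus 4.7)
The plan is to combine \cref{thm:lin prog 2}, which bounds the search space, with \cref{thm:lin prog 1}, which solves the fixed-interaction-hypergraph case by a linear program.

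First I would apply \cref{thm:lin prog 2} to conclude that, if any simulating spin system with local interactions from $\Cone(\mc{J})$ exists, then so does one with at most $q^k - i_s$ auxiliary spins, i.e.\ with $n \coloneqq q^k - i_s + r$ total spins. I then fix $V_S$ to be a set of $n$ spins, declaring the first $r$ of them to be the physical spins (well-defined because $\phys$ is trivial).

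Next I would take $G_S$ to be the hypergraph on $V_S$ whose hyperedges are exactly all subsets of $V_S$ of cardinality in $\{1, \ldots, a_{\max}\}$. Hyperedges of larger cardinality are unnecessary, since no local interaction in $\mc{J}$ could be attached to them. Fixing this full hypergraph is without loss of generality: any simulating spin system on a sub-hypergraph of $G_S$ arises as an LP solution in which the coefficients $\lambda_{e,i}$ associated with absent hyperedges simply vanish. In particular, this includes the bounded-size simulating spin system guaranteed by \cref{thm:lin prog 2}.

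With $G_S$ fixed, \cref{thm:lin prog 1} directly yields a linear program that computes $S$ together with the simulation $S \to T$. The variable count then follows by inspection of the construction in the proof of \cref{thm:lin prog 1}: for each hyperedge $e \in E_S$ of arity $a$, one introduces $n_a$ coefficient variables $\lambda_{e,i}$, and the number of hyperedges of arity $a$ in the complete hypergraph on $n$ vertices is $\binom{n}{a}$. Adding the single energy-shift variable $\Shift$ yields
\[
1 + \sum_{a=1}^{a_{\max}} n_a \cdot \binom{n}{a},
\]
matching the stated count.

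The main obstacle is that, as already observed for \cref{thm:lin prog 1}, the linear program implicitly depends on an a priori choice of the simulation assignment $\simul$, that is, of the states assigned to the $q^k - i_s$ auxiliary spins for each low-energy target configuration. Different choices of $\simul$ give rise to different LPs, all sharing the variable count above. In principle one must therefore iterate over the finitely many such choices; feasibility of at least one of them is guaranteed by \cref{thm:lin prog 2}, whose construction explicitly exhibits a valid $\simul$ on the reduced spin system with $n$ spins.
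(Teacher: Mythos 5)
Your overall structure matches the paper: apply \cref{thm:lin prog 2} to bound the number of spins by $q^k - i_s + r$, take the complete hypergraph on that many spins with all hyperedges of order at most $a_{\max}$, invoke \cref{thm:lin prog 1}, and count variables; the variable count itself is correct.

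However, there is a genuine gap. The theorem asserts that $S$ can be computed by \emph{a} linear program, whereas your proposal only establishes that one of finitely many LPs (one per choice of simulation assignment $\simul$) is feasible, and you suggest iterating over all of them. The idea you are missing is a \emph{canonical} choice of simulation assignment. The paper constructs a single LP $L_{\mr{gen}}$ using the assignment $\simul'$ that gives each of the $q^k - i_s$ auxiliary spins a distinct low-energy vector, so that the columns of the $\simul'$-matrix collectively exhaust all $q^k$ possible low-energy vectors. It then argues that any other feasible pair $(G_S, \simul)$ --- after applying \cref{thm:lin prog 2} to eliminate redundant indistinguishable spins --- corresponds to $L_{\mr{gen}}$ together with extra equality constraints $\lambda_{e,i}=0$ on hyperedges touching the auxiliary spins whose low-energy vectors are unused by $\simul$. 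Adding equality constraints only shrinks the feasible region, so feasibility of that restricted LP (supplied by the existence hypothesis) implies feasibility of $L_{\mr{gen}}$, and solving $L_{\mr{gen}}$ alone suffices. Without this argument you have proved a weaker statement than the one claimed. A secondary issue: your assertion that \cref{thm:lin prog 2} ``explicitly exhibits a valid $\simul$'' is misleading, since its reduction procedure takes an already-known simulating system as input --- precisely the object the LP is supposed to produce --- and only the existence guarantee is usable here.
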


\begin{proof}
By \cref{thm:lin prog 2}, we can restrict to source spin systems $S$ with no more than $q^k-i_s+r$ spins.
By \cref{thm:lin prog 1}, any choice of interaction hypergraph $G_{S}$ and  simulation assignment $\simul$ 
determines a system of linear inequalities whose solutions are simulations $S \to T$.

In the following we construct the most general interaction hypergraph and simulation assignment compatible with the assumptions.
Denoting the corresponding system of linear inequalities by $L_{\mr{gen}}$,
any solution of $L_{\mr{gen}}$ yields a simulation which satisfies the requriements.
We then argue that any other interaction hypergraph and simulation assignment amounts to adding equations to $L_{\mr{gen}}$.
Thus, if there exists a spin system $S$ that simulates $T$ (with the required properties), then by \cref{thm:lin prog 1} the linear system containing $L_{\mr{gen}}$ plus the corresponding additional equations corresponding, has a solution. Hence, $L_{\mr{gen}}$ must have a solution, too. 
The claim then follows by counting the number of variables of $L_{\mr{gen}}$.

Let us construct $L_{\mr{gen}}$.
Given the target system $T$ and cut-off $\delta$, we first compute all $k$ low energy target configurations $\vec{t}_i$.
Since $\enc$ and $\phys$ are trivial, from the low energy target configurations we can read off the low-energy vectors of  physical spins, $\mr{state}(s_1), \ldots, \mr{state}(s_r)$.
In particular, we can read off the number of different such vectors, $i_s$.
By \cref{thm:lin prog 2}, we thus can compute the number of spins necessary for $S'$, $q^k-i_s+r$.

We construct the simulation assignment $\simul'$ by assigning 
each of the $q^k-i_s$ low-energy vectors not corresponding to a physical spin to one auxiliary spin. In other words, $\simul'$, in its matrix form, contains each of the $q^k$ possible low-energy vectors as a column-vector. 
We define $G_{S'}$ to be the hypergraph with vertices $s_1, \ldots, s_{q^k-i_s+r}$ and all possible hyperedges of order $\leq a_{\mr{max}}$. Given this, we set up the linear program described in \cref{thm:lin prog 1}.

Next, we count the number of variables of $L_{\mr{gen}}$.
$G_{S'}$ contains 
\begin{equation}
\binom{q^k-i_s+r}{a}
\end{equation}
hyperedges of order $a$, and $\mc{J}$ contains $n_a$ local interactions of arity $a$.
Following \cref{thm:lin prog 1}, each hyperedge $e$ of order $a$ thus yields $n_a$ variables $\lambda_{e,i}$. 
Including the final variable, $\Shift$, we obtain the total number of variables of \eqref{eq:num var}.
Solving this linear program yields a spin system $S'$ and simulation $S'\to T$ with the desired properties.

To finish the proof, we argue that any other choice of interaction hypergraph $G_S$ and simulation assignment $\simul$ amounts to adding equations to the linear program $L_{\mr{gen}}$ and thus, assuming that 
there exists some spin system $S$ such that $S \to T$ with the desired properties then also $L_{\mr{gen}}$ must have a solution.

First, any choice of interaction hypergraph with no more than $q^k-i_s+r$ spins can be obtained from $G_{S'}$ by removing hyperedges. This can be achieved by adding equations to the linear program that set the corresponding variables to zero.
Second, by \cref{thm:lin prog 2}, any choice of simulation assignment $\simul$, w.l.o.g.\ has a minimum number of indistinguishable spins.
Since $\simul'$ contains all possible low-energy vectors, 
$\simul$ can only differ from $\simul'$ by missing some low-energy vectors.
Since low-energy vectors of physical spins are determined by the low-energy target configurations, the missing low-energy vectors must correspond to auxiliary spins. 
Imposing this constraint amounts to adding equations to the linear program that set the variables corresponding to hyperedges containing these auxiliary spins to zero.  
\end{proof}

Let us see an example of computing simulations by linear programs, where we solved the system of linear inequalities with the tools of \cite{ortools}.

\begin{example}[revisiting Ising]
Let us revisit the simulation provided in \cref{ex:simulation}.
Since $\mc{J}$ contains all Ising fields and Ising pair interactions, it is hereditary.
Due to the symmetry of Ising interactions, we consider spins indistinguishable if their low-energy vectors are equal up to flipping all states, i.e.\ exchanging states, $1 \to 2, 2 \to 1$. 
Fixing the encoding, decoding and physical spin assignment as described in \cref{ex:simulation} leads to $4$ different low-energy vectors which are associated to physical spins (see \cref{tab:ex simul}). 
Accounting for the Ising symmetry of local interactions, since $T$ has $4$ low energy configurations, there exists a total of $8$ different low-energy vectors, which implies that the simulation requires $4$ auxiliary spins. 

The interaction graph of $S$ and simulation assignment are chosen as described in \cref{thm:lin prog 3}, i.e.\
$G_S$ is  the  complete graph on 10 vertices (including all single vertex hyperedges to allow for fields) and 
$\simul$ is provided on the right hand side of \cref{tab:ex simul}. 
The solution of the resulting linear program is provided in \cref{tab: local int ex}, where we used \cref{eq:Ising field sum} and \cref{eq:Ising int sum} to rewrite linear combinations of Ising interactions in terms of a single Ising interaction.
The shift of the constructed simulation is $3$.
Even though $G_S$ is chosen to be the complete graph on $10$ vertices, after solving the linear program several variables are set to zero. We hence removed the corresponding edges from \cref{fig:ex sim 1}.
\end{example}

\subsection{Alternative Crossing Gadget}\label{ssec:alternative crossing}

Let us put these ideas in practice: We shall provide a new crossing gadget for the Ising model with fields, with the same functionality as $I^{\delta}_{\times}$ (cf.\ \cref{fig:CrossGadget}), but notably simpler. 

We construct the alternative crossing gadget for cut-off $\delta=1$; arbitrary cut-offs can be obtained by scaling local interactions and the shift according to \cref{thm:sim scale}.  

Consider the spin system $T$ acting on four two-level spins $1, 2,3, 4$, which in the zero energy ground state imposes
\begin{equation}\label{eq: gs T}
    \vec{t}(1) = \vec{t}(4) \ \ \text{and} \ \ \vec{t}(2) = \vec{t}(3),
\end{equation}
while all other configurations have energy $\geq \delta=1$.
Any spin system that simulates $T$ with identity encoding and cut-off $\delta = 1$ also satisfies \cref{lem:crossGadget}. Given that it also satisfies certain planarity requirements, it can be used as an alternative crossing gadget. 
Thus, computing an alternative crossing gadget amounts to constructing a simulation of $T$.

We apply \cref{thm:lin prog 1} to construct an Ising system with fields $S$ that simulates $T$.
We choose the interaction graph $G_S$ such that applying $S$ according to \cref{fig:cross sim} leads to planar interaction graphs.
We pick trivial encoding and physical spin assignment.
\cref{eq: gs T} gives rise to $4$ low energy configurations of $T$ which contain $2$ of the (up to Ising symmetries) $8$ possible low-energy vectors. We thus take $S$ to have $10$ spins, $4$ physical and $6$ auxiliary. Moreover, we define $\simul$ as described in \cref{thm:lin prog 3}, i.e.\ such that it contains all of the $8$ indistinguishable low-energy vectors:
\begin{equation}
\simul = 
    \begin{pmatrix}
        2 & 2 & 2 & 2 & 2 & 1 & 2 & 2 & 2 & 2 \\ 
        2 & 1 & 1 & 2 & 2 & 2 & 2 & 1 & 1 & 2 \\
        1 & 2 & 2 & 1 & 2 & 2 & 2 & 2 & 1 & 1\\
        1 & 1 & 1 & 1 & 2 & 2 & 1 & 2 & 2 & 2
    \end{pmatrix}.
\end{equation}

\begin{figure}[t]
    \centering
\begin{tikzpicture}[circuit logic US, cgnodeS/.style = {draw=black!80, fill=blue!5, thick, circle, minimum size= {width("$v_{4}$")+4pt}, inner sep=1pt}] 

    \node[cgnodeS] (1) at (-1, 4) {$1$}; 
    \node[cgnodeS] (2) at (1, 4) {$2$}; 
    \node[cgnodeS] (3) at (-1, 0) {$3$}; 
    \node[cgnodeS] (4) at (1, 0) {$4$}; 
    \node[cgnodeS] (5) at (-2, 2) {$5$}; 
    \node[cgnodeS] (6) at (-1, 2) {$6$}; 
    \node[cgnodeS] (7) at (1, 2) {$7$}; 
    \node[cgnodeS] (8) at (2, 2) {$8$}; 
    \node[cgnodeS] (9) at (0, 3) {$9$}; 
    \node[cgnodeS] (10) at (0, 1) {$10$};

    \draw[very thick, black] (1) to (2);
    \draw[very thick, black] (1) to [bend right = 85] (3);
    
    \draw[very thick, black] (1) to (6);
    \draw[very thick, black] (1) to (9);

    \draw[very thick, black] (2) to (7);
    \draw[very thick, black] (2) to (8);
    \draw[very thick, black] (2) to (9);
    \draw[very thick, black] (3) to (4);
    
    \draw[very thick, black] (3) to (6);
    \draw[very thick, black] (3) to (10);
    \draw[very thick, black] (4) to (7);
    \draw[very thick, black] (4) to (8);
    \draw[very thick, black] (4) to (10);
   
    \draw[very thick, black] (6) to (7);
    \draw[very thick, black] (6) to (9);
    \draw[very thick, black] (6) to (10);
    
    \draw[very thick, black] (7) to (9);
    \draw[very thick, black] (7) to (10);
    \end{tikzpicture}
    \caption{The alternative crossing gadget, namely $S$ in \cref{ssec:alternative crossing}. Its interactions are listed in \cref{tab: local int S}.}
    \label{fig:int lin S}
\end{figure}
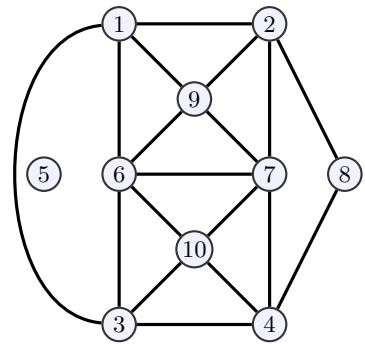

\begin{table}[t]
\centering
\begin{tabular}{ |p{1.8cm}|p{1.8cm}|p{1.8cm}|p{1.8cm}|  }
 \hline
 \multicolumn{2}{|c}{Fields} & \multicolumn{2}{|c|}{Pair interactions} \\
 \hline
 Spin & Field & Edge & Interaction\\
 \hline
$1$   & $2\cdot\bar{\pi}_1$   & $\{1,2\}$   & $2 \cdot \pi_2$   \\
$2$   &  $4\cdot \pi_1$  &$\{1,3\}$    & $2\cdot \pi_2 $  \\
$3$   &  $4\cdot \bar{\pi}_1$   &$\{1,6\}$    & $6\cdot \pi_2$  \\
$4$   &  $2\cdot \pi_1$  &$\{1,9\}$    & $ 2\cdot \pi_2$  \\
$5$   &  $2\cdot \bar{\pi}_1$  &$\{2,7\}$    &  $6\cdot \bar{\pi}_2 $ \\
$6$   &  $8\cdot \bar{\pi}_1$  &$\{2,8\}$    &  $ 2 \cdot \bar{\pi}_2$ \\
$7$   &   $8\cdot \bar{\pi}_1$ &$\{2,9\}$    &  $ 4 \cdot \bar{\pi}_2$ \\
$8$   &  $2\cdot \bar{\pi}_1$  &$\{3,4\}$    &  $2\cdot \bar{\pi}_2 $  \\
$9$   &  $6 \cdot \bar{\pi}_1$  &$\{3,6\}$ & $ 4\cdot \pi_2$     \\
$10$  &   $6\cdot \bar{\pi}_1$  &$\{3,10\}$ & $ 4\cdot \pi_2$   \\
      &    &$\{4,7\}$         & $ 2 \cdot \bar{\pi}_2$   \\
      &    &$\{4,8\}$          & $ 2 \cdot \pi_2$    \\
      &    &$\{4,10\}$             &  $ 4 \cdot \bar{\pi}_2$  \\
      &    &$\{6,7\}$             &  $ 4\cdot \pi_2$  \\
      &    &$\{6,9\}$             &   $6\cdot \pi_2 $ \\
      &    &$\{6,10\}$             &    $2 \cdot \pi_2 $ \\
      &    &$\{7,9\}$             &   $8 \cdot \pi_2$ \\
      &    &$\{7,10\}$             &   $ 2 \cdot \pi_2$ \\
\hline
\end{tabular}
\caption{Local interactions of the alternative crossing gadget of \cref{fig:int lin S} for cut-off $\delta=1$.  
For a simulation with arbitrary cut-off $\delta$, every field and interaction must be multiplied by $\delta$.}\label{tab: local int S}
\end{table}

The interaction graph of $S$ is shown in \cref{fig:int lin S}; its local interactions, obtained from solving the corresponding linear program (by \cite{ortools}) are listed in \cref{tab: local int S}, where we again used \cref{eq:Ising field sum} and \cref{eq:Ising int sum}. The shift of the resulting simulation is $\Shift=36$.
Again, we have removed those edges which after solving the linear program have zero interactions. 
Further note that spin $5$, in the low energy sector, merely contributes a constant shift.
We could thus further simply $S$ by removing spin $5$ and removing its low-energy vector $(2,2,2,2)$ from $\simul$.

\section{Conclusions and Outlook}\label{sec:outlook}

Let us now summarize our main findings (\cref{ssec:concl}), and discuss applications and extensions of the work (\cref{ssec:outlook}). 

\subsection{Conclusions} \label{ssec:concl}

Spin system simulations (\cref{def:sys simulation}) capture the idea of encoding the low-energy behavior of a target spin system $T$ into the low energy behavior of a source spin system $S$. Below the cut-off, simulations preserve spectra (\cref{lem:sim spectrum}) with constant degeneracy (\cref{lem:sim deg}) as well as ground states (\cref{lem:ground state}). Additionally, simulations approximately preserve thermodynamic quantities, such as the partition function (\cref{lem:sim part fun sys}) and the Boltzmann distribution (\cref{thm:boltzmann sys}).
Crucially, the approximation error scales as $\mathcal{O}(e^{-\Del})$, so that increasing the cut-off allows for arbitrary precision. 
In addition, common transformations for spin systems (or graphs, in general) are  simulations, such as generalized symmetries (\cref{lem:iso sim}, \cref{lem:ground state sym sim}), modifications of spin type (\cref{lem:bin sim}) and graph minor relations (\cref{thm: minor sim}).
Finally, simulations are modular, i.e.\ can be composed (\cref{thm:sim trans}), scaled (\cref{thm:sim scale}) and added (\cref{thm:sim sum}), so that complicated simulations can be constructed from simple ones. 

Spin models (\cref{def:spin model}) are sets of spin systems and spin model emulations (\cref{def:spin model sim}) are efficiently computable simulations between spin models.
Most results for simulations extend to emulations; for example, they can also modify the spin type (\cref{lem:bin sim model}). Most importantly, they are modular, i.e.\ can be composed (\cref{thm:model sim trans}), scaled (\cref{thm:model sim scale}) and added (\cref{thm:model sim sum}). We define three properties of spin models: functional completeness (\cref{def:f.c.}), closure (\cref{def:closed} and \cref{def:loc closed}) and scalability (\cref{def:scalable}) and characterize universality (\cref{def:universality}) in terms of them (\cref{thm:main}). Since the characterization is constructive, it provides a step-by-step recipe for the efficient construction of a simulation with arbitrary target spin system $T$ by an arbitrary universal spin model $\mc{M}$. 
The recipe consists of three parts: 
(i) we decompose $T$ into a linear combination of flag systems (\cref{lem:basis decomp}), 
(ii) we use Boolean algebra to simulate arbitrary flag systems with linear combinations of flag systems of order $2$ (\cref{lem:f.c.}), 
and (iii) leverage functional completeness, closure and scalability to simulate linear combinations of order $2$ flag systems with spin systems from $\mc{M}$ (\cref{thm:main}).
It heavily relies on the modularity of emulations.

As for the consequences of universality, we show that emulations induce polytime computable reductions for the problems of computing ground states, approximating partition functions, and approximate sampling from Boltzmann distributions (\cref{thm:sim GSE}, \cref{thm:sim FGSE}, \cref{thm:sim part fun}, \cref{thm:approx sampl}). 
This means that emulations can be used to construct efficient solutions for these problems, but also that universal spin models are maximally hard for these problems (\cref{cor: univ GSE}, \cref{cor:univ FGSE}, \cref{cor:part univ}, \cref{cor:uni boltz}). 

We then show that the 2d Ising model with fields is universal, because it is scalable, functional complete (\cref{thm:2D f.c.}), locally closed and thus closed (\cref{thm:locally closed}). 
The main challenge to prove closure are non-planar target systems (\cref{fig:lattice minor}), for which we construct a crossing gadget (\cref{lem:crossGadget}).
This again relies on the modularity of simulations/emulations. 

Finally, simulations can be computed by linear programs (\cref{thm:lin prog 3}).  
This allows us to construct an alternative, more frugal crossing gadget for the Ising model with fields (\cref{ssec:alternative crossing}). 

\subsection{Outlook}\label{ssec:outlook}

This framework for simulations/emulations opens several questions, which can be divided into those concerning applications and those concerning extensions.

\subsubsection{Applications}

\paragraph*{Quantum annealing.}
As mentioned in \cref{ssec:gse}, \cref{thm:sim GSE} and \cref{thm:sim FGSE} could be applied to quantum optimization. 
Quantum annealing protocols essentially correspond to emulations $\mc{M} \to \mc{N}$, 
where $\mc{N}$ is chosen to correspond to the optimization problem to be solved, $\text{\sc Gse}_{\mc{N}}$ (e.g.\ by following \cite{Lu14b}), and $\mc{M}$ is chosen so that one can compute $\text{\sc Gse}_{\mc{M}}$ via quantum annealing. 

Many quantum annealing protocols suffer from exponential, adiabatic time scales \cite{Ng23,Eb22}. However, it seems to be unclear what precise properties of the emulation lead to such exponential scales. For example, it seems unclear if for fixed target model $\mc{N}$ the exponential time scales can be avoided by a clever choice of other data of the emulation, or, conversely, if certain properties of $\mc{N}$ unavoidably lead to exponential timescales. 
In the latter case, such properties are probably preserved by emulations. 
It would be interesting to use our framework to shed light on these properties. 
For example, if lattices of large connectivity can be efficiently simulated by 2d lattices
and `exotic' many-body interactions can be efficiently simulated by Ising pair interactions and fields the relevant properties 
cannot solely depend on the connectivity or the type of local interactions since these are clearly not preserved by emulations.
Emulations may provide the correct notion of transformation of spin models to extract the relevant properties.

\medskip
\paragraph*{Sampling algorithms.}
As explained in \cref{ssec:sample}, \cref{thm:approx sampl} may allow to 
construct algorithms that sample from Boltzmann distributions. 
Such algorithms are often obtained from Markov chains, which often suffer from exponential mixing times. 
Yet, what properties of spin models lead to exponential mixing times seems to be poorly understood. 
\cref{thm:approx sampl} seems to indicate that such properties are preserved by emulation; the latter could thus be used to isolate these properties.

\medskip
\paragraph*{Weaker types of universality.}
Not surprisingly, universal spin models are computationally maximally hard (\cref{sec:consequences}), so we cannot expect to solve a computational problem of $\mc{N}$ by simulating it with a universal model.
We might however consider the converse situation: Given a spin model $\mc{M}$ and an algorithm to solve a computational problem of $\mc{M}$, 
we can extend this algorithm to all spin models $\mc{N} \in \mr{Reach}(\mc{M})$,  i.e.\ all those spin models that can be emulated by $\mc{M}$.

While $\mr{Reach}(\mc{M})$ may be hard to characterize for a generic $\mc{M}$, we may be able to lower bound it by considering weaker types of universality, as follows. 
For a set of spin systems $\mc{S}_{\alpha}\subseteq \mc{S}_{\mr{all}}$, call $\mc{M}$ $\alpha$-universal if $\mc{M}\to \mc{S}_{\alpha}$, and characterize it similarly to \cref{thm:main}, i.e.\ by properties weaker than functional completeness, closure and scalability. Deciding whether $\mc{S}_{\alpha} \subseteq \mr{Reach}(\mc{M})$ amounts to checking if $\mc{M}$ is $\alpha$-universal which then could be done by this charcaterization. 
If the characterization of $\alpha$-universality is constructive, this yields algorithms for all spin models $\mc{N}$ which are subsets of $\mc{S}_{\alpha}$.

\subsubsection{Extensions}

\paragraph*{Emulation and universality classes.} 
Spin models are often used to describe phase transitions. 
In many cases, the nature of their thermodynamic quantities at the phase transition does not depend on details of their definition , but can be grouped in so-called universality classes \cite{Ni11}.

Several of the thermodynamic quantities which might determine the universality class of a spin model can be obtained from its partition functions. Since emulation approximately preserves partition functions,
it would be interesting to study if it also preserves universality classes. 
Given an emulation $\mc{M} \to \mc{N}$, denote by $\mc{M}_{\mc{N}}\subseteq \mc{M}$ the submodel of $\mc{M}$ consisting of those systems which are used in the emulation of $\mc{N}$.  
If $\mc{N}$ belongs to universality class $\iota$, does $\mc{M}_{\mc{N}}$ also belong to $\iota$? 

If this is the case, it would follow that a universal spin model contains submodels from all universality classes. We would thereby obtain a new theory of universality classes in terms of submodels of $\mc{M}$.
In addition, studying if emulation preserves universality classes may lead to a refinement of the very notion of emulation. 

A related question is how our notion of emulation extends to the thermodynamic limit. 
Particularly, one could study the approximation errors of the partition function and Boltzmann distribution in this limit. 

\medskip
\paragraph*{Alternative gadgets.}
In the proof of closure of the 2d Ising model with fields, the crossing gadget is essential to overcome the constraint of  planar interaction graphs.
What gadgets are required to overcome other constraints on the interaction hypergraphs of a spin model?  

Consider a spin model $\mc{M}$ defined by a set of local interactions $\mc{J}$ and a set of allowed interaction graphs $\mc{G}$. 
Assume that $\mc{G}$ is specified in terms of a graph property $P$, namely $\mc{G}$ contains all graphs that satisfy $P$, e.g.\ all planar graphs or all loop-free graphs.
Then proving closure of $\mc{M}$ amounts to constructing a set of gadgets $S_1, \ldots, S_n$ from $\mc{M}$ that suffice to overcome the constraint $P$.
If $\mc{J}$ stands for Ising interactions and $P$ for planarity, the relevant gadget is the crossing gadget from \cref{sec:2d Ising}. For other graph properties such as loop-free interaction graphs, it is less clear how the corresponding gadgets can be characterized. 

We expect that the relevant gadgets can be characterized as follows: Given the graph property $P$, one can derive a set of graphs $G_1,\ldots, G_n$ that encode how a generic graph fails to satisfy $P$. The gadgets are  precisely simulations of all spin systems (with interactions from $\mc{J}$) which can be defined on these graphs.
For example, for planarity, $G_1$ is the graph containing two (crossing) edges, and the crossing gadget from \cref{sec:2d Ising} suffices for closure since it can be used to simulate all Ising system defined on $G_1$ (\cref{lem:crossGadget}).
We also expect that $G_1, \ldots, G_n$ can be extracted from the forbidden minor characterization (by the Robertson--Seymour theorem) of $P$. 
This would allow to characterize closure in terms of the existence of gadgets for the appropriate graph property, and thereby improve the characterization of universality.

\medskip
\paragraph*{Widening the scope.} 
It should be possible to reproduce the characterization of universality of \cref{thm:main} in contexts other  than spin systems/models, possibly by first translating it to a more abstract language \cite{Go23}. 
These could include continuous spin variables, quantum spin systems (cf.\ \cite{Cu17}) and spin systems with couplings drawn from a probability distribution, as in spin glasses. 
This would allow to study emulations in other contexts as well as their potential modularity and universality. Ultimately, it may shed light on our understanding of the (surprising) nature and scope of universality.

\bigskip 
\emph{Acknowledgements.} We thank Sebastian Stengele for discussions on the early stages of this work. 
We also thank William Slofstra for bringing \cite{Sh78} to our attention, 
and Roger Melko for pointing out high to low temperature dualities.   
We acknowledge support of the Austrian Science Fund (FWF) via the START Prize (project Y1261-N). For open access purposes, the authors have applied a CC BY public copyright license to any accepted manuscript version arising from this submission. 

\appendix

\section{Proofs for Modularity of Simulation}\label{sec:modular}

In \cref{ssec:spin sys sim prop} we stated that simulations can be composed (\cref{thm:sim trans}) 
and added (\cref{thm:sim trans}). Let us prove these claims, respectively, in \cref{ssec:composition} and \cref{ssec:addition sim}. 

\subsection{Simulations Can Be Composed}\label{ssec:composition}

To prove \cref{thm:sim trans} we show that the definition of $\mc{g}\circ \mc{f}$ satisfies the five conditions of \cref{def:sys simulation}.

\medskip
\paragraph*{\ref{def:sim physical spins}.\ Disjoint physical spins.}   
        We need to prove that
            \begin{equation}
            \begin{split}
               & \phys_{\mc{g}\circ\mc{f}}^{(m_2,m_1)}(v) = \phys_{\mc{g}\circ\mc{f}}^{(n_2,n_1)}(v')\Rightarrow \\
                &\Rightarrow m_i = n_i \ \text{and}\ v=v'.
                \end{split}
            \end{equation}
        Inserting the definition of $P_{\mc{g}\circ\mc{f}}$ the left hand side of this implication becomes
        \begin{equation}
            \phys_{\mc{g}}^{(m_2)} \circ \phys_{\mc{f}}^{(m_1)} (v) = \phys_{\mc{g}}^{(n_2)} \circ \phys_{\mc{f}}^{(n_1)}(v').
        \end{equation}
        Since $\mc{g}$ has disjoint physical spins, this implies that $m_2=n_2$ and
        \begin{equation}
            \phys_{\mc{f}}^{(m_1)} (v)=\phys_{\mc{f}}^{(n_1)}(v').
        \end{equation}
        Since also $\mc{f}$ has disjoint physical spins, we conclude that $m_1=n_1$ and $v=v'$, which proves the claim.

\medskip
\paragraph*{\ref{def:sim enc-dec}.\ Decode-encode compatibility.} 
        We need to prove that 
            \begin{equation}
                \dec_{\mc{g}\circ\mc{f}} \circ (\enc_{\mc{g}\circ\mc{f}})_{i,j} = \mr{id}.
            \end{equation}
        Inserting the definition of $\dec_{\mc{g}\circ\mc{f}}$ and $\enc_{\mc{g}\circ\mc{f}}$ yields 
            \begin{equation}
            \begin{split}
                \dec_{\mc{g}\circ\mc{f}} \circ (\enc_{\mc{g}\circ\mc{f}})_{i,j} 
                &= \dec_{\mc{f}} \bigl(\dec_{\mc{g}}\circ (\enc_{\mc{g}})_i\circ (\enc_{\mc{f}})_j^{(1)},\\
                &\hphantom{=} \ldots , \dec_{\mc{g}}\circ (\enc_{\mc{g}})_i\circ (\enc_{\mc{f}})_j^{(k_1)}\bigr).
            \end{split}
            \end{equation}
        Using decode-encode compatibility of $\mc{g}$ the right hand side becomes
        \begin{equation}
        \dec_{\mc{f}}\bigl((\enc_{\mc{f}})_j^{(1)},\ldots , (\enc_{\mc{f}})_j^{(k_1)}\bigr) = \dec_{\mc{f}}\circ (\enc_{\mc{f}})_j,
            \end{equation}
            which by decode-encode compatibility of $\mc{f}$ equals $\mr{id}$. 

\medskip            
\paragraph*{\ref{def:sim disjoint enc}.\ Disjoint encodings.} 
        We need to prove that 
            \begin{equation}
                (\enc_{\mc{g}\circ\mc{f}})_{i,j}(s) =(\enc_{\mc{g}\circ\mc{f}})_{k,l}(s) \Rightarrow (i,j)=(k,l) .
            \end{equation}
        By definition of $\enc_{\mc{g}\circ\mc{f}}$ the left hand side 
        implies that  for all $n \in [k_1]$
        \begin{equation}\label{eq:disjoint encoding 2}
                (\enc_{\mc{g}})_i \circ (\enc_{\mc{f}})_j^{(n)}(s) = (\enc_{\mc{g}})_k \circ (\enc_{\mc{f}})_l^{(n)}(s).
            \end{equation}
        Applying $\dec_{\mc{g}}$ to both sides implies
        \begin{equation}
               (\enc_{\mc{f}})_j(s) = (\enc_{\mc{f}})_l(s).
        \end{equation} 
        Since $\mc{f}$ has disjoint encodings this implies $j=l$.
        Reinserting this into \cref{eq:disjoint encoding 2} and using that $\mc{g}$ has disjoint encodings finally implies $i=k$.
        
\medskip
\paragraph*{\ref{def:sim deg}.\ Constant degeneracy.} 
        We have to prove if  $H_T(\vec{t})< \Del_{\mc{g}\circ\mc{f}}$, then for all $i,j$, $\vert (\simul_{\mc{g}\circ\mc{f}})_{i,j}(\vec{t}) \vert = \degeneracy_{\mc{g}\circ\mc{f}}$.
        We split the proof in two parts.
        First, we prove that 
         \begin{equation}\label{eq:characterization sim3}
                (\simul_{\mc{g}\circ\mc{f}})_{i,j}(\Vec{t}) = \{ \Vec{s} \in (\simul_{\mc{g}})_i(\Vec{r}) \mid  \Vec{r} \in (\simul_{\mc{f}})_j(\Vec{t}) \}
            \end{equation}
        and then, prove that for low-energy configurations $\vert (\simul_{\mc{g}\circ\mc{f}})_{i,j}(\vec{t}) \vert = \degeneracy_{\mc{g}\circ\mc{f}}$.
        
        Starting with the first part, let $\vec{s}\in (\simul_{\mc{g}})_i(\Vec{r})$ with $\vec{r} \in (\simul_{\mc{f}})_j(\Vec{t})$. Then by definition of $\phys_{\mc{g}\circ\mc{f}}^{(m,n)}$
        we have
        \begin{equation}
        \begin{split}
            \vec{s} \circ \phys_{\mc{g}\circ\mc{f}}^{(m,n)} =
                \Vec{s} \circ \phys_{\mc{g}}^{(m)} \circ \phys_{\mc{f}}^{(n)} = (\enc_{\mc{g}})_i^{(m)} \circ \Vec{r} \circ \phys_{\mc{f}}^{(n)},
                \end{split}
        \end{equation}
        where the last equality holds since 
        $\vec{s} \in (\simul_{\mc{g}})_i(\Vec{r})$. 
        Similarly, since  $\vec{r}\in (\simul_{\mc{f}})_j(\Vec{t})$ we get
        \begin{equation}
                (\enc_{\mc{g}})_i^{(m)} \circ \Vec{r} \circ \phys_1^{(n)} = (\enc_{\mc{g}})_i^{(m)} \circ  (\enc_{\mc{f}})_j^{(n)} \circ \Vec{t},
        \end{equation}
        which in total shows that 
        \begin{equation}
            \Vec{s}\circ \phys_{\mc{g}\circ\mc{f}} = (\enc_{\mc{g}\circ\mc{f}})_{i,j} \circ \vec{t}.    
        \end{equation}
        Moreover, since $\vec{s}\in (\simul_{\mc{g}})_i(\Vec{r})$ we have 
        \begin{equation}
            H_S(\Vec{s}) - \Shift_{\mc{g}} =
            H_R(\Vec{r}) < \Del_{\mc{g}} 
        \end{equation}
        and since $\vec{r}\in (\simul_{\mc{f}})_j(\Vec{t})$  
        \begin{equation}
            H_R(\Vec{r}) - \Shift_{\mc{f}} = H_T(\vec{t})  < \Del_{\mc{f}}
        \end{equation}
        and hence 
        \begin{equation}
        \begin{split}
            H_S(\Vec{s}) - \Shift_{\mc{g}\circ\mc{f}}=
            H_T(\vec{t}) < \min(\Del_{\mc{f}}, \Del_{\mc{g}}-\Shift_{\mc{f}}) = \Del_{\mc{g}\circ\mc{f}}.
            \end{split}
        \end{equation}
        Thus, we have shown that 
        \begin{equation}
                (\simul_{\mc{g}\circ\mc{f}})_{i,j}(\Vec{t}) \supseteq \{ \Vec{s} \in (\simul_{\mc{g}})_i(\Vec{r}) \mid  \Vec{r} \in (\simul_{\mc{f}})_j(\Vec{t}) \}
        \end{equation}
        Conversely, if $\vec{s}\in (\simul_{\mc{g}\circ\mc{f}})_{i,j}(\vec{t})$ then
        \begin{equation}\label{eq:s energy}
            H_S(\vec{s})-\Shift_{\mc{g}\circ\mc{f}}<\Del_{\mc{g}\circ\mc{f}},
        \end{equation}
        which implies that 
        \begin{equation}
            H_S(\vec{s})-\Shift_{\mc{g}} < \Del_{\mc{g}}
        \end{equation}
        and hence that $\vec{s}\in (\simul_{\mc{g}})_i(\vec{r})$ for some $i$ and $\vec{r}$. Note that this $\vec{r}$ can be obtained via 
        \begin{equation}\label{eq: r config}
            \vec{r} = \dec_{\mc{g}} \circ \vec{s} \circ \phys_{\mc{g}}
        \end{equation}
        Since $\vec{s}$ satisfies 
        \begin{equation}
            \Vec{s} \circ \phys_{\mc{g}}^{(m)} \circ \phys_{\mc{f}}^{(n)} = (\enc_{\mc{g}})_i^{(m)} \circ  (\enc_{\mc{f}})_j^{(n)} \circ \Vec{t},
        \end{equation}
        applying $\dec_{\mc{g}}$ to both sides and inserting \eqref{eq: r config} we find 
        \begin{equation}
            \vec{r}\circ \phys_{\mc{f}}^{(n)} = (\enc_{\mc{g}})_i^{(n)} \circ \vec{t}.
        \end{equation}
        Finally, \eqref{eq:s energy} also implies that 
        \begin{equation}
            H_R(\vec{r})-\Shift_{\mc{f}} < \Del_{\mc{f}}
        \end{equation}
        and hence $\vec{r}\in (\simul_{\mc{f}})_j(\Vec{t})$, 
        which shows that 
        \begin{equation}
            (\simul_{\mc{g}\circ\mc{f}})_{i,j}(\Vec{t}) \subseteq \{ \Vec{s} \in (\simul_{\mc{g}})_i(\Vec{r}) \mid  \Vec{r} \in (\simul_{\mc{f}})_j(\Vec{t}) \}. 
        \end{equation}
        This completes the proof of the first part,  \cref{eq:characterization sim3}. 
        
        We continue with the second part, proving that for low-energy configurations $\vert (\simul_{\mc{g}\circ\mc{f}})_{i,j}(\vec{t}) \vert = \degeneracy_{\mc{g}\circ\mc{f}} $.
        First, the right hand side of \eqref{eq:characterization sim3} can be rewritten as
        \begin{equation}
        \begin{split}
             &\{ \Vec{s} \in (\simul_{\mc{g}})_i(\Vec{r}) \mid  \Vec{r} \in (\simul_{\mc{f}})_j(\Vec{t}) \}=  \\
             & = \bigcup_{\vec{r}\in (\simul_{\mc{f}})_j(\Vec{t})}  (\simul_{\mc{g}})_i(\Vec{r}), 
             \end{split}
        \end{equation}
        where by \cref{eq: r config} the union is disjoint so 
        \begin{equation}
            \vert (\simul_{\mc{g}\circ\mc{f}})_{i,j}(\Vec{t}) \vert = \sum_{\vec{r}\in (\simul_{\mc{f}})_j(\Vec{t})} \vert (\simul_{\mc{g}})_i(\Vec{r}) \vert.
        \end{equation}
        As argued before, for $\vec{s}\in  (\simul_{\mc{g}})_i(\Vec{r})$ with $\vec{r} \in (\simul_{\mc{f}})_j(\Vec{t})$ and $H_T(\vec{t})<\Del_{\mc{f}}$ we have 
        $H_R(\vec{r})<\Del_{\mc{g}}$ so since $\mc{f}$ and $\mc{g}$ satisfy condition \ref{def:sim deg}
        \begin{equation}
            \vert (\simul_{\mc{f}})_j(\Vec{t}) \vert = \degeneracy_{\mc{f}}
        \end{equation}
        and for all $\vec{r} \in (\simul_{\mc{f}})_j(\Vec{t}) $
        \begin{equation}
            \vert (\simul_{\mc{g}})_i(\Vec{r}) \vert = \degeneracy_{\mc{g}},
        \end{equation}
        so in total 
        \begin{equation}
            \vert (\simul_{\mc{g}\circ\mc{f}})_{i,j}(\Vec{t}) \vert = \degeneracy_{\mc{f}} \cdot \degeneracy_{\mc{g}} = \degeneracy_{\mc{g}\circ\mc{f}}.
        \end{equation}

\medskip
\paragraph*{\ref{def:sim energy}.\ Matching energies.}
        By \cref{eq:characterization sim3}, if  $\vec{s} \in (\simul_{\mc{g}\circ\mc{f}})_{i,j}(\vec{t})$ then
         $\vec{s} (\simul_{\mc{g}})_i(\Vec{r})$ for some $\vec{r} \in (\simul_{\mc{f}})_j(\Vec{t})$
         and hence 
        \begin{equation} 
        \begin{split}
            &H_S(\Vec{s}) = H_R(\Vec{r}) +\Shift_{\mc{g}}= \\
            &= H_T(\Vec{t}) + \Shift_{\mc{f}} + \Shift_{\mc{g}}= H_T(\Vec{t}) + \Shift_{\mc{g}\circ\mc{f}}.
            \end{split}
        \end{equation} 
        What remains to be shown is that $\forall \Vec{s} \notin \simSet_{\mc{g}\circ\mc{f}}$, 
        \begin{equation}
            H_S(\Vec{s}) \geq \Del_{\mc{g}\circ\mc{f}} + \Shift_{\mc{g}\circ\mc{f}}.
        \end{equation}
        First, note that  $\Vec{s} \notin \simSet_{\mc{g}\circ\mc{f}}$ if and only if either $\Vec{s} \notin \simSet_{\mc{g}}$ or $\vec{s} \in (\simul_{\mc{g}})(\vec{r})$ with $\vec{r} \notin \simSet_{\mc{f}}$. 
        In the first case, we have
        \begin{equation}
        \begin{split}
            H_S(\vec{s})\geq \Del_{\mc{g}}+\Shift_{\mc{g}} =\Del_{\mc{g}} - \Shift_{\mc{f}} + \Shift_{\mc{g}\circ\mc{f}}  
            \geq \Del_{\mc{g}\circ\mc{f}} + \Shift_{\mc{g}\circ\mc{f}}.
        \end{split}
        \end{equation}
        In the second case, we have 
        \begin{equation}
        \begin{split}
             H_S(\Vec{s}) = H_R(\Vec{r}) + \Shift_{\mc{g}}\geq  \Del_{\mc{f}} + \Shift_{\mc{f}} + \Shift_{\mc{g}}  
             = \Del_{\mc{g}\circ\mc{f}} + \Shift_{\mc{g}\circ\mc{f}},
        \end{split}
        \end{equation}
        which finishes the proof.

\subsection{Simulations Can Be Added}\label{ssec:addition sim}

We prove \cref{thm:sim sum} by showing that the definition of $\mc{f}+\mc{g}$ satisfies the five conditions of \cref{def:sys simulation}.  

\medskip
\paragraph*{\ref{def:sim physical spins}.\ Disjoint physical spins.}
            We have to prove that 
            \begin{equation}\label{eq:sum phys spin cond}
                \phys_{\mc{f}+\mc{g}}^{(i)}(t) = \phys_{\mc{f}+\mc{g}}^{(j)}(t') \Rightarrow i=j \ \text{and} \ t=t'.
            \end{equation}
            By definition of $\phys_{\mc{f}+\mc{g}}$ we have 
            \begin{equation}
              \phys_{\mc{f}+\mc{g}}\vert _{V_{T_1}} = \phys_{\mc{f}}.
            \end{equation}
            Since by assumption \ref{eq:sum phys spins}, $\phys_{\mc{f}}, \phys_{\mc{g}}$ agree on the overlap of $T_1$ and $T_2$ 
            we also have
            \begin{equation}
         \phys_{\mc{f}+\mc{g}}\vert _{V_{T_2}}=\phys_{\mc{g}}.
            \end{equation}
            Hence, if either $t,t'\in V_{T_1}$ or $t,t'\in V_{T_2}$, $\phys_{\mc{f}+\mc{g}}$ satisfies condition \ref{def:sim physical spins}, since so do $\phys_{\mc{f}}$ and $\phys_{\mc{g}}$.
            We finish proof of condition \ref{def:sim physical spins} by proving that there is no third case, i.e.\ by proving that  $t,t'$ satisfy the left hand side of \cref{eq:sum phys spin cond} only if either $t,t'\in V_{T_1}$ or $t,t'\in V_{T_2}$. 
            More precisely, we show that 
            \begin{equation}\label{eq:phys disjoint 1}
                 \mr{Im}(\phys_{\mc{f}+\mc{g}}\vert_{V_{T_1}\setminus V_{T_2}})   \subseteq  V_{S_1}\setminus V_{S_2}
            \end{equation}
            and 
            \begin{equation}\label{eq:phys disjoint 2}
                \mr{Im}(\phys_{\mc{f}+\mc{g}}\vert_{V_{T_2}\setminus V_{T_1}})  \subseteq  V_{S_2}\setminus V_{S_1}.
            \end{equation}

            To prove \cref{eq:phys disjoint 1}, assume there exists $t\in V_{T_1}\setminus V_{T_2}$ with
            \begin{equation}
                \phys_{\mc{f}+\mc{g}}^{(i)}(t) =  \phys_{\mc{f}}^{(i)}(t)\in V_{S_1}\cap V_{S_2}.
            \end{equation}
            By assumption \ref{eq:sum phys image}, 
             there exists $r\in V_{T_1}\cap V_{T_2}$ and $l\in \{1, \ldots, k_1 \}$ with 
            \begin{equation}
                \phys_{\mc{f}}^{(l)}(r)= \phys_{\mc{f}}^{(i)}(t)
            \end{equation}
            Since $t\notin V_{T_2}$ it must hold that $t\neq r$ but since $\phys_{\mc{f}}$ has disjoint physical spins, i.e.\ satisfies \cref{def:sys simulation} \ref{def:sim physical spins} it must be that $t=r$. Thus there exist no such $t$.
            Similarly, \cref{eq:phys disjoint 2} holds since
            $\phys_{\mc{g}}$ satisfies \cref{def:sys simulation} \ref{def:sim physical spins}.
 
        \medskip
        \paragraph*{\ref{def:sim enc-dec}.\ Decode-encode compatibility.}
        We have to prove that 
        \begin{equation}
            \dec_{\mc{f}+\mc{g}} \circ (\enc_{\mc{f}+\mc{g}})_i = \mr{id}.
        \end{equation}
            By definition $\dec_{\mc{f}+\mc{g}} = \dec_{\mc{f}}$ and $\enc_{\mc{f}+\mc{g}}=\enc_{\mc{f}}\cap \enc_{\mc{g}}$.
            Since $\dec_{\mc{f}}, \enc_{\mc{f}}$ satisfy condition \ref{def:sim enc-dec}, so do
            $\dec_{\mc{f}+\mc{g}}, \enc_{\mc{f}+\mc{g}}$. 

    \medskip
        \paragraph*{\ref{def:sim disjoint enc}.\ Disjoint encodings.} 
        We have to prove that 
        \begin{equation}
            (\enc_{\mc{f}+\mc{g}})_i(s) = (\enc_{\mc{f}+\mc{g}})_j(s) \Rightarrow i = j.
        \end{equation}
             Since $\enc_{\mc{f}+\mc{g}} = \enc_{\mc{f}} \cap \enc_{\mc{g}}$ and $\enc_{\mc{f}}$ satisfies condition \ref{def:sim disjoint enc} so does $\enc_{\mc{f}+\mc{g}}$.
             
\medskip
\paragraph*{\ref{def:sim deg}.\ Constant degeneracy.} 
        We have to prove that if $H_{T_1+T_2}(\vec{t})<\Del_{\mc{f}+\mc{g}}$ 
        then $\vert \simul_{\mc{f}+\mc{g}}(\vec{t}) \vert = \degeneracy_{\mc{f}+\mc{g}}$.
        We split the proof into two parts.
        First, we characterize $(\simul_{\mc{f}+\mc{g}})_i(\vec{t})$ for low energy configurations $\vec{t}$ and then, using this characterization, we prove that $\vert (\simul_{\mc{f}+\mc{g}})_i(\vec{t}) \vert = \degeneracy_{\mc{f}+\mc{g}}$.
            
        Let $q_S\coloneqq q_{S_1} = q_{S_2}$ and $\Vec{s}_1 \in \mc{C}_{S_1}$, $\Vec{s}_2 \in \mc{C}_{S_2}$, we define 
            $ \vec{s}_1+\vec{s}_2 \in \mc{C}_{S_1+S_2}$ by
            \begin{equation}
                 (\Vec{s}_1 +\Vec{s}_2)(v)  \coloneqq \begin{cases}
            \Vec{s}_1(v) \ \text{if} \ v \in V_{S_1}\\
            \Vec{s}_2(v) \ \text{else} .
            \end{cases}
            \end{equation}
        Furthermore, for $(\enc_{\mc{f}+\mc{g}})_i \in \enc_{\mc{f}+\mc{g}}$ 
        we define
        \begin{equation}
        \begin{split}
                &(\simul_{\mc{f}} + \simul_{\mc{g}})_i(\Vec{t})
                \coloneqq \{ \Vec{s}_1 + \Vec{s}_2 \mid \\
                & \Vec{s}_1 \in (\simul_{\mc{f}})_{i_1}(\Vec{t}\vert_{V_{T_1}}),\  \Vec{s}_2 \in (\simul_{\mc{g}})_{i_2}(\Vec{t}\vert _{V_{T_2}})  \},
                \end{split}
            \end{equation}
            where $i_1,i_2$ are such that 
            \begin{equation}
          (\enc_{\mc{f}})_{i_1}=(\enc_{\mc{f}+\mc{g}})_i=(\enc_{\mc{g}})_{i_2}.
            \end{equation} 
            We now prove that if $H_{T_1+T_2}(\vec{t})<\Del_{\mc{f}+\mc{g}}$ then 
         \begin{equation}\label{eq:sim equality}
         (\simul_{\mc{f}}+\simul_{\mc{g}})_i(\Vec{t}) = (\simul_{\mc{f}+\mc{g}})_i(\Vec{t}).
            \end{equation}
            We start with the $\subseteq$ inclusion. 
            Let $\vec{s}_1+\vec{s}_2 \in (\simul_{\mc{f}}+\simul_{\mc{g}})_i(\Vec{t})$
            and $v \in V_{T_1}$ then
            \begin{multline}
                (\Vec{s}_1 + \Vec{s}_2) \circ \phys_{\mc{f}+\mc{g}} (v) =  \Vec{s}_1\circ \phys_{\mc{f}}(v) = \\
                = (\enc_{\mc{f}})_{i_1} \circ \vec{t}(v)
                 = (\enc_{\mc{f}+\mc{g}})_i \circ \Vec{t}(v),
            \end{multline}
            where the last equality holds since  $\Vec{s}_1\in (\simul_{\mc{f}})_{i_1}(\Vec{t}\vert _{V_{T_1}})$.

            Similarly, for $v\in V_{T_2}\setminus V_{T_1}$,  
            \begin{multline}
                (\Vec{s}_1 + \Vec{s}_2) \circ \phys_{\mc{f}+\mc{g}} (v) = \Vec{s}_2\circ \phys_{\mc{g}}(v) = \\
                (\enc_{\mc{g}})_{i_2} \circ \vec{t}(v)
                 = (\enc_{\mc{f}+\mc{g}})_i \circ \Vec{t}(v),
            \end{multline}
            where for the first equality, we additionally used by \cref{eq:phys disjoint 2} in this case
            $\phys_{\mc{f}+\mc{g}}(v) \in V_{S_2}\setminus V_{S_1}$. 
        Therefore, in total 
            \begin{equation}
                (\Vec{s}_1 + \Vec{s}_2) \circ \phys_{\mc{f}+\mc{g}}  = (\enc_{\mc{f}+\mc{g}})_i \circ \Vec{t}.
            \end{equation}
            Next, if $v\in V_{S_1}\cap V_{S_2}$ then, by assumption,
            \begin{equation}
                v = \phys_{\mc{f}}^{(l)}(t) = \phys_{\mc{g}}^{(l)}(t)
            \end{equation}
            for some $t\in V_{T_1}\cap V_{T_2}$ and $l\in \{1, \ldots, k_1\}$ and hence 
            \begin{equation}
            \begin{split}
                &(\vec{s}_1 + \vec{s}_2)(v) = \vec{s}_1(v)= \vec{s}_1 \circ \phys_{\mc{f}}^{(l)}(t) = \\
                & =(\enc_{\mc{f}+\mc{g}})_i^{(l)} \circ \vec{t} (t) 
                = \vec{s}_2 \circ \phys_{\mc{g}}^{(l)}(t) = \vec{s}_2 (v),
                \end{split}
            \end{equation}
             i.e.\ $\vec{s}_1$ and $\vec{s}_2$ agree on the overlap $V_{S_1} \cap V_{S_2} $ and therefore
            \begin{equation}
            \begin{split}
                        H_{S_1+S_2}(\vec{s}_1 + \Vec{s}_2)  
                        &= H_{S_1}(\vec{s_1})+H_{S_2}(\vec{s}_2)\\
                        &= H_{T_1}(\vec{t}\vert_{V_{T_1}})+ \Shift_{\mc{f}}+ H_{T_2}(\vec{t}\vert_{V_{T_2}})+\Shift_{\mc{g}} \\
                        &=H_{T_1+T_2}(\vec{t})+ \Shift_{\mc{f}+\mc{g}}\\
                        &< \Del_{\mc{f}+\mc{g}} + \Shift_{\mc{f}+\mc{g}}.
            \end{split}
            \end{equation}
           In total this proves that
         \begin{equation}\label{eq:sim incl 1}
                (\simul_{\mc{f}}+\simul_{\mc{g}})_i(\Vec{t}) \subseteq ((\simul_{\mc{f}+\mc{g}}))_i(\Vec{t}).
            \end{equation}
            
            Next, we prove the $\supseteq$ inclusion.
            Let $\vec{s}\in (\simul_{\mc{f}+\mc{g}})_i(\Vec{t})$ and $t\in V_{T_1}$. Then 
            \begin{equation}
            \begin{split}
               & \vec{s} \vert _{V_{S_1}} \circ \phys_{\mc{f}}(t) = \vec{s} \circ \phys_{\mc{f}+\mc{g}} (t) = \\
               & = (\enc_{\mc{f}+\mc{g}})_i \circ \vec{t} = (\enc_{\mc{f}})_{i_1} \circ \vec{t}\vert_{V_{T_1}}.
            \end{split}
            \end{equation}
        Similarly, if $t\in V_{T_2}$ then 
            \begin{equation}
            \begin{split}
                &\vec{s} \vert _{V_{S_2}} \circ \phys_{\mc{g}}(t) = \vec{s} \circ \phys_{\mc{f}+\mc{g}} (t) =\\
                &=(\enc_{\mc{f}+\mc{g}})_i \circ \vec{t} = (\enc_{\mc{g}})_{i_2} \circ \vec{t}\vert_{V_{T_2}}.
            \end{split}
            \end{equation}
            Since by assumption $H_{S_1+S_2}(\vec{s}) < \Del_{\mc{f}+\mc{g}} + \Shift_{\mc{f}+\mc{g}}$ we in particular find that 
            \begin{equation}
                H_{S_1}(\vec{s}\vert_{V_{S_1}}) < \Del_{\mc{f}} + \Shift_{\mc{f}},
            \end{equation}
            where we used that by \cref{lem:sim spectrum} 
          \begin{equation}\label{eq:gs gamma}
                H_{S_2}- \Shift_{\mc{g}}>0.
            \end{equation}
            Similarly, we obtain 
            \begin{equation}
                H_{S_2}(\vec{s}\vert_{V_{S_2}}) < \Del_{\mc{g}} + \Shift_{\mc{g}}.
            \end{equation}
            Thus, we have shown that $\vec{s}\vert_{V_{S_1}}\in (\simul_{\mc{f}})_{i_1}(\vec{t}\vert_{V_{T_1}})$ and $\vec{s}\vert_{V_{S_2}}\in (\simul_{\mc{g}})_{i_2}(\vec{t}\vert_{V_{T_2}})$
            Finally, by definition
            \begin{equation}
                \vec{s} = \vec{s}\vert _{V_{S_1}} + \vec{s}\vert _{V_{S_2}},
            \end{equation}
            so we conclude 
            \begin{equation}
                \vec{s} \in (\simul_{\mc{f}}+\simul_{\mc{g}})_i(\vec{t})
            \end{equation} 
            and hence 
           \begin{equation}\label{eq:sim incl 2}
                (\simul_{\mc{f}}+\simul_{\mc{g}})_i(\Vec{t}) \supseteq (\simul_{\mc{f}+\mc{g}})_i(\Vec{t}).
            \end{equation}
            In contrast to \eqref{eq:sim incl 1}, this inclusion  is independent of the assumption $H_{T_1+T_2}(\vec{t})<\Del_{\mc{f}+\mc{g}}$.

            To finish the prove of condition \ref{def:sim deg}, using \cref{eq:sim equality}
            we prove that for low energy configurations $\vert (\simul_{\mc{f}+\mc{g}})_i(\vec{t})\vert = \degeneracy_{\mc{f}+\mc{g}}$.
            First, by \cref{eq:sim equality} we have 
            \begin{equation}
               \vert (\simul_{\mc{f}+\mc{g}})_i(\vec{t}) \vert = \vert (\simul_{\mc{f}} + \simul_{\mc{g}})_i(\vec{t}) \vert .
            \end{equation}
            Note that two configurations
            \begin{equation}
            \vec{s}_1+\vec{s}_2, \vec{s}_1'+\vec{s}_2' \in (\simul_{\mc{f}} + \simul_{\mc{g}})_i(\vec{t})    
            \end{equation}
            are different if and only if either 
             $\vec{s}_1 \neq \vec{s}_1'$ or $\vec{s}_2 \neq \vec{s}_2'$.
            Thus, 
            \begin{equation}
            \begin{split}
                &\vert (\simul_{\mc{f}} + \simul_{\mc{g}})_i(\vec{t}) \vert = \\ 
                &= \vert (\simul_{\mc{f}})_{i_1}(\Vec{t}\vert_{V_{T_1}}) \vert \cdot \vert (\simul_{\mc{g}})_{i_2}(\Vec{t}\vert _{V_{T_2}}) \vert
                = \degeneracy_{\mc{f}+\mc{g}}.
                \end{split}
            \end{equation}

\medskip         
\paragraph*{\ref{def:sim energy}.\ Matching energies.} 
        By \eqref{eq:sim incl 2}, if $\vec{s}\in (\simul_{\mc{f}+\mc{g}})_i(\vec{t})$ for any target configuration $\vec{t}$, then $\vec{s} = \vec{s}_1+\vec{s}_2$ for some $\vec{s}_1+\vec{s}_2 \in (\simul_{\mc{f}}+\simul_{\mc{g}})(\vec{t})$. 
        This implies  
        \begin{equation}
        \begin{split}
            H_{S_1+S_2}(\vec{s}) &= H_{S_1}(\vec{s}_1)+H_{S_2}(\vec{s}_2) \\
            &=H_{T_1}(\vec{t}\vert_{V_{T_1}}) + \Shift_{\mc{f}} + H_{T_2}(\vec{t}\vert_{V_{T_2}}) + \Shift_{\mc{g}}\\
            &=
            H_{T_1+T_2}(\vec{t})+\Shift_{\mc{f}+\mc{g}}
        \end{split}
        \end{equation}
    
        If, on the other hand, $\vec{s}\notin \simSet_{\mc{f}+\mc{g}}$, then either $\vec{s}\in (\simul_{\mc{f}}+\simul_{\mc{g}})_i(\vec{t})$ for $H_{T_1+T_2}(\vec{t})\geq\Del_{\mc{f}+\mc{g}}$ or 
        for all $i,\vec{t}$, $\vec{s}\notin (\simul_{\mc{f}}+\simul_{\mc{g}})_i(\vec{t})$.
        In the first case we have 
        \begin{equation}
            H_{S_1+S_2}(\vec{s}) = H_{T_1+T_2}(\vec{t})+\Shift_{\mc{f}+\mc{g}} 
            \geq \Del_{\mc{f}+\mc{g}} + \Shift_{\mc{f}+\mc{g}}.
        \end{equation}
        
        In the second case, either $\vec{s}\vert_{V_{S_1}}\notin \simSet_{\mc{f}}$ and/or $\vec{s}\vert_{V_{S_2}}\notin \simSet_{\mc{g}}$.
        If $\vec{s}\vert_{V_{S_1}}\notin \simSet_{\mc{f}}$, since $\mc{f}$ satisfies condition \ref{def:sim energy},
        $H_{S_1}(\vec{s}\vert_{V_{S_1}})-\Shift_{\mc{f}}\geq\Del_{\mc{f}}$ and hence
        \begin{equation}
        \begin{split}
            &H_{S_1+S_2}(\vec{s}) - \Shift_{\mc{f}+\mc{g}} \geq   \Del_{\mc{f}} + H_{S_2}(\vec{s}\vert_{V_{S_2}})-\Shift_{\mc{g}}  > \\
            &>\Del_{\mc{f}} \geq \Del_{\mc{f}+\mc{g}},
            \end{split}
        \end{equation}
        using again that by \cref{lem:sim spectrum}, $H_{S_2}-\Shift_{\mc{g}}>0$.
        The other cases, $\vec{s}\vert_{V_{S_2}}\notin \simSet_{\mc{g}}$, can be treated similarly.
        In total, if $\vec{s}\notin \simSet_{\mc{f}+\mc{g}}$ then
        $H_{S_1+S_2}(\vec{s})>\Del_{\mc{f}+\mc{g}}+\Shift_{\mc{f}+\mc{g}}$.


%

\end{document}